\documentclass[prx,twocolumn,showpacs,amsmath,amssymb,superscriptaddress,floatfix,reprint, nofootinbib]{revtex4-2}
\usepackage{macros}
\usepackage[export]{adjustbox}
\usepackage{framed}
\usepackage{minted}
\definecolor{LightGray}{gray}{0.95}
\newcounter{mylemma}

\makeatletter
\def\l@subsection#1#2{}
\def\l@subsubsection#1#2{}
\makeatother

\makeatletter

\renewcommand\onecolumngrid{% <<<<<<
\do@columngrid{one}{\@ne}%
\def\set@footnotewidth{\onecolumngrid}% <<<<<<<<<<<<<<<<
\def\footnoterule{\kern-6pt\hrule width 1.5in\kern6pt}%
}

\renewcommand\twocolumngrid{% <<<<<<
		\def\footnoterule{% restore rule
		\dimen@\skip\footins\divide\dimen@\thr@@
		\kern-\dimen@\hrule width.5in\kern\dimen@}
		\do@columngrid{mlt}{\tw@}
}%

\makeatother
\usepackage{svg}

\begin{document} % MARK: TITLE / ABS

\title{
Geometry of Noisy Quantum Many-Body Dynamics with Continuous Symmetries:\\
Entanglement and Correlations
}

\date{\today}
\author{Marco Lastres}
\email{marco.lastres@tum.de}
\affiliation{Technical University of Munich, TUM School of Natural Sciences, Physics Department, 85748 Garching, Germany}
\affiliation{Munich Center for Quantum Science and Technology (MCQST), Schellingstr. 4, 80799 M\"unchen, Germany}
\author{Sanjay Moudgalya}
\email{sanjay.moudgalya@gmail.com}
\affiliation{Technical University of Munich, TUM School of Natural Sciences, Physics Department, 85748 Garching, Germany}
\affiliation{Munich Center for Quantum Science and Technology (MCQST), Schellingstr. 4, 80799 M\"unchen, Germany}
\begin{abstract}
We study unitary quantum dynamics in noisy Brownian models with global continuous symmetries, such as $U(1)$ and $SU(2)$, focusing on Rényi entanglement entropies and hydrodynamic and non-hydrodynamic correlators.
By mapping the averaged late-time dynamics to the low-energy physics of effective replica Hamiltonians, we find that the evolution is controlled by the quantum geometry of their ground-state manifolds, which is directly related to the geometry of $k$-commutants—the symmetry algebra of $k$ replicas of the system.
In interacting systems, these $k$-commutants are generically determined solely by the symmetries of the system, independent of microscopic details of the noisy evolution.
This allows us to use the time-dependent variational principle (TDVP) to provide simple geometric explanations for the sub-ballistic Rényi entanglement growth and the anomalous decay of non-hydrodynamic correlators in interacting systems with continuous symmetries.
We find this behavior to be intimately connected to singularities within the $k$-commutant manifolds, arising from frozen ``void'' states in the Hilbert space that exist due to continuous on-site symmetries.
This also demystifies the important role of voids in the dynamics of these observables, previously identified in $U(1)$ symmetric systems.
We compare these behaviors in interacting systems with Abelian and non-Abelian continuous symmetries and in free-fermion systems, which differ in the geometry of their $k$-commutants.
Ultimately, this work provides a general geometric framework for systematically studying observables in noisy systems with continuous symmetries, including Haar-random circuits.
\end{abstract}
\maketitle
\tableofcontents

\section{Introduction} % MARK: 1. INTRO

Isolated dynamical many-body quantum systems are generically expected to thermalize.
Under unitary evolution, simple initial states become highly entangled over time, and local operators spread, leading to the decay of correlation functions.
While this phenomenon has been ubiquitously observed numerically in many systems~\cite{d2016quantum, mori2018thermalization, rigol2008thermalization, polkovnikov2011colloquium}, as well as in real world experiments, the microscopic study of thermalization has only been possible in very special settings.
One particularly successful set of models are local Haar random circuits~\cite{PhysRevX.7.031016, Nahum2018, keyserlingk2018hydro, chan2018solution,  fisher2023random,Bertini2026May}, where replica theory and Weingarten calculus~\cite{Collins_2010, PhysRevX.7.031016, hunterjones2019unitary} have enabled detailed computations of randomness averaged quantities such as correlation functions \cite{keyserlingk2018hydro,Nahum2018, nahum2022realtime} and entanglement entropies \cite{PhysRevX.7.031016,Zhou_2019, zhou_nahum} in the absence of symmetries.
Another family of random circuits are noisy Brownian models, where the discrete brickwall structure is replaced by a random time-dependent Hamiltonian~\cite{lashkari2013towards, sunderhauf2019quantum, Agarwal_2022, Zhou2019May,xu2019locality, bauer2017stochastic, 10.21468/SciPostPhys.15.4.175,    vardhan_entanglement_2024}, and where a similar replica formalism can be used for computing noise-averaged physical quantities, which have been done for entanglement growth~\cite{vardhan_entanglement_2024} and decay of correlations~\cite{Zhou2019May} in systems without symmetries.
While being conceptually similar to Haar random circuits, they are more flexible in many aspects, allowing for example a larger degree of tunability for the randomness.
The introduction of symmetries to quantum many-body systems brings in new and varied behavior.
In particular, continuous symmetries lead to the phenomenon of hydrodynamics, where late-time dynamics of symmetric systems display rich universal structure associated with the conserved charges of the system.
This is well-understood for simple correlation functions of the densities of the conserved charge, both in Haar random circuits with symmetries~\cite{Rakovszky2018Sep,Khemani2018Sep}, as well as more general systems where there are mature effective field theories of hydrodynamics that capture the late-time physics of these quantities~\cite{crossley,liu2018lectures,  2026arXiv260602391D}.
The same level of understanding does not exist for other quantities that are not completely dependent on these charge densities, such as the entanglement or other non-hydrodynamic correlators.
Take for example, the case of entanglement growth.
In the absence of symmetries, there is a detailed microscopic understanding of it via the entanglement membrane picture \cite{ PhysRevX.7.031016, jonay_huse, mark_membrane, Zhou_2019,vardhan_entanglement_2024}, which provides an evolution equation for the local entanglement densities as a function of time, and ultimately leads to a ballistic $\sim t$ entanglement growth.
Continuous symmetries are known to slow down this growth in some cases, so that in $U(1)$ symmetric systems, Rényi entanglement entropies are expected to exhibit a diffusive $\sim \sqrt{t}$ growth~\cite{Rakovszky_2019}.
This is mainly known from rigorous non-perturbative bounds that show that the growth of Rényi entropies is limited by the contributions of configurations involving rare \textit{void} regions in the initial state~\cite{Huang_2020, Rakovszky_2019,Zhou2020Jul}, whose diffusively-slow melting is enforced by the $U(1)$ symmetry.
However, a microscopic picture of the behavior of the entanglement analogous to the entanglement membrane without symmetries, has not been established in symmetric systems.
While there have indeed been many attempts to compute these entropies in $U(1)$-symmetric Haar random circuits, they have only been tractable in some limits:
either by bounding the entanglement in terms of other easier quantities~\cite{Zhou2020Jul, Han2023Dec}, or in a peculiar limit~\cite{Rakovszky_2019,Turkeshi_mpemba} where each charged spin in the $U(1)$ symmetric system is coupled to a neutral $q$-dimensional qudit, which is taken to be large and leading terms are kept in a $1/q$ expansion, after which a $q \rightarrow 1$ limit is taken.
Such limits are common for the study of many observables in Haar random systems with continuous symmetries.
Haar random circuits with non-Abelian continuous symmetries such as $SU(2)$ are even more poorly understood in the literature since the statistical mechanics models they lead to exhibit an even lesser degree of tractability beyond the simplest two-point observables~\cite{PhysRevB.108.054307, Li2025Oct}.
In this work, our goal is to overcome these technical obstacles and provide a simple framework that applies very generally to the study of various observables in noisy systems with arbitrary continuous symmetry groups.
This progress is enabled using the recent insights into the structure of dynamics of Brownian evolutions and their relations to commutant algebras~\cite{zanardi2001virtual, bartlett2007reference, read2007enlarged, moudgalya2022, moudgalya2022from, moudgalya2022exhaustive, moudgalya2023numerical,moudgalya2024,vardhan_entanglement_2024, lastres2026}.
The averaged evolution of physical quantities under unitary Brownian dynamics reduces to matrix elements of imaginary time evolution under an appropriate effective Hamiltonian \cite{lashkari2013towards,bauer2017stochastic,sunderhauf2019quantum, xu2019locality, moudgalya2024,vardhan_entanglement_2024}, which lives on a replicated Hilbert space, and is sometimes also referred to as super-Hamiltonian.
The number of copies of the Hilbert space required depends on the quantity of interest, e.g., simple two-point correlation functions require two copies (sometimes referred as one replica), whereas the $k$-th Rényi entanglement entropies require $2k$ copies (or $k$ replicas).
Given the imaginary-time evolution, the late-time dynamics is determined by the low-energy physics of these effective Hamiltonians.
What enables the understanding of this evolution,  is the fact that the ground states of these effective Hamiltonians on $k$ replicas are in direct correspondence with the \textit{$k$-commutant} of the random circuit, i.e., the symmetry algebra of $k$ identical copies of the system~\cite{Gross2007Designs, Dankert2009, HarrowLow2009}.
In systems that form so-called symmetric $k$-designs~\cite{hearth2025unitary, liu2024unitary, mitsuhashi2025unitary}, which is expected of generic interacting symmetric systems for small $k$, these are in turn known to be completely specified in terms of permutations of $k$ replicas of the original symmetry operators~\cite{enriched-phases,Li2024Dec}.
Given the knowledge of these ground states, it is quite natural to understand the nature of excitations on top of them, and hence their low-energy physics.
For example, this idea has been previously applied to show that for $k = 1$ replicas, the ground states of effective Hamiltonians of $U(1)$-symmetric systems can be mapped onto those of a Heisenberg ferromagnet, and that the spin-wave excitations on top of them lead to the diffusive behavior of hydrodynamic two-point correlation functions~\cite{ogunnaike2023unifying, hearth2025unitary, moudgalya2024}.
Further, in systems without symmetries, the ground states for $k = 2$ replicas were shown to be similar to those of the Ising ferromagnet, and the natural domain wall excitations on top of them led to the entanglement membrane picture for Rényi entropies~\cite{vardhan_entanglement_2024}.
In this work, we show that in the presence of a continuous symmetry group $G$, the effective Hamiltonians for Brownian models on any number $k$ of replicas have the structure of a generalized Heisenberg ferromagnet, with a continuous manifold of ground states.
These manifolds are analogous to the Bloch sphere of spin-coherent states that parametrize the standard Heisenberg ferromagnet.
Similar ferromagnetic models naturally appeared in our study of $k$-commutants of free fermionic systems~\cite{lastres2026geometryfreefermioncommutants}, where the ground states are parametrized by appropriate (orthogonal) Grassmannian manifolds.
This structure motivates the description of excitations in terms of smooth spatially-varying ferromagnetic configurations, similar to spin-waves of spin-coherent states in a standard Heisenberg ferromagnet.
However, while traditional continuous ferromagnets are parametrized by manifolds which are either smooth or discrete, the $k$-commutant manifolds that naturally emerge in this work possess \textit{singularities}, and should mathematically be described as general complex projective varieties.
We show that these singularities are a generic feature in interacting systems, and occur due to the presence of \textit{frozen states} that are often present in systems with continuous symmetries.
They have many interesting dynamical implications, and in particular are responsible for the formation of void regions closely related to the ones have appeared in earlier lower bounds for the diffusive growth of entanglement~\cite{McCulloch_2026,mcculloch2026longlivedlocalquantumcoherences} and sub-exponential decay of non-hydrodynamic correlators~\cite{Rakovszky_2019,Zhou2020Jul}.
Given these ground state manifolds, the low-energy physics of the corresponding effective Hamiltonians can be studied using the time-dependent variational principle (TDVP)~\cite{tdvp20} formulated on the space of smooth variations on top of this ground state manifold. 
The TDVP equations are then described in terms of the quantum geometric properties of the ground state manifold itself, such as its Fubini-Study metric~\cite{Bengtsson2017Aug}.
This leads to a simple geometry picture for the imaginary-time evolution of various initial states under effective Hamiltonians derived for Brownian evolutions.
For observables of interest here, these initial states are particularly simple:  for Rényi entanglement they correspond to domain-walls between two ground states, and for correlators they correspond to single-particle excitations on top of ground states.
While TDVP is well understood on smooth manifolds~\cite{tdvp20}, here we are necessarily working with manifolds with singularities, are forced to provide additional prescriptions for the dynamics of such singularities, whose behavior has implications for the late-time entanglement saturation physics.
We primarily illustrate this TDVP framework and numerically test its performance in the ferromagnetic Heisenberg model, where the ground state manifold is smooth, and then in a toy model we construct, which has a ground state manifold with a single point-like singularity.
We then generalize the results obtained there to realistic effective Hamiltonians corresponding to two symmetries: (i) $U(1)$, where the ground state manifold generally contains two point-like singularities, and (ii) $SU(2)$ and non-Abelian symmetries, where the singularities form a continuum.
The effective description we construct is completely based on the structure of the $k$-commutant manifolds, which is mostly insensitive to the microscopic details of the Brownian models, as long as they form symmetric $k$-designs.
Hence we expect our results to apply very generally to interacting noisy symmetric  systems, and also to the study of many other kinds of observables.
The rest of this paper is organized as follows.
In Sec.~\ref{sec:brown}, we review the physics of Brownian evolutions and their connections to $k$-commutants.
Then in Sec.~\ref{sec:fgsman}, we discuss the general theory behind characterizing ferromagnetic ground state manifolds, and we discuss the manifolds that naturally appear as $k$-commutants of systems considered in this work.
In Sec.~\ref{sec:tdvp} we discuss the low-energy excitations on top of these ground state manifolds and methods for studying time-evolution of various initial states within the variational space defined by smooth variations on top of the manifold, mostly using TDVP, and we demonstrate its application to the ferromagnetic Heisenberg model.
In Sec.~\ref{sec:esn}, we introduce a toy model where the ground state manifold contains a singularity, which reveals all the features that appear more generally in effective Hamiltonians corresponding to continuous symmetries.
Based on this phenomenology, we study $U(1)$ symmetric systems in Sec.~\ref{sec:u1}, where we particularly focus on Rényi entanglement growth as well as non-hydrodynamic correlators, both of which are influenced by the singularities in the manifold.
Finally in Sec.~\ref{sec:su2}, we generalize this understanding to $SU(2)$ and other non-Abelian continuous symmetries.
We conclude in Sec.~\ref{sec:outlooks} with a summary of results and open questions.

\section{Review of Brownian Models for Replicated Systems}\label{sec:brown} % MARK: 2. BROWN
We will focus on systems with a tensor product Hilbert space $\mathcal{H}$ and study dynamics under a time-dependent \textit{Brownian} Hamiltonian~\cite{lashkari2013towards, xu2019locality, sunderhauf2019quantum, bauer2017stochastic, Bernard_2021, 10.21468/SciPostPhys.15.4.175, swann2025, ogunnaike2023unifying, moudgalya2024, vardhan_entanglement_2024}
\begin{equation}
	H(t)=\sum_{\alpha} J_\alpha(t) h_\alpha\label{eq:brownhamdef}
\end{equation}
where $\{h_\alpha\}$ is the set of Hermitian generators that defines the model, and
$\{J_\alpha(t)\}$ are i.i.d. Brownian random variables with zero mean $\langle J_\alpha(t)\rangle = 0$ and white-noise correlations $\langle J_\alpha(t)J_{\alpha'}(t')\rangle = 2\kappa\,\delta_{\alpha\alpha'}\,\delta(t-t')$.
One can then formally perform the statistical average over the randomness and write down an averaged time-evolution operator as an imaginary-time evolution of an effective positive semidefinite Hamiltonian~\cite{moudgalya2024}:
\begin{equation}
		\overline{U(t)} = \overline{ \mrm{T}\{e^{-i\int_0^t H(t')\,\dd t'}\}}=e^{-\kappa Pt},\;\;\;
		P\defeq\sum_\alpha h_\alpha^2.
\label{eq:unitaryaverage}
\end{equation}
Since $h_\alpha^2\geq 0$ for all $\alpha$, any ground state of $P$ is \textit{frustration free}, in the sense that:
\begin{equation}
	P\ket{\psi}=0\iff \forall\alpha: h_\alpha^2\ket\psi =0.
\end{equation}
In order to study observables which can generally be non-linear functions of the density matrix $\rho$, we are typically not interested in plain unitary averages as in Eq.~(\ref{eq:unitaryaverage}), but rather in averages over multiple copies of the unitary, which are given by
\begin{equation}\label{eq:superham}
	\overline{(U(t)\ot U^*(t))^{\ot k}}=e^{-\kappa P^{(k)} t},\;\;\; P^{(k)}\defeq \sum_\alpha \left(\ad{h_\alpha}^{(k)}\right)^2,
\end{equation}
where we have interpreted the replicated unitary $(U(t)\ot U^*(t))^{\ot k}$ as a Brownian evolution coming from an associated \textit{$k$-replica} (or $2k$-copy) Hamiltonian \cite{enriched-phases, swann2025} defined on $\mc H^{\ot 2k}$, given by
\begin{gather}
    H^{(k)}(t)=\ad{H(t)}^{(k)} = \sum_\alpha J_\alpha(t) \ad{h_\alpha}^{(k)}\\
	\ad{M}^{(k)} \defeq \sum_{l=1}^k \1^{\ot 2(l-1)}\ot (M\ot\1-\1\ot M^*)\ot \1^{\ot 2(k-l)},\nonumber
\end{gather}
and then we have applied Eq.~(\ref{eq:unitaryaverage}) for performing the average on the replicated Hilbert space.
$P^{(k)}$ is referred to as an \textit{effective Hamiltonian} or \textit{super-Hamiltonian}.
If we enumerate the various copies of the Hilbert space using an index $a\in\{1,...,2k\}$, then the replicated Brownian Hamiltonian $H^{(k)}(t)$ evolves the odd-numbered copies forwards in time by $H(t)$, and the even-numbered copies backwards in time by $-H(t)^*$.
For example, the time evolution of a single density matrix $\rho(t)$ is generated by the superoperator $H^{(1)}(t)$ acting on two copies.
Both the time-dependent Hamiltonian $H^{(k)}(t)$ and the averaged effective Hamiltonian $P^{(k)}$ are symmetric under permutation of the even copies and of the odd copies among themselves, described by the group $S_k\times S_k$, where $S_k$ is the $k$-th symmetric group.
In addition, if $H(t)$ is symmetric under the action of a group $G$, then the replica models also possess a replicated symmetry $G^{\times 2k}$, which describes the action of the symmetry group on each separate copy.
Notice that while $H^{(k)}(t)$ evolves all copies independently, $P^{(k)}$ will in general couple them.
In this work we will be studying models where the Hilbert space describes a local many-body structure, so that $\mc H=\mc H_\mrm{loc}^{\otimes L}$ and interactions only occur along bonds connecting neighboring sites.
In this case we may define a spatially-homogeneous set of generators $\{h_{\alpha,ij}\}$ where the indices $\langle ij\rangle$ specify the bond, and $\alpha$ specifies an operator acting on the 2-site Hilbert space $\mc H_\mrm{loc}\otimes \mc H_\mrm{loc}$.
Then the effective Hamiltonian for the $k$-replica model will decompose as a sum of bond terms:
\begin{equation}\label{eq:superham-local}
	P^{(k)} = \sum_{\langle i j\rangle} P^{(k)}_{ij},\qquad P^{(k)}_{ij}\defeq\sum_\alpha\left(\ad{h_{\alpha,ij}}^{(k)}\right)^2.
\end{equation}
It is therefore useful to picture replicated many-body systems as  having the same geometry as the original system, but composed of ``replicated sites'' whose local Hilbert space is $\mc H_\mrm{loc}^{\otimes 2k}$.

\subsection{States and Operators in Replicated Space} % MARK: *notation

We review here the conventions that will be used throughout the paper to describe states and operators on the replicated Hilbert space $\mc H^{\otimes 2k}$.
To indicate the action of an operator $O$ acting on a given replica $a\in\{1,...,2k\}$ we will use the notation
\begin{equation}
	O^{[a]}\defeq\begin{cases}
		\1^{\ot (a-1)}\ot O\ot \1^{\ot(2k-a)}\quad&\text{if }a\text{ is odd},\\
		\1^{\ot (a-1)}\ot O^*\ot \1^{\ot(2k-a)}\quad&\text{if }a\text{ is even}.
	\end{cases}
\label{eq:opconvention}
\end{equation}
This convention simplifies a few common expressions when working with Brownian models, for example, we have:
\begin{gather}
	\ad{M}^{(k)} = -\sum_{a=1}^{2k} (-1)^{a} M^{[a]},\label{eq:ad-notation}\\
	(U(t)\ot U^*(t))^{\ot k}=\prod_{a=1}^{2k}U^{[a]}.
\end{gather}
For states on a replicated  Hilbert space $\mH^{\ot 2k}$ (where $\mH$ might also be the single-site Hilbert space $\mH_{\rm loc}$ in certain situations), there are multiple convenient notations that we use interchangeably depending on context.
First, we may simply use the notation:
\begin{equation}\label{eq:1s-basis}
	\ket{{n_1}{n_2}...{n_{2k}}}\defeq\ket{{n_1}}\ot\ket{{n_2}}\ot ...\ot \ket{{n_{2k}}},
\end{equation}
where $\{\ket{n}\}$ is a basis for $\mH$.
Alternatively, we may define states in {$\mH^{\ot 2k}$} starting from a list of $k$  operators by first writing operators in $\mrm{End}(\mc H)$ as states on $\mc H\ot\mc H$ as
\begin{equation}\label{eq:liouvillianrep}
	O = \sum_{n,n'}O_{nn'}\ketbra{n}{n'}\;\;\longrightarrow\;\; \sum_{n,n'}O_{nn'}\ket{n}\ot\ket{n'},
\end{equation}
and then inserting each of the $k$ operators on a unique pair of Hilbert spaces, the first of which is odd-numbered and the second even-numbered.
For example, we can define:
\begin{equation}
	|O^{(1)}\ot \!...\!\ot O^{(k)}\rangle\defeq\!\!\!\sum_{\{n_m,n_m'\}}\!\!O^{(1)}_{n_1n_1'}...O^{(k)}_{n_kn_k'}|n_1n'_1...n_kn'_k\rangle,
\end{equation}
or more generally for $\sigma\in S_k$ we define:
\begin{equation}\label{eq:permstate}
	|O^{(1)}\ot \!...\!\ot O^{(k)};\sigma\rangle\defeq \mc S_\sigma |O^{(1)}\ot \!...\!\ot O^{(k)}\rangle,
\end{equation}
where $\mc S_\sigma$ applies the permutation $\sigma$ on the set of all even-numbered copies of the replicated Hilbert space, so that
\begin{equation}\label{eq:subgroup}
	\mc S_\sigma\ket{n_1n'_1...n_kn'_k} = |n_1n'_{\sigma^{-1}(1)}...n_kn'_{\sigma^{-1}(k)}\rangle.
\end{equation}
A similar definition will also hold when $O$ is not a local operator, but is instead an operator acting on the whole Hilbert space $\mc H$.
The pairings of the Hilbert space copies associated to the states of the form Eq.~\eqref{eq:permstate} are pictured in Fig.~\ref{fig:pairings}.
A similar definition will also hold when $O$ is not a global operator, but instead a local operator acting on the local Hilbert space $\mc H_\mrm{loc}$.
Note that we will use $e$ to represent the identity permutation and $\eta$ to represent the cyclical permutation $(12 \cdots k)$.
\begin{figure}
	\centering
	\includegraphics[width=0.9\linewidth]{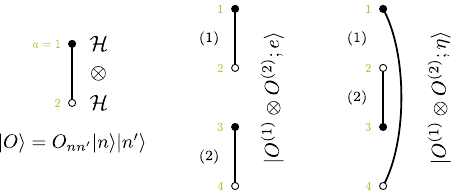}
	\caption{Pictorial representations of the vectorization of the operator $O\mapsto\ket O$ on a duplicated Hilbert space, and the states defined in Eq.~\eqref{eq:permstate} for a $k=2$ replica system ($e\in S_2$ is the identity permutation and $\eta\in S_2$ is the swap). }
	\label{fig:pairings}
\end{figure}

\subsection{Ground States of Effective Hamiltonians and $k$-Commutants}\label{subsec:gseffective} % MARK: *symms
The ground states of the effective Hamiltonians $P^{(k)}$ on $k$ replicas have deep connections to symmetries of the generating terms $\{h_\alpha\}$.
In the $k=1$ case for example, the ground states of $P^{(1)}$ [Eq.~\eqref{eq:superham}] are exactly the operators $\ket{Q}\in\mc H\otimes \mc H$ which for all $\alpha$ satisfy:
\begin{equation}\label{eq:commutant-states}
	0=P^{(1)}\ket Q = \mc L^{(1)}_{h_\alpha}\ket Q = \ket{[h_\alpha, Q]},
\end{equation}
where the second equality comes from the fact that $P^{(1)}$ is a sum of positive semi-definite terms.
This means that $\ket{Q}$ is a ground state of $P^{(1)}$ if and only if $Q$ is in the \textit{commutant algebra} of the generating set $\{h_\alpha\}$, i.e., the algebra of all operators that commute with each generator $h_\alpha$ \cite{moudgalya2022from,moudgalya2024,lastres2026}, or equivalently, with the entire set of Brownian unitaries $\{U(t)\}$ generated by the terms $\{h_\alpha\}$.
The operators $Q$ should therefore be interpreted as the symmetries or conserved charges in the original circuit model.
Similarly, $\ket {\mc Q}\in\mc H^{\ot 2k}$ is a ground state of $P^{(k)}$ if and only if the associated superoperator $\mc Q\in \mrm{End}(\mc H^{\ot k})$ belongs to the \textit{$k$-commutant}~\cite{Gross2007Designs, Dankert2009, HarrowLow2009, hearth2025unitary, liu2024unitary, mitsuhashi2025unitary, lastres2026, lastres2026geometryfreefermioncommutants}, i.e. the set of all operators in $\textrm{End}(\mathcal{H}^{\ot k})$ commute with all the replicated generators:
\begin{equation}
	0=\mc L^{(k)}_{h_\alpha}\ket{\mc Q} = \Big|\Big[\sum_{m=1}^k \1^{\ot (m-1)}\ot h_\alpha\ot \1^{\ot(k-m)},\,\mc Q\Big]\Big\rangle.
\end{equation}
In other words, $\mc Q$ belongs to the $k$-commutant if and only if it commutes with the full family of replicated Brownian unitaries $U(t)^{\ot k}$, where $U(t)$ is a Brownian unitary generated by $\{h_\alpha\}$.
It is easy to see that the ground states of the effective Hamiltonian $P^{(k)}$ are partially dictated by the ground states of $P^{(k')}$ for smaller replica number $k'<k$.
For example, we have that~\cite{vardhan_entanglement_2024, lastres2026}
\begin{equation}\label{eq:trivialgs}
	P^{(1)}\sket{Q^{(m)}} = 0\,\implies\, P^{(k)} |Q^{(1)}\otimes ... \otimes Q^{(k)};\sigma\rangle = 0,
\end{equation}
where $\sigma\in S_k$ is any permutation [cf.~Eq.~\eqref{eq:permstate}].
This means that if $\{Q^{(m)}\}$ are a set of symmetry operators of $\{h_\alpha\}$, then a large set of ground states of $P^{(k)}$ can be constructed using them.
In particular, since the identity operator $\1$ commutes with any other operator, we always have
\begin{equation}\label{eq:ssb-perm}
	P^{(k)}| \1^{\ot k};\sigma\rangle = 0,\quad \sigma\in S_k
\end{equation}
for any choice of Brownian generators $\{h_\alpha\}$.
These ground states are associated with the spontaneous breaking of the $S_k\times S_k$ replica symmetry of $P^{(k)}$ to its diagonal subgroup~\cite{enriched-phases}, and are labeled by coset elements, which we identify with permutations $\sigma\in S_k$ through the definition Eq.~\eqref{eq:subgroup}.
As we will see, the states in Eq.~\eqref{eq:ssb-perm} are relevant for the computation of replica observables such as Rényi entropies and OTOCs.
Similarly, the ground states identified in Eq.~\eqref{eq:trivialgs} for symmetric models, are associated with the spontaneous breaking of the replicated symmetry, given by the symmetry operators
\begin{equation}
    [Q^{(1)}\ot ...\ot Q^{(2k)},P^{(k)}]=0.
\end{equation}
If the symmetries of $\{h_\alpha\}$ form a group $G$,  then the replicated symmetry has the structure $G^{\times 2k}$, and it intertwines non-trivially with the replica symmetry $S_k\times S_k$, forming a semidirect product \cite{enriched-phases}.
This structure of the ground states is sometimes associated to Strong-to-Weak Spontaneous Symmetry Breaking (SW-SSB), particularly when $k = 1$~\cite{ogunnaike2023unifying, moudgalya2024,  Huang2025Mar,Chen2025Feb,Lessa2025Mar,Sala2024Oct}.
Note that not all ground states of $P^{(k)}$ need to be of the form Eq.~\eqref{eq:trivialgs}; for example, in free fermion systems $U(t)^{\ot k}$ has a larger continuous symmetry than just the discrete $S_k$, which then results in a larger ground state space for $P^{(k)}$~\cite{enriched-phases,swann2025,fava2024,lastres2026geometryfreefermioncommutants}.
If ground states of the form Eq.~\eqref{eq:trivialgs} do actually exhaust the full set of ground states of $P^{(k)}$, then the set of generators $\{h_\alpha\}$ is said to be a \textit{symmetric $k$-design}~\cite{Gross2007Designs, Dankert2009, HarrowLow2009}.
The fact that for $k\geq 2$ effective Hamiltonians for free-fermion systems have extra ground states, implies that they do not form symmetric $k$-designs~\cite{poetri2026, larocca2026, lastres2026, lastres2026geometryfreefermioncommutants}.
In this work, we will mostly focus on symmetric systems which do form symmetric $k$-designs for low values of $k$, and we will mostly be focusing on $k = 2$.

\subsection{Observables as Matrix Elements of the Evolution Operator} % MARK: * observables
\label{sec:observables-general}
A standard technique in the study of observables through replica Hilbert spaces is to map such observables to specially constructed states on the replicated Hilbert space, whose time-evolution under $P^{(k)}$ reproduces the mean value of the time-evolved observable.
Below we review these expressions for the observables we are interested in.
\subsubsection{Hydrodynamic and non-hydrodynamic correlation functions}\label{subsubsec:hydrononhydro}
For example, two-point correlation functions can for example be computed for $k=1$ as~\cite{moudgalya2024}:
\begin{equation}\label{eq:simple-correlator}
	\begin{split}
		\overline{\langle A(0)\+ B(t) \rangle} = \frac{1}{D}\mrm{tr}(A\+\overline{B(t)}) =\frac{1}{D}\langle{A}|\overline{B(t)}\rangle =\qquad \\
		= \frac{1}{D}\bra{A}\overline{U(t)\otimes U^*(t)}\ket{B} = \frac{1}{D}\bra{A}e^{-\kappa P^{(1)}t}\ket{B}
	\end{split}
\end{equation}
where $\ket{A}$ and $\ket B$ are the state representations of the operators $A$ and $B$, from Eq.~\eqref{eq:liouvillianrep}, and $D=\mrm{dim}(\mc H)$ is the dimension of the many-body Hilbert space.
Similarly, the averaged squared norm of two-point correlation functions can be written as
\begin{equation}\label{eq:squared-correlator}
	\overline{|\langle B(t) A(0)\+ \rangle|^2} = \frac{1}{D^2}\langle{A\ot A\+}|\,e^{-\kappa P^{(2)}t}\,|{B\ot B\+}\rangle.
\end{equation}
We will typically be interested in \textit{autocorrelation} functions of local operators supported on a single site $i$, and hence we usually set $A = B = O_i$.
We can write the real space expressions of the boundary states in Eqs.~(\ref{eq:simple-correlator}) and (\ref{eq:squared-correlator}) as
\begin{gather}
    \ket{O_i} = \bigotimes_{j \neq i}\ket{\mathds{1}}_j \ket{O}_i,\\
    \sket{O_i \otimes O_i^\dagger} = \bigotimes_{j \neq i}\sket{\mathds{1}^{\otimes 2}; e}_j \sket{O \otimes O^\dagger}_i.
\label{eq:replicatedhydroop}
\end{gather}
Eq.~(\ref{eq:squared-correlator}) for such autocorrelation functions then reads
\begin{align}
    \label{eq:squared-autocorrelator}
	\overline{|\langle O_i(t) O_i(0)\+ \rangle|^2} &= \frac{1}{D^2}\langle{O_i\ot O_i\+}|\,e^{-\kappa P^{(2)}t}\,|{O_i\ot O_i\+}\rangle \nonumber\\
    &=\frac{1}{D^2}\norm{e^{-\kappa P^{(2)}t/2}\,|{O_i\ot O_i\+}\rangle}^2,
\end{align}
where $\norm{\cdot}$ denotes the standard norm for quantum states.
In systems with continuous symmetry, a traceless local operator $O_i$ is referred to as being \textit{hydrodynamic} if it has a non-zero overlap on any of the non-trivial symmetry operators of the system.
In this context, this means that the operator is hydrodynamic if $\ket{O_i}$ a non-zero overlap on the $k = 1$ commutant, i.e., the ground states of $P^{(1)}$:
\begin{equation}
    \exists Q:\;\;\; P^{(1)}\ket{Q} = 0\;\;\text{and}\;\;\braket{O_i}{Q} \neq 0.
\label{eq:hydrodefn}
\end{equation}
This also implies a non-zero overlap of $\sket{O_i \otimes O_i^\dagger}$ on the states of $e$ permutation in the $k = 2$ commutant, which are the ground states of $P^{(2)}$ with $\sigma = e$ in Eq.~(\ref{eq:trivialgs}).
Conversely, a traceless operator is \textit{non-hydrodynamic} if $\ket{O_i}$ has a zero overlap on the $k = 1$ commutant.
This also implies a zero overlap of the state $\sket{O_i \otimes O_i^\dagger}$ on the $e$ permutation states in the $k = 2$ commutant. Note however that $\sket{O_i \otimes O_i^\dagger}$ generically has a non-zero overlap on the $\eta$ permutation states in the $k = 2$ commutant, i.e., the ground states of $P^{(2)}$ with $\sigma = \eta$ in Eq.~(\ref{eq:trivialgs}).
This interplay of overlaps and permutations plays an important role in characterizing the fundamental degrees of freedom and dynamics of the two-replica effective models.
\subsubsection{Entanglement entropies}
Another class of observables of interest for $k$-replica random circuits, is the averaged $k$-th Rényi purity
\begin{gather}\label{eq:renyi-purity}
	\overline{E_k(t)}\defeq\overline{\tr_A(\rho_A^k(t))}=\bra{A\!:\!\bar A}e^{-\kappa P^{(k)}t}\ket{\rho^{\ot k}}\\[.4em]
	\ket{A\!:\!\bar A}\defeq\bigotimes_{i\in \bar A,\,j\in A}|\1^{\ot k};e\rangle_{i}|\1^{\ot k};\eta\rangle_{j},\label{eq:definition-dw-id-swap}
\end{gather}
where $A$ and $\bar A$ are complementary subsystems of the full many-body system, $e\in S_k$ is again the identity permutation, and $\eta\in S_k$ is the cyclical permutation $(12...k)$, which for $k=2$ reduces to the swap permutation.
In the computation of the purity, the $e$ permutation implements the partial trace $\rho_A=\tr_{\bar A}(\rho)$, while the $\eta$ permutation implements the cyclic trace over $A$.
From this observable, one can compute the \textit{annealed} average of the $k$-th Rényi entanglement entropy for the given initial state $\rho$ by simply taking the logarithm of Eq.~(\ref{eq:renyi-purity}).
For $k > 1$, this is in general is a lower bound on the actual \textit{quenched} average:
\begin{equation}
	\overline{S_k(t)}\defeq\!\frac{1}{1-k}\overline{\log(\tr_A(\rho_A^k(t)))} \geq \frac{1}{1-k}\log\big(\overline{E_k(t)}\big)\!
\end{equation}
The discrepancy between the two averages is controlled by higher moments of the Rényi purity within the Brownian ensemble.
In practice, for random many-body unitary dynamics the difference between the two averages is expected to only amount to a subleading contribution, as is seen numerically~\cite{Rakovszky_2019, swann2025}, and also analytically justified in some cases~\cite{Zhou_2019}.
The von Neumann entanglement entropy can in principle be extracted from $E_k(t)$ using an appropriate $k \rightarrow 1$ replica limit, but we will not discuss that limit in this work.
\subsubsection{General structure}
The fact that the states $|\1^{\ot k};\sigma\rangle$ are always ground states of the effective model $P^{(k)}$ for any choice of generators $\{h_\alpha\}$ [cf. Eq.~\eqref{eq:ssb-perm}] provides a nice interpretation for the dynamics of entanglement and correlators in terms of low-energy physics of $P^{(k)}$.
If the subsystems $A$ and $\bar A$ are mostly contiguous, the state $\ket{A\!:\!\bar A}$ can be interpreted as a (set of) domain wall excitations between different ground states on the complementary subsystems $A$ and $\bar A$.
Hence from Eq.~(\ref{eq:renyi-purity}), the averaged entanglement dynamics can be understood in terms of domain wall dynamics in the effective model $P^{(k)}$~\cite{vardhan_entanglement_2024}.
Similarly, if in the correlators of Eqs.~(\ref{eq:simple-correlator})-(\ref{eq:squared-correlator}) one chooses $A$ and $B$ to be strictly local observables, then the associated states will look like localized excitation on a fully-polarized $|\1^{\ot k};\sigma\rangle$ background, and the averaged evolution of correlators is described by the imaginary-time dynamics of these localized excitation dynamics in $P^{(k)}$~\cite{moudgalya2024}.
\subsection{Accidental Symmetries of Effective Hamiltonians}\label{subsec:accidentalsym}
Finally, we discuss some symmetries of the effective Hamiltonians that can help make them more tractable.
As discussed above, in general $P^{(k)}$ has a global permutation symmetry $S_k \times S_k$, as well as a replicated symmetry $G^{\times 2k}$ if the symmetries of $\{h_\alpha\}$ form a group $G$.
In addition, for certain choices of $\{h_\alpha\}$, $P^{(k)}$ can have additional ``accidental'' symmetries that simplify its analysis.
In particular, here we will only be interested in its single-site symmetries, which are defined as single-site operators which commute with $P^{(k)}$.
Given the form of $P^{(k)}$ in Eq.~(\ref{eq:superham}), it is easy to see that any strictly local operator $A_i$ which for each $\alpha$ either commutes or anticommutes with $h_\alpha$ produces a local symmetry operator for $P^{(k)}$:
\begin{equation}\label{eq:acc-symms}
    \forall\alpha:\begin{cases}
        [\,A_i, \,h_\alpha] = 0,\ \text{or}\\
        \{A_i, h_\alpha\} = 0,
    \end{cases}
	\!\!\!\!\!\!\implies\Big[\prod_{a=1}^{2k} A_i^{[a]},\,P^{(k)}\Big]=0.
\end{equation}
Local symmetries often lead to an effective reduction of the local Hilbert space dimension, since they provide a conserved quantum number on each site.
In their presence, we will be able to study replica observables by only considering a subsector of the effective Hamiltonian $P^{(k)}$.

\begin{figure*}
	\centering
	\textbf{(a)}\includegraphics[width=0.30\linewidth,valign=t]{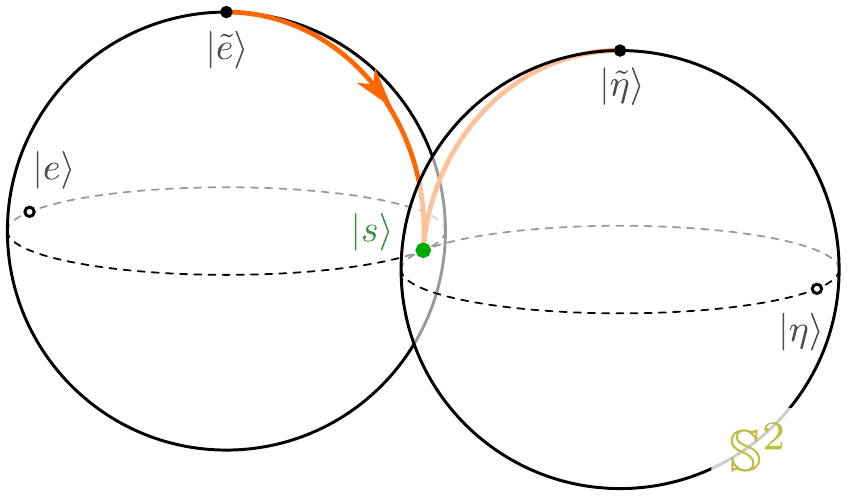}
	\textbf{(b)}\includegraphics[width=0.30\linewidth,valign=t]{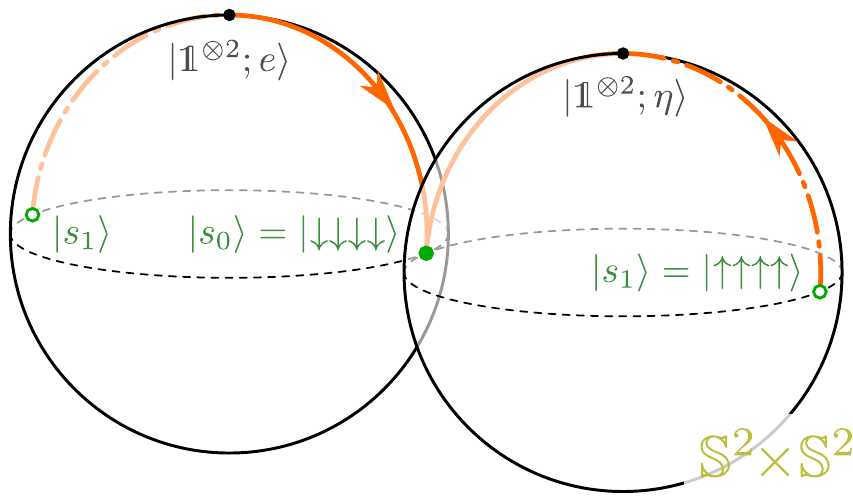}
	\textbf{(c)}\includegraphics[width=0.27\linewidth,valign=t]{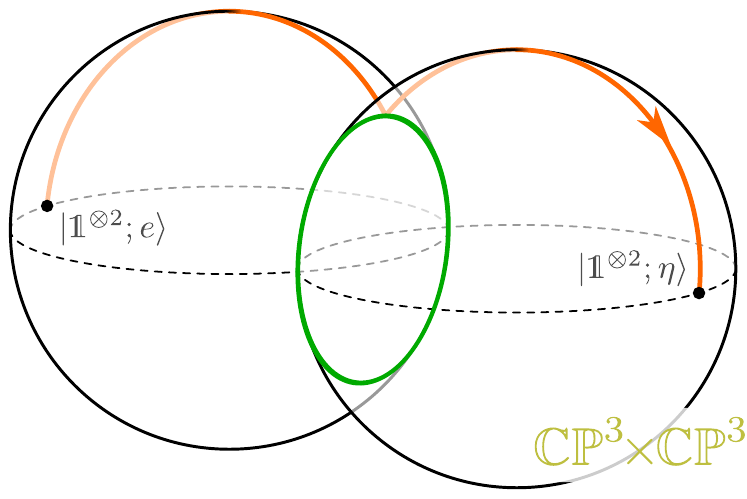}
	\caption{Illustration of the ferromagnetic ground state manifolds of the models studied here. The lines show paths between the $e$ and $\eta$ branches that are relevant for entanglement domain wall melting. (a) Ground state manifold and of the $es\eta$ toy model, composed of two Bloch spheres $\mb S^2$ which share the $s$-point. (b) Commutant manifold for a $U(1)$ symmetry at $k=2$ replicas, composed of two $\mb S^2\times \mb S^2$ manifolds that touch at two points. (c) Commutant manifold for a $SU(2)$ symmetry at $k=2$ replicas, composed of two $\mb{CP}^3\times \mb{CP}^3$ manifolds that intersect on a $(\mb S^2)^{\times 4}$ continuous submanifold.}
	\label{fig:gsman-all}
\end{figure*}

\section{Geometry of Ferromagnetic Ground State Manifolds} % MARK: 3. METHODS
\label{sec:fgsman}
In this section, we introduce the geometric notion of ground state manifold for ferromagnetic models, then show how it can be applied to the case of effective Hamiltonians for replica models.
This naturally leads to effective descriptions of the imaginary time dynamics of these models, whose semiclassical limit can be described by geometric variational methods such as the Time-Dependent Variational Principle (TDVP), which we discuss in the next section.
Central to our approach is the observation that the effective Hamiltonians for the replica Brownian models under consideration  in many cases possess \textit{ferromagnetic} ground state spaces, which, as we explain in Sec.~\ref{sec:fgsmanrep} is a consequence of the fact that the symmetries of the system are on-site group actions.
In this context, ferromagnetism is defined as the condition that the space of all ground states of $P^{(k)}$ is a vector space spanned by some set of \textit{fully-polarized} states (i.e., permutation symmetric product states):
\begin{equation}
	\ker(P^{(k)}) = \text{span}\{\ket{v}^{\ot L}\},\;\;\{\ket{v}\}\subseteq\mc H_\mrm{loc}^{\ot 2k}. \label{eq:def-ferro}
\end{equation}
For Hamiltonians which satisfy this condition, the ground state space can be fully characterized by their \textit{ground state manifold}~\cite{lastres2026geometryfreefermioncommutants}, which parametrizes all single-site states $\ket v$ which correspond to fully-polarized ground states:
\begin{equation}\label{eq:gsman-def}
	M_{\mrm{GS}}^{(k)}\defeq\mb P\big\{\!\ket{v}\in \mc H_\mrm{loc}^{\ot 2k}:P_{1,2}^{(k)}(\ket{v}\ot\ket{v})=0\big\}.
\end{equation}
Here $\mb P\{\cdot\}$ denotes the projectivization operation (the condition $P_{1,2}^{(k)}(\ket{v}\ot\ket{v})=0$ is invariant under scalar multiplication $\ket{v}\mapsto\lambda\ket{v}$, so we identify valid local states up to an overall rescaling).
The ground states of effective Hamiltonians $P^{(k)}$ are directly associated with the $k$-commutants of the generating gateset $\{h_{\alpha,ij}\}$ (see Sec.~\ref{subsec:gseffective}), we will also refer to $M_\mrm{GS}^{(k)}$ as the \textit{$k$-th commutant manifold}.
It is possible that these manifolds are smooth, and as we discuss below, many such examples are well-known in the literature.
There is however another possibility---that to our knowledge has not been explored in the literature---where these manifolds possess singularities, and here we demonstrate many such examples below.
\subsection{Smooth manifolds}
A simple example of a smooth ground state manifold is that of the spin-$\frac{1}{2}$ ferromagnetic Heisenberg model, defined as
\begin{equation}\label{eq:HeisModelHam}
		H=J\sum_{\langle i j\rangle}(1-X_iX_j-Y_iY_j-Z_iZ_j),\quad J > 0.
\end{equation}
On any connected system of $L$ sites, the ground states of this model are well-known to form a tower of states between the two fully polarized states $\ket{\up \dots \up}$ and $\ket{\dn \dots \dn}$, given by
\begin{equation}
    \mrm{ker}(H) = \mrm{span}\{(S^-_{\tot})^n\ket{\up \dots \up}\}_{n = 0}^L,
\end{equation}
where $S^-_{\rm tot} \defn \sum\nolimits_i{S^-_i}$ is the total spin lowering operator.
This ground state space is equivalently be thought of as the span of   ferromagnetic configurations of spin-coherent states parametrized by a unit vector $\vec{n} = (\sin\theta\cos\varphi, \sin\theta\sin\varphi, \cos\theta)$ on $\mathbb{S}^2$~\cite{JMRadcliffe_1971, fradkin2013field, Altland_Simons_2010}, given
\begin{equation}
    \text{span}\{\ket{\vec{n}}^{\ot L}\},\;\;\;\ket{\vec{n}} \defn \cos\frac{\theta}{2} \ket\up + e^{i\varphi}\sin\frac{\theta}{2} \ket\dn.
\label{eq:blochsphere}
\end{equation}
This means that the ground state manifold is given by the Bloch sphere $\mb S^2$:
\begin{equation}\label{eq:HeisModelGS}
		{M_{\mrm{GS}}=\{\ket{\vec{n}}: \vec{n}\in\mb S^2\}} \cong \mb S^2.
\end{equation}
This structure leads to the well-known field-theoretical description of the low-energy physics of the Heisenberg ferromagnet in terms of spin-coherent states~\cite{kochetov95, PhysRevLett.55.537, fradkin2013field, Altland_Simons_2010}, for which we review in Appendix~\ref{app:path-integral}.
\subsection{Manifolds with singularities (complex projective varieties)}\label{sec:esn-man}
For general ferromagnetic models, however, $M_\mrm{GS}$ is not always a smooth manifold since it can possess singularities at some points.
In the mathematical language, $M_\mrm{GS}$ is in general a \textit{complex projective variety}, i.e., a subset of projective space which is identified by a set of homogeneous polynomial equations.
These singular features play an important role in the dynamics of the initial states we are interested in under the effective models $P^{(k)}$.
A simple relevant example occurs in the context of the ``$es\eta$ model'' [cf.~Eq.~\eqref{eq:esn-ham}] which we introduce in greater detail in Sec.~\ref{sec:esn}.
In short, the local Hilbert space is 3-dimensional, with basis $\{\ket e,\ket s,\ket \eta\}$; the interactions between $e$-$s$ and $\eta$-$s$ are of ferromagnetic Heisenberg type Eq.~\eqref{eq:HeisModelGS}, while the $e$-$\eta$ interaction is of the gapped XXZ type.
This results in the model having ground states spanned by two $SU(2)$ towers of states, one between the polarized states $\ket{e \dots e}$ and $\ket{s \dots s}$, and the other between the polarized states $\ket{s \dots s}$ and $\ket{\eta \dots \eta}$:
\begin{equation}
    \text{span}\{(S^{(se)}_{\tot})^n\ket{s \dots s},\,(S^{(s\eta)}_{\tot})^n\ket{s \dots s}\}_{n=0}^L,
\label{eq:esetaGS}
\end{equation}
where we have defined the generalized ``lowering'' operators $S^{(\alpha\beta)}_\mrm{tot} \defn \sum\nolimits_j \ketbra{\beta}{\alpha}_j$ for $\alpha,\beta \in \{e,s,\eta\}$.
Notice that the fully-polarized $\ket{s\dots s}$ state appears in both towers.
This ferromagnetic ground state space can then be described in terms of a ground state manifold composed of two Bloch spheres  [cf.~Eqs.~(\ref{eq:blochsphere}) and (\ref{eq:HeisModelGS})] one for each of the pairs of states $\{\ket e,\ket s\}$ and $\{\ket \eta,\ket s\}$ as
\begin{equation}
    \cos\frac{\theta}{2} \ket s + e^{i\varphi}\sin\frac{\theta}{2} \ket e,\ \text{ and }\ \cos\frac{\theta}{2} \ket s + e^{i\varphi}\sin\frac{\theta}{2} \ket \eta,\label{eq:esn-gs-param}
\end{equation}
which altogether should span the same space as Eq.~(\ref{eq:esetaGS}).
However, note that both Bloch spheres contain the $\ket s$ state, which will therefore form a singular contact point between the two spheres, as shown in Fig.~\ref{fig:gsman-all}a.
Mathematically, one can also apply the definition Eq.~\eqref{eq:gsman-def} to show that the manifold is simply
\begin{equation}
	M_{\mrm{GS}} = \mathbb{P}\{x\ket e+y\ket s +z\ket \eta:\ xz=0,\ x,y,z \in \mathbb{C}\}.\label{eq:gsman-esn}
\end{equation}
This single polynomial equation precisely defines the singular $es\eta$ ground state manifold as a subvariety of the projective plane $\mb{CP}^2$.
As we will discuss in the forthcoming sections, ferromagnetic ground states whose manifolds possess ``contact'' singularities naturally arise in the context of effective Hamiltonians $P^{(k)}$.
\subsection{Characterizations of ferromagnetic ground states}\label{subsubsec:FMcharac}
The notion of ground state manifold provides a great simplification to the study of the ground states of ferromagnetic Hamiltonians, since it reduces the global problem of computing all frustration-free ground states to the computation of the local geometric datum $M_{\mrm{GS}}$ (which is independent of the lattice structure of the many-body Hilbert space and only depends on the two-site Hamiltonian).
In Ref.~\cite{lastres2026geometryfreefermioncommutants} we recently exploited the ground state manifold framework to provide a geometric characterization of the $k$-commutants of free fermion systems, and we demonstrated methods to derive the ground state manifolds using their invariance under certain Lie group symmetries in the free-fermion replica problem.
In an upcoming work Ref.~\cite{the-algebra-paper}, we provide a more general algebraic framework to rigorously prove the forms of the ferromagnetic ground states and the associated manifolds starting from the Hamiltonians, which also applies to the Hamiltonians studied in this work.
The main statement is especially simple, and there we show how it can be used to automatically check ferromagnetism in specific cases, with the help of Computer Algebra Systems.
Those results are not strictly necessary here since, as we discuss in the next section, we assume the $k$-design property for the systems we work with, which indeed holds for generic symmetric systems.
However, we believe that the techniques introduced there are useful for proving such properties by exploiting the frustration-free nature of the effective Hamiltonians, and are in general applicable to a much wider set of problems beyond $k$-designs.

\subsection{Geometry of $k$-Commutants in Generic Symmetric Systems} % MARK: * gs man replica
\label{sec:fgsmanrep}
Let us now introduce the general structure of the ground state manifolds of the effective Hamiltonians $P^{(k)}$ associated to Brownian circuits with continuous symmetries.
Consider a continuous symmetry group $G$ with an on-site representation on the many-body Hilbert space
\begin{equation}\label{eq:sym-onsite-rep}
	U_g = \prod_{i=1}^L u_i(g),\quad g \in G.
\end{equation}
We consider the Brownian model to be $G$-symmetric if and only if the commutant of the generating operators $\{h_{\alpha,ij}\}$ is the operator algebra $\llangle G\rrangle$ generated by the group action~\cite{moudgalya2022,moudgalya2022from}, which here is simply the span of the unitaries in $G$ due to the guaranteed closure under multiplication:
\begin{equation}
    \llangle G\rrangle=\mrm{span}\{U_g:g\in G\}.
\label{eq:Gcomm}
\end{equation}
\subsubsection{The $k = 1$ case}
As discussed in Sec.~\ref{subsec:gseffective}, when $k = 1$, the operators $Q$ in the commutant are exactly the ground states $\ket Q$ of the effective Hamiltonian $P^{(1)}$ [cf. Eq.~\eqref{eq:commutant-states}], we have~\cite{moudgalya2024}:
\begin{equation}
	\mrm{ker}(P^{(1)}) = \mrm{span}\big\{ \ket{u(g)}^{\ot L}: g\in G\, \big\}.
\end{equation}
For on-site $G$-symmetric Brownian circuits, $P^{(1)}$ is therefore ferromagnetic, and the group $G$ itself parametrizes a continuous family of fully-polarized ground states.
However, note that in general, the full set of ferromagnetic ground states can be larger, so that $M_G^{(1)}\supseteq G$.
For example, in the case of $U(1)$-symmetric systems [cf.~Eq.~\eqref{eq:u1-generators}], the commutant manifold is known to be expressible as the ground state manifold of a ferromagnetic Heisenberg model of Eq.~\eqref{eq:HeisModelHam} on the restricted local operator basis $\ket\up\cong\ket\1$ and $\ket\dn\cong\ket Z$~\cite{moudgalya2024}, hence the commutant manifold $M_{U(1)}^{(1)}$ is in this case the sphere of spin-coherent states
\begin{equation}
		M^{(1)}_{U(1)} = \Big\{\cos\frac{\theta}{2} \ket\1 + e^{i\varphi}\sin\frac{\theta}{2} \ket Z\Big\},
\label{eq:U1manifold}
\end{equation}
of which, only the submanifold of states with ${\varphi = \frac{\pi}{2}}$, which trace out an equatorial circle, belong to $G=U(1)$.
For a general $G$, the exact form of $M_G^{(1)}$ can be deduced using techniques discussed in Ref.~\cite{the-algebra-paper}.
\subsubsection{$k \geq 2$ and frozen states}\label{subsubsec:frozenstates}
For general $k$, if the generating gate set $\{h_{\alpha,ij}\}$ forms a $G$-symmetric $k$-design for symmetry operators of the form of Eq.~(\ref{eq:sym-onsite-rep}), then $P^{(k)}$ can be shown to be ferromagnetic.
As discussed in Sec.~\ref{subsec:gseffective}, by definition, forming a symmetric $k$-design means that \textit{all} ground states of $P^{(k)}$ are linear combinations of the ones in Eq.~\eqref{eq:trivialgs}, with $Q^{(m)}\in\llangle G\rrangle$.
For a fixed permutation $\sigma\in S_k$, the states $| Q^{(1)}\ot ...\ot Q^{(k)};\sigma\rangle $ can then all be obtained as linear combinations of fully-polarized ground states of the form
\begin{equation}
	\{|q^{(1)}\ot ...\ot q^{(k)};\sigma\rangle^{\ot L},\;\; (q^{(1)}, ..., q^{(k)})\in \big(M_G^{(1)}\big)^{\times k}\}
\label{eq:qpermstate}
\end{equation}
which implies that $P^{(k)}$ is ferromagnetic, with the manifold of these states for a fixed $\sigma$ being simply $(M_G^{(1)})^{\times k}$.
This product structure of the manifold for each permutation is important for an effective ``replica decoupling'' that occurs in the low-energy dynamics of the model (see Appendix~\ref{app:replica-decoupling}).
That leads to significant simplifications in the study of certain observables that only couple to the single branch of the manifold associated to a permutation $\sigma$, and equivalent concepts have been utilized in many previous works~\cite{McCulloch_2023,mcculloch2026longlivedlocalquantumcoherences}.
However, in this work we will be interested in more general observables for which a full characterization of the manifold that includes all permutations is needed.
In particular, the total commutant manifold including all the permutations is not always simply the disjoint union of $k!$ copies of this manifold $(M_G^{(1)})^{\times k}$.
It turns out that two submanifolds associated to different permutations $\sigma\neq\sigma'$ can possess non-trivial intersections, resulting in a more intricate global structure.
This happens wherever for some choice of local operators we have
\begin{equation}
	|q^{(1)}\ot ...\ot q^{(k)};\sigma\rangle=|{{q'}^{(1)}}\ot ...\ot {{q'}^{(k)}};\sigma'\rangle,\ \ \sigma \neq \sigma'.
\label{eq:diffperm}
\end{equation}
This occurs only when the representation of the symmetry group $G$ harbors some \textit{frozen states}, i.e., fully-polarized product states $\{\ket{f_{{\lambda,\alpha}}}^{\ot L}\}$ which belong to $d_\lambda$-dimensional irreps $\lambda$ (with $\alpha$ being a label that can also take continuous values) of the symmetry, such that $\lambda$ appears in $\mc H$ with multiplicity one.
Equivalently, these {$\{\ket{f_{{\lambda,\alpha}}}^{\ot L}\}$} are exact eigenstates of all $G$-symmetric operators, including the Brownian Hamiltonians $H(t)$, with eigenvalues that only depend on $\lambda$ and not $\alpha$, i.e.,
\begin{equation}
    H \ket{f_{{\lambda,\alpha}}}^{\ot L} = \varepsilon_{\lambda}(H)\ket{f_{{\lambda,\alpha}}}^{\ot L}\quad
    \forall H:[G,H]=0.
    \label{eq:frozenstatedefn}
\end{equation}
Hence we refer to the the frozen states with the same $\lambda$ as \textit{degenerate} frozen states.
Note that all irreps for Abelian symmetries are one-dimensional, i.e., $d_\lambda = 1$ for all $\lambda$, hence all frozen states are non-degenerate, hence there we can omit the label $\alpha$.
Given a symmetric Brownian model with frozen states, it is easy to see that the following replicated state composed by them is a ground state of $P^{(k)}$:
\begin{gather}
    P^{(k)} \sket{f_{\lambda,\{\alpha_a\}}}^{\ot L} = 0,\nonumber\\
    \sket{f_{\lambda,\{\alpha_a\}}} = \bigotimes_{m=1}^k \big( \sket{f_{\lambda, \alpha_{2m-1}}} \ot \sket{f_{\lambda, \alpha_{2m}}}^* \big)
    \label{eq:replicafrozenstatedefn}
\end{gather}
where the $2k$ copies are over replica space and the $L$ copies are over real space, and $\lambda$ is the same across replicas.
Since this replicated state is completely disentangled across all replicas, it can be interpreted as coming from states for the form of Eq.~(\ref{eq:qpermstate}) for any of the permutations, since for all $\sigma\in S_k$ we have
\begin{equation}
	\sket{f_{\lambda,\{\alpha_a\}}} =\big|\ot_{m = 1}^k \sketbra{f_{\lambda, \alpha_{2m-1}}}{f_{\lambda, \alpha_{2\sigma(m)}}};\sigma\big\rangle
\label{eq:fsigmasame}
\end{equation}
This shows the existence of solutions to Eq.~(\ref{eq:diffperm}), which in turn causes the different copies of $(M^{(1)}_G)^{\times k}$ to intersect.
We note that these general results are always valid under the assumption of having generators $\{h_{\alpha,ij}\}$ which form a symmetric $k$-design, which we can explicitly verify for the models under consideration using the framework in Ref.~\cite{the-algebra-paper}.
\subsubsection{Connection to Spontaneous Symmetry Breaking}
The fact that all ground states of $P^{(k)}$ are of the form Eq.~\eqref{eq:qpermstate} might be reminiscent of spontaneous symmetry breaking.
Indeed, these states break the $S_k\times S_k$ replica symmetry that permutes the odd and even copies separately while retaining a particular diagonal $S_k$ that permutes them jointly in accordance with the permutation $\sigma$.
In addition, they also break the $G^{\times 2k}$ symmetry to a smaller $G^{\times k}$ associated with the given permutation.
However, the manifold structure does not simply follow from this pattern of symmetry breaking alone, first since $M^{(k)}_G$ has generally a larger dimension than that of the product group $G^{\times k}$, and second because of the presence of frozen states, as we illustrate in the following section with examples of $U(1)$- and $SU(2)$-symmetric models.
\subsection{Examples of Commutant Manifolds}
\subsubsection{$U(1)$ symmetry}\label{sec:u1-gs-man}
Frozen states are a common feature in systems with continuous symmetries. For example, in $U(1)$-symmetric systems, there are often unique states of minimum or maximum charge, hence making them eigenstates of all symmetric operators.
For example, in a spin-$\frac{1}{2}$ model with $U(1)$ charge $Z_\mrm{tot}=\sum_i Z_i$, there are for example two non-degenerate ferromagnetic frozen states: fully-polarized {$\ket{\up}^{\ot L}$} and {$\ket{\dn}^{\ot L}$} that satisfy Eq.~(\ref{eq:frozenstatedefn}), and hence we have
\begin{equation}
    \ket{f_{\lambda = \up}} = \ket{\up},\;\;\;\ket{f_{\lambda = \dn}} = \ket{\dn},
\label{eq:U1frozen}
\end{equation}
where we have omitted the label $\alpha$ since $U(1)$ is Abelian.
By definition, these states cannot evolve under any $U(1)$-preserving dynamics.
As a consequence of Eq.~(\ref{eq:fsigmasame}), the existence of frozen states results in intersections between $(M^{(1)}_{U(1)})^{\times k}$, which is particularly simple to understand when $k = 2$.
For $U(1)$-conserving models, we have $M^{(1)}_{U(1)}\cong \mb S^2$, hence for $k=2$ we expect the commutant manifold to have two copies of $(\mb S^2\times\mb S^2)$, one for the $e$ (identity) permutation, and one for the $\eta$ (swap) permutation.
However, due to these frozen states, these two copies touch each other at two singular points corresponding to $\ket{\up}^{\ot 4}$ and $\ket{\dn}^{\ot 4}$, hence we get the full manifold to be
\begin{equation}
\begin{gathered}
	M_{U(1)}^{(2)}\cong(\mb S^2\times\mb S^2)_e\cup (\mb S^2\times\mb S^2)_\eta\\
	\text{where}\ \ (\mb S^2\times\mb S^2)_e\cap (\mb S^2\times\mb S^2)_\eta = \{\ket\dn^{\ot 4},\ket\up^{\ot 4}\},
\end{gathered}
\end{equation}
see Fig.~\ref{fig:gsman-all}b for a depiction.
For the generators of the $U(1)$ Brownian model listed in Eq.~\eqref{eq:u1-generators}, this can be shown to be the exact form of the ground state manifold for $k=2$.
The structure of the manifolds for $k \geq 3$ is discussed in Sec.~\ref{subsubsec:largerk}.
\subsubsection{$SU(2)$ symmetry}\label{sec:su2-gs-man}
At the $k=1$ level, not much changes when working with non-Abelian symmetries.
If we take the example [cf.~Eq.~\eqref{eq:su2-generators}] of an $SU(2)$-symmetric spin-$\frac{1}{2}$ many-body system, with conserved charges $\vec S_\mrm{tot}=\sum_i\vec S_i$, then any fully-polarized state on two copies of the Hilbert space is an element of the commutant, so that
\begin{equation}
    P^{(1)}\ket{O}^{\ot L}=0,\quad \forall O\in\mrm{End}(\mb C^2).\label{eq:allopssu2}
\end{equation}
This result is consistent with the fact that the effective Hamiltonian $P^{(1)}$ is an $SU(4)$-symmetric ferromagnetic model \cite{moudgalya2024, PhysRevB.108.054307}.
The commutant manifold in this case is therefore the smooth complex projective space
\begin{equation}
    M^{(1)}_{SU(2)} = \mb P\{\mrm{End}(\mb C^2)\} \cong \mb{CP}^{3},
\end{equation}
which has complex dimension $3$.
Non-Abelian symmetries however allow the possibility of degenerate frozen states in addition to non-degenerate ones.
For $SU(2)$, all fully-polarized product states of the form $\ket{\vec n}^{\ot L}$ for $\vec n\in\mb S^2$ [cf.~Eq.~\eqref{eq:blochsphere}] are degenerate frozen states, as they all belong to the same irrep.
Hence in the notation of Eq.~(\ref{eq:frozenstatedefn}), we have
\begin{equation}
    \ket{f_{\alpha = \vec{n}}} = \ket{\vec{n}},
\label{eq:SU2frozen}
\end{equation}
where we have omitted the label $\lambda$ since it is the same for all the frozen states.
The existence of this continuous family of frozen states is directly reflected in the geometry of the manifold.
If we consider the $k=2$ case, the commutant manifold $M^{(2)}_{SU(2)}$ is the union of two complex six-dimensional $\mb{CP}^{3}\times \mb{CP}^{3}$ smooth manifolds---one per permutation $\{e,\eta\}$ derived from $M^{(1)}_{SU(2)}$---and their intersection is given by all states of the form [cf. Eq.~(\ref{eq:fsigmasame})]
\begin{equation}\label{eq:su2-int-struct}
    \big |\ketbra{\vec n_1}{\vec n_2}\ot \ketbra{\vec n_3}{\vec n_4};e\big\rangle = \big |\ketbra{\vec n_1}{\vec n_4}\ot \ketbra{\vec n_3}{\vec n_2};\eta\big\rangle .
\end{equation}
The intersection of the two manifolds has therefore the structure of a continuous and connected manifold $(\mb S^2)^{\times 4}$, which has complex dimension $4$.
In summary, $M^{(2)}_{SU(2)}$ is composed of two six-dimensional complex manifolds intersecting in a four-dimensional complex manifold; a schematic representation is shown in Fig.~\ref{fig:gsman-all}c.

\subsubsection{Commutant manifolds for $k \geq 3$}\label{subsubsec:largerk}
As seen in Eq.~\eqref{eq:qpermstate}, the full manifold $M_G^{(k)}$ is the union of $k!$ identical manifolds with structure $(M_{G})^{\times k}$, potentially with some intersections.
For $k\geq 3$ the structure of the intersections becomes more complicated.
We have seen that states of the form Eq.~\eqref{eq:fsigmasame} are points where \textit{all} the $k!$ manifolds touch.
But for every pair of manifolds, their intersection depends on the Hamming distances between the permutations $\sigma,\sigma'$:
Eq.~\eqref{eq:diffperm} is satisfied only when $q^{(m)}=\ketbra{f_{\lambda,\alpha}}{f_{\lambda,\alpha'}}$ on the replicas $m$ where the two permutations differ, i.e. $\sigma(m)\neq \sigma'(m)$, while any operator $q^{(m)}\in M_{G}^{(1)}$ can be placed on the replicas where the two permutations coincide.
This can lead to a larger submanifolds of intersections between manifolds $(M_{G})^{\times k}$ corresponding to close permutations.
For example, if there is a unique frozen state, the manifolds $M_G^{\times k}$ corresponding to permutations $\sigma$ and $\sigma'$ intersect on two submanifolds, each of the form $M_{G}^{\times(k-d(\sigma, \sigma'))}$, where $d(\sigma, \sigma')$ is their Hamming distance.
These manifolds become more complicated when there are multiple frozen states, possibly forming a continuous manifold of frozen states such as in the $SU(2)$ case, and we will not discuss them in detail here.
\subsubsection{Free-fermion systems}\label{subsubsec:free-fermion-generic}
In a recent work~\cite{lastres2026geometryfreefermioncommutants} we characterized the commutant manifolds that appear in free-fermion systems, where $\{h_\alpha\}$ are fermionic bilinears with or without extra symmetries.
Unlike the generic case studied in Sec.~\ref{sec:fgsmanrep}, free fermion systems do not form symmetric $k$-designs for $k\geq 2$, hence their commutant manifolds are drastically different.
In particular, even free fermionic systems without any continuous symmetries have a continuous replica swap symmetry for $k \geq 2$ which extends the discrete $S_k\times S_k$ \cite{enriched-phases,swann2025,fava2024} appearing in generic systems.
This continuous replica symmetry for $k \geq 2$, also referred to as superoperator symmetry~\cite{lastres2026} or quadratic symmetry~\cite{zeier2011symmetry} for $k = 2$, results in the ground states of effective Hamiltonians $P^{(k)}$ for $k \geq 2$ to have a continuous manifold structure, even though there is no continuous symmetry in the system to begin with.
Indeed, the effective Hamiltonians $P^{(k)}$ in free-fermion systems can be shown to be ferromagnetic as well, and they are appropriate $SO(2k)$ or $SU(2k)$ generalizations of the standard Heisenberg model~\cite{swann2025, fava2023nonlinear, fava2024}.\footnote{When $k=2$, the effective Hamiltonian for noisy free fermions possesses an $SO(4)$ symmetry, and thanks to the decomposition $\mf{so}(4)\cong\mf{su}(2)\oplus \mf{su}(2)$, within a given sector this Hamiltonian is exactly the $SU(2)$ spin-$\frac{1}{2}$ Heisenberg model~\cite{swann2025}.}
This results in smooth and homogeneous manifolds $M_\mrm{GS}^{(k)}$ that take the form of (orthogonal) Grassmannian manifolds for systems (without) with a $U(1)$ symmetry~\cite{lastres2026geometryfreefermioncommutants}.
This continuous manifold structure implies that the dynamics of free-fermion systems, is broadly similar to the dynamics of interacting systems with continuous symmetries.
The main difference lies in the fact that the underlying manifold in free-fermions is smooth, whereas in interacting systems it generically contains singularities, and these can play an important role for certain observables, as we will discuss.
A toy model for free fermion replica systems, and its relation with real physical models, is discussed in Appendix~\ref{sec:delta1lim}.
\section{Geometry of Low-Energy Excitations and Time Evolution} % MARK: 3. TDVP
\label{sec:tdvp}
As shown in Eq.~\eqref{eq:superham}, the averaged dynamics of a replicated Brownian model reduces to imaginary-time evolution by an effective positive-semidefinite Hamiltonian $P^{(k)}$.
In the previous section we have also discussed the structure the ground states of $P^{(k)}$, which are ferromagnetic in the case of many simple symmetries we are interested in.
We will now show how to use this geometric structure of the ground state manifolds developed in the previous section to derive the late-time form of the imaginary-time dynamics of the effective Hamiltonian.
We are ultimately interested in computing amplitudes of the form of Eq.~(\ref{eq:squared-correlator}) or (\ref{eq:renyi-purity}), in general some observable of the form
\begin{equation}
    \sbra{\psi_{\mrm f}} e^{-\kappa P^{(k)} t} \sket{\psi_{\mrm i}}
\label{eq:Pkwanttocompute}
\end{equation}
where $\sket{\psi_{\mrm i}}$ and $\sket{\psi_{\mrm f}}$ are some replicated states on the $2k$ copy Hilbert space.
At late times, we expect the contributions to this propagator to only come from low-energy excitations on top of the ground states of $P^{(k)}$.
If the ground state manifold of $P^{(k)}$ is ferromagnetic and of the form of Eq.~(\ref{eq:gsman-def}), we expect that in the long wavelength or continuum limit, the low-energy excitations are captured by continuously varying configurations
\begin{equation}
    \ket{v(x)}=\bigotimes_i\ket{v(x_i)}, \qquad \ket{v(x_i)} \in M^{(k)}_{G},
\label{eq:varyingmanifold}
\end{equation}
where $v(x)$ is a field that smoothly varies with the position $x$ and parametrizes the configuration $\ket{v(x)}$.
Hence if $v(x)$ is independent of $x$, then $\ket{v(x)}$ is by definition a ground state of $P^{(k)}$, whereas if $v(x)$ varies slowly as a function of $x$, we expect that configuration to have small energy under $P^{(k)}$.
Such configurations can also be interpreted as approximate symmetries of the $k$-replica system~\cite{moudgalya2024}.
We then focus on the computation of the class of propagators of the form
\begin{equation}
    \bra{v_\mrm{f}(x)}e^{-\kappa P^{(k)}t}\ket{v_\mrm{i}(x)},
\label{eq:restprop}
\end{equation}
which is not as general as Eq.~(\ref{eq:Pkwanttocompute}) since it assumes a parametrization of the initial and final states as product states $\ket{\psi_\mrm{i}} = \ket{v_\mrm{i}(x)}$ and $\ket{\psi_\mrm{f}} = \ket{v_\mrm{f}(x)}$.
Remarkably, as we discuss in the upcoming sections, we find that such parametrizations exist for all the observables we are interested in here.
For example, consider the ferromagnetic Heisenberg model [Eq.~\eqref{eq:HeisModelHam}], whose ground state manifold is a sphere $\mb S^2$ parametrized by spin-coherent states [Eq.~\eqref{eq:HeisModelGS}] labeled by a unit vector $\vec{n}$.
In the continuum limit, it is known that its low-energy excitations (spin-waves) can be captured in terms of spatially varying fields $\vec n(x)\in\mb{S}^2$ (see Fig.~\ref{fig:heis}a)~\cite{fradkin2013field,Altland_Simons_2010}, and Eq.~(\ref{eq:varyingmanifold}) is the generalization of this to more general ferromagnetic manifolds.
Analogous to Eq.~(\ref{eq:restprop}), we can then ask how to compute propagators of the form
\begin{equation}
    \sbra{\vec{n}_\mrm{f}(x)} e^{-Ht}\sket{\vec{n}_\mrm{i}(x)}
\label{eq:Heispropagator}
\end{equation}
for some initial ($\vec{n}_\mrm{i}(x)$) and final ($\vec{n}_\mrm{f}(x)$) configurations of spin-coherent states, where $H$ is the Heisenberg Hamiltonian.

\subsection{Semiclassical Approaches and Quantum Fluctuations}\label{subsec:review}
We now give an overview of the general strategies to compute propagators of the form Eq.~\eqref{eq:restprop}.
One standard approach is through path-integral descriptions of the time evolution operator.
In Appendix~\ref{app:path-integral} we provide a detailed overview of the spin-coherent path integral formulation of the ferromagnetic Heisenberg model~\cite{kochetov95heis},  which enables the computation of propagators such as in Eq.~(\ref{eq:Heispropagator}).
Such a field-theoretical formulation of the imaginary time evolution can be naturally constructed whenever the ground state manifold of $H$ is smooth and homogeneous (i.e., a quotient of a Lie group by a closed subgroup $G/H$).
Such a manifold is said to parametrize a set of generalized coherent states $\ket v$ \cite{generalized-coher-st,perelomov1986generalized,Kochetov1995,Berezin:1978sn} that have properties analogous to the spin-coherent states.
Effective Hamiltonians $P^{(k)}$ obtained from free-fermion terms have this property~\cite{lastres2026geometryfreefermioncommutants,swann2025}, see Ref.~\cite{lastres2026geometryfreefermioncommutants} for a discussion of generalized coherent states in that context.
For such models, we expect a similar path integral formulation as in the Heisenberg model to hold, similar to techniques used in \cite{fava2023nonlinear,fava2024} (see Appendix~\ref{sec:delta1lim}).
Furthermore a simple scaling argument~\cite{swann2025} that we illustrate in Appendix~\ref{app:path-integral} for the Heisenberg model, proves that the evolution of propagators in such models is at long times $t$ is dominated by semiclassical trajectories corresponding to saddle-points of the action.
For more general ferromagnetic models that do not have such a smooth ground state manifolds,  the emergence of semiclassical behavior at late imaginary times can be justified via a more general physical argument, commonly found in the large-$S$ treatment of Heisenberg ferromagnets~\cite{fradkin2013field}.
At long times, the imaginary-time evolution operator $e^{-Ht}\ket \psi$ is expected to locally relax the initial state onto polarized ground states in order to minimize the energy.
From a real space renormalization group perspective, these locally ordered regions act as macroscopic degrees of freedom, thus suppressing quantum fluctuations and inherently justifying a semiclassical continuum description.
Regardless, in many situations quantum fluctuations that go beyond a simple semiclassical picture might be relevant for some aspects.
Conservation laws can for example constrain the evolved state onto a quantum superposition of locally ordered regions, and the presence of several equivalent semiclassical trajectories can similarly produce entanglement during the the time evolution.
These phenomena are all in principle captured by the path integral formalism.
However, for the systems under consideration here, formulating a path integral representation is a challenging task, due to the peculiar structure of the commutant manifolds $M^{(k)}_{G}$ studied in Sec.~\ref{sec:fgsmanrep}.
First, such manifolds do not admit a simple generalized coherent state parametrization~\cite{generalized-coher-st,perelomov1986generalized,Kochetov1995,Berezin:1978sn}.
In addition, for field configurations that traverse the contact singularities of the manifold, we expect to observe a behavior similar to that of fluctuating defects (see Appendix~\ref{sec:fluctuations} for a discussion of such defects in the $es\eta$ toy model of Sec.~\ref{sec:esn} and the $U(1)$-symmetric model of Sec.~\ref{sec:u1}), raising further theoretical challenges of addressing them in the path-integral formulation.
These issues can potentially be bypassed by working with the path integral on a larger homogenous manifold of which the ground state manifold of interest is a submanifold, and imposing the ground state via energetic restrictions in the Hamiltonian that corresponds to terms in the action, but we leave the exploration of such possibilities to future work.
The fact that ground state manifolds of ferromagnetic Hamiltonians are \textit{complex manifolds} (see Sec.~\ref{sec:fgsman}) does however provide a simple alternative approach to the study of semiclassical dynamics.
On such manifolds, it is possible to unambiguously formulate a set of semiclassical variational equations of motion for field configurations, following the Time-Dependent Variational Principle (TDVP)~\cite{tdvp20}.
Due to the structure of the ferromagnetic Hamiltonians, we find that these equations are mainly dependent on the \textit{geometry} of the variational manifold, which is directly related to the ground state manifold.
This structure makes TDVP readily applicable to the study of the imaginary-time evolution of an initial configuration of fields $\ket{v(x)}$.
We will present the TDVP in the forthcoming section, with a more detailed account of it and its relation with the path integral in Appendix~\ref{app:tdvp}.
In the case of the path integral, semiclassical trajectories are usually virtual intermediaries for the computation of propagators, and as such depend on \textit{both} the initial and final state, and hence cannot be interpreted as even an approximate physical evolutions of the initial state.
In contrast, the TDVP equations instead describe a unique \textit{physical} trajectory, which approximates the evolved quantum initial state; this property is helpful when one fixes the initial state and wants to make statements for a large class of final states, which is what we want to do in the computation of entanglement entropies.
While this single trajectory leads to a simple description of observables over time, this approach is also prone to variational error when computing propagators of the form Eq.~\eqref{eq:restprop}, which we however find to be small for generic overlaps at long times when comparing with numerical data.
The TDVP also does not provide a framework to study the subleading quantum fluctuations around the semiclassical trajecory.
We will therefore limit ourselves to a qualitative discussion of fluctuations, informed by numerical simulations and intuition provided by the structure and symmetries of the ground state manifolds under consideration.
Nevertheless, in some cases, we can map this imaginary-time evolution to a classical stochastic model, which provides a rigorous framework for studying fluctuations, and which in Appendix~\ref{sec:fluctuations} we demonstrate for the $es\eta$ toy model of Sec.~\ref{sec:esn} and the $U(1)$-symmetric model of Sec.~\ref{sec:u1}.

\begin{figure*}
	\centering
    \textbf{(a)}\includegraphics[width=0.37\linewidth,valign=t]{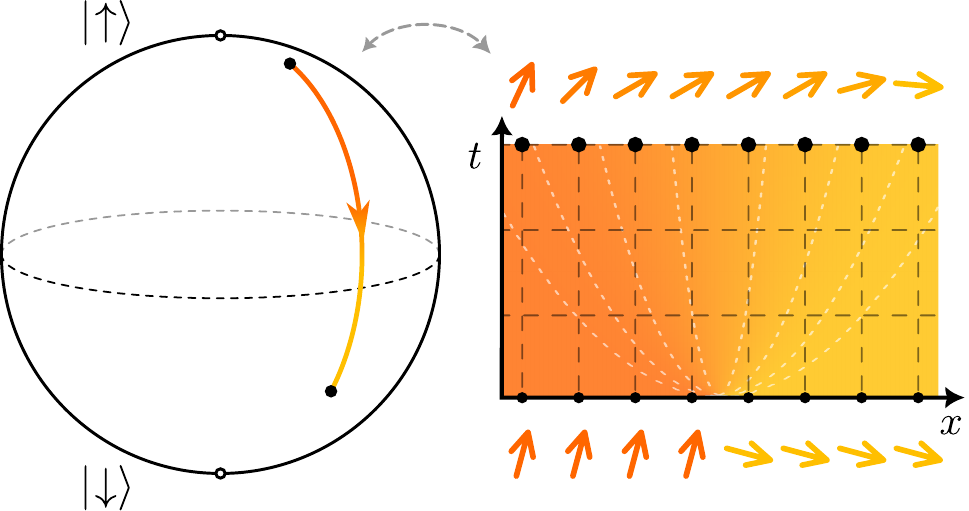}
    \textbf{(b)}\includegraphics[width=0.30\linewidth,valign=t]{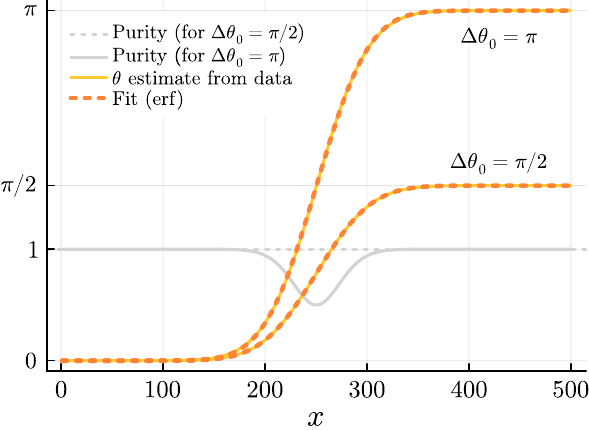}
    \textbf{(c)}\includegraphics[width=0.23\linewidth,valign=t]{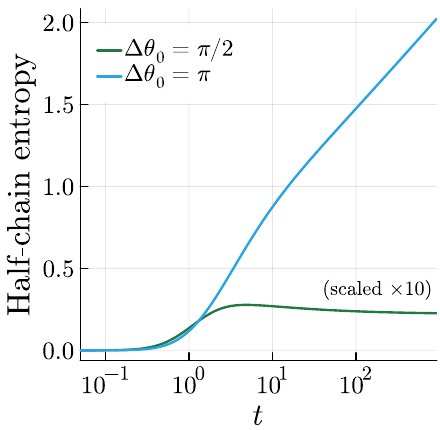}
	\caption{(a) Illustration of the ground state manifold and diffusive relaxation for the Heisenberg model imaginary-time dynamics. For a $1$d system, in the continuum limit a smooth field configuration can be seen as a curve on the Bloch sphere.
    (b) Spatial profile of the field $\theta(x,t)$ at late times extracted using Eq.~\eqref{eq:theta-est} from numerical data of the Heisenberg model, compared to a fit of the TDVP solution Eq.~\eqref{eq:sol-heis-th}. In light gray the purity of the one-site reduced density matrix for the $\Delta\theta_0=\pi$ evolved state, where the TDVP equations have multiple equivalent solutions.
    (c) Half-chain entanglement entropy from numerical data for two evolved domain walls in the Heisenberg model. The entropy in the $\Delta\theta_0=\pi/2$ state saturates to a small value and later slowly decreases (value scaled by $\times 10$ for visibility), while for the $\Delta\theta_0=\pi$ state it grows logarithmically over time consistently with the prediction in Sec.~\ref{sec:ex-dw}. Numerical data was obtained using MPS simulations for (b),(c) with $L=500$ and SVD cutoff $10^{-12}$ (that leads to a maximum bond dimension of $10$ and $35$ respectively for $\Delta\theta_0=\pi/2$ and $\pi$), where the imaginary time evolution is performed using an efficient implicit integration technique~\cite{Zima2026Jun}.
    }
	\label{fig:heis}
\end{figure*}

\subsection{Time-Dependent Variation Principle (TDVP)}\label{subsec:tdvp-main} % MARK: * tdvp
We now discuss in detail the TDVP~\cite{tdvp20} applied to the variational family of field configurations $\ket{v(x)}$ that at each point in space take value on the ground state manifold $M_\mrm{GS}$, as in Eq.~\eqref{eq:varyingmanifold}.
Since $M_{\mrm{GS}}$ is a complex projective manifold (apart from singularities), it possesses a quantum metric $g_{\alpha\beta}$ and a symplectic Berry curvature tensor $\omega_{\alpha\beta}$, which are defined in Eq.~\eqref{eq:tdvp-tensors-def}, where the indices $\alpha\beta\dots$ indicate intrinsic coordinates on the manifold.
For imaginary-time evolution, TDVP is equivalent to local energy relaxation via gradient descent along the metric $g_{\alpha\beta}$ of the variational manifold
\begin{equation}\label{eq:tdvp-continuum-main}
	g_{\alpha\beta} \dot v^\beta = -\frac{\delta H[v]}{\delta v^\alpha},
\end{equation}
where $\{v^\alpha\}$ are the components of the field $v$ that parametrizes the manifold, as in Eq.~\eqref{eq:varyingmanifold}.
We find it sometimes more convenient to parametrize the manifold with extrinsic coordinates, i.e, coordinates of a larger space in which $M_{GS}$ can be embedded. For example the Heisenberg ground state manifold $\mathbb{S}^2$ can be described with a pair of intrinsic spherical coordinates $(\theta, \varphi)$, or equivalently with a coordinate vector $\vec n\in\mb R^3$ with unit norm, as in Eq.~\eqref{eq:HeisModelGS}. In order to ensure that the evolution remains constrained to the variational manifold of interest, in extrinsic coordinates the TDVP equations require an explicit tangent space projection, which however does not change the underlying physics [see Eq.~(\ref{eq:tdvp-embed}) and Appendix \ref{app:tdvp} for specifics, along with examples of interest].
\subsubsection{Norm of the evolved state}
Since our variational family lies on a projective manifold, by definition the TDVP does not capture the information of the norm evolution of the state.
Hence within this space, the TDVP prediction for the evolved state identifies all rescalings of the same global state, i.e.,
\begin{equation}
    \ket{v(x,t)} \sim \lambda \ket{v(x,t)}
\label{eq:TDVPrescalings}
\end{equation}
for any complex number $\lambda$.
Nevertheless, the true quantum imaginary-time evolution from an initial state $\ket{v(x,0)}$ under Hamiltonian $H$ does have a well-defined norm:
\begin{equation}
    \norm{e^{-Ht}\ket{v(x,0)}}^2 \defn \bra{v(x,0)}e^{-2Ht}\ket{v(x,0)},
\label{eq:truenorm}
\end{equation}
and the knowledge of these norms is important in obtaining correct values for propagators such as Eq.~(\ref{eq:restprop}).
In order to capture the `norm' degree of freedom, we may parametrize the TDVP evolution as
\begin{equation}
    \ket{v(x,t)} = \mc N(t) \ket{v(x,t)}_\mrm{norm.}
\end{equation}
where $\ket{v(x,t)}_\mrm{norm.}$ is a representative of the state Eq.~\eqref{eq:TDVPrescalings} chosen such that its norm is one.
One canonical way to approximate the evolution of the norm is to apply the equations of motion Eq.~\eqref{eq:tdvp-continuum-main} directly to the additional degree of freedom $\mc N(t)$, which predicts that the final norm is obtained by integrating the energy of the state over time~\cite{tdvp20} [cf.~Eq.~\eqref{eq:en-dec-app}]:
\begin{equation}\label{eq:norm-evo-eq}
    |\mc N(t)|^2_\textsc{tdvp} \defeq |\mc N(0)|^2\exp(-2\int_0^t\dd \tau\, H[v(x,\tau)]),
\end{equation}
However, note that there is in general no self-consistent way to fix the norms for \textit{all} initial states under TDVP evolution.
For example, due to the variational error introduced in the TDVP evolution of a state, the predicted values for propagators of the form Eq.~\eqref{eq:restprop} will in general depend on whether one chooses to apply the TDVP evolution to $\ket{v_\mrm{i}(x)}$ or to $\bra{v_\mrm{f}(x)}$, thus violating the Hermiticity of the Hamiltonian.
For some applications it will be more natural to compute the norm of the state in alternative ways. In particular, if we know that under the exact quantum evolution, a field configuration decays exactly to another field configuration with zero energy
\begin{equation}
	\lim_{t\rt\infty} e^{-Ht} \ket{v(x,0)} \propto \ket{v_\infty(x)},
\end{equation}
we may as an alternative strategy approximate the true norm of Eq.~(\ref{eq:truenorm}) indirectly, by imposing that during the TDVP evolution, the overlap $\braket{v_\infty(x)}{v(x,t)}$ remains constant.
In practice, this means that given the projective evolution of the variational state $\ket{v(x,t)}$, we define the norm as
\begin{equation}\label{eq:norm-evo-eq-prime}
    |\mc N(t)|^2_{\textsc{tdvp}'} \defeq \frac{|\braket{v_\infty(x)}{v(x,0)}|^2}{|\braket{v_\infty(x)}{v(x,t)}_\mrm{norm.}|^2}.
\end{equation}
In this way, the estimate $|\mc N(t)|^2_{\textsc{tdvp}'}$ will be equal to the exact value of the norm at times $t=0$ and $t=\infty$.
Note that both formulas Eq.~\eqref{eq:norm-evo-eq} and ~\eqref{eq:norm-evo-eq-prime} become equivalent and exact if the full quantum evolution can be exactly captured by the variatonal manifold; however, they are not equivalent within any restricted variational ansatz, which is always the case for the systems we are studying here.
Nevertheless, we find the qualitative asymptotic behavior of the norm to be anyway independent of the choice of convention
\begin{equation}
    \norm{e^{-Ht}\ket{v(x,0)}} \sim |\mc N(t)|_{\textsc{tdvp}} \sim |\mc N(t)|_{\textsc{tdvp}'}
\end{equation}
and in the following we will use both for different applications, making explicit reference to the convention used.
\subsubsection{Equations of motion for ferromagnetic models}
As a useful example, let us now formulate the TDVP for the ferromagnetic Heisenberg model, where the variational manifold is given by the spin-coherent state configurations of the form $\ket{\vec{n}(x)}$.
The continuum Hamiltonian can be written in the vector coordinates $\vec n\in\mb S^{2}$ or the $(\theta,\varphi)$ spherical coordinates as [cf.~Eq.~\eqref{eq:heis-ham-cont-lim}]:
\begin{equation}\label{eq:cont-heis-mtxt}
    H[\vec{n}] \!=\! \frac{J}{2}\!\int\!\dd^d x \, |\bm\nabla \vec{n}|^2 =\! \frac{J}{2}\!\int\!\dd^d x\,[|\bm\nabla \theta|^2 + \sin^2\theta |\bm\nabla\varphi|^2].
\end{equation}
Moreover, the metric for this manifold $g_{\alpha\beta}$ is proportional to the standard Euclidean metric on the sphere $\mathbb{S}^2$, which, after applying Eq.~(\ref{eq:tdvp-continuum-main}) ultimately leads to equations of motion of the form [cf.~Eq.~(\ref{eq:tdvp_heis_final})]
\begin{equation}
	\partial_t \vec{n} = 2 J \left( \bm \nabla^2 \vec{n} + |\bm \nabla \vec{n}|^2 \vec{n} \right),\label{eq:diff-eq-heis}
\end{equation}
which are equivalent to the dissipative part of the Landau-Lifshitz-Gilbert equations with an effective magnetic field $\vec H_\mrm{eff}\sim 2J\bm\nabla^2\vec n$ on every site \cite{aharoni2000introduction}.
This equation of motion is closely related to those obtained from the path integral methods [see Eq.~\eqref{eq:eoms-z} and the discussion in Appendix~\ref{app:tdvp-vs-pi}].
More generally, we expect ferromagnetic Hamiltonians with a smooth continuous ground state manifold to be dominated by a leading term with second order spatial derivatives [cf.~Eq.~\eqref{eq:exp-generic}], analogous to Eq.~(\ref{eq:cont-heis-mtxt}) for the Heisenberg model, and consistent with the expected quadratic dispersion relation for frustration-free Hamiltonians~\cite{Masaoka2024Nov}.
For the effective models that we will study in the following sections, they all take the specific form
\begin{equation}\label{eq:deriv-exp-H}
	H[ v]=J\int_{M_\mrm{GS}}\dd^d x\ g_{\alpha\beta}\bm\nabla v^\alpha\cdot\bm\nabla v^\beta+\dots,
\end{equation}
where $g_{\alpha\beta}$ is the metric of the ground state manifold written in its intrinsic coordinates; see discussion around Eq.~(\ref{eq:explicit-rho-expansion}) for a derivation under fairly general assumptions.
For example, the Heisenberg Hamiltonian of Eq.~(\ref{eq:cont-heis-mtxt}) is also of this form, evident in its expression in spherical coordinates intrinsic to the spin-coherent state manifold.
The TDVP equations of Eq.~(\ref{eq:tdvp-continuum-main}) for such a Hamiltonian simply read [see Eq.~\eqref{eq:geom-heat-flow}]
\begin{equation}\label{eq:harm-map-heat-flow}
	\partial_t v^\gamma=2J\left(\bm\nabla^2v^\gamma+\Gamma^\gamma_{\alpha\beta}\bm\nabla v^\alpha\cdot\bm\nabla v^\beta\right),
\end{equation}
i.e. the heat equation along the metric of the variational manifold where
\begin{equation}
    \Gamma^\gamma_{\alpha\beta}=\frac{1}{2}G^{\gamma\delta}\big(\partial_\alpha g_{\delta\beta}+\partial_\beta g_{\delta\alpha}-\partial_\delta g_{\alpha\beta}\big)
\end{equation}
are the Christoffel symbols for the metric $g_{\alpha\beta}$, with $G^{\alpha\beta}$ being the inverse of the metric.
For ground state manifolds with singularities, these equations are still expected to hold away from the singularities, but the singularity itself needs to be treated separately with the precise conditions depending on its nature, as we will discuss in the upcoming sections.
Note that for systems of finite size, there are natural Neumann boundary conditions at the spatial boundary for fields $v(x)$:
\begin{equation}\label{eq:neumann-conditions}
    \bm n\cdot \bm\nabla v(x,t)\big|_{x\,\in\,\mrm{bd}} = 0,
\end{equation}
where $\bm n$ is the unit vector normal to the boundary. They can be seen as conditions that minimize the energy density along the free open boundary, and a derivation from TDVP can be found in Appendix~\ref{app:bound-cond}.
\subsubsection{Stationary solutions}
The stationary solutions to the TDVP equations of Eq.~\eqref{eq:harm-map-heat-flow}, i.e., those that satisfy $\partial_t v^\gamma =  0$, capture the effect of low-energy eigenstates of quantum Hamiltonians with the given ground state manifold.
For example, in 1d, the stationary solutions are exactly the configurations where $v(x)$ follows a geodesic on the ground state manifold, since Eq.~(\ref{eq:harm-map-heat-flow}) then reduces to the well-known equations for a geodesic.
A class of trivial geodesics are just points on the ground state manifold $v(x)\equiv\mrm{const.}$, which are the exact ferromagnetic ground states of the model.
In addition, the other geodesics variationally correspond to low-energy eigenstates of the Hamiltonian, whose norm would decay exponentially under imaginary-time evolution at an inverse-rate proportional to the energy, which can be made arbitrarily small in an infinite system, signifying gapless excitations of the Hamiltonian.
Of course, these are not the true eigenstates of the microscopic quantum Hamiltonian, and the discrepancy can be quantified through the \textit{leakage} of the trajectory outside the variational manifold $\{\ket{v(x)}\}$, which is generally always non-zero for non-trivial field configurations.
This leakage is derived in Appendix~\ref{app:leakage} for the ferromagnetic Heisenberg model, showing that it goes to zero as $|\bm \nabla\vec n|^4$ for slowly-varying solutions.
\subsubsection{Multiple equivalent solutions}\label{subsubsec:multiplesoln}
Often, particularly for discontinuous initial field configurations, there can be multiple trajectories of field configurations that satisfy the TDVP equations of motion.
While among these solutions there is often a unique one that possess the lowest energy, and is thus the unambiguous leading contribution, in some cases multiple equivalent lowest-energy solutions might also exist.
If that is the case, we posit that the true evolution is best approximated by an equal superposition of all such solutions.
Unlike the case where a unique solution exists, which represents a semiclassical state with no entanglement, the superposition of multiple solutions does result in an entangled state.
While this is strictly speaking an extension of standard TDVP, we will still refer to this as the TDVP solution.
We will discuss many such examples below.
\subsection{Example: Domain wall evolution in the Heisenberg model}\label{sec:ex-dw}
Relevant to the computation of entanglement in the rest of the paper is the case where the initial state for the TDVP evolution is a domain wall configuration between two distinct ground states [cf.~Eq.~\eqref{eq:definition-dw-id-swap}].
Here we illustrate the imaginary-time TDVP dynamics of such an initial state in the ferromagnetic Heisenberg model.
In spherical coordinates Eq.~\eqref{eq:blochsphere} let us consider the following initial state on an infinite 1d system:
\begin{equation}
    \!\vec n_{\textsc{dw}}(x,t=0)=\begin{cases}
        (\theta=\theta_0-\frac{\Delta\theta_0}{2},\ \varphi=0)\quad x<0,\\
        (\theta=\theta_0+\frac{\Delta\theta_0}{2},\ \varphi=0)\quad x>0.
    \end{cases}\!\!
\label{eq:heis_init_conditions}
\end{equation}
For a domain wall between any two points of the sphere $\mb S^2$ we can always choose the coordinates such that $\varphi=0$ everywhere.
This leads to a simplification of the TDVP equations of motion Eq.~\eqref{eq:diff-eq-heis}, which become:
\begin{equation}\label{eq:eom-theta}
    \partial_t \theta = 2J\,\partial_x^2\theta,\quad \varphi(x,t)=0.
\end{equation}
For infinite system size, the solution to this differential equation with the initial conditions of Eq.~(\ref{eq:heis_init_conditions}) can be solved using straightforward Fourier analysis, and is expressed in terms of the error function
\begin{equation}
    \theta(x,t)=\theta_0+\frac{\Delta\theta_0}{2}\,\mrm{erf}\left(\frac{x}{\sqrt{8Jt}}\right),\;\;\varphi(x,t) = 0,
    \label{eq:sol-heis-th}
\end{equation}
and corresponds to a diffusive melting of the domain wall, as pictured in Fig.~\ref{fig:heis}a.
In Fig.~\ref{fig:heis}b, we compare this solution to matrix product state simulations of the imaginary-time evolution, where, while the true evolved state is entangled, we can estimate $\theta_i$ using the reduced density matrix $\rho_i$ at site $i$ as
\begin{equation}\label{eq:theta-est}
    \theta_i\approx 2\arctan\sqrt{\frac{c_{\dn\dn}}{c_{\up\up}}},\quad \rho_i=\begin{pmatrix}
        c_{\up\up}&c_{\up\dn}\\
        c_{\dn\up}&c_{\dn\dn}\\
    \end{pmatrix},
\end{equation}
which demonstrates remarkable agreement with the TDVP prediction.
Moreover, we also find numerically in Fig.~\ref{fig:heis}c that the entanglement of the evolved domain wall is extremely small, consistent with the semiclassical TDVP approach, which remains within the product state ansatz.
This problem has also been studied with semiclassical path-integral techniques~\cite{swann2025} that we discuss in Appendix~\ref{app:path-integral} and using the Bethe ansatz solution in the case of the $\ket{\up}$-$\ket{\dn}$ domain wall~\cite{Stephan2017Oct}.

Note that since the value of $\theta$ at $x=0$ in Eq.~\eqref{eq:sol-heis-th} is always equal to $\theta_0$ throughout the evolution, this trajectory also solves Eq.~(\ref{eq:eom-theta}) with the Dirichlet boundary condition
\begin{equation}
    \theta(x=0,t)=\theta_0.
\end{equation}
For solutions where $\varphi(x,t)=\mrm{const.}$, the energy functional of Eq.~\eqref{eq:cont-heis-mtxt} can be rewritten as
\begin{equation}\label{eq:heis-theta}
    H[\theta]=\frac{J}{2}\int\dd^d x\, |\bm\nabla\theta|^2.
\end{equation}
The energy of the solution of Eq.~(\ref{eq:sol-heis-th}) can be then seen to decay algebraically over time, which leads to:
\begin{equation}
    E\sim (\Delta \theta_0)^2\sqrt{J/t}\;\;\;\implies\;\; \norm{\,\ket{\vec n_{\textsc{dw}}(x,t)}\,} \sim e^{-\sqrt{Jt}},
\label{eq:heisenbergenergy}
\end{equation}
where we have used the norm formula of Eq.~\eqref{eq:norm-evo-eq}.
Note that the $(\Delta \theta_0)^2$ can be seen as a geometric consequence of the energy functional Eq.~\eqref{eq:heis-theta} being quadratic in gradients of the field, and we will use this intuition in examples later in this work.
This form also leads to overlaps of the form $e^{-\sqrt{Jt}}$ with typical translation-invariant product initial states, which is directly relevant for the diffusive entanglement growth in free Majorana Brownian models, as studied in \cite{swann2025} using path integrals.
For $0<\Delta\theta_0<\pi$ we expect this solution to correctly describe the imaginary time evolution for the domain wall, since this is the unique solution to the TDVP equations that minimizes the energy $H[\theta]$.
However, setting $\Delta\theta_0=\pi$ (and $\theta_0=\frac{\pi}{2}$), which corresponds to having
\begin{equation}
    \ket{\vec n_\textsc{dw}(x,t=0)}=\ket{\cdots\dn\dn\up\up\cdots},
\label{eq:heisinitsuperposition}
\end{equation}
leads to an ambiguity.
Since $\varphi$ is not well-defined for this initial state and can be any angle $\varphi_0\in[0,2\pi)$, there is a continuum of equivalent TDVP solutions where $\theta(x,t)$ satisfies Eq.~\eqref{eq:sol-heis-th} and $\varphi(x,t) = \varphi_0$.
Following our prescription for multiple equivalent solutions in Sec.~\ref{subsubsec:multiplesoln}, we expect the evolved state to be an equal superposition of the form:
\begin{equation}\label{eq:heis-dw-cont-sup}
    \ket{\vec n_\textsc{dw}(x,t)}\approx\int\frac{\dd\varphi_0}{2\pi}\ket{\theta(x,t),\varphi(x,t) = \varphi_0}.
\end{equation}
As shown in Appendix~\ref{app:entr}, such a solution would lead to a logarithmic growth in the bipartite entanglement entropy at $x=0$, and this is confirmed by our simulations in Fig.~\ref{fig:heis}c.
This is also captured in Fig.~\ref{fig:heis}b, where the reduced density matrix $\rho_i$ is shown to be impure around $x=0$ for $\Delta\theta_0=\pi$.
\section{Dynamics of the $es\eta$ Toy Model} % MARK: 4. ESN
\label{sec:esn}
Given the TDVP understanding of the ferromagnetic Heisenberg model, we now move on to effective Hamiltonians $P^{(2)}$ for various continuous symmetries.
However, as a simpler intermediate step, we will first study what we call the \textit{$es\eta$ model}, given by the Hamiltonian we refer to as $P^{(es\eta)}$.
This is a simplified toy model that has a ground state manifold with a singularity and captures the key features of effective Hamiltonians $P^{(2)}$ we are interested in.
It is defined on a many-body system with three local degrees of freedom:
\begin{equation}
	\ket e,\ \ket s,\ \ket\eta.
\end{equation}
The states $\ket e$ and $\ket\eta$ are meant to represent states in permutations $e,\eta\in S_2$, as in Eq.~\eqref{eq:permstate}, while $\ket s$ represents a state that can be interpreted as being in both permutations.
$\ket s$ is therefore analogous to a replicated frozen state $(\sket{f}\ot\sket{f}^*)^{\ot 2}$ of the symmetry group [cf.~Eq.~\eqref{eq:replicafrozenstatedefn}], which are also part of both permutations, as discussed in Sec.~\ref{sec:fgsmanrep}.
Of course, for actual physical symmetries such as $U(1)$, the on-site Hilbert space of these effective models involves more local degrees of freedom, that we will discuss in detail in Sec.~\ref{sec:u1} and later.
But the local degrees of freedom there too can either be associated to one of the two permutations, which are analogs of the states $\ket{e}$ and $\ket{\eta}$, or can be part of both permutations, which are analogs of $\ket{s}$.
Here we will show how the essential physics there is already captured by this simple three-state model.
Given these local degrees of freedom, we define the two-site ferromagnetic interaction for the $es\eta$ model as:
\begin{gather}
	P^{(es\eta)}_{ij}=P^{(es)}_{ij}+P^{(\eta s)}_{ij}+P^{(e\eta)}_{ij},\label{eq:esn-ham}\\
	P^{(\sigma s)}_{ij}=\ketbra{\sigma s}{\sigma s}+\ketbra{s\sigma}{s\sigma}-
	\ketbra{\sigma s}{s\sigma}-\ketbra{s\sigma}{\sigma s}\!,\\
	P^{(e\eta)}_{ij}=\Delta(\ketbra{\eta e}{\eta e}+\ketbra{e\eta}{e\eta})+
	\ketbra{\eta e}{e\eta}+\ketbra{e\eta}{\eta e}\!,\!
\end{gather}
where $\sigma\in\{e,\eta\}$ and $\Delta > 1$.
These terms are a ferromagnetic Heisenberg interactions between the states $e$-$s$ and $\eta$-$s$, and a ferromagnetic XXZ-type interactions between the states $e$-$\eta$, which preserve the particle numbers of all three species.
The form of this Hamiltonian is derived from the fact that it is exactly a subsector of the effective Hamiltonian for the $U(1)$-conserving replica model discussed in Sec.~\ref{sec:u1}, but here we will simply treat this as a toy model and extract the relevant physics.
The $\Delta\rt 1$ limit in 1d is a toy model for an effective Hamiltonian $P^{(2)}$ of a Brownian system with free-fermion generators, and is discussed in Appendix~\ref{sec:delta1lim}.

\begin{figure*}
	\centering
    \textbf{(a)}\includegraphics[width=0.26\linewidth,valign=t]{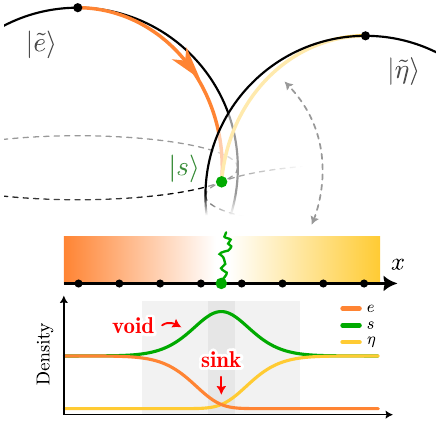}
    \textbf{(b)}\includegraphics[width=0.32\linewidth,valign=t]{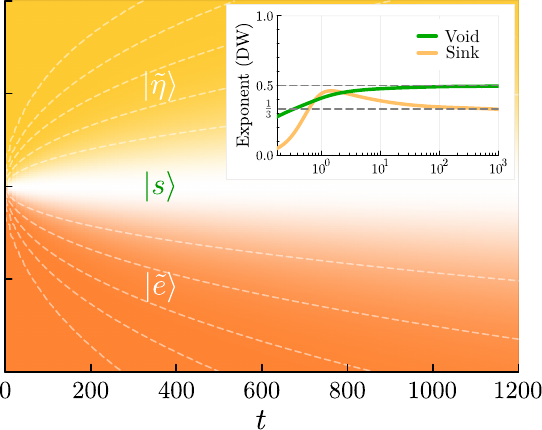}
	\textbf{(c)}\includegraphics[width=0.32\linewidth,valign=t]{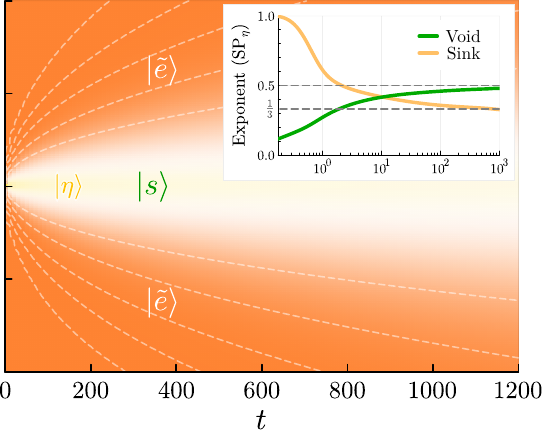}

	\caption{(a) Illustration of a field configuration for the $es\eta$ model, in particular associated to the evolution of the domain wall state $\ket{\textsc{dw}(t)}$. To each position in space we associate a point on the ground state manifold $M_\mrm{GS}$ of Fig.~\ref{fig:gsman-all}a, and we treat the position of the $s$-point as a defect, which in the semiclassical limit remains stationary. We call \textit{void} the region where the density of the $\ket{s}$ state dominates, and \textit{sink} the region where density of $\ket{e}$ and $\ket{\eta}$ states overlap.
    (b) Density of $\ket{e}$ and $\ket{\eta}$ states in the evolution of $\ket{\textsc{dw}(t)}$, showing the formation of the void the $\ket{s}$ density dominates. Inset, power law exponent for the growth of the void (diffusive) and sink (subdiffusive), the latter estimated as in Eq.~\eqref{eq:sink-region-dw}.
    (c) Density of $\ket{e}$ states  in the evolution of $\ket{\textsc{sp}_\eta(t)}$, showing the formation of the void. There is a single $\eta$-particle which wanders within the void. Inset, power law exponent for the growth of the void (diffusive) and sink (subdiffusive), the latter estimated by the extent of the wavefunction for the $\eta$-particle. Numerical data was obtained using MPS simulations with $L=500,501$ and bond dimension $100$, where the imaginary time evolution performed using an efficient implicit integration technique~\cite{Zima2026Jun}.}
	\label{fig:esn}
\end{figure*}

\subsection{Ground State Manifold and Low-energy Excitations}
The ferromagnetic nature of the $es\eta$ model and the geometry of its ground state manifold have been discussed in Sec.~\ref{sec:esn-man}.
Due to the gapped interaction between $\ket e$ and $\ket\eta$ states, the ground states of this model are exactly the ferromagnetic ground states of $P^{(es)}$ and of $P^{(\eta s)}$, which can be parametrized by the ground state manifold $M_\mrm{GS}$ composed of two Bloch spheres of the form of Eq.~(\ref{eq:esn-gs-param}) that touch at a unique point $\ket{s}$, as seen in Fig.~\ref{fig:gsman-all}a.
This manifold closely resembles the $k=2$ commutant manifolds in systems with continuous symmetries, as described in Sec.~\ref{sec:fgsmanrep}: the union of two smooth manifolds associated to each of the two permutations $e$ and $\eta$ (two spheres $\mb S^2$ in this toy model), which intersect on a shared submanifold associated to frozen states (a single point in this toy model).
Given this ground state manifold we can formulate the low-energy physics of the $es\eta$ toy Hamiltonian of Eq.~(\ref{eq:esn-ham}) in the continuum in terms of a slow field which takes values in the target manifold $M_\mrm{GS}$ (see Fig.~\ref{fig:esn}a), as argued in Sec.~\ref{sec:tdvp}.
In terms of the physical degrees of freedom, this is equivalent to only incorporating the effect of gapless spinwave-like excitations between $s$-$e$ and $s$-$\eta$, while neglecting the effect of the gapped $e$-$\eta$ interactions, which, to leading order, are not expected to  play a role in the low-energy physics of the model (the parameter $\Delta$ is expected to flow to strong coupling in a renormalization group sense).
Concretely, we use the following labeling for states on the manifold, such that any point $\ket{v}\in M_{\mrm{GS}}$ can be represented by
\begin{equation}\label{eq:esn-field-coord}
    v = (\vec n,\sigma)\;\;\text{where}\;\; \sigma\in\{e,\eta\},\ \  \vec n\in (\mb S^2)_\sigma,
\end{equation}
where $\sigma$ is a ``branch'' label, and $\vec n$ is a smooth coordinate on the corresponding Bloch sphere.
Following Eq.~\eqref{eq:esn-gs-param}, the $s$-point will have coordinates $\vec n=(0,0,1)$ (or $\theta=0$) on both spheres, while the $e/\eta$ points will correspond to its antipode $\vec n=(0,0,-1)$ (or $\theta=\pi$).
Hence the low-energy excitations are described by spatially varying configurations $v(x) = (\vec{n}(x), \sigma(x))$, where $\sigma(x)$ can only take two discrete values $e$ or $\eta$.
\subsection{TDVP Equations and Treatment of Singularities}\label{subsec:TDVPsingularityprescript}
With these coordinates, when $\sigma(x)$ is fixed to be $\sigma \in \{e, \eta\}$, the field $\vec n(x)$ on that branch would have the diffusive TDVP equations of motion of Eq.~(\ref{eq:diff-eq-heis}) due to the Heisenberg interactions $P^{(\sigma s)}$.
However, the non-standard feature here is the possible discrete jumps of $\sigma(x)$ from $e$ to $\eta$ and vice versa.
For simplicity, let us first analyze it in 1d.
For smoothly varying configurations on the manifold, this jump is only possible when $\vec{n}(x) \cong s$, hence the spatial coordinates $\{x_i\}$ of the crossing points will act as Dirichlet boundaries for the $\vec n$ fields of each branch, i.e., we will demand
\begin{equation}
    \theta(x_i) = 0\;\;\;\text{if}\;\;\;\sigma(x_i^-) \neq \sigma(x_i^+),
\label{eq:crossingdirichlet}
\end{equation}
where $x_i^\pm$ denotes points in the immediate neighborhood of $x_i$, where the branch changes discontinuously.
The positions $x_i$ of such crossings can also change in time. To leading order, they evolve deterministically according to the condition that the \textit{pressure} imposed by the field configurations on both sides must equalize on both sides
\begin{equation}\label{eq:equality-pressures}
    |\partial_x \vec n(x_i^-)|^2 = |\partial_x \vec n(x_i^+)|^2.
\end{equation}
Intuitively, if the fields on either side of the crossing points $x_i(t)$ are imbalanced, then $x_i(t)$ will move, thus compressing one side and expanding the other, while satisfying the equalization of pressures at all times.
A derivation of this condition from the general form fo the Hamiltonian, and the higher-dimensional generalization from TDVP, can be found in Appendix~\ref{app:bound-cond}.
In the simplest cases that we will study, the position of the crossing will be fixed by spatially symmetric initial conditions which automatically enforce Eq.~\eqref{eq:equality-pressures}.
Nevertheless, the motion of the crossing point is important to capture behaviors of the system that lead to phenomena such as entanglement saturation at late-times.
Beyond this semiclassical behavior, we find that there are fluctuations of $\{x_i\}$ that are not captured by TDVP, which only lead to subleading contributions to the quantities of interest, which is something that we also verify numerically, as we discuss in the following sections.
In Appendix~\ref{sec:fluctuations} we also discuss a potential framework to understand the nature of such fluctuations by mapping the imaginary-time evolution under $P^{(es\eta)}$ to a classical stochastic model.
\subsection{Initial States for Time Evolution}\label{subsec:esn-init-states}
We mainly study the evolution of three different initial states under imaginary-time dynamics under the $es\eta$ model in $1$d:
\begin{equation}\label{eq:esn-initial}
\begin{gathered}
	\ket{\textsc{dw}}=\ket{\tilde e\,...\,\tilde e\,\tilde e\,\tilde \eta\,\tilde \eta\,...\,\tilde \eta},\\[.4em]
	\ket{\textsc{sp}_e}=\ket{\tilde e\,...\,\tilde e\, e \,\tilde e\,...\,\tilde e},\quad \ket{\textsc{sp}_\eta}=\ket{\tilde e\,...\,\tilde e\,\eta \,\tilde e\,...\,\tilde e},\\[.4em]
	\text{where}\ \ \ket{\tilde e}=\ket{s}+\ket{e}\ \  \text{and}\ \  \ket{\tilde \eta}=\ket{s}+\ket{\eta}.
\end{gathered}
\end{equation}
As we discuss below, the domain wall $\ket{\textsc{dw}}$ and single particle $\ket{\textsc{sp}_{\sigma}}$ states are related to the two kinds of two-replica observables introduced in Sec.~\ref{sec:observables-general}: the domain wall dynamics parallels the evolution of the purity defined in Eq.~\eqref{eq:renyi-purity}, while the single particle dynamics corresponds to the correlators of Eq.~\eqref{eq:squared-correlator}, for local observables.
We will also discuss generalizations to higher dimensions.
\subsubsection{Domain wall state}
The reason we associate the evolution of this domain wall $\ket{\textsc{dw}}$ in the toy model to entanglement growth in interacting symmetric Brownian models, is because of Eq.~\eqref{eq:definition-dw-id-swap}.
The $k=2$ domain wall $\sket{A\!:\!\bar A}$ there for an infinite one-dimensional system with
\begin{equation}\label{eq:bipartition}
    A=\{x>0\}
\end{equation}
is composed of $\sket{\1^{\ot 2},e}$ and $\sket{\1^{\ot 2},\eta}$ states on the left and right halves of the system.
Both have non-zero overlap with any replicated frozen state [defined in Eq.~(\ref{eq:frozenstatedefn})], since
\begin{equation}
\begin{gathered}
	\langle{\1^{\ot 2},e}\,|(\sket{f}\ot\sket{f}^*)^{\ot 2} = \langle{\1^{\ot 2},\eta}\,|(\sket{f}\ot\sket{f}^*)^{\ot 2}=\\
	=|\braket{f}|^2=1
\end{gathered}
\label{eq:DWinitoverlapeqn}
\end{equation}
This makes $\ket{\tilde e}$ and $\ket{\tilde \eta}$ analogous to $\sket{\1^{\ot 2},e}$ and $\sket{\1^{\ot 2},\eta}$, since they both have overlap equal to one with $\ket s$, which is the analog of the frozen state, and they are points on $(\mb S^2)_e$ and $(\mb S^2)_\eta$ respectively (see Fig.~\ref{fig:gsman-all}a).
As in Eq.~\eqref{eq:renyi-purity}, we will be interested on the overlap of the imaginary time evolved state
\begin{equation}
    \ket{\textsc{dw}(t)} = \exp(-\kappa P^{(es\eta)}t)\ket{\textsc{dw}}.
\label{eq:DWevolution}
\end{equation}
with various simple initial states that will represent different initial states analogous to $\sket{\rho^{\ot 2}}$.
While this will strongly depend on the choice of initial states, the overall decay in the norm of $\ket{\textsc{dw}(t)}$ will indicate the generic rate of decay of such overlaps at long times.
\subsubsection{Single particle states}

The single particle states $\ket{\textsc{sp}_\sigma}$ in Eq.~(\ref{eq:esn-initial}) are meant to represent two different types of replicated operators.
The state $\ket{\textsc{sp}_e}$ models the evolution of replicated \textit{hydrodynamic} operators in the replica theory, i.e., operators that overlap with the conserved charges of the system (assuming a continuous symmetry group $G$).
Following the definition in Sec.~\ref{subsubsec:hydrononhydro} [see Eq.~(\ref{eq:hydrodefn})], a local hydrodynamic operator $O_i$ has non-zero overlap with some states in the $k=1$ commutant manifold $M_G^{(1)}$.
It follows that the replicated hydrodynamic operator on a single site $O \otimes O\+$ [see Eq.~(\ref{eq:replicatedhydroop})], will have non-zero overlap with the $e$ branch of the $k=2$ commutant manifold $M_G^{(2)}$, i.e.,
\begin{equation}
	\ket{O\otimes O\+;e}\not\perp (M_G^{(1)}\times M_G^{(1)})_e.
\end{equation}
For a non-trivial operator $O$ which satisfies $\sket{O\otimes O\+;e} \not\propto \sket{\mathds{1}^{\ot 2}; e}$, the initial state of the replicated hydrodynamic operator $O_i$ shown in Eq.~(\ref{eq:replicatedhydroop}) is analogous to the $\ket{\textsc{sp}_e}$ state in Eq.~(\ref{eq:esn-initial}), where the $\tilde{e}$ state represents $\sket{\mathds{1}^{\ot 2}; e}$ and $e$ represents $\sket{O\otimes O\+;e}$.
Conversely, non-hydrodynamic operators are those that are completely orthogonal to states in $M^{(1)}_G$, and hence their replicated versions are orthogonal to the $e$ branch of $M_G^{(2)}$.
Hence the single-site non-hydrodynamic operator $\sket{O \otimes O\+;e}$ is analogous to $\ket{\eta}$ here, and so the replicated non-hydrodynamic operator of Eq.~(\ref{eq:replicatedhydroop}) is analogous to the state $\ket{\textsc{sp}_\eta}$ of Eq.~(\ref{eq:esn-initial}).
Then we will be interested in computing the imaginary-time evolved single-particle states, which are given by
\begin{equation}
    \begin{gathered}
    \ket{\textsc{sp}_e(t)} = \exp(-\kappa P^{(es\eta)}t)\ket{\textsc{sp}_e}, \\
    \ket{\textsc{sp}_\eta(t)} = \exp(-\kappa P^{(es\eta)}t)\ket{\textsc{sp}_\eta}.
\label{eq:SPevolution}
\end{gathered}
\end{equation}
In particular, for autocorrelation functions similar to Eq.~(\ref{eq:squared-autocorrelator}), we are interested in computing the norms since
\begin{equation}
    \frac{1}{2^{L/2}}\braket{\textsc{sp}_\sigma(0)}{\textsc{sp}_\sigma(t)} \propto \norm{\,\sket{\textsc{sp}_\sigma(t/2)}\,}^2.
\label{eq:SPnorm}
\end{equation} 
The overall decay in the norm of $\ket{\textsc{sp}_\sigma(t)}$ will indicate the expected rate of decay of such overlaps at long times.
\subsection{Domain Wall Dynamics}\label{subsec:esn-dw} % MARK: * dw ent
We first illustrate the TDVP continuum dynamics of domain walls in one dimension [cf.~Eq.~\eqref{eq:esn-initial}], which appproximates Eq.~(\ref{eq:DWevolution}).
We can picture the initial state in an infinite system as being described by a discontinuous field, using the coordinates of Eq.~\eqref{eq:esn-field-coord},
\begin{equation}\label{eq:initial-cond-dw-esn}
    v(x,0)=\begin{cases}
       (\vec n_0,e) & x<0\\
       (\vec n_0,\eta) & x>0,
    \end{cases}\quad \vec n_0=(1,0,0),
\end{equation}
where $\vec n_0$ in polar coordinates is $({\theta_0=\frac{\pi}{2}},{\varphi_0=0})$.
Given the general geometric considerations discussed in Sec.~\ref{sec:tdvp}, imaginary time evolution will smoothen this discontinuity, and at hydrodynamic timescales, the leading field configuration $v(x,t)$ will smoothly connect the $\tilde e$-point with the $\tilde \eta$-point passing through the $s$-point, as shown in Fig.~\ref{fig:gsman-all}a.
\subsubsection{Explicit TDVP solution and void formation}\label{subsubsec:DWTDVP}
The TDVP equations of motion for $v(x,t)$ as long as $\vec{n}$ is away from the $s$-point is exactly the same as for the ferromagnetic Heisenberg model Eq.~\eqref{eq:diff-eq-heis}.
Due to the symmetry of the initial condition, the angle $\varphi(x,t)$ remains zero throughout the time-evolution, and the position of the $s$-point will remain fixed at $x=0$ due to the satisfaction of Eq.~(\ref{eq:equality-pressures}), thus providing a fixed Dirichlet boundary condition for the equations of motion, where $v(0,t)\cong s$ or equivalently, $\vec{n}(0, t) = (0,0,1)$.
Using the parametrization of Eq.~\eqref{eq:esn-gs-param} we can therefore simply write [cf.~\eqref{eq:eom-theta} and \eqref{eq:crossingdirichlet}]:
\begin{equation}\label{eq:diffeq-esn-dw}
       \partial_t\theta(x,t) = \kappa\, \partial_x^2 \theta,\;\;\;
       \theta(0,t) = 0.
\end{equation}
The Dirichlet boundary conditions can be imposed using the method of images, where we parametrize $\theta(x,t)$ as 
\begin{gather}
\theta(x,t) = |\omega(x,t)|,\;\;
\partial_t \omega(x,t) = \kappa\, \partial_x^2 \omega \nonumber\\
\omega(0, t) = 0,\;\;\omega(x, 0) = \begin{cases}
    +\frac{\pi}{2}\;\;x > 0\\
    -\frac{\pi}{2}\;\;x < 0
\end{cases}.
\label{eq:methodofimages}
\end{gather}
The exact solution for $\omega(x,t)$ with the required conditions can then be directly obtained using Fourier analysis, which ultimately yields:
\begin{equation}\label{eq:esn-sol-dw}
    \theta(x,t)=\frac{\pi}{2}\left|\mrm{erf}\left(\frac{x}{\sqrt{4\kappa t}}\right)\right|\!,\;\; \sigma(x,t)=\begin{cases}
       e & x<0,\\
       \eta & x>0,
    \end{cases}
\end{equation}
analogous to Eq.~\eqref{eq:sol-heis-th} for the Heisenberg model.
This explicit solution can be interpreted as the formation of a \textit{void} of width $w\sim\sqrt{\kappa t}$, a region where the local degrees of freedom are roughly aligned along the $\ket s$ state.\footnote{The term `void' appears in the recent works Refs.~\cite{McCulloch_2026, mcculloch2026longlivedlocalquantumcoherences} in reference to a similar phenomenon observed in replica systems with a conserved $U(1)$ charge which we discuss in Sec.~\ref{sec:u1}. }
This void formation is a clear dynamical signature of the contact singularity in the ground state manifold: any continuous spatial configuration that needs to switch branch due to the initial conditions will necessarily ``pass through'' the singular point $s$, thus creating a diffusively large spatial region dominated by the $\ket s$ state.
The domain wall does not remain sharp, but rather melts, due to the presence of a connected low-energy path between the ground states configurations $\ket{\tilde e}$ and $\ket{\tilde\eta}$ on either side of the domain wall.
\subsubsection{Sink growth and fluctuations beyond TDVP}\label{subsubsec:DWannihilation}
Heuristically, we can think of the formation of the void by considering the evolved wavefunction as an ensemble of various computational basis configurations of $e$, $s$, and $\eta$, which we refer to as the \textit{components} of this wavefunction.
Physically, the formation of this region is driven by the diffusive transport of $e$- and $\eta$-particles on an $s$-background in the $es\eta$ model, whose interaction is simply governed by the ferromagnetic Heisenberg model.
Once they meet in the middle they effectively annihilate under the imaginary-time evolution operator $e^{-\kappa P^{(es\eta)}\delta t}$ due to the gapped $e$-$\eta$ interaction, which exponentially suppresses any component of the wavefunction where $\ket e$ and $\ket \eta$ states are neighbors.
We will refer to the region where this effective annihilation process takes place as the \textit{sink}.
According to the TDVP ansatz discussed above, this sink is point-like and stationary, positioned at $x=0$.

\begin{figure}
    \centering
    \includegraphics[width=\linewidth]{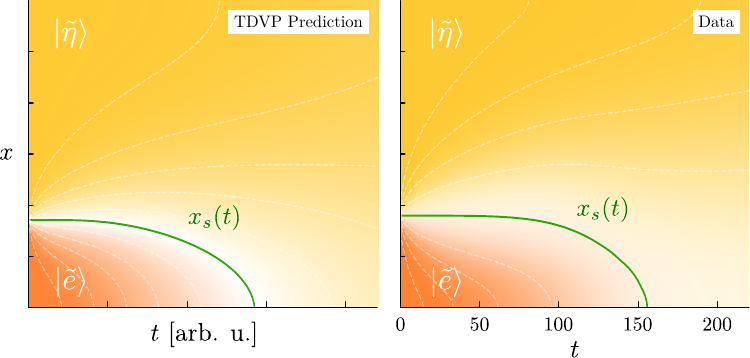}
    \caption{Comparison between solution to the TDVP equations of motion (left) and numerical simulation data (right) for a domain wall close to the spatial boundary of the system, showing its annihilation, as in Sec.~\ref{subsubsec:finite-size-dw}. Time in the TDVP continuum prediction is in arbitrary units and has been approximately rescaled to match the data. Numerical data obtained using MPS simulation with $L=60$, with other parameters similar to Fig.~\ref{fig:esn}.}
    \label{fig:esn-unb}
\end{figure}

However, this sink is not exactly stationary, since---as argued in Sec.~\ref{subsec:TDVPsingularityprescript}---the effective position where the `annihilation' occurs can fluctuate.
Physically this means that the $e$ and $\eta$ states can coexist in the same position, although on different components of the wavefunction.
If we treat the position $x_s(t)$ of the $s$-point as an interface between the $e$ and $\eta$ regions (as shown in Fig.~\ref{fig:esn}a), then its spatial fluctuations precisely delineates the sink region.
From numerical simulations, we expect these quantum fluctuations around the $s$-point to provide the leading correction to the TDVP trajectory of Eq.~\eqref{eq:esn-sol-dw}.
The size of the sink region can be estimated as the width of the region where the product of the $e$ and $\eta$ densities is not negligible
\begin{equation}\label{eq:sink-region-dw}
    \langle\rho_e(x,t)\rangle \cdot \langle\rho_\eta(x,t)\rangle \gtrsim 0.
\end{equation}
Then, the data shown in Fig.~\ref{fig:esn}b indicates that this region grows subdiffusively as $\sim t^{1/3}$ at long times.
Nevertheless, this is small compared to the $\sim t^{1/2}$ width of the void region driven by diffusion, and hence justifies the leading TDVP approximation for studying the properties of the void at late-times.
We discuss how to better understand these fluctuations in Appendix~\ref{sec:fluctuations}.
\subsubsection{Domain wall annihilation and entanglement saturation}\label{subsubsec:finite-size-dw}
For a system of finite size $L$, where $x \in [-L/2, L/2]$, the above picture and the TDVP solution Eq.~\eqref{eq:esn-sol-dw} will only hold at intermediate timescales, before the domain wall senses the finiteness of the system.
In the context of a concrete Brownian model with a continuous symmetry, at this timescale the system would enter the regime where purity approaches its saturation value.
In order to capture the asymptotic behavior of the domain wall initialized at $x_s(0)=0$, we should in addition introduce Neumann conditions $\partial_x \vec n(x) = 0$ [see Eq.~\eqref{eq:neumann-conditions}], i.e. $\partial_x \theta(x) = 0$, at the spatial boundaries of the system.
In such a case, we expect the system to be in the hydrodynamic regime for times up to $t\sim\mc O(L^2/\kappa)$, where the system size is effectively not felt by the melting domain wall.
At later times, we should instead expect the wavefunction $\ket{\textsc{dw}(t)}$ to only contain a few hydrodynamic excitations, as the imaginary-time evolution asymptotically converges to the ground state $\ket{s\cdots s}$ in a finite system.
This physics is reflected in the TDVP dynamics, where around this time the growing void region reaches the boundary of the system, and the solution is approximately described by the lowest Fourier mode (i.e., spin-wave) compatible with the Neumann boundary conditions
\begin{equation}\label{eq:esn-asympt-sol}
	\theta(x,t)\approx \frac{4}{\pi}\left|\sin(\frac{\pi x}{L})\right|\, e^{-\frac{\kappa \pi^2}{L^2}t},\quad x\in[-L/2,L/2].
\end{equation}
Thus at late times, the half-chain purity approaches exponentially fast its saturation value, dictated by the asymptotic state $\ket{\textsc{dw}(\infty)}=\ket{s\cdots s}$.
If however the initial domain wall is not centered, but is instead positioned at $x_s(0)=-L/2+\ell$ for $\ell<L/2$, the additional condition Eq.~\eqref{eq:equality-pressures} will also play a role, leading to a ``pressure-induced'' drift in the position of $x_s(t)$.
At short timescales $t \ll O(\ell^2/\kappa)$, the evolution of the domain wall and the void formation is as in the infinite size system.
At timescales $t\sim\mc O(\ell^2/\kappa)$, the void region hits the edges of the system, which creates an imbalance between the `$e$' ($x<x_s$) and `$\eta$' ($x>x_s$) density on both sides of the domain wall.
If $x_s(t)$ remained stationary, this would lead to a violation of Eq.~(\ref{eq:equality-pressures}), so $x_s(t)$ is forced to drift towards $x=-L/2$.
In this way, $x_s(t)$ will eventually hit the boundary, which results in a field configuration that lies completely in the $\eta$ branch of the ground state manifold.
Therefore, at later times the fields evolve purely under the Heisenberg dynamics of $P^{(\eta s)}$, and eventually settles to a ground state containing only $\ket{\eta}$ and $\ket{s}$ states.
This picture follows explicitly from the solution to the TDVP equations of motion, which can be obtained through the method of images similar to Eq.~(\ref{eq:methodofimages}), where we write
\begin{equation}\label{eq:image-method1}
    \theta(x,t) = |\omega(x,t)|,\;\;\; \sigma(x,t)=\begin{cases}
       e & \omega(x,t)<0,\\
       \eta & \omega(x,t)>0,
    \end{cases}
\end{equation}
where $\omega(x,t)$ solves the conditions
\begin{equation}\label{eq:image-method2}
    \begin{cases}
       \partial_t \omega=\kappa\, \partial_x^2 \omega,\\
       \partial_x \omega(\pm \frac{L}{2},t)=0,
    \end{cases}
    \omega(x,0)=\begin{cases}
       -\frac{\pi}{2} & x<x_s(0),\\
       +\frac{\pi}{2} & x>x_s(0).
    \end{cases}
\end{equation}
The numerical solution to these equations is shown and compared to simulation data in Fig.~\ref{fig:esn-unb}.
A similar phenomenon also occurs when the subsystem $A$ in Eq.~\eqref{eq:definition-dw-id-swap} is a finite interval of size $\ell$, corresponding here to a finite region composed of $\ket{\tilde e}$ states, surrounded by a larger region of $\ket{\tilde \eta}$ states.
In this case, the two domain walls with positions $x_{s,L}(t)$ and $x_{s,R}(t)$ generate voids that intersect, which then leads to them moving closer due to the pressure difference.
They then collide and annihilate, leading to the finite interval to be eventually fully depleted of $e$ particles, and leaving thus behind a state composed exclusively of $\ket{\eta}$ and $\ket{s}$ states.
The evolution then converges to a ground state composed of just those two states, and result in the saturation of any overlaps involved.
\subsubsection{Norm and overlap evolutions}
Since the observable of interest in the evolution of the domain wall state $\ket{\textsc{dw}(t)}$ is its overlap with fixed initial states over time, we first need to fix the norm of the evolved state.
Since we know that asymptotically $\ket{\textsc{dw}(t)}$ will approach the ground state $\ket{s \cdots s}$, we choose to follow the normalization convention of Eq.~\eqref{eq:norm-evo-eq-prime}; therefore we enforce the exact condition
\begin{equation}
    \braket{\textsc{dw}(t)}{s \cdots s} = \bra{\textsc{dw}} e^{-\kappa P^{(es\eta)}t}\ket{s \cdots s} =  1.
\label{eq:DWnormfix}
\end{equation}
In the continuum limit, since TDVP evolves $\ket{\textsc{dw}(t)}$ within the space of product states of the form of Eq.~(\ref{eq:esn-gs-param}) with a spatial profile specified by $\theta(x)$ and $\sigma(x)$ of Eq.~(\ref{eq:esn-sol-dw}), imposing the normalization of Eq.~(\ref{eq:DWnormfix}) is equivalent to rescaling the local states on each site such that the coefficient of $\ket{s}$ is $1$, i.e.,
\begin{equation}
    \ket{\theta,\sigma} \defn \ket{s} + \tan\frac{\theta}{2}\ket{\sigma},\;\;\;\sigma \in \{e, \eta\}.
\label{eq:rescaled_esn}
\end{equation}
Hence the prediction for the evolved domain wall in the continuum is given by
\begin{equation}
    \ket{\textsc{dw}(t)} \approx \bigotimes_{x}\ket{\theta(x),\sigma(x)}.
\label{eq:DWTDVPevolution}
\end{equation}
This leads to the norm of the state being
\begin{equation}
    \norm{\,\ket{\textsc{dw}(t)}\,}^2\approx \prod_x \left(1+\tan\frac{\theta(x)}{2}\right).
\end{equation}
Note that since $\tan\frac{\theta(x)}{2}$ is everywhere equal to one in the initial state Eq.~\eqref{eq:initial-cond-dw-esn}, while it is smaller than one within the void, the growth of the void directly dictates the decay of the norm, resulting in
\begin{equation}
	\norm{\,\ket{\textsc{dw}(t)}\,} \sim e^{-\sqrt{\kappa t}}.
\end{equation}
\begin{figure}
    \centering
    \includegraphics[width=0.9\linewidth]{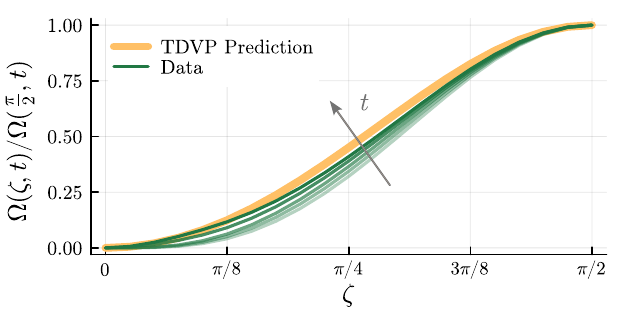}
    \caption{Comparison with numerical data of the TDVP prediction for the asymptotic value of the of overlaps between $\ket{\textsc{dw}(t)}$ [obtained using Eqs.~(\ref{eq:DWevolution}) and (\ref{eq:DWTDVPevolution}) for the numerical data and TDVP respectively] and $\ket{\zeta}^{\ot L}$. 
    Data is the same as in Fig.~\ref{fig:esn}; times shown between $t=12$ and $t=244$.}
    \label{fig:esn-data}
\end{figure}
We can now compare the TDVP prediction of certain overlaps with numerical data from simulations.
In Appendix~\ref{app:conti} we compute the predictions for the overlap of the evolved domain wall state $\ket{\textsc{dw}(t)}$ and a family of product states $\{\ket{\zeta}^{\ot L},\,\zeta\in[0,\frac{\pi}{2}]\}$ where
\begin{equation}
\ket{\zeta}\defn\cos^2\zeta\ket{s}+\sin^2\zeta\frac{\ket{e}+\ket{s}+\ket{\eta}}{2}.
\end{equation}
Since they are computed in the continuum limit, and since overlaps are multiplicative in the system size, the physically meaningful quantities to compare with predictions are ratios of the logarithms of the overlaps.
\begin{equation}
    \Omega(\zeta,t)\defeq -\log(\braket{\textsc{dw}(t)}{\zeta}^{\ot L}).
\label{eq:Omegabehavior}
\end{equation}
For large system sizes $L$, at intermediate times $ t\ll O(L^2/\kappa)$, these are predicted to scale as $\sim\sqrt{t}$ with a prefactor that only depends on $\zeta$.
We also test this prefactor numerically in Fig.~\ref{fig:esn-data}, finding good qualitative agreement in the expected intermediate timescales.
These types of overlaps model the entanglement generated according to Eq.~\eqref{eq:renyi-purity} for simple product initial states, with $\ket{\zeta}$ having a structure analogous to a family of states that we will study in the $U(1)$ model of Sec.~\ref{sec:u1}.
\subsubsection{Higher dimensions}\label{subsubsec:higherd}
So far we have discussed the semiclassical evolution of domain walls in 1d.
In $d$ dimensions, domain walls correspond to $(d-1)$-dimensional defects.
The point $x_s(t)$ which indicates in 1d the position where the field crosses the $s$-point on the ground state manifold, then should be replaced by a fluctuating hypersurface, along which $e$- and $\eta$-particles effectively annihilate as explained in Sec.~\ref{subsubsec:DWannihilation}.
Within the TDVP ansatz, voids will again form and expand diffusively around these interfaces between `$e$' and `$\eta$' domains.
Since under the normalization condition Eq.~\eqref{eq:rescaled_esn} the norm of the evolved domain wall state is bound to the size of the dynamically generated void, the intermediate-time dynamics of log-overlaps such as Eq.~(\ref{eq:Omegabehavior}) is expected to scale as $\sim \sqrt{t}$ with a pre-factor that is proportional to the size of the domain wall (e.g., its length in $d = 2$ and area in $d = 3$).
However, an important qualitative difference from $1$d  comes from the fact that in higher dimensions, domain walls generally possess a non-trivial curved shape.
Such curvature immediately creates a density imbalance between the two sides of the defect, which locally shrinks to satisfy the equal-pressure condition [cf.~Eq.~(\ref{eq:higherdimspressure})]
\begin{equation}\label{eq:press-eq-esn-d}
    |\bm n\cdot\bm\nabla \vec n(x,t)|^2\,\Big|_{x\in\mrm{bd}^+} = |\bm n\cdot\bm\nabla \vec n(x,t)|^2\,\Big|_{x\in\mrm{bd}^-},
\end{equation}
where $\bm n$ is the unit vector normal to the hypersurface (see Appendix~\ref{app:bound-cond}).
Unless the domain wall is at least locally straight, there is therefore no separation of timescales between the void formation and the drift of the defect.
If the subsystem $A$ is of finite size, the defects will eventually annihilate, and the overlaps of interest will saturate at diffusive timescales.
The condition Eq.~\eqref{eq:press-eq-esn-d}, together with the bulk equations of motion Eq.~\eqref{eq:diff-eq-heis} can be solved by Fourier analysis using the method of images similar to the 1d case of Eq.~\eqref{eq:image-method1}. 
\subsection{Single Particle Dynamics} % MARK: * part
\label{subsec:sp}
We now move on to the single particle states $\ket{\textsc{sp}_e}$ and $\ket{\textsc{sp}_\eta}$ from Eq.~\eqref{eq:esn-initial}, where we are interested in characterizing the imaginary time evolution Eq.~(\ref{eq:SPevolution}). 
Due to the global number conservation for all three species in this model, $\ket{\textsc{sp}_e(t)}$ will always stay within the subspace spanned by $\ket e$ and $\ket s$ states, and $\ket{\textsc{sp}_\eta(t)}$ will always possess a single $\eta$ particle.
We can picture the initial state $\ket{\textsc{sp}_\sigma}$ here in an infinite system as being described by a simple background field, with a \textit{point-like defect} at position $x_\sigma(t=0)=0$.
In this setting, only the background field can be described semiclassically through TDVP, and not the single-particle motion of the defect, although as we discuss below, an exact solution is possible for $\sigma = e$.
Using the coordinates of Eq.~\eqref{eq:esn-field-coord}, for this state we have the background field
\begin{equation}\label{eq:initial-cond-sp-esn}
    v(x,0)= (\vec n_0,e)\;\;\text{for}\;\; x \neq 0,
\end{equation}
where $\vec n_0$ in polar coordinates is $({\theta_0=\frac{\pi}{2}},{\varphi_0=0})$.
This initial condition also generalizes straightforwardly to higher dimensions.
\subsubsection{$e$ particle}
Since the Hamiltonian acting in the sector only containing $e$ and $s$ is just the Heisenberg model, the dynamics of the state $\ket{\textsc{sp}_e}$ are equivalent to the dynamics of the state
\begin{equation}
	\ket{\rt\dots\rt\,\,\up\,\,\rt\dots\rt}
\end{equation}
under the ferromagnetic Heisenberg model Eq.~\eqref{eq:HeisModelHam}.
This evolution can be solved exactly within the spin-wave sector of the Hilbert space, and results in the `up' spin diffusing on the background as a single free particle under imaginary time evolution, similar to the case of hydrodynamic operators on a single replica~\cite{moudgalya2024}.
Explicitly, in the continuum limit at late times, we get:
\begin{equation}
    \ket{\textsc{sp}_e(t)}\sim \int\dd{x_e}\psi(x_e,t) \ket{...\rt\,\,\up_{x_e}\rt...}\label{eq:e-wavefunct}
\end{equation}
where $\psi(x_e,t)\sim e^{-x_e^2/\kappa t}$ is the wavefunction of the particle.
Due to this form the energy of this state for an infinite system decays as $\sim 1/t$ over time, which leads to an algebraic decay of the norm according to Eq.~\eqref{eq:norm-evo-eq} (which is exact in this case of an exact solution to the quantum evolution)
\subsubsection{$\eta$ particle}
A more peculiar behavior is instead found for $\ket{\textsc{sp}_\eta}$.
As discussed in the case of domain walls, the only way to spatially connect regions where $\sigma(x)=e$ and regions where $\sigma(x)=\eta$ while staying on the ground state manifold, is by having $\vec{n}(x)$ pass through the $s$-point.
In order to minimize the energy---starting from a single $\eta$ particle in a $\tilde{e}$ background---the $\eta$ particle must therefore be surrounded by a void region, analogous to the one described in Sec.~\ref{subsec:esn-dw}; in this void the $\eta$ particle can then move freely.
Hence in the continuum, the position of the $\eta$ particle will enforce a Dirichlet boundary condition $v(x_\eta,t)\cong s$ for the background field Eq.~\eqref{eq:initial-cond-sp-esn} [see Eq.~(\ref{eq:crossingdirichlet})], with $x_\eta$ acting as a defect for the field.
As discussed in Sec.~\ref{subsec:TDVPsingularityprescript}, since the TDVP equations of motion Eq.~(\ref{eq:tdvp-continuum-main}) do not describe large superpositions of quantum states, we posit that the defect $x_\eta(t)$ remains stationary to leading order, and numerically justify that its fluctuations are subleading.
If we use the parametrization in spherical coordinates of Eq.~\eqref{eq:esn-field-coord} for the background field, the equations of motion then become completely identical to the ones studied in the case of the domain wall initial state in Eq.~(\ref{eq:diffeq-esn-dw}), with the initial position of the $\eta$ particle playing the role of the initial domain wall.
The solution too has a similar form, and leads then to the creation of a diffusively large void of size $w\sim\sqrt{\kappa t}$, as discussed in Sec.~\ref{subsubsec:DWTDVP} (see Fig.~\ref{fig:esn}c).
Explicitly, the solution to the TDVP equations of motion is [cf.~Eq.~\eqref{eq:esn-sol-dw}]
\begin{equation}\label{eq:esn-sol-sp}
    \theta(x,t)=\frac{\pi}{2}\left|\mrm{erf}\left(\frac{x}{\sqrt{4\kappa t}}\right)\right|\!,\quad \sigma(x, t) = e
\end{equation}
The energy of this state can be estimated to scale as $E\sim \sqrt{\kappa/t}$, and hence according to Eq.~\eqref{eq:norm-evo-eq} its norm scales as
\begin{equation}\label{eq:norm-dec-1d}
	\norm{\,\ket{\textsc{sp}_\eta(t)}\,}^2\sim e^{-\sqrt{\kappa t}}.\quad (d=1)
\end{equation}
Recall that this squared norm is the analog of the non-hydrodynamic correlator Eq.~\eqref{eq:squared-autocorrelator} [see Eq.~(\ref{eq:SPnorm})].
Remarkably, not only is this leading TDVP behavior the same as in the case with the domain walls, but---according to the data in Fig.~\ref{fig:esn}c---so are the spatial fluctuations of the defect $x_\eta(t)$, which correspond to the sink region (discussed for the domain wall in Sec.~\ref{subsubsec:DWannihilation}).
Here too, the $\eta$ particle can be seen as a sink for diffusing $e$ particles, due to exponential suppression $e^{-P^{(es\eta)}\delta t}$ of the components of the wavefunctions where the two species come into contact.
Simulations in Fig.~\ref{fig:esn}c show that the width of this sink zone, estimated here through the standard deviation of the position $\langle x_\eta^2(t)\rangle^{1/2}$, grows subdiffusively as $\sim t^{1/3}$, similar to the sink region in the domain wall case.
This corroborates the idea that in the continuum limit, any appropriate field theory that captures effects beyond TDVP would describe both objects---the $\eta$ particles and the domain walls---as fluctuating defects, as argued more rigorously in Appendix~\ref{sec:fluctuations}.
\subsubsection{Dimension $d>1$}
Note that the correspondence of the $\eta$ particle dynamics with domain wall dynamics can only be valid in $d=1$, where particles and domain walls are both point-like objects.
For $d>1$, the TDVP equations of motion for the evolution of a single-particle state $\ket{\textsc{sp}_\eta}$ are given by Eq.~\eqref{eq:diff-eq-heis}, with the Dirichlet condition $\vec{n}(\bm 0,t)=(0,0,1)$ given by the $\eta$ particle at a fixed position $x_\eta(t)=\bm 0$.
Here we claim that the behavior of Eq.~(\ref{eq:norm-dec-1d}) no longer holds in $d > 1$.
Since the initial state is everywhere $\vec{n}(\bm x)=(1,0,0)$, we can rewrite the TDVP equations for this initial state only in terms of the polar angle $\theta(\bm x,t)$ as:
\begin{equation}
       \partial_t\theta(\bm x,t) = \kappa \bm\nabla^2 \theta,\;\;\;
       \theta(\bm x, t) = 0\;\;\text{if}\;\; |\bm x| = b,
\end{equation}
where, for reasons that will be clear, we have regularized the Dirichlet condition to hold within a radius $b$, which acts as a UV cutoff for the TDVP equations and is at least of the order of the lattice spacing.
In addition, for the initial state of the form of Eq.~(\ref{eq:initial-cond-sp-esn}) on an infinite system, we should also impose the boundary condition at infinity as
\begin{equation}
    \lim_{|\bm x|\rightarrow \infty} \theta(\bm x, t) = \frac{\pi}{2}\;\;\text{for any finite $t$}.
\end{equation}
We see that in dimensions $d\geq 3$, these equations of motion admit non-trivial stationary solutions, i.e., with $\partial_t\theta_\mrm{st}(\bm x) = 0$.
These correspond to $\theta_\mrm{st}(\bm x)$ being a bounded harmonic function which satisfies the boundary conditions at the origin and at infinity, which here reads:
\begin{equation}
	\theta_\mrm{st}(\bm x)=\frac{\pi}{2}-\frac{\pi}{2}\left(\frac{b}{|\bm x|}\right)^{d-2}.
\label{eq:statprofile}
\end{equation}
According to the bare $P^{(es\eta)}$ Hamiltonian in the continuum [cf.~Eq.~\eqref{eq:heis-ham-cont-lim}], this stationary configuration has energy 
\begin{equation}
    E_\mrm{st}\propto \kappa \left(\frac{b}{a}\right)^{d-2},
\label{eq:Estval}
\end{equation}
where $a$ is the lattice spacing, and we expect the ratio $b/a$ to be finite in the continuum limit.
The initial state $\ket{\textsc{sp}_\eta}$ can asymptotically approach this stationary state in an infinite system.
With this asymptotic profile of Eq.~(\ref{eq:statprofile}), we can conclude that the void region dominated by the $\ket{s}$ state ($\theta = 0$), saturates to a finite size of order $\mathcal{O}(b)$ for $d\geq 3$, unlike with $d = 1$, where its size grows indefinitely with time.
Using Eq.~(\ref{eq:norm-evo-eq}), this finite energy asymptotic stationary state results in the exponential decay of the norm:
\begin{equation}\label{eq:norm-dec-3d}
	\norm{\,\ket{\textsc{sp}_\eta(t)}\,}^2\sim e^{-\kappa t}.\quad (d\geq 3)
\end{equation}
The size of the radius $b$ might be increased in order to take into account the fluctuations of the position of the $\eta$-particle, but note that since a larger $b$ implies a larger energy $E_\mrm{st}$ in Eq.~(\ref{eq:Estval}) for the stationary solution, we should expect $x_\eta(t)$ to remain confined to a region of finite size during the evolution, consistent with the TDVP ansatz.
When $d=2$, similar to $d=1$, there is no bounded stationary solution to the TDVP equations that satisfies the boundary conditions, so the energy of the evolving configuration will decay to zero at long times.
This also means that the void region will grow indefinitely, similar to the $d = 1$ case.
We can estimate the time-dependent solution to be almost stationary within a radius given by standard diffusion
\begin{equation}
	\theta(\bm x, t)\approx \frac{\pi}{2}\frac{\log(|\bm x|/b)}{\log(\sqrt{4\kappa t}/b)} ,\quad |\bm x|\ll \sqrt{4\kappa t},
\end{equation}
while remaining at its initial value $\theta(\bm x, t)=\frac{\pi}{2}$ outside of this radius.
This quasi-stationary solution is simply the unique harmonic function that satisfies $\theta(|\bm x|=b)=0$ and $\theta(|\bm x|=\sqrt{4\kappa t})=\frac{\pi}{2}$.
It provides estimates for energy and norm decay:
\begin{equation}\label{eq:norm-dec-2d}
	E(t)\sim\frac{\kappa}{\log t},\quad \norm{\,\ket{\textsc{sp}_\eta(t)}\,}^2\sim e^{-\frac{\kappa t}{\log t}},\quad (d = 2)
\end{equation}
where we use Eq.~\eqref{eq:norm-evo-eq} for estimating the norm, and approximate $\int\frac{\dd t}{\log t}\sim \frac{t}{\log t}$.
These results for the asymptotic decay of the norm are consistent with the survival probabilities of particles in a related stochastic model~\cite{Blythe_2003} that should also capture fluctuations beyond TDVP, as discussed in Appendix~\ref{sec:fluctuations}.

\begin{figure*}
	\centering
    \textbf{(a)}\includegraphics[width=0.32\linewidth,valign=t]{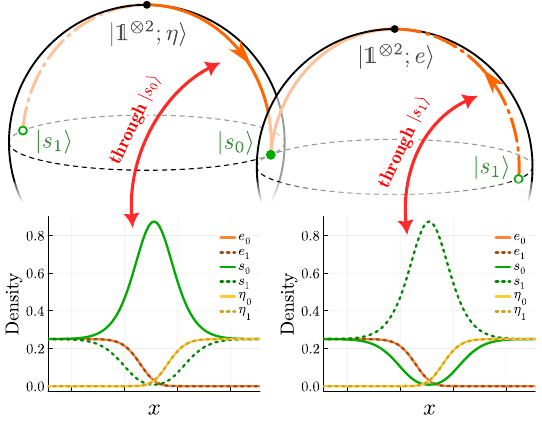}
    \textbf{(b)}\includegraphics[width=0.32\linewidth,valign=t]{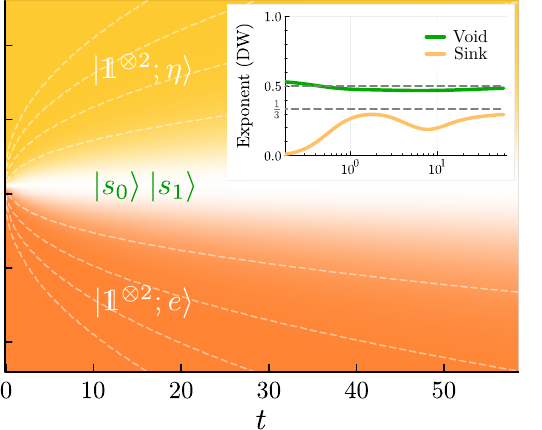}
    \textbf{(c)}\includegraphics[width=0.26\linewidth,valign=t]{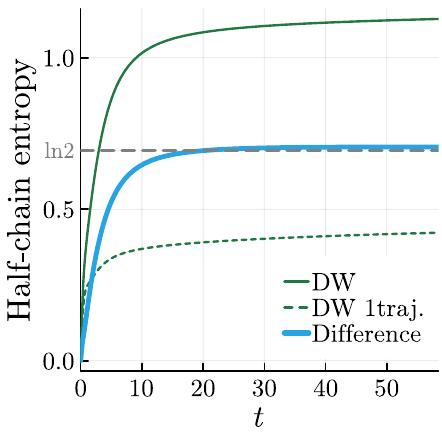}
	\caption{(a) Illustration of the two symmetry-related semiclassical trajectories that contribute to the formation of the void in the domain wall state. The plots show the spatial density of states for each trajectory, given by numerical simulations. (b) Density of $\{\ket{e_p}\}$ and $\{\ket{\eta_p}\}$ states in the evolution of $\sket{A\!:\!\bar A\,(t)}$ where $A$ is a half-line, showing the formation of the void where the $\{\ket{s_p}\}$ density dominates. Inset, power law exponent for the growth of the void (diffusive) and sink (subdiffusive). (c) Plot of the half-chain entropy for the evolved domain wall state: `DW' is the actual evolution of the domain wall state, while `DW 1traj.' is an initial domain wall state biased to select the trajectory through $\ket{s_0}$, as explained in the main text. Numerical data was obtained via MPS simulations with $L=180,181$ and bond dimension $80$, where imaginary time evolution was performed using an efficient implicit integration technique~\cite{Zima2026Jun}.}
	\label{fig:u1}
\end{figure*}
\section{Models with $U(1)$ Symmetry} % MARK: 5. U(1)
\label{sec:u1}
We now turn to the dynamics of noisy $U(1)$ symmetric models, and focus mostly on 4-copy ($k=2$) replica models that are relevant for the computation of averages of purities and squares of two-point correlators.
Random unitary charge-conserving many-body dynamics has been studied in the literature in the context of
charge statistics~\cite{McCulloch_2023,Turkeshi_mpemba},
evolution of correlators~\cite{Khemani2018Sep,Rakovszky2018Sep,Agarwal2023May,McCulloch_2026,mcculloch2026longlivedlocalquantumcoherences}, and entanglement growth~\cite{Rakovszky_2019,Zhou2020Jul,Han2023Dec}.
Many of these works, however, rely on the now standard practice of first introducing large charge-neutral ancillary degrees of freedom to simplify the statistical model under consideration, and then removing the effects of those degrees of freedom in the final answer for observables.
Using the geometric understanding of ferromagnetic models developed in the previous sections, we are able to avoid such a construction and directly study the replica dynamics of the system without any ancillary degrees of freedom.
Our results mostly rely on the geometry of the underlying commutant manifold, which, as discussed in Sec.~\ref{sec:fgsmanrep} is completely characterized by the $U(1)$ symmetry.
Hence they are expected to be independent of the precise details of the random Brownian models being used.
Moreover, this underlying geometry closely resembles the ground state geometry of the $es\eta$ toy model studied in Sec.~\ref{sec:esn}, hence many of the results obtain there can be straightforwardly generalized to understand the dynamics of $U(1)$ symmetric systems.
We compare the predictions for our approach to previous literature in Sec.~\ref{subsubsec:breakdownofmembrane} and~\ref{subsubsec:semicl-pur-u1}.
Below we mostly focus on models with a qubit local Hilbert space $\mc H_\mrm{loc}=\mb C^2$ with a basis $\{\ket\downarrow,\ket\uparrow\}$, where the generators of Brownian evolutions [cf.~Eq.~(\ref{eq:brownhamdef})] commute with the total charge
\begin{equation}
	Z_\mrm{tot}\defeq\sum_i Z_i.\label{eq:ztot-def-u1}
\end{equation}
We discuss systems with higher spin in Sec.~\ref{subsec:higher-spin}.
\subsection{Nearest-neighbor qubit model}
As a simple illustratory example, we choose the following generators:
\begin{equation}
	\{Z_iI_j,\,I_iZ_j,\,\lambda Z_iZ_j,\,X_iX_j+Y_iY_j,\,X_iY_j-Y_iX_j\},\label{eq:u1-generators}
\end{equation}
where $\{I,X,Y,Z\}$ are the Pauli matrices and $\lambda>0$; however, as we will discuss, our results are valid much more generally.
\subsubsection{Replica Hilbert space reduction}
Since all generators in Eq.~\eqref{eq:u1-generators} either commute or anti-commute with $Z_i$ for every site $i$, the effective Hamiltonian $P^{(k)}$ preserves the following local parity operator on every site:
\begin{equation}
	\mc Z_i \defeq \prod_{a=1}^{2k} Z_i^{[a]},
\label{eq:k2localsym}
\end{equation}
as explained in Eq.~\eqref{eq:acc-symms} [see Eq.~(\ref{eq:opconvention}) for the replica labeling convention].
Since the relevant boundary states for all observables of interest introduced in Sec.~\ref{sec:observables-general}---namely, the domain wall Eq.~\eqref{eq:definition-dw-id-swap} and the replicated operators Eq.~\eqref{eq:squared-correlator} for $A,B$ Pauli matrices---satisfy $\mc Z_i\equiv +1$ on every site $i$, we will restrict ourselves to analyzing this sector.
Hence for $k=2$ the local Hilbert space dimension of the relevant replicated degrees of freedom is $8$, and an orthonormal basis for this space is:
\begin{equation}\label{eq:letterbasis}
	\begin{split}
		\ket{s_0}&=\ket{\downarrow\downarrow\downarrow\downarrow},\\
		\ket{e_0}&=\ket{\downarrow\downarrow\uparrow\uparrow},\\
		\ket{\eta_0}&=\ket{\downarrow\uparrow\uparrow\downarrow},\\
		\ket{\xi_0}&=\ket{\downarrow\uparrow\downarrow\uparrow},
	\end{split}
	\qquad
	\begin{split}
		\ket{s_1}&=\ket{\uparrow\uparrow\uparrow\uparrow},\\
		\ket{e_1}&=\ket{\uparrow\uparrow\downarrow\downarrow},\\
		\ket{\eta_1}&=\ket{\uparrow\downarrow\downarrow\uparrow},\\
		\ket{\xi_1}&=\ket{\uparrow\downarrow\uparrow\downarrow},
	\end{split}
\end{equation}
where the four spins in each state represent the states in 4 copies of the local Hilbert space required for $k = 2$ replicas, as in Eq.~\eqref{eq:1s-basis}.
The letters denote the appearance of the states in the computational basis expansion of $\sket{\mathds{1}^{\ot 2}; e}$ and $\sket{\mathds{1}^{\ot 2}; \eta}$ as defined in Eq.~(\ref{eq:permstate}).
In particular, $\ket{\sigma_0}$ and $\ket{\sigma_1}$ for $\sigma \in \{e, \eta\}$ appear only in $\sket{\mathds{1}^{\ot 2}; \sigma}$, the fully polarized $\ket{s_0}$ and $\ket{s_1}$ appear in both, while $\ket{\xi_0}$ and $\ket{\xi_1}$ appear in neither:
\begin{equation}\label{eq:states-entanglement}
	\begin{aligned}
		\sket{\1^{\ot 2};e} &= \ket{s_0}+\ket{s_1}+\ket{e_0}+\ket{e_1}, \\
		\sket{\1^{\ot 2};\eta} &= \ket{s_0}+\ket{s_1}+\ket{\eta_0}+\ket{\eta_1}.
	\end{aligned}
\end{equation}
Furthermore it is easy to verify that the states $\ket{\xi_p}$ are non-dynamical under the effective Hamiltonian $P^{(2)}$ obtained from the generators of Eq.~(\ref{eq:u1-generators}) and possess a positive energy due to the one-site generators $\{Z_iI_j,I_iZ_j\}$ in Eq.~\eqref{eq:u1-generators}, hence they do not contribute to its ground states.
To study the long-time dynamics of the model, which is determined by the low-energy physics of $P^{(2)}$, we can therefore further restrict the local Hilbert space to the remaining 6 states.
\subsubsection{Effective Hamiltonian and its ground states}
Within the 6-dimensional local Hilbert space spanned by the states $\{\ket{s_p},\ket{e_p},\ket{\eta_p}\}_{p=0,1}$ in Eq.~(\ref{eq:letterbasis}), the effective Hamiltonian $P^{(2)}$ of Eq.~\eqref{eq:superham-local} can be computed to be equal to the $es\eta$ Hamiltonian of Eq.~\eqref{eq:esn-ham} when restricted to any triplet of $e$-, $s$-, and $\eta$-states:
\begin{equation}\label{eq:esn-u1-term}
	P^{(2)}_{ij}\Big|_{\mrm{span}\{\ket{e_p},\ket{s_q},\ket{\eta_r}\}} = P^{(es\eta)}_{ij},\quad(\Delta=1+\lambda^2)
\end{equation}
which in part motivates the analysis of the previous section.
Furthermore, the effective Hamiltonian possesses an additional interaction term whose action reads
\begin{equation}\label{eq:u1-ham-idstates}\begin{split}
	P^{(2)}_{ij}\ket{\alpha_p\alpha_q}=(1-\delta_{pq}) \bigg( 2\ket{\alpha_p\alpha_q}+\qquad\qquad\qquad\\
	+\,\!\!\!\!\!\sum_{\beta\in\{e,s,\eta\}\setminus\{\alpha\}}(-1)^{\delta_{\alpha s}+\delta_{\beta s}}\Big(\ket{\beta_0\beta_1}+\ket{\beta_1\beta_0}\Big) \bigg),
\end{split}
\end{equation}
for $\alpha\in\{e,s,\eta\}$ and $p,q \in \{0,  1\}$.
For convenience, we have re-scaled the definition of $P^{(2)}$ by an overall prefactor, which can be absorbed in $\kappa$ [see Eq.~\eqref{eq:superham}].
For $\lambda > 0$, this model has a ferromagnetic ground state manifold composed of fully polarized states as defined in Eq.~(\ref{eq:def-ferro}).
Indeed the ground state manifold is exactly\footnote{This is rigorously proven in Ref.~\cite{the-algebra-paper}.} the $k = 2$ commutant manifold $M_{U(1)}^{(2)}$ for systems with the $U(1)$ symmetry of Eq.~\eqref{eq:ztot-def-u1}, i.e., the one identified in Sec.~\ref{sec:u1-gs-man}.
It is composed of two $(\mb S^2\times\mb S^2)_{e/\eta}$ manifolds, which intersect at the points associated to $\ket{s_0}$ and $\ket{s_1}$, which can also be viewed as arising from the frozen states as discussed in Sec.~\ref{subsubsec:frozenstates}.
Similar to the $es\eta$ model, $e$- and $\eta$-states cannot coexist within a fully-polarized ground state.
\subsubsection{TDVP equations and replica decoupling}\label{subsubsec:TDVPreplicadecoup}
Before discussing the dynamics of the $P^{(2)}$ Hamiltonian, let us first recall the  $k=1$ case.
Here the commutant manifold is a sphere and can be parametrized as in Eq.~\eqref{eq:U1manifold}, or more conveniently for us as:
\begin{equation}
	M^{(1)}_{U(1)} = \Big\{\ket{\vec n}=\cos\frac{\theta}{2} \ket{\Pi_\up} + e^{i\varphi}\sin\frac{\theta}{2} \ket{\Pi_\dn}\Big\},
\end{equation}
where $\Pi_\up=\ketbra{\up}{\up}$ and $\Pi_\dn=\ketbra{\dn}{\dn}$, and where the spherical angles are bundled into a unit vector $\vec n\in\mb S^2$ as in Eq.~\eqref{eq:blochsphere}. 
The action of the effective Hamiltonian $P^{(1)}$ on the space spanned by these states is the same as that of the $SU(2)$ Heisenberg ferromagnet Eq.~\eqref{eq:HeisModelHam}~\cite{moudgalya2024}, and we have already studied its semiclassical dynamics within the TDVP ansatz in Sec.~\ref{sec:tdvp}.
Considering now the $k=2$ case, we can parametrize the points on $(\mb S^2\times\mb S^2)_\sigma$ (where the permutation $\sigma\in\{e,\eta\}$ identifies the branch of the commutant manifold) with a pair of unit vectors $(\vec n_1,\vec n_2)$:
\begin{equation}
    \label{eq:explicit-coh-st-u1}
	\begin{gathered}
		\ket{\vec n_1,\vec n_2;\sigma} = \cos\frac{\theta_1}{2}\cos\frac{\theta_2}{2}\ket{s_1}+\\
		\!+\,e^{i\varphi_1}\sin\frac{\theta_1}{2}\cos\frac{\theta_2}{2}\ket{\sigma_0}+
		e^{i\varphi_2}\cos\frac{\theta_1}{2}\sin\frac{\theta_2}{2}\ket{\sigma_1}+\\
		+\,e^{i(\varphi_1+\varphi_2)}\sin\frac{\theta_1}{2}\sin\frac{\theta_2}{2}\ket{s_0}.
	\end{gathered}
\end{equation}
As explained in Appendix~\ref{app:replica-decoupling}, when focusing on the dynamics of states of the same permutation $\sigma$, the different replicas dynamically \textit{decouple} at low-energies under the $P^{(2)}$ dynamics.
This means that whenever the two fields $\vec n_1(x,t)$ and $\vec n_2(x,t)$ that parametrize one branch of the manifold are away from the $s_0$ and $s_1$ points characterized below in Eq.~(\ref{eq:s0s1point}), the energy functional for $P^{(2)}$ factorizes in the continuum limit as
\begin{equation}
    H[\vec{n}_1, \vec{n}_2] = \frac{\kappa}{4} \int\ \mathrm{d}^d x\ [|\bm\nabla \vec{n}_1|^2 + |\bm\nabla \vec{n}_2|^2],
\label{eq:U1decoupledfunctional}
\end{equation}
Hence $\vec{n}_1$ and $\vec{n}_2$ evolve independently according to the TDVP equations for $k=1$ replica, which are precisely those for the Heisenberg model given in Eq.~\eqref{eq:diff-eq-heis}.
At these singular points, however, the inter-replica interactions can become important, and can be captured by defects that impose boundary conditions on the fields, as we discuss below.
Since we characterize the low-energy physics using smoothly varying states on the manifold, the value of the permutation $\sigma(x,t)$ can only switch when the state Eq.~(\ref{eq:explicit-coh-st-u1}) approaches the points (see Fig.~\ref{fig:u1}a)
\begin{equation}
s_0\cong\big(\theta_1=0,\theta_2=0\big),\;\;s_1\cong\big(\theta_1=\pi,\theta_2=\pi\big).
\label{eq:s0s1point}
\end{equation}
As in the $es\eta$ case, in the TDVP equations these points act as Dirichlet boundary conditions for the fields $\vec n_1$ and $\vec n_2$ [cf.~Eq.~(\ref{eq:crossingdirichlet})].
As in the $es\eta$ model, the quantum dynamics would allow these crossing points $\{x_{s_p,i}(t)\}$ to spatially fluctuate, providing subleading corrections that are not captured within the TDVP framework.
Furthermore, a boundary condition analogous to Eq.~\eqref{eq:equality-pressures} would allow the position $x_{s_p,i}(t)$ of these points to drift for unbalanced initial conditions (see Sec.~\ref{subsubsec:finite-size-dw} for comparison, and Appendix~\ref{app:bound-cond})
\begin{equation}
	\begin{gathered}
		|\partial_x \vec n_1(x_{s_p,i}^-)|^2+|\partial_x \vec n_2(x_{s_p,i}^-)|^2 =\qquad\qquad\\
		\qquad\qquad= |\partial_x \vec n_1(x_{s_p,i}^+)|^2+|\partial_x \vec n_2(x_{s_p,i}^+)|^2.
	\end{gathered}
\end{equation}
For the initial states studied here, which are symmetric under $\vec n_1\leftrightarrow\vec n_2$, we have that $|\partial_x \vec n_1|=|\partial_x \vec n_2|$ and this condition factorizes for the two fields, which independently satisfy Eq.~\eqref{eq:equality-pressures}.
In all, the dynamics of the fields on each branch is---to leading order---mostly very simple.
On each branch, we obtain two copies of Heisenberg dynamics, one for each replica.
The interaction between the different branches, and as a consequence, the coupling between the different replicas is---again to leading order---only encoded by a localized set of ``crossing points'', which can simply be treated as aborbing impurities in this context.
\subsubsection{Generality of the model}
The fact that the ground state manifold of $P^{(2)}$ is exactly $M_{U(1)}^{(2)}$ means that Brownian evolutions generated by operators in Eq.~(\ref{eq:u1-generators}) form a symmetric $2$-design, as expected for generic $U(1)$-symmetric circuits.
Hence, the addition of any other set of symmetric generators to the set of Eq.~(\ref{eq:u1-generators}) will not change these ferromagnetic ground states.
Since the low-energy dynamics of the model mostly follows from the structure of this ground state manifold rather than the precise form of the effective Hamiltonian, we expect our results to hold for generic Brownian evolutions that form symmetric $2$-designs, as long as they are chosen isotropically across the system (i.e., with the same gates and strengths for any set of local sites).
The high degree of symmetry of the effective model for the particular set of generators of Eq.~(\ref{eq:u1-generators}) nevertheless provides significant advantages for numerical simulations.
Note that removing certain gates from the set Eq.~(\ref{eq:u1-generators}) can alter its $2$-design property and hence increase the number of ground states of the effective Hamiltonian.
For example, in the regime where $\lambda=0$, the ground state structure changes and is no longer evidently ferromagnetic, and has a structure similar to the $\Delta\rt 1$ limit of the $es\eta$ case that is discussed in Appendix~\ref{sec:delta1lim}.
In 1d, the gates can then be mapped onto quadratic fermionic gates via a Jordan-Wigner transformation, and the effective model can again be brought to an evidently ferromagnetic form again by appropriate transformations as discussed in Appendix~\ref{sec:delta1lim}.
The ground state manifold there is then the homogeneous manifold $U(4)/(U(2)\times U(2))$, which is known as the Grassmannian manifold $\text{Gr}(2, 4)$~ \cite{lastres2026geometryfreefermioncommutants}.
The simple smooth structure of that manifold makes dynamics on them very similar to that of the Heisenberg model discussed in Sec.~\ref{sec:tdvp}.
We provide a brief discussion of this limit in Appendix~\ref{sec:delta1lim}, and focus on the interacting case in the main text.
\subsection{Dynamics of Entanglement} % MARK: * dw ent
We consider the initial state defined in Eq.~\eqref{eq:definition-dw-id-swap}, where in an infinite 1d chain $A=\{x>0\}$, so that the configuration is a single domain wall.
In terms of the basis Eq.~\eqref{eq:letterbasis}, the states on either side of the domain wall are the ones of Eq.~\eqref{eq:states-entanglement}.
These two points are on the commutant manifold $M^{(2)}_{U(1)}$ and are highlighted in Fig.~\ref{fig:gsman-all}b.
We are then interested in the imaginary time evolution of this initial state under the Hamiltonian $P^{(2)}$ in order to study the evolution of purity as in Eq.~\eqref{eq:renyi-purity}.
\subsubsection{Semiclassical dynamics and void formation}\label{subsubsec:semiclassicaldynamics}
Before moving onto the quantitative TDVP analysis of this dynamics, we first provide a qualitative summary.
From the analysis of the domain wall evolutions in the Heisenberg model in Sec.~\ref{sec:ex-dw} and in the $es\eta$ model in Sec.~\ref{subsec:esn-dw}, we expect the domain wall here to eventually relax to a configuration that is composed of smoothly varying states along the shortest path between the two points $\sket{\1^{\ot 2};e}$ and $\sket{\1^{\ot 2};\eta}$ along the commutant manifold.
Here there are two equally shortest paths between them, one passing through $\ket{s_0}$ and the other passing through $\ket{s_1}$, which are the two points of contact between the $e$ and $\eta$ branches of the manifold (see Fig.~\ref{fig:gsman-all}b).
These two paths are related by the symmetry of the Hamiltonian $\ket{\alpha_0}\leftrightarrow\ket{\alpha_1}$ for $\alpha\in\{e,s,\eta\}$ that is evident in Eq.~(\ref{eq:u1-ham-idstates}).
They correspond to two equivalent TDVP trajectories,
\begin{equation}\label{eq:two-trajectories}
    \sket{A\!:\!\bar A_0(t)}\;\;\text{and}\;\;\sket{A\!:\!\bar A_1(t)}
\end{equation}
which at $x=0$ respectively satisfy the conditions
\begin{equation}
    \!(\vec n_1(0,t),\vec n_2(0,t))_{0}\cong s_0,\,\,\,
    (\vec n_1(0,t),\vec n_2(0,t))_{1}\cong s_1.\!
\end{equation}
Along the lines of the domain-wall evolution physics of the $es\eta$ model, they develop void regions dominated by states polarized in the $\ket{s_0}$ and $\ket{s_1}$ directions respectively (shown in Fig.~\ref{fig:u1}a, see Sec.~\ref{sec:esn} for the definition of a \textit{void}).
The evolved domain wall $\sket{A\!:\!\bar A(t)}$ here is hence expected to be described at the semiclassical level by an equal superposition of smoothly varying states along both trajectories,
\begin{equation}
    \sket{A\!:\!\bar A(t)}\approx \sket{A\!:\!\bar A_0(t)} + \sket{A\!:\!\bar A_1(t)}
\label{eq:DWcat}
\end{equation}
resulting in a `cat'-like state in the middle of the void.
With this picture in mind, we can discuss the TDVP evolution more quantitatively.
Using the parametrization of Eq.~\eqref{eq:explicit-coh-st-u1}, we can write the initial conditions associated to the domain wall $\ket{A\!:\!\bar A}$ as
\begin{equation}%\label{eq:initial-cond-dw-esn}
    \vec n_m(x,0)=(1,0,0),\quad \sigma(x,0)=\begin{cases}
       e & x<0,\\
       \eta & x>0,
    \end{cases}
\label{eq:sigmavariation}
\end{equation}
for $m\in\{1,2\}$.
Along the evolution, we will enforce the Dirichlet boundary conditions $\vec n_m(0,t)=(0,0,-1)$ for the trajectory passing through $\ket{s_0}$ and $\vec n_m(0,t)=(0,0,+ 1)$ for the one passing through $\ket{s_1}$.
Knowing that the fields $\vec n_1(x,t)$ and $\vec n_2(x,t)$ evolve independently under Eq.~\eqref{eq:diff-eq-heis}, we can use the symmetry of the initial condition to see that only the angles $\theta_1,\theta_2$ will evolve in time according to Eq.~(\ref{eq:eom-theta}), with $\varphi_1,\varphi_2=0$.
The equations of motion for the evolution of this domain wall are then entirely equivalent to those for the evolution of the $\ket{\textsc{dw}}$ state in the $es\eta$ model [cf.~Eq.~\eqref{eq:diffeq-esn-dw}], with different Dirichlet boundary conditions at $x=0$ for the two trajectories of Eq.~\eqref{eq:two-trajectories}:
\begin{equation}
    \label{eq:dw-diffeq-both-u1}
	\begin{cases}
		\partial_t\theta_m(x,t)=\kappa\,\partial^2_x\theta_m,\\
		\theta_m(x,0)=\frac{\pi}{2},
	\end{cases}
    \theta_m(0,t)=\begin{cases}
    0&\text{for }\sket{A\!:\!\bar A_0},\\
    \pi&\text{for }\sket{A\!:\!\bar A_1}.
    \end{cases}
\end{equation}
where $m\in\{1,2\}$, and the permutation $\sigma(x,t)$ remains equal to its original value Eq.~(\ref{eq:sigmavariation}) throughout the evolution. 
The solution to these equations is directly related to the solution $\theta_{es\eta}(x,t)$ of Eq.~\eqref{eq:esn-sol-dw}
\begin{equation}\label{eq:solution-u1-dw}
    \theta_m(x,t)=\begin{cases}
    \theta_{es\eta}(x,t)&\text{for }\sket{A\!:\!\bar A_0},\\
    \pi- \theta_{es\eta}(x,t)&\text{for }\sket{A\!:\!\bar A_1}.
    \end{cases}
\end{equation}
and it provides predictions for the structure of the evolved state $\sket{A\!:\!\bar A(t)}$.
As we will discuss in Sec.~\ref{subsubsec:semicl-pur-u1}, this can then be used to compute predictions for the averaged purity under $U(1)$-symmetric Brownian evolution by evaluating the inner product with the replicated initial state $\sket{\rho^{\ot 2}}$ as in Eq.~\eqref{eq:renyi-purity}.
This will result in the diffusive growth of entanglement as $\sim\sqrt{t}$ for generic initial product states.
We also expect the physics in higher dimensions to be similar to those in higher dimensional $es\eta$ models discussed in Sec.~\ref{subsubsec:higherd}, with single TDVP trajectories replaced by a superposition of equivalent TDVP trajectories.
\subsubsection{Numerical simulation of the domain wall evolution}
This prediction of an equal superposition of two semiclassical trajectories can be checked numerically by computing the half-chain entanglement of the $\sket{A\!:\!\bar A(t)}$ domain wall.
In particular, in Fig.~\ref{fig:u1}c we compare it with the half-chain entanglement of a modified initial state, biased to pass through only one of the two trajectories:
\begin{equation}
	\sket{A\!:\!\bar A_p(0)} = \cdots \ot \sket{\1^{\ot 2};e}\ot \ket{s_p} \ot \sket{\1^{\ot 2};\eta} \ot \cdots
\end{equation}
where $p\in\{0,1\}$.
In the continuum TDVP prediction, $\sket{A\!:\!\bar A_p(t)}$ is expected to not have any entanglement, whereas $\sket{A\!:\!\bar A(t)}$ is expected to have a $\log(2)$ entanglement by virtue of being an equal superposition of two unentangled states Eq.~\eqref{eq:DWcat}.
While this is no longer exactly true in numerical simulations on a lattice, the difference of the two half-chain entropies nevertheless converges close to $\log(2)$, which is evidence in favor of the double-trajectory picture.
The evolution of this modified initial state is also the one shown in the plots of Fig.~\ref{fig:u1}a, which clearly show the formation of a void dominated by $\ket{s_p}$.
In Fig.~\ref{fig:u1}b we more explicitly show the evolution of the domain wall. Compared to the $es\eta$ model, simulations in this case are more challenging to perform due to the larger local Hilbert space dimension and complexity of the evolution.
Despite not reaching the same long timescales, we can see that the exponent for the growth of the void region is around $\frac{1}{2}$, as expected from diffusion, while the size of the sink region associated with quantum fluctuations [defined in Eq.~\eqref{eq:sink-region-dw}\footnote{While this definition was for the $es\eta$ model, it can also be applied in this case since in the evolution of $\sket{A\!:\!\bar A}$, since the density profile of $\ket{e_0}$ and $\ket{\eta_0}$ states is identical to that of $\ket{e_1}$ and $\ket{\eta_1}$ states respectively (see Fig.~\ref{fig:u1}a).}] grows subdiffusively, and appears to saturate to the same $\frac{1}{3}$ exponent as in the $es\eta$ model, although the data here is not as clean.
However, due to the similarities shared between this model and the $es\eta$, especially the point-like nature of the contact between the two permutation branches of the ground state manifold, we should expect fluctuations of the crossing points to also behave similarly.
\begin{figure}
    \centering
    \includegraphics[width=0.9\linewidth]{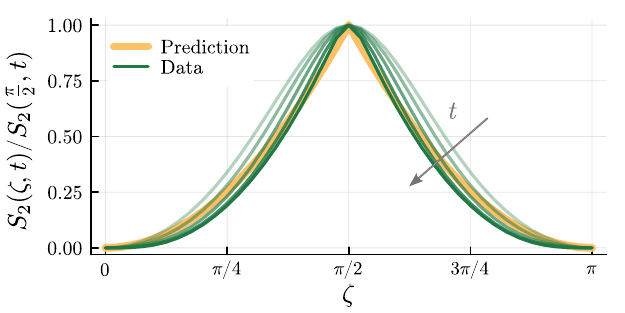}
    \caption{Comparison with numerical data of the TDVP prediction for the asymptotic value of the of annealed average of the second Rényi entropy between $\ket{A\!:\!\bar A(t)}$ [obtained using Eq.~(\ref{eq:purity-prediction})] and a fully-polarized initial state with $\langle Z_x\rangle = \cos\zeta$.
    Data is the same as in Fig.~\ref{fig:u1}; times shown between $t=0.4$ and $t=18.6$ (ideally, longer and more precise simulations would be needed to observe the asymptotic behavior). }
    \label{fig:u1-pred-data}
\end{figure}
\subsubsection{Breakdown of the entanglement membrane picture}\label{subsubsec:breakdownofmembrane}
Before we move on to precise predictions for the diffusive growth of Rényi entropies in $U(1)$ symmetric systems, we briefly comment on some fundamental underlying differences of the behavior of entanglement domain walls compared to systems without symmetries.
In particular, domain walls here melt by developing a large \textit{void} region, rather than propagating sharply, as in effective models for systems without symmetry \cite{PhysRevX.7.031016,Zhou_2019}; this can be attributed to the fact that domain walls there are gapped quasiparticle excitations~\cite{vardhan_entanglement_2024}, while here they decay to lower energy configurations.
This qualitative difference leads to a breakdown of the entanglement membrane picture for Rényi entanglement entropy.
Previous proposals of a modification of the membrane picture for charge-conserving systems~\cite{Rakovszky_2019,Turkeshi_mpemba}, where a sharp domain wall performs a random walk while emitting hydrodynamic modes that propagate diffusively in the bulk, also cannot be valid in the regime we study, since they effectively treat domain walls as stable quasiparticles, which is not the case here.
The discrepancy arises because this earlier picture is derived from circuit models where each charge-carrying qubit is paired up with neutral ancillary qudits, so that the local degrees of freedom are $\mc H_\mrm{loc}=\mb C^2\ot\mb C^q$.
This setup is commonly found in the literature and is perturbatively solvable in the limit $q\rt\infty$~\cite{Barratt2021Nov,Agrawal2021Jul,McCulloch_2023,Khemani2018Sep,Rakovszky2018Sep, Rakovszky_2019}.
When $q > 1$, \textit{no frozen states exist}, leading to a disconnected commutant manifold
\begin{equation}
	M_{U(1)}^{(2)} \cong (\mb S^2\times\mb S^2)_e \sqcup (\mb S^2\times\mb S^2)_\eta.
\end{equation}
Since there is no gapless excitation connecting the two branches of the commutant manifold, this structure is therefore consistent with the existence of sharp domain wall excitations, and therefore with a membrane picture.
One can geometrically think of the emission of hydrodynamic modes by the domain wall as the coupling of the domain wall excitation with the dynamics of the smooth field within each of the two disconnected components.
Ultimately, while entanglement grows ballistically $\sim t$ in these large-$q$ models due to the finite energy cost of maintaining a domain wall, the perturbative results are then be extrapolated to $q\rt 1$, thus obtaining the expected leading diffusive behavior.
Such extrapolations, also yield precise predictions for the evolution of purities in the $q \rt 1$ limit from various initial product states~\cite{Rakovszky_2019}, and remarkably they have a similar form to the predictions we obtain, as we will discuss in the following.
\subsubsection{Semiclassical predictions for entanglement growth}\label{subsubsec:semicl-pur-u1}
In order to compute concrete predictions for entanglement growth [cf.~\eqref{eq:renyi-purity}] from the TDVP solution to the domain wall evolution Eq.~\eqref{eq:solution-u1-dw}, we must first fix the norms appropriately.
In a system of finite size the domain wall state ${\sket{A\!:\!\bar A(t)}}$ decays at long times towards the superposition
\begin{equation}
    \sket{A\!:\!\bar A(\infty)} = \ket{s_0\cdots s_0}+\ket{s_1\cdots s_1},
\end{equation}
where the two components are coming from the saturation of ${\sket{A\!:\!\bar A_0(t)}}$ and ${\sket{A\!:\!\bar A_1(t)}}$ respectively.
Hence we will normalize the variationally-evolved states ${\sket{A\!:\!\bar A_0(t)}}$ and ${\sket{A\!:\!\bar A_1(t)}}$ according to Eq.~\eqref{eq:norm-evo-eq-prime}, thus obtaining the following parametrization for the state ${\sket{A\!:\!\bar A(t)}}$ predicted by the TDVP solution Eq.~\eqref{eq:solution-u1-dw}:
\begin{multline}\label{eq:u1-dw-parametrization}
    \sket{A\!:\!\bar A(t)}\approx \bigotimes_x \ket{\theta_{es\eta}(x,t),\sigma(x,t);0} + \\[-0.5em]
    +\bigotimes_x \ket{\theta_{es\eta}(x,t),\sigma(x,t);1},
\end{multline}
where $\theta_{es\eta}(x,t)$ is as in Eq.~\eqref{eq:esn-sol-dw}, and [cf.~Eq.~\eqref{eq:explicit-coh-st-u1}]
\begin{equation}
    \ket{\theta,\sigma;p} \defeq \ket{s_p}+\tan^2\frac{\theta}{2}\ket{s_{1-p}}+\tan\frac{\theta}{2}(\ket{\sigma_0}+\ket{\sigma_1}),
\end{equation}
where $p\in\{0,1\}$.
This normalization also ensures that the purity of frozen initial states
\begin{equation}
    \rho = (\ketbra{\up}{\up})^{\ot L}\;\;\text{and}\;\; \rho = (\ketbra{\dn}{\dn})^{\ot L},
\end{equation}
which do not evolve under the $U(1)$ symmetric gates, stays equal to $1$ at all times.

With this parametrization, in Appendix~\ref{app:conti} we perform computations of purity for simple unentangled initial states.
This results in Eq.~\eqref{eq:u1-purity-sol} for the decay of purity, which for states where the local mean charge $\langle Z_x\rangle$ is constant takes the form
\begin{equation}\label{eq:purity-prediction}
    \sbra{A\!:\!\bar A(t)}\rho\rangle^{\ot 2} \approx \prod_x\left(\frac{1+ r(x,t) |\langle Z_x\rangle|}
    {1+r(x,t)}\right)^2
\end{equation}
where
\begin{equation}
    r(x,t)=\tan\left[\frac{\pi}{4}\left(1-\left|\mrm{erf}\left(\frac{x}{\sqrt{4\kappa t}}\right)\right|\right)\right].
\end{equation}
This formula receives leading contributions from the region where $|x|\lesssim \sqrt{\kappa t}$---weighted by $r(x,t)$---resulting in a $\sim e^{-\sqrt{\kappa t}}$ decay of purity, and therefore a $\sim\sqrt{\kappa t}$ growth in the second Rényi entropy.
In Fig.~\ref{fig:u1-pred-data} we compare the prediction of Eq.~\eqref{eq:purity-prediction} with numerical data, where as in Eq.~\eqref{eq:Omegabehavior} we define the log-overlaps
\begin{equation}
    S_2(\zeta,t)=-\log(\sbra{A\!:\!\bar A(t)}\rho\rangle^{\ot 2})
\end{equation}
for a fully-polarized initial state with $\langle Z_x\rangle=\cos\zeta$, which here is simply the annealed average of the second Rényi entropy.
Note that due to the high computational cost of reaching thermalization timescales in this effective model, and due to the high sensitivity of the overlaps of Eq.~\eqref{eq:renyi-purity} to truncation error in MPS-based methods, we only show results for short times.

\begin{figure}
    \centering
    \includegraphics[width=0.62\linewidth]{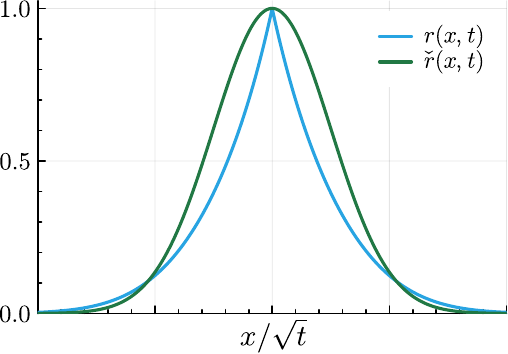}
    \caption{Comparison of the weight functions appearing in the prediction for half-chain growth of purity Eq.~\eqref{eq:purity-prediction}: in this work ($r(x,t)$) and in Ref.~\cite{Rakovszky_2019} ($\check r(x,t)$).}
    \label{fig:comparison}
\end{figure}

This formula (and also the full result in the appendix) has a similar form to what has been proposed in Ref.~\cite{Rakovszky_2019} for this same quantity.
The main difference is in the weight function $r(x,t)$, which was instead found to be a Gaussian of the form
\begin{equation}
    \check r(x,t)=\exp(-x^2/2\kappa t).
\end{equation}
The two curves are plotted in Fig.~\ref{fig:comparison}.
This discrepancy can be understood from the difference in the two approaches taken in their derivation, explained in the previous section Sec.~\ref{subsubsec:breakdownofmembrane}.
In our case we find that the domain wall melts diffusively with a profile given approximately by Eqs.~\eqref{eq:solution-u1-dw} and~\eqref{eq:esn-sol-dw}.
By ignoring the subleading fluctuations of the sink region (which are expected to go as $t^{\frac{1}{3}}$ due to physics similar to the discussion in Sec.~\ref{subsubsec:DWannihilation}), the profile displays a sharp cusp in the middle, which is then reflected in the weight function $r(x,t)$.
Since the sink fluctuations are subdiffusive, we expect that the cusp should remain in the scaling variables $x/\sqrt{t}$ at late times.
In contrast, from the modified membrane picture which emerges when charged degrees of freedom are coupled to neutral ancillas (i.e., $\mc H_\mrm{loc}=\mb C^2\ot\mb C^q$), which is the approach used by Ref.~\cite{Rakovszky_2019}, the fluctuations of the domain wall are overestimated as being diffusive, leading to a smooth Gaussian profile in the scaling variables, without any cusps.
In the setting of Fig.~\ref{fig:u1-pred-data} the predictions from $r(x,t)$ or $\check r(x,t)$ would be almost indistinguishable, since translationally invariant states are not very sensitive to the choice of weight function. More refined simulations are therefore needed to discriminate the two.
A more detailed discussion of the predictions is found in Appendix~\ref{app:conti}.
\subsubsection{Rényi entropies for $k\geq 3$}
Given these results for the purity, one might wonder if analogous results hold for the $k$-th purities or the $k$-th Rényi entropies for $k \geq 3$.
This requires understanding the geometry of higher $k$-commutants. 
As we discussed in Sec.~\ref{subsubsec:largerk}, the complexity of the commutant manifold greatly increases as the number $k$ of replicas grows: it consists of $k!$ identical smooth components, and the structure of their intersections depends on the structure of the symmetric group $S_k$.
From Eq.~\eqref{eq:fsigmasame} we do however know that in the case of the model under consideration, the branches of the commutant manifold will all intersect on the two replicated frozen states, i.e.,
\begin{equation}
	\bigcap_{\sigma\in S_k}\big((\mb S^2)^{\times k}\big)_\sigma = \big\{\ket{\dn}^{\ot 2k},\,\ket{\up}^{\ot 2k}\big\},\label{eq:only-two-contact-points}
\end{equation}
which in the $k=2$ case are the $\ket{s_0}$ and $\ket{s_1}$ states.
As seen in Eq.~(\ref{eq:definition-dw-id-swap}), $e$ and $\eta$ are the only permutations that are relevant for the $k$-th purities, and it is easy to see that those branches of the commutant manifold only intersect at two points, which are exactly those of Eq.~(\ref{eq:only-two-contact-points}).
Hence the shortest paths connecting $\sket{\1^{\ot k};e}$ to $\sket{\1^{\ot k};\eta}$ on the commutant manifold, will be the two paths passing through the contact points Eq.~\eqref{eq:only-two-contact-points}.
While these were the only available paths for $k = 2$, there are many other paths for $k \geq 3$, which are longer and pass through other intermediate branches of the manifold (which are associated to permutations $\sigma\notin\{e,\eta\}$).
To each path we can associate a semiclassical solution to the equations of motion by following similar arguments as in Sec.~\ref{subsubsec:TDVPreplicadecoup} to derive the TDVP equation, and those in Sec.~\ref{subsubsec:semiclassicaldynamics} to solve them.
The energy of each trajectory will decay over time as $\sqrt{\kappa/t}$, with a prefactor that we expect to be be roughly proportional to the squared length of the path on the manifold, since the energy functional Eq.~\eqref{eq:deriv-exp-H} is quadratic in gradients of the field.
We can therefore conjecture that the leading contribution at long times will be given by the two shortest trajectories that go directly from $e$ to $\eta$ through the contact points of Eq.~\eqref{eq:only-two-contact-points}, resulting in a picture very similar to the one presented here for $k=2$, which results in void formation and a scaling of the $k$-th purity as $e^{-\sqrt{\kappa t}}$, ultimately leading to diffusive  $\sim \sqrt{\kappa t}$ growth of $k$-th Rényi entanglement entropies.
This diffusive scaling of all $k$-th Rényi entropies raises a puzzle on the behavior of von Neumann entropies, which is obtained as the $k \rightarrow 1$ limit of these Rényi entropies. 
While one might naively expect it to also grow diffusively, it has been observed to instead grow ballistically as $\sim t$~\cite{Rakovszky_2019}.
However, taking this replica limit is rather subtle, and would require a deeper understanding of all the factors of $k$ that appear in the expression of the $k$-th purity, including prefactors that potentially also depend on fluctuations that are not captured by TDVP.
We leave a detailed study of these effects to future work.

\subsubsection{Extensions to other entropic quantities}\label{subsubsec:U1extensions}
The dynamics of states similar to $\sket{A\!:\!\bar A}$ can be used to study many other entropic quantities.
For example, if in the definition Eq.~\eqref{eq:renyi-purity} one dresses the $\eta$ permutation with a charged operator as follows
\begin{equation}\label{eq:symmetry-resolved}
    \bigotimes_{i\in \bar A,\,j\in A}|\1^{\ot 2};e\rangle_{i}|e^{i\theta_1Z}\ot e^{i\theta_2Z};\eta\rangle_{j},
\end{equation}
then the evolution of this state can be used to compute the (averaged) dynamics of the ``charged moments'' which are used in the study of symmetry-resolved entanglement~\cite{Goldstein2018May,Fraenkel2020Mar,Bonsignori2019Oct,Murciano2020Aug,Murciano2022Aug} and entanglement asymmetry~\cite{Ares2023Apr,Ares2025Aug,Fossati2024May,Lastres2025Jan,Murciano2024Jan}, under $U(1)$-conserving Brownian dynamics.
The semiclassical evolution of this modified domain-wall state is completely analogous to that of $\sket{A\!:\!\bar A}$, and can actually be obtained from it by performing a simple change of initial conditions, which also amounts to a simple coordinate on the $\eta$ branch of the $k = 2$ commutant manifold.

\subsection{Dynamics of Correlators} % MARK: * nonhydro
\label{sec:U1correlators}
We now turn to the study of squared autocorrelation functions [see Eq.~\eqref{eq:squared-correlator}]
\begin{equation}\label{eq:u1-autocorr-ref}
	\begin{gathered}
	\langle O_i(t)O_i(0)\rangle= \frac{1}{2^L}\tr(O_i(t)O_i(0))\\
	\overline{|\tr( O_i(t)O_i(0))|^2} = \sbra{O_i^{\ot 2}} e^{-\kappa P^{(2)}t}\sket{O_i^{\ot 2}}\\
	\sket{O_i^{\ot 2}} = \dots \ot \sket{\1^{\ot 2};e} \ot \sket{O^{\ot 2}}_i \ot \sket{\1^{\ot 2};e} \ot \dots
	\end{gathered}
\end{equation}
where for simplicity we choose $O$ to be Hermitian.
In this way, we can compute the average decay in the magnitude of the autocorrelator even when its sign fluctuates over time.
We will be interested in the case where $O$ is either a hydrodynamic or non-hydrodynamic operator, where the only purely-hydrodynamic traceless single-site operator is $Z$, which is an element of the $k=1$ commutant manifold Eq.~\eqref{eq:U1manifold}.
\subsubsection{Hydrodynamic Correlators}\label{subsubsec:U1hydrocorrelators}
Using the basis Eq.~(\ref{eq:letterbasis}), we can express the squared autocorrelation function of $Z_i$ through the evolution of [cf.~Eq.~\eqref{eq:u1-autocorr-ref}]
\begin{equation}
	\sket{Z^{\ot 2}} = \ket{s_0}+\ket{s_1}-\ket{e_0}-\ket{e_1}.
\end{equation}
Under the assumption of replica decoupling, which holds to leading order at times $t\geq \mc O(1)$, the two $Z$ operators would evolve freely in their respective replica, under the ferromagnetic Heisenberg dynamics of $P^{(1)}$, under which the operator string $\ket{\cdots\1 Z\1\cdots}$ is analogous to $\ket{\cdots\up\dn\up\cdots}$. 
This results in two independent Gaussian profiles for the two evolved operators [similar to Eq.~\eqref{eq:e-wavefunct}], and Eq.~\eqref{eq:u1-autocorr-ref} evaluates to the return probability of both operators to the origin at time $t$.
In an infinite system, for each operator this is $\sim\frac{1}{\sqrt{\kappa t}}$ so that in total we have
\begin{equation}
	\overline{|\tr( Z_i(t)Z_i(0))|^2}\sim \frac{1}{\kappa t}.\quad(d = 1)\label{eq:one-hydro-correlator}
\end{equation}
Overall, replica decoupling implies the formal equivalence of the averaged squared correlator and the square of the averaged correlator, which means that to leading order:
\begin{equation}
	\overline{|\tr( Z_i(t)Z_i(0))|^2}\approx |\overline{\tr( Z_i(t)Z_i(0))}|^2.
\end{equation}
Subleading corrections to this are given by the Hamiltonian term Eq.~\eqref{eq:u1-ham-idstates}, which introduces a two-particle $Z$-$Z$ interaction across the two replicas.
\subsubsection{Non-Hydrodynamic Correlators}\label{subsubsec:U1nonhydro}
For the autocorrelation function of the local non-hydrodynamic operator of $X_i$, in the basis of Eq.~(\ref{eq:letterbasis}) we have [cf.~Eq.~\eqref{eq:u1-autocorr-ref}]
\begin{equation}
	\sket{X^{\ot 2}} = \ket{\eta_0}+\ket{\eta_1}+\ket{\xi_0}+\ket{\xi_1}.
\end{equation}
For this initial state, the non-dynamical states $\ket{\xi_p}$ will be exponentially suppressed over time due to their gap [see discussion below Eq.~(\ref{eq:states-entanglement})], only leaving the $\ket{\eta_p}$ states on a background of `identity' states, which are instead points on the $e$ branch of the commutant manifold.
The dynamics is then that of a single $\eta$-particle in the effective model, and as in the case of the $es\eta$ toy model discussed in Sec.~\ref{subsec:sp}, we expect it to share great similarity with domain-wall dynamics in 1d.
The replicated $X_i$ operator will therefore act as a fluctuating Dirichlet boundary condition for the background field, which in the semiclassical limit will be approximated as being immobile.
Diffusive transport will then drive the formation of a void completely analogous to the one of the domain wall case, with voids here having a similar `cat'-like structure as shown in Eq.~(\ref{eq:DWcat}), due to the existence of two equivalent semiclassical solutions.
Explicitly, we have
\begin{equation}
	\sket{X_i^{\ot 2}(t)}\approx \sket{(X_i^{\ot 2})_0(t)} + \sket{(X_i^{\ot 2})_1(t)}
\end{equation}
where $\sket{(X_i^{\ot 2})_p(t)}$ for $p\in\{0,1\}$ are TDVP trajectories that with the parametrization of Eq.~\eqref{eq:explicit-coh-st-u1} are described by $\varphi_1(x,t)=\varphi_2(x,t)=0$ and
\begin{equation}
    \theta_m(x,t)=\begin{cases}
    \theta_{es\eta}(x,t)&\text{for }p=0,\\
    \pi- \theta_{es\eta}(x,t)&\text{for }p=1.
    \end{cases}
\end{equation}
where $\theta_{es\eta}(x,t)$ is the solution from Eq.~\eqref{eq:esn-sol-sp}.
A similar analysis of the energies of these solutions, shows that it scales as $E \sim \sqrt{\kappa/t}$, as in the $es\eta$ model, which ultimately leads to a decay of the correlator in 1d as
\begin{equation}
	\overline{|\tr( X_i(t)X_i(0))|^2}\sim e^{-\sqrt{\kappa t}},\quad(d = 1)\label{eq:one-nonhydro-correlator}
\end{equation}
similar to Eq.~(\ref{eq:norm-dec-1d}).
We also expect the higher-dimensional physics to be similar to the single-particle dynamics of the $es\eta$ model, leading the non-hydrodynamic correlators to take the forms of Eqs.~(\ref{eq:norm-dec-3d}) and (\ref{eq:norm-dec-2d}) in $d \geq 3$ and $d = 2$ respectively.
Subleading contributions due to quantum fluctuations of the non-hydrodynamic operator can in this setting be captured through a classical stochastic model which approximates the effective imaginary-time quantum evolution.
An exact stochastic mapping of the full $U(1)$ model cannot be performed using the same methods, due to a sign problem in the term Eq.~\eqref{eq:u1-ham-idstates} of the effective Hamiltonian, in particular since some of its off-diagonal elements in the computational basis are positive.
However, an approximate mapping that works for the study of non-hydrodynamic correlators was introduced in Ref.~\cite{mcculloch2026longlivedlocalquantumcoherences}, where a $\frac{1}{3}$ exponent for the subdiffusion of the operator within the void was predicted in 1d, similar to what we observed in Fig.~\ref{fig:esn}c.
We review this model in Appendix~\ref{sec:fluctuations}, and we argue that it is not suitable for the study of general observables such as entanglement, since it ignores relevant terms in the effective Hamiltonian, which leads to a change of the ground state manifold.
However, we note that it might be possible to obtain a sign-problem-free effective Hamiltonian that has the same ground state manifold, which would then lead to stochastic models that also capture the behavior of other observables.
We leave the exploration of such possibilities to future work.
\subsection{Connection to void arguments}
To our knowledge, the behavior of non-hydrodynamic correlators as Eq.~\eqref{eq:one-nonhydro-correlator} was first described in Ref.~\cite{McCulloch_2026}, where this scaling is argued for through the use of a \textit{void argument}, which were in previous works also used in the context of diffusive Rényi entropy scaling~\cite{Huang_2020,Rakovszky_2019,Rakovszky2021May}.
These types of arguments provide rigorous lower bounds on the asymptotic decay of observables by expressing them as a weighted sum over expectation values of the same observables in individual states in the Hilbert space, and using the analytically tractable contribution from carefully chosen states to establish lower bounds.
The states typically used in the computation are those that possess some extended region of a many-body state with a fully-polarized frozen product state such as $\ket{\dots\dn\dn\dots}$, called \textit{voids}.
Since the void regions themselves do not evolve under charge-conserving dynamics, their structure simplifies analytic tractability, with the only processes being diffusive transport of any charge on top of the void, and the diffusive melting of the edges of the void regions.
These can be accounted for rigorously using standard hydrodynamic computations, which can then be used to describe the evolution of these modified states, which in turn provide bounds on the value of the original observables of interest.
These voids are therefore in some sense an artificial construction used in a mathematical argument, and are distinct from the dynamically generated void regions in our effective models $P^{(k)}$, which natural form from the boundary states used for the study of entanglement and appropriate correlators.
The recent work Ref.~\cite{mcculloch2026longlivedlocalquantumcoherences} also explicitly solved the dynamics of physical voids in a similar effective model within the framework of macroscopic fluctuation theory, in order to describe the evolution of non-hydrodynamic correlators such as Eq.~\eqref{eq:one-nonhydro-correlator}, and going beyond the semiclassical results provided here.

\subsection{$U(1)$-Symmetric Models with Higher Spin} % MARK: * more spin
\label{subsec:higher-spin}
So far we have discussed the dynamics of a specific model, with Brownian generators Eq.~\eqref{eq:u1-generators}.
Based on the formalism of ferromagnetic models, we argued that the continuum effective Hamiltonians describing other noisy systems should not change, as long as the commutant remains the same.
However, the commutant does change if the local representation of the $U(1)$ group is chosen differently, and here we discuss the behavior of these quantities in such settings.
The spin-$\frac{1}{2}$ case studied so far corresponds to a local representation with charges $\mu_{\up/\dn}=\pm 1$ on every site. More generally, one might consider $r$-dimensional qudits labelled by $\{\ket{n},\;\;1\leq n\leq r\}$ with on-site charges $\{\mu_r\}$ such that
\begin{equation}
    \mu_1\leq ...\leq \mu_r\quad (\mu_n\in\mb Z).
\end{equation}
In this case, the existence of frozen states relies on having non-degenerate states of lowest and highest charge:
\begin{equation}\label{eq:condition-for-frozen-u1}
    \begin{aligned}
        \mu_1<\mu_2 \ \ &\implies& \ket{1}\text{ is frozen,}\\
        \mu_{r-1}<\mu_r \ \ &\implies& \ket{r}\text{ is frozen.}
    \end{aligned}
\end{equation}
The case discussed in Sec.~\ref{subsubsec:breakdownofmembrane} of a spin-$\frac{1}{2}$ site paired with a charge-neutral $\mb C^q$ qudit corresponds to
\begin{equation}
    \mu_1=\dots =\mu_q = -1,\quad \mu_{q+1}=\dots=\mu_{2q} = +1.
\end{equation}
where no frozen states are present due to the degeneracy of the highest and lowest charge states, resulting in a disconnected commutant manifold for $k\geq 2$ (see Sec.~\ref{subsubsec:frozenstates} for the connection of commutant manifolds and frozen states).
If only one of the conditions of Eq.~\eqref{eq:condition-for-frozen-u1} is satisfied, then only one frozen state exists, and the $k=2$ commutant manifolds will possess a single contact point between the $e$ and $\eta$ branches, similar to the $es\eta$ model in Fig.~\ref{fig:gsman-all}a.
This for example happens on a spin-$1$ model with global charge $\sum_i{(S^z_i)^2}$, corresponding to local charges $\mu_0=0$ and $\mu_+=\mu_- =1$, and where $\ket{0}$ is thus the only frozen state.
A standard spin-$s$ many body system with the global charge operator $\sum_i{S^z_i}$ where $\mu_n=0, \dots ,2s-1$, possesses exactly two frozen states.
Beyond frozen states, the structure of the commutant manifolds for generic number of replicas $k$ only depends on the $k=1$ commutant manifold for the given representation, which we generically expect to be a sphere,
\begin{equation}
    M_{U(1)}^{(1)}=\mb P\{\exp(zN),\, z\in\mb C\}\stackrel{?}{\cong} \mb S^2.
\end{equation}
where $N$ is the local charge density operator.
In Ref.~\cite{the-algebra-paper} we show for example that this holds for any standard spin-$s$ many-body system.
While we do not exclude that particular choices of generators for higher-spin systems might result in different dynamics of the field, we generally expect these to be diffusive [cf.~Eq.~\eqref{eq:exp-generic}] as long as they form $k$-designs, i.e., when their $k$-commutant matches what is expected from just their symmetry operators.
Since all these standard higher spin models possess two frozen states, we expect dynamics of entanglement and of non-hydrodynamic correlators completely analogous to those studied here in the spin-$\frac{1}{2}$ case.
This prediction is in agreement with the ``frozen region condition'' of Ref.~\cite{Rakovszky2021May} regarding entanglement. 
We also find that all single-site diagonal operators are hydrodynamic in higher-spin models according to the definition of Eq.~\eqref{eq:hydrodefn}, which---as highlighted in Ref.~\cite{Rakovszky2021May}---might explain the discrepancy with the predictions of Ref.~\cite{Znidaric2020Jun} for the rate of growth of Rényi entropies.
Interestingly, it is also possible to construct less generic situations where the $k = 1$ commutant manifold is a sphere with cuspidal singularities.
For example if we take a spin-$\frac{3}{2}$ site and remove the state of charge $2$ by some energetic means such that the local Hilbert space is:
\begin{equation}
    \mc H_\mrm{loc}=\mrm{span}\{\ket{0},\ket{1},\ket{3}\}.
\end{equation}
Then the $k=1$ commutant manifold of the total spin $U(1)$ symmetry is described by the projective variety specified by~\cite{the-algebra-paper}
\begin{equation}
    x_3 x_0^2=x_1^3\;\;\;\text{in}\;\;\;\mb P\{x_0\!\ketbra{0}{0}+x_1\!\ketbra{1}{1}+x_3\!\ketbra{3}{3}\}\cong \mb{CP}^2,
\end{equation}
which describes a sphere with a singularity at the point associated with $\ketbra{3}{3}$.
It is possible to construct a Brownian model with local interactions of range three that corresponds to this manifold, and it could be interesting to explore whether this affects the dynamics of observables in a significant way.

\section{Models with $SU(2)$ and other Non-Abelian Symmetries} \label{sec:su2} % MARK: 5. SU(2)
We finally discuss the example of 4-copy models ($k=2$) for Brownian circuits with a continuous non-Abelian symmetry, mainly focusing on the case of $SU(2)$ for concreteness.
While several works study the saturation of observables in this class of random circuits~\cite{hulse2021qudit,Hulse2024Jul,Li2024Dec,Li2025Oct,Wu2026Apr}, to our knowledge the quantum dynamics of these models remains widely unexplored from a theoretical standpoint, with only the $k=1$ effective Hamiltonian being fully understood~\cite{PhysRevB.108.054307,moudgalya2024}.
\subsection{Nearest-neighbor $SU(2)$ spin-$\frac{1}{2}$ model}
We consider a spin-$\frac{1}{2}$ many-body system $\mc H_\mrm{loc}=\mb C^2$ with basis $\{\ket\downarrow,\ket\uparrow\}$.
On every bond $\langle ij\rangle$ we will consider the Brownian generator
\begin{equation}
	\vec S_i\cdot \vec S_j = \frac{1}{4}(X_iX_j+Y_iY_j+Z_iZ_j)\label{eq:su2-generators}
\end{equation}
where $\{X,Y,Z\}$ are the Pauli matrices; this is the only non-trivial $SU(2)$-symmetric two-site interaction term.
The resulting effective Hamiltonian $P^{(2)}$ is ferromagnetic (proven in Ref.~\cite{the-algebra-paper}) and does not have any additional accidental symmetries of the kind described in Sec.~\ref{subsec:accidentalsym}.
Its ground state manifold (derived in Sec.~\ref{sec:su2-gs-man} and pictured in Fig.~\ref{fig:gsman-all}c) has the form
\begin{equation}\label{eq:gsman-su2-intersection}
    \begin{gathered}
        M_{SU(2)}^{(2)} \cong (\mb{CP}^3\times \mb{CP}^3)_e\cup (\mb{CP}^3\times \mb{CP}^3)_\eta\\
        \text{where}\quad (\mb{CP}^3\times \mb{CP}^3)_e\cap (\mb{CP}^3\times \mb{CP}^3)_\eta\cong (\mb S^2)^{\times 4}.
    \end{gathered}
\end{equation}
The states parametrized by it span the whole local Hilbert space $\mc H_\mrm{loc}^{\ot 4}\cong \mb C^{16}$.
As a result, defining $P^{(2)}$ requires a many-body system with an on-site Hilbert space dimension of 16, thus introducing significant theoretical and computational difficulties.
Hence we restrict our analysis to results from TDVP, which are indeed very natural generalizations of results obtained in Sec.~\ref{sec:esn} and \ref{sec:u1}, and which are in turn much more easily verifiable numerically.
\subsubsection{TDVP equations}
The formalism developed in Sec.~\ref{sec:tdvp} does however allow us to analyze the expected low-energy behavior of $P^{(2)}$ with relative ease.
We start from the continuum description of the $k=1$ model.
By direct computation, the effective Hamiltonian $P^{(1)}$ can be shown to be an $SU(4)$ Heisenberg ferromagnet~\cite{moudgalya2024, PhysRevB.108.054307}
\begin{equation}
    P^{(1)} = \frac{1}{2} \sum_{\langle ij\rangle} (\Pi_{\Lambda^2\mb C^4})_{ij}
\end{equation}
where $\Pi_{\Lambda^2\mb C^4}$ is the projector onto the set of antisymmetric states on two $\mb C^2\ot\mb C^2\cong \mb C^4$ sites.
As explained in Appendix~\ref{app:cont-exp}, the emergent $SU(4)$ symmetry of the Hamiltonian forces its continuum expansion to take the form of Eq.~\eqref{eq:deriv-exp-H}, with $v\in\mb{CP}^3 \cong M_{SU(2)}^{(1)}$ and $g_{\alpha\beta}$ being proportional to the Fubini-Study metric on $\mb{CP}^3$~\cite{Bengtsson2017Aug}.
The TDVP equations of motion then take exactly the form of Eq.~\eqref{eq:harm-map-heat-flow}.
For our purposes it will be convenient to over-parametrize $\mb{CP}^3$ using $4\times 4$ complex matrices:
\begin{equation}
    \!\mb{CP}^3\cong\{V\!\in\mrm{End}(\mb C^4): V=V\+=V^2,\,\tr(V)=1\}.\!
\end{equation}
The matrices $V$ that satisfy these conditions are exactly the pure density matrices on $\mb C^4$
\begin{equation}\label{eq:Vdef}
    V=\ketbra{v}{v},\;\;\text{where}\;\; \ket{v}\in\mb C^4\;\;\text{and}\;\;\braket{v}=1.
\end{equation}
This is in part analogous to the parametrization that we used in Eq.~\eqref{eq:blochsphere} for $\mb{CP}^1\cong \mb S^2$, the ground state manifold of the spin-$\frac{1}{2}$ Heisenberg model, where vectors $\vec n$ of norm one parametrize spin-coherent states $\ket{\vec n}$, or equivalently pure density matrices $\rho =\ketbra{\vec n}{\vec n} =\frac{1}{2}(\1+\vec n\cdot\vec\sigma)$.
Here, we find it more convenient to work with the density matrices $V$ directly, rather than with vectors that parametrize them as a 15-dimensional vector over a fixed basis of traceless $4 \times 4$ Hermitian matrices.
With this parametrization we find for $k=1$ the following continuum Hamiltonian and TDVP equations (cf.~Appendix~\ref{app:cont-exp})
\begin{gather}
    H[V] = \frac{\kappa}{2} \int\dd^d x\,\tr(\bm\nabla V\cdot \bm\nabla V)\\
    \partial_t V = \kappa\,\Pi_V[\bm \nabla^2 V] = \kappa\,[V,[V,\bm \nabla^2 V]]\label{eq:eoms-su2}
\end{gather}
where $\Pi_V[\,\cdot\,]=[V,[V,\,\cdot\,]]$ is the projector onto matrices tangent to $\mb{CP}^3$ at the point $V$, as shown in Eq.~(\ref{eq:tangentproj}).
For $P^{(2)}$, the continuum Hamiltonian and TDVP equations will also take a similar form.
We can parametrize a point on the $k=2$ ground state manifold with a pair of matrices $V_1,V_2\in \mb{CP}^3$ and a permutation $\sigma\in\{e,\eta\}$ which indicates the branch of the manifold as in Eq.~\eqref{eq:gsman-su2-intersection}, and we can provide coordinates to field configurations as
\begin{equation}
    \ket{v(x)} \cong (V_1(x), V_2(x), \sigma(x)).
\label{eq:vxfieldconfigs}
\end{equation}
Similar the $U(1)$ case, $P^{(2)}$ exhibits replica decoupling at low energies (discussed in Appendix \ref{app:replica-decoupling}), which means that its energy functional, when applied to product state configurations composed of smooth variations on the ground state manifold, factorizes as
\begin{equation}
    H[V_1,V_2] = \frac{\kappa}{2} \int\dd^d x\,[\tr(\bm\nabla V_1\cdot \bm\nabla V_1)+\tr(\bm\nabla V_2\cdot \bm\nabla V_2)],\label{eq:ham-su2}
\end{equation}
as in Eq.~(\ref{eq:U1decoupledfunctional}) for the $U(1)$ case.
This leads to the decoupling of equations of motion for $V_1$ and $V_2$, which have the same form as the $k=1$ case of Eq.~\eqref{eq:eoms-su2}.
This decoupling however only holds away from the singularities of the manifold, where inter-replica interactions become important.
\subsubsection{Treatment of singularities and boundary conditions for branch switching}
As described more generally in Sec.~\ref{sec:tdvp}, low-energy field configurations $v(x)$ of the form of Eq.~(\ref{eq:vxfieldconfigs}) are expected to vary smoothly on the manifold.
This means that $\sigma(x)$ can only change at points in space where the field lies on the intersection of the two branches of the manifold Eq.~\eqref{eq:gsman-su2-intersection}, i.e., [cf.~Eq.~\eqref{eq:su2-int-struct}]
\begin{equation}\label{eq:intersection-points}
    \ket{\vec n_1}\ot \ket{\vec n_2}^*\ot \ket{\vec n_3}\ot \ket{\vec n_4}^*\in (\mb S^2)^{\times 4}.
\end{equation}
In the matrix coordinates of Eq.~\eqref{eq:Vdef}, this intersection manifold is identified by the condition
\begin{equation}\label{eq:intersection-points-coord}
    V_1 = \rho_1\ot_K\rho_2^T,\quad V_2 = \rho_3\ot_K\rho_4^T,
\end{equation}
where $\rho_a=\ketbra{\vec n_a}{\vec n_a}$ and $\otimes_K$ is the Kronecker product.
These two equations are enforced as boundary conditions for field configurations $V_1(x_i)$ and $V_2(x_i)$ at all such points $\{x_i\}$ where $\sigma(x_i)$ switches value discretely between $e$ and $\eta$, analogous to Eq.~(\ref{eq:crossingdirichlet}).
We expect all such crossing points $\{x_i\}$ where $\sigma(x)$ is discontinuous to act as defects in ways that are for the most part analogous to what we have seen in Sec.~\ref{sec:esn} and \ref{sec:u1}.
However, a crucial difference in this case is that the singularities [cf.~Eq.~\eqref{eq:gsman-su2-intersection}] are not a discrete set of points, but rather 4-dimensional manifolds themselves.
Hence the defects themselves here possess continuous internal degrees of freedom (labelled by $\{\vec n_a\}$) which would allow for low-energy quantum fluctuations of the defects that have no counterpart in the previously studied cases.
For example, in $d \geq 2$ dimensions, the value of the fields $(V_1,V_2)$ on any $(d-1)$-dimensional interface is not necessarily constrained to a uniform value $(\rho_1\ot_K\rho_2^T,\rho_3\ot_K\rho_4^T)$, but rather the value of the matrices $\{\rho_a\}_{1 \leq a \leq 4}$ can instead vary spatially along the defect.
While we do not have any proposal to treat such fluctuations, we will argue from symmetry that in one dimension, the semiclassical TDVP trajectory can be obtained in a manner directly analogous to the $U(1)$ case, by imposing simple Dirichlet boundary conditions for initial states of interest here.
We leave a broader understanding of the evolution of these defects for future investigations.
\subsection{Dynamics of Entanglement} % MARK: * dw ent
The setup for the computation of purities remains essentially identical as in the $U(1)$ case, with a domain wall configuration ${\sket{A\!:\!\bar A(t)}}$ in 1d between the states $\sket{\1^{\ot 2};e}$ and $\sket{\1^{\ot 2};\eta}$ of Eq.~\eqref{eq:states-entanglement}.
Recall that in the $U(1)$ case, the presence of two contact points ($\ket{s_0}$ and $\ket{s_1}$) between the two branches $e$ and $\eta$ of the commutant manifold led to a semiclassical evolution of the domain wall dominated by a superposition of two trajectories, each interpolating the domain wall passing through either $\ket{s_0}$ or $\ket{s_1}$, shown in Eq.~(\ref{eq:u1-dw-parametrization}).
The continuous intersection manifold that occurs in the $SU(2)$ case, therefore suggests that the domain wall evolves into an equal superposition of potentially \textit{infinitely many} semiclassical trajectories, parametrized by points on $(\mb S^2)^{\times 4}$ through which they pass, provided they are all energetically equivalent.
With this dynamical information, we can now formulate a semiclassical ansatz for the evolution of the domain wall.
As in Sec.~\ref{subsubsec:semiclassicaldynamics}, we again argue that the evolved field $v(x,t)$ for the domain wall initial state in 1d should relax to a configuration that follows a curve of shortest length between the points $\sket{\1^{\ot 2};e}$ and $\sket{\1^{\ot 2};\eta}$ on the commutant manifold, which in an infinite system act as boundary conditions at $x=\pm\infty$ (depicted in Fig.~\ref{fig:gsman-all}c).
This is expected because the energy functional Eq.~\eqref{eq:ham-su2}, by virtue of being quadratic in the gradients, explicitly penalizes length of the curves $V_1(x)$ and $V_2(x)$ on $\mb{CP}^3$.
Then, it is natural to require that dominant semi-classical trajectories must pass only through the points in Eq.~\eqref{eq:intersection-points} that are \textit{closest} to both $\sket{\1^{\ot 2};e}$ and $\sket{\1^{\ot 2};\eta}$.
\subsubsection{Determination of the shortest paths}
According to the Fubini-Study metric, the total distance traveled between two points labelled by matrices $V = \ketbra{v}{v}$ and $W = \ketbra{w}{w}$ on $\mb{CP}^3$ is~\cite{Bengtsson2017Aug}
\begin{equation}
    \dd(V,W) = \arccos(\sqrt{\tr(VW)}) = \arccos(|\braket{v}{w}|).
\label{eq:CP3distance}
\end{equation}
The points $\sket{\1^{\ot 2};e}$ and $\sket{\1^{\ot 2};\eta}$ correspond on their respective branch of the commutant manifold to the coordinates [cf. Eq.~(\ref{eq:vxfieldconfigs})]
\begin{equation}
    \sket{\1^{\ot 2}; \sigma}\cong (V_\1,V_\1, \sigma),\quad V_\1 = \frac{1}{2}\begin{pmatrix}
        1&0&0&1\\
        0&0&0&0\\
        0&0&0&0\\
        1&0&0&1
    \end{pmatrix}.
\end{equation}
Thus we can directly use Eq.~(\ref{eq:CP3distance}) compute that the length of a curve between the two points $\sket{\1^{\ot 2};e}$ and $\sket{\1^{\ot 2};\eta}$ that passes through the intersection point of Eq.~\eqref{eq:intersection-points} and~\eqref{eq:intersection-points-coord} is given by
\begin{equation}\label{eq:distance-function-su2}
    \begin{gathered}
        \dd(\{\vec n_a\}) = \sqrt{\dd_{\vec n_1,\vec n_2}^2+\dd_{\vec n_1,\vec n_4}^2+\dd_{\vec n_2,\vec n_3}^2+\dd_{\vec n_3,\vec n_4}^2},\\
        \dd_{\vec n_a,\vec n_b} = \arccos(\sqrt{\frac{1+\vec n_a\cdot\vec n_b}{4}}),
    \end{gathered}
\end{equation}
which is minimized when $\vec n_1=\vec n_2=\vec n_3=\vec n_4$.
This shows that there is a continuum of equivalent TDVP semi-classical solutions, which are labelled by the parameter $\vec{n} \in \mathbb{S}^2$.
\subsubsection{Semiclassical solution}
Following the prescription for multiple equivalent solutions in Sec.~\ref{subsubsec:multiplesoln}, we therefore formulate the semiclassical solution as a continuous superposition of trajectories:
\begin{equation}\label{eq:superposition-sphere}
    \sket{A\!:\!\bar A(t)} \approx \int_{\mb S^2}\dd\mu(\vec n)\, \sket{A\!:\!\bar A_{\,\vec n}(t)}
\end{equation}
where $\sket{A\!:\!\bar A_{\,\vec n}(t)}$ solves the equations of motion with the Dirichlet boundary condition that the field configuration $(V_1(x), V_2(x))$ equals $\ket{\vec n}\ot\ket{\vec n}^*\ot \ket{\vec n}\ot\ket{\vec n}^*$ at $x=0$ at all times.
Explicitly, we can parametrize each of these shortest paths by a single parameter $\omega\in[-\frac{\pi}{2},+\frac{\pi}{2}]$ as
\begin{gather}
    \ket{\omega;\vec n} = \begin{cases}
        \sket{\Omega_{\ket{\vec n}}(\omega)^{\ot 2};e},\quad &-\frac{\pi}{2}\leq\omega\leq 0,\\
        \sket{\Omega_{\ket{\vec n}}(\omega)^{\ot 2};\eta},\quad &\phantom{-}0\leq\omega\leq\frac{\pi}{2},
    \end{cases}\ \ \\
    \Omega_{\ket{\vec n}}(\omega) = \cos\frac{\omega}{2}\ketbra{\vec n}{\vec n} + \sin\frac{\omega}{2}\, (\1 - \ketbra{\vec n}{\vec n}),
\end{gather}
which is analogous to the parametrization implicit in Eqs.~(\ref{eq:esn-sol-dw}) and (\ref{eq:dw-diffeq-both-u1}), where there are one and two shortest paths respectively.
We can write this in the matrix coordinates Eq.~\eqref{eq:Vdef} on $\mb{CP}^3$ using
\begin{equation}\label{eq:trajectory-su2-rotation}
    \Omega_{\ket{\vec n}}(\omega)\cong U_{\vec n}\ot U_{\vec n}^*\begin{pmatrix}
        \sin^2\frac{\omega}{2}&0&0&\frac{\sin\omega}{2}\\
        0&0&0&0\\
        0&0&0&0\\
        \frac{\sin\omega}{2}&0&0&\cos^2\frac{\omega}{2}
    \end{pmatrix}U_{\vec n}\+\ot U_{\vec n}^T
\end{equation}
where $U_{\vec n}$ is an $SU(2)$ symmetry operator such that $U_{\vec n}\ket{\up}=\ket{\vec n}$, and the inner matrix is the coordinate expression for $\Omega_{\ket{\up}}$.
In this way, the equations of motion, which are Eq.~\eqref{eq:eoms-su2} for $V_1$ and $V_2$, and the domain wall initial conditions, reduce to equations for the single parameter $\omega$, which after explicit computation yields
\begin{equation}
    \sket{A\!:\!\bar A_{\,\vec n}(t)} = \ket{\omega(x,t);\vec n},\quad
    \begin{cases}
       \partial_t\omega(x,t) = \kappa\, \partial_x^2 \omega,\\
       \omega(x,0) = \mrm{sgn}(x)\frac{\pi}{2},\\
       \omega(0,t) = 0.
    \end{cases}\!\!\!\!\!\!
\end{equation}
These are exactly the same equations as Eq.~\eqref{eq:diffeq-esn-dw}, and as in the previous sections, the solution of these semiclassical equations of motion are the same as the ones for an initial domain wall in the Heisenberg model Eq.~\eqref{eq:sol-heis-th}
\begin{equation}
    \omega(x,t)=\frac{\pi}{2}\,\mrm{erf}\left(\frac{x}{\sqrt{4\kappa t}}\right).
    \label{eq:sol-heis-th-2}
\end{equation}
Each of these trajectories hence possess a diffusively large void regions dominated by the replicated frozen state $\ket{\vec n}\ot \ket{\vec n}^*\ot \ket{\vec n}\ot \ket{\vec n}^*$.
Similar arguments to those used in Sec.~\ref{subsec:esn-dw} show that under the normalization convention Eq.~\eqref{eq:norm-evo-eq-prime}, void formation leads to the norm of each trajectory ${\sket{A\!:\!\bar A_{\,\vec n}(t)}}$ decaying as $\sim e^{-\sqrt{\kappa t}}$, and the Rényi entropy is expected to grow as $\sim\sqrt{\kappa t}$ for generic initial product states.
\subsubsection{Structure of the void}\label{subsubsec:voidstructure}
The superposition of a continuum of paths of Eq.~\eqref{eq:superposition-sphere} also has a signature in the entanglement of the imaginary-time evolved state ${\sket{A\!:\!\bar A(t)}}$ itself, even in the semiclassical approximation.
In Appendix~\ref{app:entr}, we show that this continuous superposition generates a logarithmic growth of entanglement over time within the void:
\begin{equation}
    S_{\rm void}\big(\sket{A\!:\!\bar A(t)}\big)\sim \log(\kappa t).
\end{equation}
This is similar to what we found in Sec.~\ref{sec:ex-dw} for the domain wall between polarized ground states composed of the $\ket{\up}$ and $\ket{\dn}$ states in the spin-$\frac{1}{2}$ Heisenberg model, where a continuous family of equivalent TDVP solutions exists.
This logarithmic growth in the entanglement of the solution contrasts it for example with the $U(1)$ case of Sec.~\ref{sec:u1}, where the semiclassical imaginary-time evolution of the domain wall is dominated by a finite number of trajectories, whose entanglement saturates at a finite value.
\subsubsection{Extension to other entropic quantities}
The study of related entropic quantities, such as symmetry-resolved entanglement~\cite{Goldstein2018May,Fraenkel2020Mar,Bonsignori2019Oct,Murciano2020Aug,Murciano2022Aug} and entanglement asymmetry~\cite{Ares2023Apr,Ares2025Aug,Fossati2024May,Lastres2025Jan,Murciano2024Jan} would follow similarly from the study of symmetry-dressed domain walls Eq.~\eqref{eq:symmetry-resolved}, whose evolution would be similar to that of ${\sket{A\!:\!\bar A}}$, as discussed in Sec.~\ref{subsubsec:U1extensions} in the context of $U(1)$.
However, since the frozen states $\ket{\vec n}$ belong to a non-trivial irrep of the $SU(2)$ symmetry, there is a non-trivial manifold of such states, and the distance function Eq.~\eqref{eq:distance-function-su2} will change form, and the manifold of shortest-path trajectories will change, as we now discuss.
Consider for example a domain wall between $\ket{\1^{\ot 2};e}$ and $\ket{U_1\ot U_2;\eta}$ for $U_1,U_2\in SU(2)$, which are relevant for the computation of ``charged moments''~\cite{Goldstein2018May,Ares2023Apr}.
The distance function Eq.~\eqref{eq:distance-function-su2} between $\sket{\1^{\ot 2};e}$ and a state in $(\mb S^2)^{\times 4}$ of the form Eq.~\eqref{eq:intersection-points} is minimized when $\vec n_1=\vec n_2$ and $\vec n_3=\vec n_4$, and by symmetry we can show that the distance between $\ket{U_1\ot U_2;\eta}$ and this same set is instead minimized when $\vec n_1=R_1\vec n_4$ and $\vec n_3=R_2\vec n_2$, where $R_1$ and $R_2$ are the rotation matrices in $\mb R^3$ associated to $U_1$ and $U_2$ respectively.
When in Eq.~\eqref{eq:distance-function-su2} we had $R_1=R_2=\1$, we found that both conditions could be simultaneously satisfied by choosing all $\vec n_a$ vectors to be equal, leading to a manifold $\mb S^2$ of degenerate solutions.
However, in this more general case, this condition is satisfied when $\vec n_1=\vec n_2=R_1\vec n_3=R_1\vec n_4$ is a vector stabilized by the composite rotation $R_1R_2$, a condition that allows exactly two solutions unless $R_1=R_2^T$ (i.e., $U_1=\pm U_2\+$), in which case a full sphere $\mb S^2$ of minimum-distance points again exists within the intersection manifold.
This would predict a different pattern of void formation depending on the domain wall involved, and it would be interesting in the future to explore physical consequences of this prediction.

\subsection{Dynamics of Correlators} % MARK: * nonhydro
In the spin-$\frac{1}{2}$ $SU(2)$-symmetric model, the dynamics of correlation functions display a peculiar behavior.
Since the $k=1$ commutant manifold contains all single-site operators [cf.~Eq.~\eqref{eq:allopssu2}], \textit{all single-site operators are hydrodynamic} (defined in Sec.~\ref{subsubsec:hydrononhydro}).
This means that for an operator $O\in\mrm{End}(\mb C^2)$ simple correlation functions will map to free single-particle problems in the $SU(4)$ ferromagnet~\cite{moudgalya2024, PhysRevB.108.054307} [cf.~Eq.\eqref{eq:simple-correlator}], and decay algebraically,
\begin{equation}
    \overline{\langle O(0)\+ O(t)\rangle}\sim t^{-\frac{1}{2}}.
\end{equation}
Squared correlators [Eq.~\eqref{eq:squared-correlator}], will similarly map to a single $\sket{O\ot O\+;e}\in(\mb{CP}^3\times \mb{CP}^3)_e$ state in a background of $\sket{\1^{\ot 2};e}$ states, leading to first order to
\begin{equation}
    \overline{|\langle O(0)\+ O(t)\rangle|^2}\sim t^{-1},
\end{equation}
similar to the $U(1)$ case discussed in Sec.~\ref{subsubsec:U1hydrocorrelators}.
More broadly, we can show that here \textit{no} single-particle excitation in the replicated model can develop a void, unlike in the $U(1)$ case, where non-hydrodynamic operators are such excitations that lead to void formation.
This is because states of the form $\{\ket{A\ot B;e}\}$ completely span the single-site Hilbert space\footnote{Even just the states of Eq.~\eqref{eq:intersection-points} that lie in the intersection of the two manifolds are easily seen to be sufficient.} $\mc H_\mrm{loc}^{\ot 4}$.
Therefore, if we consider for example the initial state
\begin{equation}
    \dots\ot\sket{\1^{\ot 2};e}\ot\ket{\psi}\ot\sket{\1^{\ot 2};e}\ot\dots
\end{equation}
where $\ket{\psi}$ is any state in $\mc H_\mrm{loc}^{\ot 4}$, it is possible to decompose it as a sum of `hydrodynamic' states $\ket{A\ot B;e}$, which evolve almost-freely in the `identity' background.
Note however that it is nevertheless possible to construct entangled multiple-site operators that are non-hydrodynamic.
For example, any two-site operator of the form $O_{\langle ij\rangle}=A_iB_j-B_iA_j$ will be orthogonal to all ferromagnetic ground states of $P^{(1)}$ by the virtue of being spatially antisymmetric. 
The replicated operator $(O_{\langle ij\rangle})^{\ot 2}$ is however spatially symmetric, and so it can in principle have a non-zero overlap with ferromagnetic ground states of the $\eta$ type on $k=2$ replicas [cf.~Eq.~\eqref{eq:gsman-su2-intersection}].
Such an operator should naively behave like a point-like defect on top of a background of states in the $e$ permutation in the continuum, and we expect its dynamics to also be dominated by void formation as in Sec.~\ref{subsubsec:U1nonhydro}, where the voids here can instead be highly entangled, as in Sec.~\ref{subsubsec:voidstructure}.
\subsection{Other Models with Non-Abelian Symmetries}
So far we have discussed the specific case of a spin-$\frac{1}{2}$ chain, and here we show that many of its features generalize directly to other models with non-Abelian symmetries, i.e., other representations of $SU(2)$ or of other non-Abelian continuous symmetry groups.
\subsubsection{Continuum of frozen states and TDVP solutions}
The main key distinction that we find with respect to the Abelian $U(1)$ case of Sec.~\ref{sec:u1}---namely the continuous family of equivalent semiclassical trajectories Eq.~\eqref{eq:superposition-sphere}---can be shown to be a general feature characteristic of systems with continuous non-Abelian symmetries.
Recall that in Sec.~\ref{sec:fgsman} we showed that frozen states Eq.~\eqref{eq:frozenstatedefn} dictate the geometry of the intersection between the $e$ and $\eta$ branches of the manifold for $k=2$ replicas through Eq.~\eqref{eq:replicafrozenstatedefn}.
We also noted that since Abelian symmetries only possess one-dimensional irreps, their frozen states can only be non-degenerate, and hence only result in a discrete set of contact points between the two branches.
The general structure for non-Abelian symmetries is as follows.
The local Hilbert space decomposes as
\begin{equation}
    \mc H_\mrm{loc}=\bigoplus_{\lambda} \mc M_\lambda \ot \mc V_\lambda,
\end{equation}
where $V_\lambda$ are the irreps of the symmetry group $G$ and $\mc M_\lambda$ the multiplicity spaces (such that the multiplicity of each representation is $\mu_\lambda=\mrm{dim}(\mc M_\lambda)$).
Notice then that no vector $\ket{v}\in \mc M_\lambda\ot \mc V_\lambda$ can be frozen if $\mu_\lambda>1$, since symmetric operators on a single site can act non-trivially on the multiplicity degrees of freedom $\mc M_\lambda$.
Then frozen states $\ket{f_{\lambda,\alpha}}$ are those vectors in multiplicity-one irreps $\mc V_\lambda$ such that $\ket{f_{\lambda,\alpha}}^{\ot L}$ lies within a multiplicity-one irrep on $L$ sites.
These are exactly the highest-weight vectors in the representation~\cite{hall2015lie}, or equivalently, the generalized coherent states associated to the irrep~\cite{generalized-coher-st}.
For an irrep $\mc V_\lambda$, the projectivized set of all its generalized coherent states is a homogeneous manifold $G/H_\lambda$, where $H_\lambda$ is the stabilizer group for the coherent states.
For example, in the $SU(2)$ spin-$\frac{1}{2}$ example studied above, all states in the single-site representation are of highest-weight, and they form the manifold of spin-coherent states $SU(2)/U(1)\cong \mb S^2$.
We refer readers to \cite{lastres2026geometryfreefermioncommutants} for details discussions on this for $SO(2k)$ and $SU(2k)$ symmetry groups.
The intersection manifold for the $k=2$ commutants for such continuous symmetry groups $G$ then generally takes the form [cf. Eq.~\eqref{eq:replicafrozenstatedefn}]
\begin{equation}
    (M_G^{(2)})_e\cap(M_G^{(2)})_\eta \cong \bigcup_{\lambda:\mu_\lambda=1} (G/H_\lambda)^{\times 4}.
\end{equation}
Note that this formula is also trivially valid in the Abelian case, where $(G/H_\lambda)^{\times 4}$ is always a single point.
Intersection points where $\mrm{dim}(\mc V_\lambda)>1$ spontaneously break the replicated $G$ symmetry on the intersection manifold, since $G$ acts non-trivially on each of the frozen states.
Therefore, if some $G$-symmetric domain wall configuration between the $e$ and $\eta$ branch of the manifold evolves semiclassically through some replicated degenerate frozen state $\ket{f_{\lambda,\{\alpha_a\}}}$ [defined in Eq.~\eqref{eq:replicafrozenstatedefn}], then a continuous family of equivalent trajectories can be obtained simply by acting with $G$ on the whole trajectory, similar to our formulation of the entanglement domain wall evolution in Eq.~\eqref{eq:trajectory-su2-rotation} for the $SU(2)$ case.
Hence the existence of a continuum of equivalent TDVP solutions is a generic feature of non-Abelian continuous symmetries.
\subsubsection{Absence of single-site non-hydrodynamic operators} % MARK: * more
We have seen that in the specific case of a spin-$\frac{1}{2}$ chain, all single-site operators are elements of the $k=1$ commutant manifold (and thus hydrodynamic).
This property holds more broadly for all $SU(q)$-symmetric models where each site is a $q$-dimensional qudit in the fundamental representation.\footnote{Although note that for such models, 3-site interactions are needed for a circuit to form a 2-design~\cite{hulse2021qudit,Hulse2024Jul}} However, the absence of non-hydrodynamic local operators is a more general feature of all models where the single-site representation is irreducible, for example general $SU(2)$-symmetric spin-$s$ chains. As we show below, even though the single-site operators are not directly elements of $k = 1$ commutant manifold, they are nevertheless linear combinations of them.
As discussed in the $SU(2)$ spin-$\frac{1}{2}$ case above, this does not prevent operators on multiple sites from being non-hydrodynamic.
This result is a simple consequence of Schur's lemma and the Double Commutant Theorem (DCT)~\cite{landsman1998lecture}, using which we can prove the following Lemma:
\begin{lemma}
    Let $G$ be a continuous connected compact Lie group which acts irreducibly on a finite-dimensional local Hilbert space $\mc H_\mrm{loc}$.
    Then all single-site operators are hydrodynamic, as defined in Eq.~\eqref{eq:hydrodefn}.
\end{lemma}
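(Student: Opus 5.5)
The plan is to show that the span of the fully-polarized states $\ket{u(g)}^{\ot L}$, $g\in G$, which lie in $\ker(P^{(1)})$, already contains $\ket{O_i}$ up to components orthogonal to it. More precisely, for any single-site operator $O$ (traceless or not) we exhibit a ground state $\ket{Q}$ of $P^{(1)}$ with $\braket{O_i}{Q}\neq 0$ whenever $O\neq 0$. Since the full identity $\1^{\ot L}$ is trivially in the commutant, the content is really for traceless $O$, but the argument does not need to distinguish.

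The first step is purely local. We consider the single-site algebra $\mc A_{\rm loc}\defeq\mrm{span}\{u(g):g\in G\}\subseteq\mrm{End}(\mc H_\mrm{loc})$. It is a unital $*$-algebra, since it is closed under products because $u(g)u(h)=u(gh)$, under adjoints because $u(g)^\dagger=u(g^{-1})$, and contains $u(e)=\1$.

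The second step computes the commutant of $\mc A_{\rm loc}$. By irreducibility of $u$ and Schur's lemma, $\mc A_{\rm loc}'=\mb C\1$.

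The third step applies the Double Commutant Theorem. It gives $\mc A_{\rm loc}=\mc A_{\rm loc}''=(\mb C\1)'=\mrm{End}(\mc H_\mrm{loc})$. Hence every single-site operator is a finite linear combination $O=\sum_n c_n u(g_n)$. (Compactness and connectedness are not essential here beyond guaranteeing a unitary finite-dimensional representation; in a finite-dimensional setting DCT is just Burnside's theorem.)

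Next we lift this to the many-body overlap. Each $\ket{U_{g}}=\ket{u(g)}^{\ot L}$ is a ground state of $P^{(1)}$, by Eq.~\eqref{eq:Gcomm} and the discussion of the $k=1$ case. Writing $\ket{O_i}=\bigotimes_{j\neq i}\ket{\1}_j\ket{O}_i$, we compute
\begin{equation}
\braket{O_i}{U_g}=\tr(O^\dagger u(g))\,\big(\tr u(g)\big)^{L-1}.
\end{equation}
The factor $(\tr u(g))^{L-1}$ is the nuisance: we cannot simply pick $g$ with $\tr(O^\dagger u(g))\neq0$, because $\tr u(g)$ might vanish there.

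To handle this, we use that $g\mapsto \tr(O^\dagger u(g))$ and $g\mapsto \tr u(g)$ are real-analytic functions on the connected Lie group $G$. The second function is nonzero at $g=e$ (where it equals $\dim\mc H_\mrm{loc}$). The first is not identically zero, since otherwise $O^\dagger$ would be orthogonal to all of $\mc A_{\rm loc}=\mrm{End}(\mc H_\mrm{loc})$, forcing $O=0$. By analyticity on a connected manifold, the product of two analytic functions that are not identically zero is not identically zero. Therefore some $g$ gives $\braket{O_i}{U_g}\neq0$, and $O_i$ is hydrodynamic in the sense of Eq.~\eqref{eq:hydrodefn}.

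The main obstacle I expect is exactly this last point, namely ensuring the many-body overlap does not vanish because of the background trace factor. If one prefers to avoid analyticity, there are two alternatives. One is to average over a small neighborhood of $e$ with a smooth bump and use continuity: near $e$ the trace factor is close to $\dim\mc H_\mrm{loc}$, and $\tr(O^\dagger u(g))$ cannot vanish on a whole neighborhood, since the span of $u(g)$ over any neighborhood of $e$ generates the full Lie algebra action and hence, again by DCT, all of $\mrm{End}(\mc H_\mrm{loc})$. The other is to take $Q=\sum_n \bar c_n U_{g_n}$ directly and argue genericity.
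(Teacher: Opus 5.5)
Your proof is correct and uses the same ingredients as the paper's. Schur's lemma plus the double commutant theorem give $\mathrm{span}\{u(g)\}=\mathrm{End}(\mathcal H_{\rm loc})$, and real-analyticity on the connected group $G$ handles the background factor $\chi(g)^{L-1}$. You differ in the last step, and there your version is the more complete one.

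The paper verifies nonvanishing overlaps only for the spanning states $\ket{\cdots\1\,U_g\,\1\cdots}$, by finding $g'$ with $\chi(g')\neq0$ and $\chi(g^{-1}g')\neq0$. It then tacitly passes to a general $O=\sum_n c_n U_{g_n}$. That passage is not automatic: the operators $O$ with $\ket{O_i}\perp\ker P^{(1)}$ form a linear subspace, and knowing that each element of a spanning set avoids it does not show the subspace is trivial. For example, if $\ker P^{(1)}$ were spanned by $\1^{\otimes L}$ alone, $\1$ and $\1+X$ would both have nonzero overlap with it, but their difference $X$ would not.

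You instead apply analyticity directly to $g\mapsto\tr(O^\dagger u(g))$ for arbitrary $O\neq0$. You invoke the spanning property exactly where it is needed, to guarantee that this function is not identically zero. This accounts for possible cancellations among the $U_{g_n}$. It still yields the paper's ``almost every $g$'' conclusion, since a not-identically-zero analytic function on $G$ vanishes only on a null set.

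The fallback arguments in your last paragraph are unnecessary. If you keep the neighborhood version, note that $\mathrm{span}\{u(g):g\in V\}$ is not obviously an algebra. You would first have to differentiate at $e$ to reach the associative algebra generated by the infinitesimal generators, and only then invoke the double commutant theorem.
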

\begin{proof}
    We start by showing that any local operator $O\in\mrm{End}(\mc H_\mrm{loc})$ has is a linear combination of some states in the commutant manifold $M_G^{(1)}$.
    By Schur's lemma, if $G$ acts irreducibly on $\mc H_\mrm{loc}$, then the only $G$-symmetric operator on a single site is the identity operator.
    The set of all single-site operators that commute with the identity---i.e., the \textit{double commutant} of the unitary on-site representation operators $\{U_g\}$---is exactly $\mrm{End}(\mc H_\mrm{loc})$.
    It follows from the DCT that the double commutant of $\{U_g\}$ is equal to its linear span, proving that all single-site operators are linear combinations of group elements.
    To conclude, we must then show that condition Eq.~\eqref{eq:hydrodefn} holds for states of the form $\ket{\cdots\1U_g\1\cdots}$.
    Since any $\ket{U_{g'}}^{\ot L}$ is an element of the $k=1$ commutant, to show this we need to find $g'\in G$ such that
    \begin{equation}\label{eq:lemma-condition}
        0\neq \braket{\1}{U_{g'}} = \chi(g'),\quad 0\neq \braket{U_{g}}{U_{g'}} = \chi(g^{-1}g'),
    \end{equation}
    where $\chi(g)=\tr(U_{g})$ is the character of the single site representation.
    Since the character is a non-zero real-analytic function~\cite{hall2015lie} on a connected set, then the sets where $\chi(g')=0$ and where $\chi(g^{-1}g')=0$ have both measure zero, which means that condition Eq.~\eqref{eq:lemma-condition} is satisfied by almost all $g'\in G$.
    This concludes the proof that all single-site operators are hydrodynamic in any system with such non-Abelian symmetries.
\end{proof}
\section{Discussion and Outlooks}\label{sec:outlooks} % MARK: 7. OUTLOOK
In this work, we have introduced a general geometric framework for studying the dynamics of observables in noisy quantum many-body systems with continuous symmetries.
This is enabled by the fact that noise-averaged observables can be expressed as matrix elements of imaginary-time evolution under effective Hamiltonians acting on multiple replicas of the system, so that their late-time dynamics is determined by the low-energy physics of these Hamiltonians.
To access this physics, we first study the ground states of these effective Hamiltonians, which are tightly related to the symmetries of $k$ replicas of the system, an algebraic object known as its $k$-commutant.
In the systems we study, we find that these $k$-commutants additionally have the structure of a continuous \textit{ferromagnet}, leading to a geometric characterization in terms of \textit{commutant manifolds}, distinct from their algebraic characterization.
For generic interacting systems with continuous symmetries, more specifically those that form symmetric $k$-designs~\cite{hearth2025unitary, liu2024unitary, mitsuhashi2025unitary}, these commutant manifolds can be obtained directly from the group manifold of the symmetry group $G$, allowing us to characterize them explicitly.
Interestingly, in contrast to the $k$-commutant manifolds in free-fermion systems characterized in our previous work~\cite{lastres2026geometryfreefermioncommutants}, those in interacting systems with continuous symmetries possess \textit{singularities} for $k \geq 2$, making them complex projective varieties in the mathematical sense.
In upcoming work~\cite{the-algebra-paper}, we provide the rigorous mathematical details needed to precisely characterize such manifolds starting from a general set of Brownian generators composing the noisy system.
Our emphasis on the geometry of these commutant manifolds is further justified by the fact that the low-energy excitations of the effective Hamiltonians can be expressed as smooth variations over these ferromagnetic manifolds.
We make this concrete using the time-dependent variational principle (TDVP)~\cite{tdvp20} to capture imaginary-time evolution under the effective Hamiltonians, leading to simple semiclassical equations governed by the quantum geometry of these manifolds.
The structure of the manifolds of interest here also requires us to extend TDVP appropriately, by providing prescriptions for singularities and including superpositions of equivalent semiclassical solutions.
We verified the validity of this framework in simple models such as the Heisenberg model and a toy $es\eta$ model that we introduced, where we carefully calibrated our results against simulation data.
The physics of these simple models naturally generalizes to effective Hamiltonians with $U(1)$ and $SU(2)$ symmetries, allowing us to gain a geometric understanding of the structure of replica circuit models without relying on analytical tricks used in the literature, such as introducing large charge-neutral degrees of freedom to artificially decouple the replicas.
More importantly, this perspective directly demonstrates the importance of void states during the dynamics of these observables, which were heavily used in earlier works to bound entanglement growth and non-hydrodynamic correlators~\cite{Huang_2020, Rakovszky_2019, McCulloch_2026}.
This geometric point of view opens several exciting directions for future research.
Most straightforwardly, the framework could be applied to other observables with continuous symmetries, such as Out-of-Time-Ordered Correlators (OTOCs)~\cite{Rakovszky2018Sep, Khemani2018Sep}, as well as numerous information-theoretic quantities in the presence of continuous symmetries, e.g., symmetry-resolved entropies~\cite{Goldstein2018May,Fraenkel2020Mar,Bonsignori2019Oct,Murciano2020Aug,Murciano2022Aug}, entanglement asymmetry~\cite{Ares2023Apr,Ares2025Aug,Fossati2024May,Lastres2025Jan,Murciano2024Jan}, coherences~\cite{Aditya2026Apr}, and negativity~\cite{li2023hilbert, li2025highly}.
Another direction is to explore inhomogeneous effective Hamiltonians arising from spatially varying strengths of randomness, or symmetry-breaking impurities, for which much of the hydrodynamics associated with continuous symmetries is known to survive on certain intermediate timescales~\cite{li2025dynamics, wang2025exponential, han2024exponentially}.
Finally, our geometric characterization has so far been restricted to systems with simple on-site continuous symmetries, and extending it to other kinds of symmetries presents several further directions.
For example, additional discrete symmetries in the problem should change the structure of the commutant manifolds and potentially lead to novel effects for certain observables.
Another is to characterize the geometry of commutants for symmetries of a different kind, such as multipole symmetries, where transport is similar to systems with $U(1)$ but with different exponents~\cite{feldmeier2020anomalous, moudgalya2021spectral}.
We no longer expect the commutant manifolds to be ferromagnetic in such cases, but possibly rather simple generalizations thereof, whose geometry would be interesting to characterize.
These manifolds should be even more exotic in constrained systems exhibiting Hilbert space fragmentation~\cite{sala2020fragmentation, khemani2020localization, moudgalya2019thermalization, moudgalya2021review, yang2022distinction, yang2020hilbert}, where the algebraic structure can already be rather complicated~\cite{read2007enlarged, rakovszky2020statistical, moudgalya2022}.

Returning to entanglement, it would be useful to further characterize the behavior of entropy densities, their flow in real space, and their relation to conserved-charge densities, potentially leading to a hydrodynamic or a field-theoretic~\cite{sensarma2023} understanding of entanglement analogous to the membrane picture in the absence of symmetries.
Our analysis suggests that the membrane picture~\cite{PhysRevX.7.031016, zhou_nahum, jonay_huse, vardhan_entanglement_2024}, or its modifications~\cite{Rakovszky_2019,Turkeshi_mpemba}, do not quite hold in the absence of additional charge-neutral degrees of freedom.
However, a membrane picture might survive in other sense, e.g., through the motion of the defect characterizing the ``position'' of the domain wall, as seen in Fig.~\ref{fig:esn-unb}, and this possibility deserves further exploration.
More technically, while we have mostly worked with the simplest case of $k=2$ replicas, the framework should be explored more thoroughly at higher replica numbers, which might enable the replica limit and attempt to compute observables such as von Neumann entanglement entropies.
These entropies have been conjectured to grow ballistically as $\sim t$ even in the presence of continuous symmetries such as $U(1)$~\cite{Rakovszky_2019}, distinguishing them from Rényi entropies, and therefore might support a membrane picture.
More generally, while TDVP clearly provides a first approximation to the dynamics, it is important to develop a more complete framework capable of capturing higher order effect such as fluctuations of singularities, which would require a path-integral picture in which corrections to the semiclassical trajectories can be incorporated systematically.
Although it is not obvious how to construct such a picture in the absence of coherent-state parametrizations of the ground-state manifold, one promising approach is to start from the free-fermion limit, where path integrals can be formulated using coherent states~\cite{swann2025, fava2023nonlinear, fava2024}, and treat interactions systematically, similarly to Ref.~\cite{swann2026continuummechanicsentanglementnoisy}.
Another possibility is to use mappings to classical stochastic models, which have been successful for certain observables~\cite{mcculloch2026longlivedlocalquantumcoherences}, which might be generalizable to arbitrary observables using insights from our work. 
Finally, it would be interesting to determine whether this geometric understanding carries over beyond unitary dynamics.
The evolution of observables in the presence of symmetric measurements can potentially be understood as imaginary-time evolution of effective Hamiltonians that are perturbations of the ones presented here~\cite{Agrawal2021Jul, Barratt2021Nov, zerba2025dipole}.
Even though extreme care should be exercised with issues such as replica limits and fluctuations, useful insights can be obtained even without such limits, and exploring how the geometric picture changes under such perturbations may provide novel perspectives into phenomena such as charge-sharpening transitions~\cite{Agrawal2021Jul, Barratt2021Nov, zerba2025dipole, nahum2025bayesian, gopalakrishnan2026monitored}.
Finally, given that many of these effective Hamiltonians are known to have Lindbladian forms with Hermitian jump operators~\cite{ogunnaike2023unifying} opens up the natural possibility of using such geometric approaches to study the relaxation of Lindbladians with strong continuous symmetries~\cite{hauser2026strongtoweak}.
We leave these questions for future work.
\section*{Acknowledgements} % MARK: *ackn
We are particularly grateful to Lesik Motrunich for enlightening conversations and encouragement.
We also acknowledge useful discussions with Sarang Gopalakrishnan, Alex Jacoby, Curt von Keyserlingk, Ewan McCulloch, Adam Nahum, Frank Pollmann, Tibor Rakovszky, and Shreya Vardhan.
S.M. thanks Lesik Motrunich for collaboration on \cite{moudgalya2024} and Shreya Vardhan for collaboration on \cite{vardhan_entanglement_2024}.
We acknowledge support from the Munich Quantum Valley, which is supported by the Bavarian state government with funds from the Hightech Agenda Bayern Plus, and the Munich Center for Quantum Science and Technology (MCQST), supported by the Deutsche Forschungsgemeinschaft (DFG, German Research Foundation) under Germany’s Excellence Strategy–EXC–2111–390814868.
\newpage
\bibliography{refs}

@article{read2007enlarged,
  author = {N. Read and H. Saleur},
  title = {{Enlarged symmetry algebras of spin chains, loop models, and S-matrices}},
  journal = {Nuclear Physics B},
  year = {2007},
  volume = {777},
  number = {3},
  pages = {263--315},
  doi = {10.1016/j.nuclphysb.2007.03.007},
  url = {http://www.sciencedirect.com/science/article/pii/S0550321307001745}
}

@article{chan2018solution,
  title = {Solution of a Minimal Model for Many-Body Quantum Chaos},
  author = {Chan, Amos and De Luca, Andrea and Chalker, J. T.},
  journal = {Phys. Rev. X},
  volume = {8},
  issue = {4},
  pages = {041019},
  numpages = {17},
  year = {2018},
  month = {Nov},
  publisher = {American Physical Society},
  doi = {10.1103/PhysRevX.8.041019},
  url = {https://link.aps.org/doi/10.1103/PhysRevX.8.041019}
}

@article{moudgalya2024,
  title = {{Symmetries as Ground States of Local Superoperators: Hydrodynamic Implications}},
  author = {Moudgalya, Sanjay and Motrunich, Olexei I.},
  journal = {PRX Quantum},
  volume = {5},
  issue = {4},
  pages = {040330},
  numpages = {41},
  year = {2024},
  month = {Nov},
  publisher = {American Physical Society},
  doi = {10.1103/PRXQuantum.5.040330},
  url = {https://link.aps.org/doi/10.1103/PRXQuantum.5.040330}
}

@article{khemani2020localization,
  author = {Khemani, Vedika and Hermele, Michael and Nandkishore, Rahul},
  title = {{Localization from Hilbert space shattering: From theory to physical realizations}},
  journal = {Phys. Rev. B},
  year = {2020},
  volume = {101},
  pages = {174204},
  doi = {10.1103/PhysRevB.101.174204},
  url = {https://link.aps.org/doi/10.1103/PhysRevB.101.174204}
}

@inbook{moudgalya2019thermalization,
  author = {Sanjay   Moudgalya  and  Abhinav   Prem  and  Rahul   Nandkishore  and  Nicolas   Regnault  and  B. Andrei   Bernevig },
  title = {{Thermalization and Its Absence within Krylov Subspaces of a Constrained Hamiltonian}},
  booktitle = {Memorial Volume for Shoucheng Zhang},
  pages = {147--209},
  doi = {10.1142/9789811231711_0009},
  url = {https://www.worldscientific.com/doi/abs/10.1142/9789811231711_0009},
  chapter = {7}
}

@article{Znidaric2020Jun,
	author = {{\ifmmode\check{Z}\else\v{Z}\fi}nidari{\ifmmode\check{c}\else\v{c}\fi}, Marko},
	title = {{Entanglement growth in diffusive systems}},
	journal = {Communications Physics},
	volume = {3},
	number = {100},
	pages = {100},
	year = {2020},
	month = jun,
	issn = {2399-3650},
	publisher = {Nature Publishing Group},
	doi = {10.1038/s42005-020-0366-7}
}

@article{d2016quantum,
  author = {{D'Alessio}, Luca and {Kafri}, Yariv and {Polkovnikov}, Anatoli and {Rigol}, Marcos},
  title = {{From quantum chaos and eigenstate thermalization to statistical mechanics and thermodynamics}},
  journal = {Advances in Physics},
  year = {2016},
  volume = {65},
  number = {3},
  pages = {239--362},
  doi = {10.1080/00018732.2016.1198134}
}

@article{polkovnikov2011colloquium,
  author = {Polkovnikov, Anatoli and Sengupta, Krishnendu and Silva, Alessandro and Vengalattore, Mukund},
  title = {{Colloquium: Nonequilibrium dynamics of closed interacting quantum systems}},
  journal = {Reviews of Modern Physics},
  year = {2011},
  volume = {83},
  pages = {863--883},
  doi = {10.1103/RevModPhys.83.863},
  url = {https://link.aps.org/doi/10.1103/RevModPhys.83.863}
}

@article{rigol2008thermalization,
  author = {{Rigol}, Marcos and {Dunjko}, Vanja and {Olshanii}, Maxim},
  title = {{Thermalization and its mechanism for generic isolated quantum systems}},
  journal = {Nature},
  year = {2008},
  volume = {452},
  number = {7189},
  pages = {854--858},
  doi = {10.1038/nature06838}
}

@article{zanardi2001virtual,
  author = {Zanardi, Paolo},
  title = {{Virtual Quantum Subsystems}},
  journal = {Phys. Rev. Lett.},
  year = {2001},
  volume = {87},
  pages = {077901},
  doi = {10.1103/PhysRevLett.87.077901},
  url = {https://link.aps.org/doi/10.1103/PhysRevLett.87.077901}
}

@article{bartlett2007reference,
  author = {Bartlett, Stephen D. and Rudolph, Terry and Spekkens, Robert W.},
  title = {{Reference frames, superselection rules, and quantum information}},
  journal = {Rev. Mod. Phys.},
  year = {2007},
  volume = {79},
  pages = {555--609},
  doi = {10.1103/RevModPhys.79.555},
  url = {https://link.aps.org/doi/10.1103/RevModPhys.79.555}
}

@article{Huang_2020,
   title={Dynamics of Rényi entanglement entropy in diffusive qudit systems},
   volume={1},
   ISSN={2633-1357},
   url={http://dx.doi.org/10.1088/2633-1357/abd1e2},
   DOI={10.1088/2633-1357/abd1e2},
   number={3},
   journal={IOP SciNotes},
   publisher={IOP Publishing},
   author={Huang, Yichen},
   year={2020},
   month=Dec, pages={035205} }

@article{hulse2021qudit,
  author = {Austin Hulse and Hanqing Liu and Iman Marvian},
  title = {{Qudit circuits with SU(d) symmetry: Locality imposes additional conservation laws}},
  journal = {arXiv preprint arXiv:2105.12877},
  year = {2021},
  eprint = {2105.12877},
  archiveprefix = {arXiv},
  primaryclass = {quant-ph},
  howpublished = {arXiv preprint arXiv:2105.12877}
}

@article{Li2024Dec,
	author = {Li, Zimu and Zheng, Han and Liu, Junyu and Jiang, Liang and Liu, Zi-Wen},
	title = {{Designs from Local Random Quantum Circuits with $\mathrm{SU}(d)$ Symmetry}},
	journal = {PRX Quantum},
	volume = {5},
	number = {4},
	pages = {040349},
	year = {2024},
	month = dec,
	publisher = {American Physical Society},
	doi = {10.1103/PRXQuantum.5.040349}
}

@article{Li2025Oct,
	author = {Li, Zimu and Zheng, Han and Wang, Yunfei and Jiang, Liang and Liu, Zi-Wen and Liu, Junyu},
	title = {{SU(d)-symmetric random unitaries: quantum scrambling, error correction, and machine learning}},
	journal = {npj Quantum Information},
	volume = {11},
	number = {158},
	pages = {158},
	year = {2025},
	month = oct,
	issn = {2056-6387},
	publisher = {Nature Publishing Group},
	doi = {10.1038/s41534-025-01045-6}
}

@article{Wu2026Apr,
	author = {Wu, Yuhan and Rodriguez-Nieva, Joaquin F.},
	title = {{Quantum state randomization constrained by non-Abelian symmetries}},
	journal = {ArXiv e-prints},
	year = {2026},
	month = apr,
	eprint = {2604.05043},
	doi = {10.48550/arXiv.2604.05043}
}

@article{Hulse2024Jul,
	author = {Hulse, Austin and Liu, Hanqing and Marvian, Iman},
	title = {{A framework for semi-universality: Semi-universality of 3-qudit SU(d)-invariant gates}},
	journal = {ArXiv e-prints},
	year = {2024},
	month = jul,
	eprint = {2407.21249},
	doi = {10.48550/arXiv.2407.21249}
}

@article{sala2020fragmentation,
  author = {Sala, Pablo and Rakovszky, Tibor and Verresen, Ruben and Knap, Michael and Pollmann, Frank},
  title = {{Ergodicity Breaking Arising from Hilbert Space Fragmentation in Dipole-Conserving Hamiltonians}},
  journal = {Phys. Rev. X},
  year = {2020},
  volume = {10},
  pages = {011047},
  doi = {10.1103/PhysRevX.10.011047},
  url = {https://link.aps.org/doi/10.1103/PhysRevX.10.011047}
}

@article{rakovszky2020statistical,
  author = {Rakovszky, Tibor and Sala, Pablo and Verresen, Ruben and Knap, Michael and Pollmann, Frank},
  title = {{Statistical localization: From strong fragmentation to strong edge modes}},
  journal = {Phys. Rev. B},
  year = {2020},
  volume = {101},
  pages = {125126},
  doi = {10.1103/PhysRevB.101.125126},
  url = {https://link.aps.org/doi/10.1103/PhysRevB.101.125126}
}

@article{landsman1998lecture,
  author = {Landsman, Nicolas P},
  title = {{Lecture notes on C*-algebras, Hilbert C*-modules, and quantum mechanics}},
  journal = {arXiv preprint math-ph/9807030},
  year = {1998}
}

@article{Zima2026Jun,
	author = {Zima, John P. and Stoudenmire, E. Miles and White, Steven R. and Parcollet, Olivier and Kaye, Jason},
	title = {{Fast Tensor Network Imaginary Time Evolution by Implicit Stepping on Logarithmic Grids}},
	journal = {ArXiv e-prints},
	year = {2026},
	month = jun,
	eprint = {2606.02930},
	doi = {10.48550/arXiv.2606.02930}
}

@article{Zhou2019May,
	author = {Zhou, Tianci and Chen, Xiao},
	title = {{Operator dynamics in a Brownian quantum circuit}},
	journal = {Physical Review E},
	volume = {99},
	number = {5},
	pages = {052212},
	year = {2019},
	month = may,
	publisher = {American Physical Society},
	doi = {10.1103/PhysRevE.99.052212}
}

@article{Rakovszky2021May,
	author = {Rakovszky, Tibor and Pollmann, Frank and von Keyserlingk, Curt},
	title = {{Entanglement growth in diffusive systems with large spin}},
	journal = {Communications Physics},
	volume = {4},
	number = {91},
	pages = {91},
	year = {2021},
	month = may,
	issn = {2399-3650},
	publisher = {Nature Publishing Group},
	doi = {10.1038/s42005-021-00594-4}
}

@article{moudgalya2022,
  author = {Moudgalya, Sanjay and Motrunich, Olexei I.},
  title = {{Hilbert Space Fragmentation and Commutant Algebras}},
  journal = {Phys. Rev. X},
  year = {2022},
  volume = {12},
  pages = {011050},
  doi = {10.1103/PhysRevX.12.011050},
  url = {https://link.aps.org/doi/10.1103/PhysRevX.12.011050}
}

@article{Agrawal2021Jul,
	author = {Agrawal, Utkarsh and Zabalo, Aidan and Chen, Kun and Wilson, Justin H. and Potter, Andrew C. and Pixley, J. H. and Gopalakrishnan, Sarang and Vasseur, Romain},
	title = {{Entanglement and charge-sharpening transitions in U(1) symmetric monitored quantum circuits}},
	journal = {ArXiv e-prints},
	year = {2021},
	month = jul,
	eprint = {2107.10279},
	doi = {10.1103/PhysRevX.12.041002}
}

@article{Barratt2021Nov,
	author = {Barratt, Fergus and Agrawal, Utkarsh and Gopalakrishnan, Sarang and Huse, David A. and Vasseur, Romain and Potter, Andrew C.},
	title = {{Field theory of charge sharpening in symmetric monitored quantum circuits}},
	journal = {ArXiv e-prints},
	year = {2021},
	month = nov,
	eprint = {2111.09336},
	doi = {10.1103/PhysRevLett.129.120604}
}

@article{keyserlingk2018hydro,
  title = {Operator Hydrodynamics, OTOCs, and Entanglement Growth in Systems without Conservation Laws},
  author = {von Keyserlingk, C. W. and Rakovszky, Tibor and Pollmann, Frank and Sondhi, S. L.},
  journal = {Phys. Rev. X},
  volume = {8},
  issue = {2},
  pages = {021013},
  numpages = {19},
  year = {2018},
  month = {Apr},
  publisher = {American Physical Society},
  doi = {10.1103/PhysRevX.8.021013},
  url = {https://link.aps.org/doi/10.1103/PhysRevX.8.021013}
}

@article{Fraenkel2020Mar,
	author = {Fraenkel, Shachar and Goldstein, Moshe},
	title = {{Symmetry resolved entanglement: exact results in 1D and beyond}},
	journal = {Journal of Statistical Mechanics: Theory and Experiment},
	volume = {2020},
	number = {3},
	pages = {033106},
	year = {2020},
	month = mar,
	issn = {1742-5468},
	publisher = {IOP Publishing and SISSA},
	doi = {10.1088/1742-5468/ab7753}
}

@article{Bonsignori2019Oct,
	author = {Bonsignori, Riccarda and Ruggiero, Paola and Calabrese, Pasquale},
	title = {{Symmetry resolved entanglement in free fermionic systems}},
	journal = {Journal of Physics A: Mathematical and Theoretical},
	volume = {52},
	number = {47},
	pages = {475302},
	year = {2019},
	month = oct,
	issn = {1751-8121},
	publisher = {IOP Publishing},
	doi = {10.1088/1751-8121/ab4b77}
}

@article{Murciano2024Jan,
	author = {Murciano, Sara and Ares, Filiberto and Klich, Israel and Calabrese, Pasquale},
	title = {{Entanglement asymmetry and quantum Mpemba effect in the XY spin chain}},
	journal = {Journal of Statistical Mechanics: Theory and Experiment},
	volume = {2024},
	number = {1},
	pages = {013103},
	year = {2024},
	month = jan,
	issn = {1742-5468},
	publisher = {IOP Publishing},
	doi = {10.1088/1742-5468/ad17b4}
}

@article{Lastres2025Jan,
	author = {Lastres, Marco and Murciano, Sara and Ares, Filiberto and Calabrese, Pasquale},
	title = {{Entanglement asymmetry in the critical XXZ spin chain}},
	journal = {Journal of Statistical Mechanics: Theory and Experiment},
	volume = {2025},
	number = {1},
	pages = {013107},
	year = {2025},
	month = jan,
	issn = {1742-5468},
	publisher = {IOP Publishing},
	doi = {10.1088/1742-5468/ada497}
}

@article{Fossati2024May,
	author = {Fossati, Michele and Ares, Filiberto and Dubail, J{\ifmmode\acute{e}\else\'{e}\fi}r{\ifmmode\hat{o}\else\^{o}\fi}me and Calabrese, Pasquale},
	title = {{Entanglement asymmetry in CFT and its relation to non-topological defects}},
	journal = {Journal of High Energy Physics},
	volume = {2024},
	number = {5},
	pages = {59},
	year = {2024},
	month = may,
	issn = {1029-8479},
	publisher = {Springer Berlin Heidelberg},
	doi = {10.1007/JHEP05(2024)059}
}

@article{Ares2025Aug,
	author = {Ares, Filiberto and Murciano, Sara and Calabrese, Pasquale and Piroli, Lorenzo},
	title = {{Entanglement asymmetry dynamics in random quantum circuits}},
	journal = {Physical Review Research},
	volume = {7},
	number = {3},
	pages = {033135},
	year = {2025},
	month = aug,
	publisher = {American Physical Society},
	doi = {10.1103/m3np-p5xj}
}

@article{Ares2023Apr,
	author = {Ares, Filiberto and Murciano, Sara and Calabrese, Pasquale},
	title = {{Entanglement asymmetry as a probe of symmetry breaking}},
	journal = {Nature Communications},
	volume = {14},
	number = {2036},
	pages = {2036},
	year = {2023},
	month = apr,
	issn = {2041-1723},
	publisher = {Nature Publishing Group},
	doi = {10.1038/s41467-023-37747-8}
}

@article{Murciano2020Aug,
	author = {Murciano, Sara and Di Giulio, Giuseppe and Calabrese, Pasquale},
	title = {{Entanglement and symmetry resolution in two dimensional free quantum field theories}},
	journal = {Journal of High Energy Physics},
	volume = {2020},
	number = {8},
	pages = {73},
	year = {2020},
	month = aug,
	issn = {1029-8479},
	publisher = {Springer Berlin Heidelberg},
	doi = {10.1007/JHEP08(2020)073}
}

@article{Murciano2022Aug,
	author = {Murciano, Sara and Calabrese, Pasquale and Piroli, Lorenzo},
	title = {{Symmetry-resolved Page curves}},
	journal = {Physical Review D},
	volume = {106},
	number = {4},
	pages = {046015},
	year = {2022},
	month = aug,
	publisher = {American Physical Society},
	doi = {10.1103/PhysRevD.106.046015}
}

@article{Goldstein2018May,
	author = {Goldstein, Moshe and Sela, Eran},
	title = {{Symmetry-Resolved Entanglement in Many-Body Systems}},
	journal = {Physical Review Letters},
	volume = {120},
	number = {20},
	pages = {200602},
	year = {2018},
	month = may,
	publisher = {American Physical Society},
	doi = {10.1103/PhysRevLett.120.200602}
}

@article{fava2023nonlinear,
  title = {Nonlinear Sigma Models for Monitored Dynamics of Free Fermions},
  author = {Fava, Michele and Piroli, Lorenzo and Swann, Tobias and Bernard, Denis and Nahum, Adam},
  journal = {Phys. Rev. X},
  volume = {13},
  issue = {4},
  pages = {041045},
  numpages = {33},
  year = {2023},
  month = {Dec},
  publisher = {American Physical Society},
  doi = {10.1103/PhysRevX.13.041045},
  url = {https://link.aps.org/doi/10.1103/PhysRevX.13.041045}
}

@ARTICLE{liu2024unitary,
       author = {{Liu}, Hanqing and {Hulse}, Austin and {Marvian}, Iman},
        title = "{Unitary Designs from Random Symmetric Quantum Circuits}",
      journal = {arXiv e-prints},
         year = 2024,
        month = aug,
archivePrefix = {arXiv},
       eprint = {2408.14463},
 primaryClass = {quant-ph},
       adsurl = {https://ui.adsabs.harvard.edu/abs/2024arXiv240814463L}
}

@article{mitsuhashi2025unitary,
  title = {Unitary Designs of Symmetric Local Random Circuits},
  author = {Mitsuhashi, Yosuke and Suzuki, Ryotaro and Soejima, Tomohiro and Yoshioka, Nobuyuki},
  journal = {Phys. Rev. Lett.},
  volume = {134},
  issue = {18},
  pages = {180404},
  numpages = {8},
  year = {2025},
  month = {May},
  publisher = {American Physical Society},
  doi = {10.1103/PhysRevLett.134.180404},
  url = {https://link.aps.org/doi/10.1103/PhysRevLett.134.180404}
}

@article{sunderhauf2019quantum,
  author = {S{\"u}nderhauf, Christoph
and Piroli, Lorenzo
and Qi, Xiao-Liang
and Schuch, Norbert
and Cirac, J. Ignacio},
  title = {{Quantum chaos in the Brownian SYK model with large finite N : OTOCs and tripartite information}},
  journal = {Journal of High Energy Physics},
  year = {2019},
  volume = {2019},
  number = {11},
  pages = {38},
  doi = {10.1007/JHEP11(2019)038}
}

@article{bauer2017stochastic,
  author = {Michel Bauer and Denis Bernard and Tony Jin},
  title = {{Stochastic dissipative quantum spin chains (I) : Quantum fluctuating  discrete hydrodynamics}},
  journal = {SciPost Phys.},
  year = {2017},
  volume = {3},
  pages = {033},
  doi = {10.21468/SciPostPhys.3.5.033},
  url = {https://scipost.org/10.21468/SciPostPhys.3.5.033}
}

@article{xu2019locality,
  author = {Xu, Shenglong and Swingle, Brian},
  title = {{Locality, Quantum Fluctuations, and Scrambling}},
  journal = {Phys. Rev. X},
  year = {2019},
  volume = {9},
  pages = {031048},
  doi = {10.1103/PhysRevX.9.031048},
  url = {https://link.aps.org/doi/10.1103/PhysRevX.9.031048}
}

@article{lashkari2013towards,
  author = {Lashkari, Nima
and Stanford, Douglas
and Hastings, Matthew
and Osborne, Tobias
and Hayden, Patrick},
  title = {{Towards the fast scrambling conjecture}},
  journal = {Journal of High Energy Physics},
  year = {2013},
  volume = {2013},
  number = {4},
  pages = {22},
  doi = {10.1007/JHEP04(2013)022}
}

@article{Gross2007Designs,
  author = {Gross, David and Audenaert, Koenraad and Eisert, Jens},
  title = {Evenly distributed unitaries: On the structure of unitary designs},
  journal = {Journal of Mathematical Physics},
  volume = {48},
  pages = {052104},
  year = {2007}
}

@article{HarrowLow2009,
  author = {Harrow, Aram W. and Low, Richard A.},
  title = {Random quantum circuits are approximate 2-designs},
  journal = {Communications in Mathematical Physics},
  volume = {291},
  pages = {257--302},
  year = {2009}
}

@article{Dankert2009,
  author = {Dankert, Christoph and Cleve, Richard and Emerson, Joseph and Livine, Emmanuel},
  title = {Exact and approximate unitary 2-designs},
  journal = {Phys. Rev. A},
  volume = {80},
  pages = {012304},
  year = {2009}
}

@article{Nahum2018,
  author = {Nahum, Adam and Ruhman, Jonathan and Vijay, Sagar and Haah, Jeongwan},
  title = {Operator spreading in random unitary circuits},
  journal = {Physical Review X},
  volume = {8},
  pages = {021014},
  year = {2018}
}

@ARTICLE{hunterjones2019unitary,
       author = {{Hunter-Jones}, Nicholas},
        title = "{Unitary designs from statistical mechanics in random quantum circuits}",
      journal = {arXiv e-prints},
         year = 2019,
        month = may,
archivePrefix = {arXiv},
       eprint = {1905.12053},
 primaryClass = {quant-ph},
       adsurl = {https://ui.adsabs.harvard.edu/abs/2019arXiv190512053H}
}

@article{Han2023Dec,
	author = {Han, Yiqiu and Chen, Xiao},
	title = {{Entanglement dynamics in U(1) symmetric hybrid quantum automaton circuits}},
	journal = {Quantum},
	volume = {7},
	pages = {1200},
	year = {2023},
	month = dec,
	publisher = {Verein zur F{\ifmmode\ddot{o}\else\"{o}\fi}rderung des Open Access Publizierens in den Quantenwissenschaften},
	eprint = {2305.18141v2},
	doi = {10.22331/q-2023-12-06-1200}
}

@article{Bertini2026May,
	author = {Bertini, Bruno},
	title = {{Non-equilibrium quantum many-body physics with quantum circuits}},
	journal = {SciPost Physics Lecture Notes},
	pages = {124},
	year = {2026},
	month = may,
	issn = {2590-1990},
	doi = {10.21468/SciPostPhysLectNotes.124}
}

@article{fisher2023random,
  author = "Fisher, Matthew P.A. and Khemani, Vedika and Nahum, Adam and Vijay, Sagar",
  title = "Random Quantum Circuits",
  journal = "Annual Review of Condensed Matter Physics",
  year = "2023",
  volume = "14",
  number = "Volume 14, 2023",
  pages = {335--379},
  doi = {10.1146/annurev-conmatphys-031720-030658},
  url = "https://www.annualreviews.org/content/journals/10.1146/annurev-conmatphys-031720-030658",
  type = "Journal Article"
}

@article{li2025dynamics,
  author = {Li, Yahui and Sala, Pablo and Pollmann, Frank and Moudgalya, Sanjay and Motrunich, Olexei},
  title = {{Dynamics in the presence of local symmetry-breaking impurities}},
  journal = {Phys. Rev. B},
  year = {2025},
  volume = {112},
  pages = {155108},
  doi = {10.1103/p1cm-9z8n},
  url = {https://link.aps.org/doi/10.1103/p1cm-9z8n}
}

@article{ogunnaike2023unifying,
  author = {Ogunnaike, Olumakinde and Feldmeier, Johannes and Lee, Jong Yeon},
  title = {{Unifying Emergent Hydrodynamics and Lindbladian Low-Energy Spectra across Symmetries, Constraints, and Long-Range Interactions}},
  journal = {Phys. Rev. Lett.},
  year = {2023},
  volume = {131},
  pages = {220403},
  doi = {10.1103/PhysRevLett.131.220403},
  url = {https://link.aps.org/doi/10.1103/PhysRevLett.131.220403}
}

@article{vardhan_entanglement_2024,
  title = {{Entanglement dynamics from universal low-lying modes}},
  author = {Vardhan, Shreya and Moudgalya, Sanjay},
  journal = {Phys. Rev. B},
  volume = {113},
  issue = {1},
  pages = {014308},
  numpages = {40},
  year = {2026},
  month = {Jan},
  publisher = {American Physical Society},
  doi = {10.1103/prp6-y5hl},
  url = {https://link.aps.org/doi/10.1103/prp6-y5hl}
}

@article{moudgalya2022from,
  author = {Sanjay Moudgalya and Olexei I. Motrunich},
  title = {{From symmetries to commutant algebras in standard Hamiltonians}},
  journal = {Annals of Physics},
  year = {2023},
  volume = {455},
  pages = {169384},
  doi = {10.1016/j.aop.2023.169384},
  url = {https://www.sciencedirect.com/science/article/pii/S0003491623001707}
}

@article{moudgalya2022exhaustive,
  author = {{Moudgalya}, Sanjay and {Motrunich}, Olexei I.},
  title = {{Exhaustive Characterization of Quantum Many-Body Scars using Commutant Algebras}},
  year = {2024},
  journal = {Physical Review X},
  volume = {14},
  pages = {041069},
  doi = {10.1103/PhysRevX.14.041069},
  url = {https://doi.org/10.1103/PhysRevX.14.041069}
}

@article{moudgalya2021review,
  author = {Sanjay Moudgalya and B Andrei Bernevig and Nicolas Regnault},
  title = {{Quantum many-body scars and Hilbert space fragmentation: a review of exact results}},
  journal = {Reports on Progress in Physics},
  year = {2022},
  volume = {85},
  number = {8},
  pages = {086501},
  doi = {10.1088/1361-6633/ac73a0}
}

@article{mori2018thermalization,
  author = {Takashi Mori and Tatsuhiko N Ikeda and Eriko Kaminishi and Masahito Ueda},
  title = {{Thermalization and prethermalization in isolated quantum systems: a theoretical overview}},
  journal = {Journal of Physics B: Atomic, Molecular and Optical Physics},
  year = {2018},
  volume = {51},
  number = {11},
  pages = {112001},
  doi = {10.1088/1361-6455/aabcdf}
}

@book{aharoni2000introduction,
  title={Introduction to the Theory of Ferromagnetism},
  author={Aharoni, A.},
  isbn={9780198508083},
  lccn={00126806},
  series={International series of monographs on physics},
  year={2000},
  publisher={Oxford University Press}
}

@article{Kochetov1995,
	author = {Kochetov, E. A.},
	title = {{Path integral over the generalized coherent states}},
	journal = {Journal of Mathematical Physics},
	year = {1995},
	doi = {10.1063/1.531078}
}

@book{Kipnis1999,
	author = {Kipnis, Claude and Landim, Claudio},
	title = {{Scaling Limits of Interacting Particle Systems}},
	journal = {SpringerLink},
	year = {1999},
	isbn = {978-3-662-03752-2},
	publisher = {Springer},
	address = {Berlin, Germany},
	url = {https://link.springer.com/book/10.1007/978-3-662-03752-2}
}

@article{Khemani2018Sep,
	author = {Khemani, Vedika and Vishwanath, Ashvin and Huse, David A.},
	title = {{Operator Spreading and the Emergence of Dissipative Hydrodynamics under Unitary Evolution with Conservation Laws}},
	journal = {Physical Review X},
	volume = {8},
	number = {3},
	pages = {031057},
	year = {2018},
	month = sep,
	publisher = {American Physical Society},
	doi = {10.1103/PhysRevX.8.031057}
}

@article{zeier2011symmetry,
  author = {Zeier,Robert  and Schulte-Herbrüggen,Thomas},
  title = {{Symmetry principles in quantum systems theory}},
  journal = {Journal of Mathematical Physics},
  year = {2011},
  volume = {52},
  number = {11},
  pages = {113510},
  doi = {10.1063/1.3657939}
}

@article{Sala2024Oct,
	author = {Sala, Pablo and Gopalakrishnan, Sarang and Oshikawa, Masaki and You, Yizhi},
	title = {{Spontaneous strong symmetry breaking in open systems: Purification perspective}},
	journal = {Physical Review B},
	volume = {110},
	number = {15},
	pages = {155150},
	year = {2024},
	month = oct,
	publisher = {American Physical Society},
	doi = {10.1103/PhysRevB.110.155150}
}

@article{Lessa2025Mar,
	author = {Lessa, Leonardo A. and Ma, Ruochen and Zhang, Jian-Hao and Bi, Zhen and Cheng, Meng and Wang, Chong},
	title = {{Strong-to-Weak Spontaneous Symmetry Breaking in Mixed Quantum States}},
	journal = {PRX Quantum},
	volume = {6},
	number = {1},
	pages = {010344},
	year = {2025},
	month = mar,
	publisher = {American Physical Society},
	doi = {10.1103/PRXQuantum.6.010344}
}

@article{Chen2025Feb,
	author = {Chen, Langxuan and Sun, Ning and Zhang, Pengfei},
	title = {{Strong-to-weak symmetry breaking and entanglement transitions}},
	journal = {Physical Review B},
	volume = {111},
	number = {6},
	pages = {L060304},
	year = {2025},
	month = feb,
	publisher = {American Physical Society},
	doi = {10.1103/PhysRevB.111.L060304}
}

@article{Huang2025Mar,
	author = {Huang, Xiaoyang and Qi, Marvin and Zhang, Jian-Hao and Lucas, Andrew},
	title = {{Hydrodynamics as the effective field theory of strong-to-weak spontaneous symmetry breaking}},
	journal = {Physical Review B},
	volume = {111},
	number = {12},
	pages = {125147},
	year = {2025},
	month = mar,
	publisher = {American Physical Society},
	doi = {10.1103/PhysRevB.111.125147}
}

@article{moudgalya2023numerical,
  author = {Moudgalya, Sanjay and Motrunich, Olexei I.},
  title = {{Numerical methods for detecting symmetries and commutant algebras}},
  journal = {Phys. Rev. B},
  year = {2023},
  volume = {107},
  pages = {224312},
  doi = {10.1103/PhysRevB.107.224312},
  url = {https://link.aps.org/doi/10.1103/PhysRevB.107.224312}
}

@book{Altland_Simons_2010, place={Cambridge}, edition={2}, title={Condensed Matter Field Theory}, publisher={Cambridge University Press}, author={Altland, Alexander and Simons, Ben D.}, year={2010}}

@article{lastres2026,
  title = {{Nonuniversality from conserved superoperators in unitary circuits}},
  author = {Lastres, Marco and Pollmann, Frank and Moudgalya, Sanjay},
  journal = {Phys. Rev. B},
  volume = {113},
  issue = {1},
  pages = {014310},
  numpages = {33},
  year = {2026},
  month = {Jan},
  publisher = {American Physical Society},
  doi = {10.1103/8jfm-l4ml},
  url = {https://link.aps.org/doi/10.1103/8jfm-l4ml}
}

@book{fradkin2013field,
  author = {Fradkin, Eduardo},
  title = {{Field theories of condensed matter physics}},
  year = {2013}
}

@article{Rakovszky2018Sep,
	author = {Rakovszky, Tibor and Pollmann, Frank and von Keyserlingk, C. W.},
	title = {{Diffusive Hydrodynamics of Out-of-Time-Ordered Correlators with Charge Conservation}},
	journal = {Physical Review X},
	volume = {8},
	number = {3},
	pages = {031058},
	year = {2018},
	month = sep,
	publisher = {American Physical Society},
	doi = {10.1103/PhysRevX.8.031058}
}

@ARTICLE{jonay_huse,
       author = {{Jonay}, Cheryne and {Huse}, David A. and {Nahum}, Adam},
        title = "{Coarse-grained dynamics of operator and state entanglement}",
      journal = {arXiv e-prints},
         year = 2018,
        month = feb,
archivePrefix = {arXiv},
       eprint = {1803.00089},
 primaryClass = {cond-mat.stat-mech},
       adsurl = {https://ui.adsabs.harvard.edu/abs/2018arXiv180300089J}
}

@ARTICLE{mark_membrane,
       author = {{Mezei}, M{\'a}rk},
        title = "{Membrane theory of entanglement dynamics from holography}",
      journal = {\prd},
         year = 2018,
        month = nov,
       volume = {98},
       number = {10},
          eid = {106025},
        pages = {106025},
          doi = {10.1103/PhysRevD.98.106025},
archivePrefix = {arXiv},
       eprint = {1803.10244},
 primaryClass = {hep-th},
       adsurl = {https://ui.adsabs.harvard.edu/abs/2018PhRvD..98j6025M}
}

@ARTICLE{zhou_nahum,
       author = {{Zhou}, Tianci and {Nahum}, Adam},
        title = "{Entanglement Membrane in Chaotic Many-Body Systems}",
      journal = {Physical Review X},
         year = 2020,
        month = jul,
       volume = {10},
       number = {3},
          eid = {031066},
        pages = {031066},
          doi = {10.1103/PhysRevX.10.031066},
archivePrefix = {arXiv},
       eprint = {1912.12311},
 primaryClass = {cond-mat.str-el},
       adsurl = {https://ui.adsabs.harvard.edu/abs/2020PhRvX..10c1066Z}
}

@article{Aditya2026Apr,
	author = {Aditya, Sreemayee and Tirrito, Emanuele and Sierant, Piotr and Turkeshi, Xhek},
	title = {{Coherence dynamics in quantum many-body systems with conservation laws}},
	journal = {ArXiv e-prints},
	year = {2026},
	month = apr,
	eprint = {2604.23192},
	doi = {10.48550/arXiv.2604.23192}
}

@misc{poetri2026,
	title={Theory of the Matchgate Commutant}, 
	author={Piotr Sierant and Xhek Turkeshi and Poetri Sonya Tarabunga},
	year={2026},
	eprint={2603.12392},
	archivePrefix={arXiv},
	primaryClass={quant-ph},
	url={https://arxiv.org/abs/2603.12392}
}

@misc{larocca2026,
      title={The commutant of fermionic Gaussian unitaries}, 
      author={Paolo Braccia and N. L. Diaz and Martin Larocca and M. Cerezo and Diego García-Martín},
      year={2026},
      eprint={2603.19210},
      archivePrefix={arXiv},
      primaryClass={quant-ph},
      url={https://arxiv.org/abs/2603.19210}, 
}

@article{enriched-phases,
	title = {Symmetry enriched phases of quantum circuits},
	journal = {Annals of Physics},
	volume = {435},
	pages = {168618},
	year = {2021},
	note = {Special issue on Philip W. Anderson},
	issn = {0003-4916},
	doi = {https://doi.org/10.1016/j.aop.2021.168618},
	url = {https://www.sciencedirect.com/science/article/pii/S0003491621002244},
	author = {Yimu Bao and Soonwon Choi and Ehud Altman}
}

@article{fava2024,
	title={Monitored fermions with conserved ${U}(1)$ charge},
	volume={6},
	ISSN={2643-1564},
	url={http://dx.doi.org/10.1103/PhysRevResearch.6.043246},
	DOI={10.1103/physrevresearch.6.043246},
	number={4},
	journal={Physical Review Research},
	publisher={American Physical Society (APS)},
	author={Fava, Michele and Piroli, Lorenzo and Bernard, Denis and Nahum, Adam},
	year={2024},
	month=dec
}

@article{swann2025,
	title = {Spacetime picture for entanglement generation in noisy fermion chains},
	author = {Swann, Tobias and Bernard, Denis and Nahum, Adam},
	journal = {Phys. Rev. B},
	volume = {112},
	issue = {6},
	pages = {064301},
	numpages = {23},
	year = {2025},
	month = {Aug},
	publisher = {American Physical Society},
	doi = {10.1103/PhysRevB.112.064301},
	url = {https://link.aps.org/doi/10.1103/PhysRevB.112.064301}
}

@mastersthesis{kerin2003homogeneous,
  title={Homogeneous metrics on spheres},
  author={Kerin, Martin and Wraith, David},
  year={2003},
  school={National University of Ireland, Maynooth}
}

@article{Stephan2017Oct,
	author = {St{\ifmmode\acute{e}\else\'{e}\fi}phan, Jean-Marie},
	title = {{Return probability after a quench from a domain wall initial state in the spin-1/2 XXZ chain}},
	journal = {Journal of Statistical Mechanics: Theory and Experiment},
	volume = {2017},
	number = {10},
	pages = {103108},
	year = {2017},
	month = oct,
	issn = {1742-5468},
	publisher = {IOP Publishing and SISSA},
	doi = {10.1088/1742-5468/aa8c19}
}

@book{hall2015lie,
	title={Lie Groups, Lie Algebras, and Representations: An Elementary Introduction},
	author={Hall, B.},
	isbn={9783319134673},
	lccn={2015935277},
	series={Graduate Texts in Mathematics},
	year={2015},
	publisher={Springer International Publishing}
}

@book{perelomov1986generalized,
	year = {1986},
	author = {Perelomov, A. M.},
	booktitle = {Generalized coherent states and their applications},
	isbn = {0387159126},
	lccn = {85030306},
	publisher = {Springer-Verlag},
	series = {Texts and monographs in physics},
	title = {Generalized coherent states and their applications},
}

@article{hearth2025unitary,
  title = {Unitary $k$-Designs from Random Number-Conserving Quantum Circuits},
  author = {Hearth, Sumner N. and Flynn, Michael O. and Chandran, Anushya and Laumann, Chris R.},
  journal = {Phys. Rev. X},
  volume = {15},
  issue = {2},
  pages = {021022},
  numpages = {23},
  year = {2025},
  month = {Apr},
  publisher = {American Physical Society},
  doi = {10.1103/PhysRevX.15.021022},
  url = {https://link.aps.org/doi/10.1103/PhysRevX.15.021022}
}

@article{generalized-coher-st,
	author = "Perelomov, A. M.",
	title = "{Coherent states for arbitrary lie groups}",
	doi = "10.1007/BF01645091",
	journal = "Commun. Math. Phys.",
	volume = "26",
	pages = "222--236",
	year = "1972"
}

@article{Berezin:1978sn,
    author = "Berezin, F. A.",
    title = "{Models of {Gross-Neveu} Type as Quantization of Classical Mechanics With Nonlinear Phase Space}",
    reportNumber = "ITF-78-119E",
    doi = "10.1007/BF01220849",
    journal = "Commun. Math. Phys.",
    volume = "63",
    pages = "131--153",
    year = "1978"
}

@article{Zhou2020Jul,
	author = {Zhou, Tianci and Ludwig, Andreas W. W.},
	title = {{Diffusive scaling of R{\ifmmode\backslash\else\textbackslash\fi}'enyi entanglement entropy}},
	journal = {Physical Review Research},
	volume = {2},
	number = {3},
	pages = {033020},
	year = {2020},
	month = jul,
	publisher = {American Physical Society},
	doi = {10.1103/PhysRevResearch.2.033020}
}

@article{nahum2022realtime,
  title = {Real-time correlators in chaotic quantum many-body systems},
  author = {Nahum, Adam and Roy, Sthitadhi and Vijay, Sagar and Zhou, Tianci},
  journal = {Phys. Rev. B},
  volume = {106},
  issue = {22},
  pages = {224310},
  numpages = {26},
  year = {2022},
  month = {Dec},
  publisher = {American Physical Society}}

@article{liu2018lectures,
	author = {Liu, Hong and Glorioso, Paolo},
	title = {{Lectures on non-equilibrium effective field theories and fluctuating hydrodynamics}},
	journal = {PoS},
	volume = {TASI2017},
	pages = {008},
	year = {2018},
	doi = {10.22323/1.305.0008}
}

@article{li2023hilbert,
  title = {Hilbert space fragmentation in open quantum systems},
  author = {Li, Yahui and Sala, Pablo and Pollmann, Frank},
  journal = {Phys. Rev. Res.},
  volume = {5},
  issue = {4},
  pages = {043239},
  numpages = {17},
  year = {2023},
  month = {Dec},
  publisher = {American Physical Society},
  doi = {10.1103/PhysRevResearch.5.043239},
  url = {https://link.aps.org/doi/10.1103/PhysRevResearch.5.043239}
}

@article{yang2022distinction,
  title = {Distinction between transport and R\'enyi entropy growth in kinetically constrained models},
  author = {Yang, Zhi-Cheng},
  journal = {Phys. Rev. B},
  volume = {106},
  issue = {22},
  pages = {L220303},
  numpages = {5},
  year = {2022},
  month = {Dec},
  publisher = {American Physical Society},
  doi = {10.1103/PhysRevB.106.L220303},
  url = {https://link.aps.org/doi/10.1103/PhysRevB.106.L220303}
}

@article{moudgalya2021spectral,
  title = {Spectral statistics in constrained many-body quantum chaotic systems},
  author = {Moudgalya, Sanjay and Prem, Abhinav and Huse, David A. and Chan, Amos},
  journal = {Phys. Rev. Res.},
  volume = {3},
  issue = {2},
  pages = {023176},
  numpages = {27},
  year = {2021},
  month = {Jun},
  publisher = {American Physical Society},
  doi = {10.1103/PhysRevResearch.3.023176},
  url = {https://link.aps.org/doi/10.1103/PhysRevResearch.3.023176}
}

@article{gopalakrishnan2026monitored,
  title = {Monitored Fluctuating Hydrodynamics},
  author = {Gopalakrishnan, Sarang and McCulloch, Ewan and Vasseur, Romain},
  journal = {Phys. Rev. X},
  volume = {16},
  issue = {1},
  pages = {011024},
  numpages = {17},
  year = {2026},
  month = {Feb},
  publisher = {American Physical Society},
  doi = {10.1103/295c-lj1w},
  url = {https://link.aps.org/doi/10.1103/295c-lj1w}
}

@article{nahum2025bayesian,
  title = {Bayesian critical points in classical lattice models},
  author = {Nahum, Adam and Jacobsen, Jesper Lykke},
  journal = {Phys. Rev. B},
  volume = {112},
  issue = {23},
  pages = {235113},
  numpages = {57},
  year = {2025},
  month = {Dec},
  publisher = {American Physical Society},
  doi = {10.1103/7dpt-d4s5},
  url = {https://link.aps.org/doi/10.1103/7dpt-d4s5}
}

@ARTICLE{hauser2026strongtoweak,
       author = {{Hauser}, Jacob and {Su}, Kaixiang and {Ha}, Hyunsoo and {Lloyd}, Jerome and {Kiely}, Thomas G. and {Vasseur}, Romain and {Gopalakrishnan}, Sarang and {Xu}, Cenke and {Fisher}, Matthew P.~A.},
        title = "{Strong-to-Weak Symmetry Breaking in Open Quantum Systems: From Discrete Particles to Continuum Hydrodynamics}",
      journal = {arXiv e-prints},
         year = 2026,
        month = feb,
archivePrefix = {arXiv},
       eprint = {2602.16045},
 primaryClass = {quant-ph},
       adsurl = {https://ui.adsabs.harvard.edu/abs/2026arXiv260216045H}
}

@ARTICLE{zerba2025dipole,
       author = {{Zerba}, Caterina and {Gopalakrishnan}, Sarang and {Knap}, Michael},
        title = "{Strong-to-weak symmetry breaking in monitored dipole conserving quantum circuits}",
      journal = {arXiv e-prints},
         year = 2025,
        month = dec,
archivePrefix = {arXiv},
       eprint = {2512.14830},
 primaryClass = {quant-ph},
       adsurl = {https://ui.adsabs.harvard.edu/abs/2025arXiv251214830Z}
}

@article{sensarma2023,
  title = {Building entanglement entropy out of correlation functions for interacting fermions},
  author = {Moitra, Saranyo and Sensarma, Rajdeep},
  journal = {Phys. Rev. B},
  volume = {108},
  issue = {17},
  pages = {174309},
  numpages = {20},
  year = {2023},
  month = {Nov},
  publisher = {American Physical Society},
  doi = {10.1103/PhysRevB.108.174309},
  url = {https://link.aps.org/doi/10.1103/PhysRevB.108.174309}
}

@article{feldmeier2020anomalous,
  title = {Anomalous Diffusion in Dipole- and Higher-Moment-Conserving Systems},
  author = {Feldmeier, Johannes and Sala, Pablo and De Tomasi, Giuseppe and Pollmann, Frank and Knap, Michael},
  journal = {Phys. Rev. Lett.},
  volume = {125},
  issue = {24},
  pages = {245303},
  numpages = {6},
  year = {2020},
  month = {Dec},
  publisher = {American Physical Society},
  doi = {10.1103/PhysRevLett.125.245303},
  url = {https://link.aps.org/doi/10.1103/PhysRevLett.125.245303}
}

@ARTICLE{han2024exponentially,
       author = {{Han}, Yiqiu and {Chen}, Xiao and {Lake}, Ethan},
        title = "{Exponentially slow thermalization and the robustness of Hilbert space fragmentation}",
      journal = {arXiv e-prints},
         year = 2024,
        month = jan,
archivePrefix = {arXiv},
       eprint = {2401.11294},
 primaryClass = {quant-ph},
       adsurl = {https://ui.adsabs.harvard.edu/abs/2024arXiv240111294H}
}

@ARTICLE{wang2025exponential,
       author = {{Wang}, Cheng and {Balasubramanian}, Shankar and {Han}, Yiqiu and {Lake}, Ethan and {Chen}, Xiao and {Yang}, Zhi-Cheng},
        title = "{Exponentially slow thermalization in 1D fragmented dynamics}",
      journal = {arXiv e-prints},
         year = 2025,
        month = jan,
archivePrefix = {arXiv},
       eprint = {2501.13930},
 primaryClass = {quant-ph},
       adsurl = {https://ui.adsabs.harvard.edu/abs/2025arXiv250113930W}
}

@article{yang2020hilbert,
  title = {Hilbert-Space Fragmentation from Strict Confinement},
  author = {Yang, Zhi-Cheng and Liu, Fangli and Gorshkov, Alexey V. and Iadecola, Thomas},
  journal = {Phys. Rev. Lett.},
  volume = {124},
  issue = {20},
  pages = {207602},
  numpages = {6},
  year = {2020},
  month = {May},
  publisher = {American Physical Society},
  doi = {10.1103/PhysRevLett.124.207602},
  url = {https://link.aps.org/doi/10.1103/PhysRevLett.124.207602}
}

@article{li2025highly,
  title = {Highly Entangled Stationary States from Strong Symmetries},
  author = {Li, Yahui and Pollmann, Frank and Read, Nicholas and Sala, Pablo},
  journal = {Phys. Rev. X},
  volume = {15},
  issue = {1},
  pages = {011068},
  numpages = {37},
  year = {2025},
  month = {Mar},
  publisher = {American Physical Society},
  doi = {10.1103/PhysRevX.15.011068},
  url = {https://link.aps.org/doi/10.1103/PhysRevX.15.011068}
}

@ARTICLE{2026arXiv260602391D,
       author = {{Delacretaz}, Luca V.},
        title = "{Boulder Lectures on Thermal Dynamics and Hydrodynamic EFTs}",
      journal = {arXiv e-prints},
         year = 2026,
        month = jun,
archivePrefix = {arXiv},
       eprint = {2606.02391},
 primaryClass = {hep-th},
       adsurl = {https://ui.adsabs.harvard.edu/abs/2026arXiv260602391D}
}

@ARTICLE{crossley,
       author = {{Crossley}, Michael and {Glorioso}, Paolo and {Liu}, Hong},
        title = "{Effective field theory of dissipative fluids}",
      journal = {Journal of High Energy Physics},
         year = 2017,
        month = sep,
       volume = {2017},
       number = {9},
          eid = {95},
        pages = {95},
          doi = {10.1007/JHEP09(2017)095},
       adsurl = {https://ui.adsabs.harvard.edu/abs/2017JHEP...09..095C}
}

@article{Zhou_2019,
   title={Emergent statistical mechanics of entanglement in random unitary circuits},
   volume={99},
   ISSN={2469-9969},
   url={http://dx.doi.org/10.1103/PhysRevB.99.174205},
   DOI={10.1103/physrevb.99.174205},
   number={17},
   journal={Physical Review B},
   publisher={American Physical Society (APS)},
   author={Zhou, Tianci and Nahum, Adam},
   year={2019},
   month={may} }

@article{10.1063/1.524610,
    author = {Thomas, Lawrence E.},
    title = {Quantum Heisenberg ferromagnets and stochastic exclusion processes},
    journal = {Journal of Mathematical Physics},
    volume = {21},
    number = {7},
    pages = {1921-1924},
    year = {1980},
    month = {07},
    issn = {0022-2488},
    doi = {10.1063/1.524610},
    url = {https://doi.org/10.1063/1.524610},
}

@article{Bravyi2015,
	author = {Bravyi, Sergey},
	title = {{Monte Carlo simulation of stoquastic Hamiltonians}},
	journal = {Quant.Inf.Comput},
	volume = {15},
	number = {13-14},
	pages = {1122--1140},
	year = {2015},
	doi = {10.26421/QIC15.13-14-3}
}

@article{Blythe_2003,
   title={Survival probability of a diffusing particle in the presence of Poisson-distributed mobile traps},
   volume={67},
   ISSN={1095-3787},
   url={http://dx.doi.org/10.1103/PhysRevE.67.041101},
   DOI={10.1103/physreve.67.041101},
   number={4},
   journal={Physical Review E},
   publisher={American Physical Society (APS)},
   author={Blythe, R. A. and Bray, A. J.},
   year={2003},
   month=Apr }

@article{Anton_2004,
   title={Spatial fluctuations of a surviving particle in the trapping reaction},
   volume={38},
   ISSN={1361-6447},
   url={http://dx.doi.org/10.1088/0305-4470/38/1/009},
   DOI={10.1088/0305-4470/38/1/009},
   number={1},
   journal={Journal of Physics A: Mathematical and General},
   publisher={IOP Publishing},
   author={Anton, L and Blythe, R A and Bray, A J},
   year={2004},
   month=Dec, pages={133–144} }

@article{PhysRevX.7.031016,
  title = {Quantum Entanglement Growth under Random Unitary Dynamics},
  author = {Nahum, Adam and Ruhman, Jonathan and Vijay, Sagar and Haah, Jeongwan},
  journal = {Phys. Rev. X},
  volume = {7},
  issue = {3},
  pages = {031016},
  numpages = {30},
  year = {2017},
  month = {Jul},
  publisher = {American Physical Society},
  doi = {10.1103/PhysRevX.7.031016},
  url = {https://link.aps.org/doi/10.1103/PhysRevX.7.031016}
}

@article{Collins_2010,
   title={Random Quantum Channels I: Graphical Calculus and the Bell State Phenomenon},
   volume={297},
   ISSN={1432-0916},
   url={http://dx.doi.org/10.1007/s00220-010-1012-0},
   DOI={10.1007/s00220-010-1012-0},
   number={2},
   journal={Communications in Mathematical Physics},
   publisher={Springer Science and Business Media LLC},
   author={Collins, Benoît and Nechita, Ion},
   year={2010},
   month=feb, pages={345–370} }

@article{Agarwal2023May,
	author = {Agarwal, Lakshya and Sahu, Subhayan and Xu, Shenglong},
	title = {{Charge transport, information scrambling and quantum operator-coherence in a many-body system with U(1) symmetry}},
	journal = {Journal of High Energy Physics},
	volume = {2023},
	number = {5},
	pages = {37},
	year = {2023},
	month = may,
	issn = {1029-8479},
	publisher = {Springer Berlin Heidelberg},
	doi = {10.1007/JHEP05(2023)037}
}

@article{Agarwal_2022,
   title={Emergent symmetry in Brownian SYK models and charge dependent scrambling},
   volume={2022},
   ISSN={1029-8479},
   url={http://dx.doi.org/10.1007/JHEP02(2022)045},
   DOI={10.1007/jhep02(2022)045},
   number={2},
   journal={Journal of High Energy Physics},
   publisher={Springer Science and Business Media LLC},
   author={Agarwal, Lakshya and Xu, Shenglong},
   year={2022},
   month=feb }

@Article{10.21468/SciPostPhys.15.4.175,
	title={{Exact entanglement in the driven quantum symmetric simple exclusion process}},
	author={Denis Bernard and Ludwig Hruza},
	journal={SciPost Phys.},
	volume={15},
	pages={175},
	year={2023},
	publisher={SciPost},
	doi={10.21468/SciPostPhys.15.4.175},
	url={https://scipost.org/10.21468/SciPostPhys.15.4.175},
}

@article{Bernard_2021,
   title={Solution to the Quantum Symmetric Simple Exclusion Process: The Continuous Case},
   volume={384},
   ISSN={1432-0916},
   url={http://dx.doi.org/10.1007/s00220-021-04087-x},
   DOI={10.1007/s00220-021-04087-x},
   number={2},
   journal={Communications in Mathematical Physics},
   publisher={Springer Science and Business Media LLC},
   author={Bernard, Denis and Jin, Tony},
   year={2021},
   month=apr, pages={1141–1185} }

@misc{swann2026continuummechanicsentanglementnoisy,
      title={Continuum mechanics of entanglement in noisy interacting fermion chains}, 
      author={Tobias Swann and Adam Nahum},
      year={2026},
      eprint={2601.21134},
      archivePrefix={arXiv},
      primaryClass={cond-mat.stat-mech},
      url={https://arxiv.org/abs/2601.21134}, 
}

@misc{lastres2026geometryfreefermioncommutants,
      title={Geometry of Free Fermion Commutants}, 
      author={Marco Lastres and Sanjay Moudgalya},
      year={2026},
      eprint={2604.05031},
      archivePrefix={arXiv},
      primaryClass={quant-ph},
      url={https://arxiv.org/abs/2604.05031}, 
}

@article{kochetov95,
  title = {SU(2) coherent-state path integral for the Heisenberg ferromagnet},
  author = {Kochetov, E. A.},
  journal = {Phys. Rev. B},
  volume = {52},
  issue = {6},
  pages = {4402--4408},
  numpages = {0},
  year = {1995},
  month = {Aug},
  publisher = {American Physical Society},
  doi = {10.1103/PhysRevB.52.4402},
  url = {https://link.aps.org/doi/10.1103/PhysRevB.52.4402}
}

@article{kochetov95heis,
    author = {Kochetov, E. A.},
    title = {SU(2) coherent‐state path integral},
    journal = {Journal of Mathematical Physics},
    volume = {36},
    number = {9},
    pages = {4667-4679},
    year = {1995},
    month = {09},
    issn = {0022-2488},
    doi = {10.1063/1.530913},
    url = {https://doi.org/10.1063/1.530913},
}

@article{klauder79,
  title = {Path integrals and stationary-phase approximations},
  author = {Klauder, John R.},
  journal = {Phys. Rev. D},
  volume = {19},
  issue = {8},
  pages = {2349--2356},
  numpages = {0},
  year = {1979},
  month = {Apr},
  publisher = {American Physical Society},
  doi = {10.1103/PhysRevD.19.2349},
  url = {https://link.aps.org/doi/10.1103/PhysRevD.19.2349}
}

@book{Bengtsson2017Aug,
	author = {Bengtsson, Ingemar and {\ifmmode\dot{Z}\else\.{Z}\fi}yczkowski, Karol},
	title = {{Geometry of Quantum States: An Introduction to Quantum Entanglement}},
	year = {2017},
	month = aug,
	isbn = {978-1-10702625-4},
	publisher = {Cambridge University Press},
	address = {Cambridge, England, UK}
}

@article{tdvp20,
	title={{Geometry of variational methods: dynamics of closed quantum systems}},
	author={Lucas Hackl and Tommaso Guaita and Tao Shi and Jutho Haegeman and Eugene Demler and J. Ignacio Cirac},
	journal={SciPost Phys.},
	volume={9},
	pages={048},
	year={2020},
	publisher={SciPost},
	doi={10.21468/SciPostPhys.9.4.048},
	url={https://scipost.org/10.21468/SciPostPhys.9.4.048},
}

@article{Turkeshi_mpemba,
  title = {Quantum Mpemba Effect in Random Circuits},
  author = {Turkeshi, Xhek and Calabrese, Pasquale and De Luca, Andrea},
  journal = {Phys. Rev. Lett.},
  volume = {135},
  issue = {4},
  pages = {040403},
  numpages = {10},
  year = {2025},
  month = {Jul},
  publisher = {American Physical Society},
  doi = {10.1103/5d6p-8d1b},
  url = {https://link.aps.org/doi/10.1103/5d6p-8d1b}
}

@article{Rakovszky_2019,
   title={Sub-ballistic Growth of Rényi Entropies due to Diffusion},
   volume={122},
   ISSN={1079-7114},
   url={http://dx.doi.org/10.1103/PhysRevLett.122.250602},
   DOI={10.1103/physrevlett.122.250602},
   number={25},
   journal={Physical Review Letters},
   publisher={American Physical Society (APS)},
   author={Rakovszky, Tibor and Pollmann, Frank and von Keyserlingk, C. W.},
   year={2019},
   month=June }

@article{McCulloch_2023,
	title={Full Counting Statistics of Charge in Chaotic Many-Body Quantum Systems},
	volume={131},
	ISSN={1079-7114},
	url={http://dx.doi.org/10.1103/PhysRevLett.131.210402},
	DOI={10.1103/physrevlett.131.210402},
	number={21},
	journal={Physical Review Letters},
	publisher={American Physical Society (APS)},
	author={McCulloch, Ewan and De Nardis, Jacopo and Gopalakrishnan, Sarang and Vasseur, Romain},
	year={2023},
	month=Nov
}

@article{PhysRevLett.55.537,
  title = {Nonlinear excitations on a quantum ferromagnetic chain},
  author = {Balakrishnan, Radha and Bishop, A. R.},
  journal = {Phys. Rev. Lett.},
  volume = {55},
  issue = {5},
  pages = {537--540},
  numpages = {0},
  year = {1985},
  month = {Jul},
  publisher = {American Physical Society},
  doi = {10.1103/PhysRevLett.55.537},
  url = {https://link.aps.org/doi/10.1103/PhysRevLett.55.537}
}

@misc{mcculloch2026longlivedlocalquantumcoherences,
	title={Long-lived local quantum coherences from hydrodynamic large deviations}, 
	author={Ewan McCulloch and J. Alexander Jacoby and Sarang Gopalakrishnan},
	year={2026},
	eprint={2604.27074},
	archivePrefix={arXiv},
	primaryClass={quant-ph},
	url={https://arxiv.org/abs/2604.27074}, 
}

@article{McCulloch_2026,
   title={Subexponential Decay of Local Correlations from Diffusion-Limited Dephasing},
   volume={136},
   ISSN={1079-7114},
   url={http://dx.doi.org/10.1103/393g-z21y},
   DOI={10.1103/393g-z21y},
   number={19},
   journal={Physical Review Letters},
   publisher={American Physical Society (APS)},
   author={McCulloch, Ewan and Jacoby, J. Alexander and von Keyserlingk, Curt and Gopalakrishnan, Sarang},
   year={2026},
   month=May }

@book{klauder1985coherent,
  title={Coherent States : Applications In Physics And Mathematical Physics},
  author={Klauder, J.R. and Skagerstam, B.},
  isbn={9789814590884},
  url={https://books.google.de/books?id=qXsGCwAAQBAJ},
  year={1985},
  publisher={World Scientific Publishing Company}
}

@article{JohnSchliemann_1998,
doi = {10.1088/0953-8984/10/5/016},
url = {https://doi.org/10.1088/0953-8984/10/5/016},
year = {1998},
month = {feb},
publisher = {},
volume = {10},
number = {5},
pages = {1091},
author = {John Schliemann and Franz G Mertens},
title = {Semiclassical description of Heisenberg models via spin-coherent states},
journal = {Journal of Physics: Condensed Matter}
}

@article{JMRadcliffe_1971,
doi = {10.1088/0305-4470/4/3/009},
url = {https://doi.org/10.1088/0305-4470/4/3/009},
year = {1971},
month = {may},
publisher = {},
volume = {4},
number = {3},
pages = {313},
author = {J M Radcliffe},
title = {Some properties of coherent spin states},
journal = {Journal of Physics A: General Physics}
}

@article{Masaoka2024Nov,
	author = {Masaoka, Rintaro and Soejima, Tomohiro and Watanabe, Haruki},
	title = {{Quadratic dispersion relations in gapless frustration-free systems}},
	journal = {Physical Review B},
	volume = {110},
	number = {19},
	pages = {195140},
	year = {2024},
	month = nov,
	publisher = {American Physical Society},
	doi = {10.1103/PhysRevB.110.195140}
}

@article{PhysRevB.108.054307,
  title = {Critical phase and spin sharpening in SU(2)-symmetric monitored quantum circuits},
  author = {Majidy, Shayan and Agrawal, Utkarsh and Gopalakrishnan, Sarang and Potter, Andrew C. and Vasseur, Romain and Halpern, Nicole Yunger},
  journal = {Phys. Rev. B},
  volume = {108},
  issue = {5},
  pages = {054307},
  numpages = {13},
  year = {2023},
  month = {Aug},
  publisher = {American Physical Society},
  doi = {10.1103/PhysRevB.108.054307},
  url = {https://link.aps.org/doi/10.1103/PhysRevB.108.054307}
}

@incollection{SCHUTZ20011,
    title = {1 - Exactly Solvable Models for Many-Body Systems Far from Equilibrium},
    editor = {C. Domb and J.L. Lebowitz},
    series = {Phase Transitions and Critical Phenomena},
    publisher = {Academic Press},
    volume = {19},
    pages = {1-251},
    year = {2001},
    issn = {1062-7901},
    doi = {https://doi.org/10.1016/S1062-7901(01)80015-X},
    url = {https://www.sciencedirect.com/science/article/pii/S106279010180015X},
    author = {G.M. Schütz}
}

@article{the-algebra-paper,
  author = {Lastres, Marco and Moudgalya, Sanjay},
  journal = {},
  title = {{in preparation}},
  year = {2026}
}

\onecolumngrid
\appendix %%%%%%%%%%%%%%%%%%%%%%%%%%%%%%%%%%%%%%%%%%%%%%%%%%%%%%
\newpage

\section{Review of the Spin Coherent Path Integral} % MARK: APP: PATH INT
\label{app:path-integral}
In this appendix, we provide a self-contained review of the $SU(2)$ spin coherent state path integral~\cite{kochetov95,kochetov95heis}, including its application to the study of the ferromagnetic Heisenberg model Eq.~\eqref{eq:HeisModelHam}.
We focus on the subtle role of boundary terms, showing these features explicitly through the example of a single spin in a longitudinal magnetic field.
While this is not directly used for any of the results in the main text, it is instructive to review this powerful approach, which has been successfully applied to study certain Brownian models close to free-fermion limits in \cite{swann2025, swann2026continuummechanicsentanglementnoisy}, and compare these results to those from the time dependent variational principle, which we do in Appendix~\ref{app:tdvp-vs-pi}.
This comparison might also be useful for formulating path integrals for more general commutant manifolds discussed in the main text, where our results have been limited to those obtained using TDVP.
\subsection{Coherent State Representation and the Path Integral Action}
A spin-coherent state $\ket{\vec{n}}$ \cite{JMRadcliffe_1971} for a spin-$\frac{1}{2}$ degree of freedom is parameterized by a three-dimensional unit vector $\vec{n}$ defined as
\begin{equation}
	\ket{\vec{n}} = \cos\frac{\theta}{2}\ket\up + e^{i\varphi}\sin\frac{\theta}{2}\ket\dn,\;\;\;  \vec{n} \defn (\sin\theta\cos\varphi, \sin\theta\sin\varphi, \cos\theta)^T\label{eq:spin-coh-states}
\end{equation}
which can be thought of as a point on the Bloch sphere $\mb{S}^2$.
By design, the expectation value of the Pauli vector recovers the unit vector: $\bra{\vec{n}}\vec{\sigma}\ket{\vec{n}} = \vec{n}$.
Alternatively, these states can be parametrized by complex coordinates obtained by mapping the Bloch sphere onto the complex plane $\mb{C}$ via stereographic projection:
\begin{equation}
	\ket{z} = \frac{z\ket\up + \ket\dn}{\sqrt{1 + \bar{z}z}}=\ket{\vec n},\;\;\; z \defn e^{-i\varphi}\cot\frac{\theta}{2}.
\label{eq:coherentstatecomplex}
\end{equation}
Here $\bar{z}$ represent the complex conjugate of the variable $z$, but it will later be treated as an independent variable due to technicalities in the path integral formalism.
This parametrization is the more appropriate one when discussing the spin-coherent path integral.
The resolution of identity on the single-site Hilbert space is given by:
\begin{equation}\label{eq:res-id-s2-formula}
	\1 = \frac{2}{\pi}\int\frac{\dd ^2z}{{(1 + \bar{z}z)^2}} \ketbra{z}{z} = \int \frac{\mrm{d}\Omega}{2\pi} \ketbra{\vec{n}}{\vec{n}}.
\end{equation}
To construct the path integral for a Hamiltonian $H$ evolving in imaginary time $\tau \in [0, t]$ for the transition amplitude between the initial and final coherent states $\ket{z_\mrm{i}}$ and $\ket{z_\mrm{f}}$, we first Trotterize the evolution operator $e^{-Ht}$ into $N$ time steps of width $\delta \tau = t/N$.
Inserting the resolution of identity at each step, and following standard algebraic simplifications, this can ultimately be written as \cite{kochetov95}
\begin{equation}
	Z(z_\mrm{i}, z_\mrm{f}; t) = \bra{z_\mrm{f}}e^{-Ht}\ket{z_\mrm{i}} = \int_{z(0)=z_\mrm{i}}^{\bar z(t)=z_\mrm{f}^*} \mrm{D}z\,\mrm{D}\bar{z}\, \exp\left( -S[z, \bar{z}] \right),
\end{equation}
where we obtain a Berry phase contribution, Hamiltonian contribution, and a boundary contribution as
\begin{equation}
	S[z, \bar{z}] = S_{\mrm{Berry}}[z, \bar{z}] + S_H[z, \bar{z}] + S_{\mrm{bd}}[z(t), \bar{z}(0)].
	\label{eq:S_total}
\end{equation}
For a generic spin-$\frac{1}{2}$ system, we always get
\begin{gather}
	S_{\mrm{Berry}}[z, \bar{z}] = -\frac{1}{2} \int_0^t \dd \tau \, \frac{\dot{\bar{z}}z - \bar{z}\dot{z}}{1 + \bar{z}z},\quad
	S_H[z, \bar{z}] = \int_0^t\dd\tau\bra{z}\!H\!\ket{z},\quad
	S_{\mrm{bd}}[z, \bar{z}] = -\frac{1}{2} \log \left[ \frac{(1 + z_\mrm{f}^* z(t))(1 + \bar{z}(0)z_\mrm{i})}{(1 + z_\mrm{f}^* z_\mrm{f})(1 + z_\mrm{i}^* z_\mrm{i})} \right].\label{eq:action_terms}
\end{gather}
For a quantum many-body system composed of spin-$\frac{1}{2}$ sites, the resolution in terms of spin-coherent states can be applied site by site, and we can define the continuum fields $z(x)$ and $\bar z(x)$, thus obtaining in the continuum limit:
\begin{gather}
	S_{\mrm{Berry}}[z, \bar{z}] = -\frac{1}{2} \int\dd^d x\int_0^t \dd \tau \, \frac{\dot{\bar{z}}z - \bar{z}\dot{z}}{1 + \bar{z}z},\quad
	S_{\mrm{bd}}[z(t), \bar{z}(0)] = -\frac{1}{2} \int\dd^d x\, \log \left[ \frac{(1 + z_\mrm{f}^* z(t))(1 + \bar{z}(0)z_\mrm{i})}{(1 + z_\mrm{f}^* z_\mrm{f})(1 + z_\mrm{i}^* z_\mrm{i})} \right],
\end{gather}
where we leave implicit the $x$ dependence of all the fields.
The Hamiltonian in a many-body system is also a sum of local terms, and we can take the continuum limit of $S_H[z, \bar z]$ to also write it as an integral over space.

\subsection{Subtleties on the Boundary Conditions of the Path Integral}
The prescription for the boundary conditions of the fields $z,\bar z$ in Eq.~\eqref{eq:S_total} deserves some clarifications.
In the usual derivation of the path integral action, one would expect that the boundary conditions for all fields are determined at both temporal boundaries, thus fixing ($z(0)$, $\bar{z}(0)$, $z(t)$, $\bar{z}(t)$).
However, since $S_\mrm{Berry}$ contains first-order time derivatives, semiclassical saddle-point trajectories will satisfy first-order differential equations in time, which, for times $\tau \notin \{0, t\}$ are of the form
\begin{equation}
	\delta S[z, \bar z] = 0 \;\;\;\implies\;\;\;\dot z= -(1+\bar z z)^2\frac{\delta H(z,\bar z)}{\delta \bar z},\quad \dot {\bar z}= (1+\bar z z)^2\frac{\delta H(z,\bar z)}{ \delta z},\;\;\;H(z,\bar z)\defn\bra z\! H\!\ket z.\label{eq:eqsofmotionzzbar}
\end{equation}
Fixing all the fields at both temporal boundaries would then result in an overdetermined system of differential equations.
This paradox was solved by Ref.~\cite{klauder79} as follows.
If one keeps $\delta t$ finite in the Trotterization procedure, the Berry phase term in the action will possess an additional term with second-order time derivatives, and a small coefficient of order $\epsilon = \mc O(\delta t)$.
The classical equations of motion then become second-order differential equations in time, allowing us to fix both temporal boundaries for all fields.
Doing so however reveals that the fields $z(\tau)$ and $\bar z(\tau)$ evolve according to Eqs.~(\ref{eq:eqsofmotionzzbar}) for most of the trajectory and perform rapid jumps at times $\tau\sim t-\epsilon$ and $\tau\sim \epsilon$ respectively, which in the $\delta t \rightarrow 0$ limit results in singular behaviors at $\tau = 0$ and $\tau = t$ for the $\bar z$ and $z$ fields respectively.
Hence two of the four boundary conditions can be discarded in the continuum limit, leaving us with the remaining two boundary conditions
\begin{equation}\label{eq:temp-bound-cond}
	z(\tau = 0) = z_\mrm{i} \quad \text{and} \quad \bar{z}(\tau = t) = z_\mrm{f}^*,
\end{equation}
which can then be imposed on Eq.~(\ref{eq:eqsofmotionzzbar}) to solve them consistently.
This then leads to the picture where $z$ and $\bar z$ are independent fields that ``originate'' from different temporal boundaries and propagate forward and backward in time respectively  (note that the signs of their derivatives in Eq.~(\ref{eq:eqsofmotionzzbar}) are opposite), while potentially interacting at other times away from the temporal boundaries.
This can then be solved, generically through numerical integration, as done for the Heisenberg model for specific initial and final conditions in Ref.~\cite{swann2025}, to obtain the solutions for the field configurations.
Note that these trajectories are in most cases unphysical, since there is no constraint for $\bar z(\tau)$ to equal $z(\tau)^*$, which in the language of $\vec{n}$ would mean that the angles $\theta$, $\varphi$ take unphysical complex values.
They should simply be seen as virtual solutions that correspond to the stationary value of the action.
It is also interesting to notice that all this is consistent with also setting $\delta S = 0$ for $\tau \in\{0, t\}$ since the variation of boundary action $S_\mrm{bd}$ perfectly cancels the term originating from integration by parts in the variation of $S_{\mrm{Berry}}$, if one assumes that $\delta z(0)=\delta \bar z(t)=0$:
\begin{gather}\label{eq:dberry-app}
	\delta S_{\mrm{Berry}} = \int \dd^d x \int_0^t \dd \tau \, \frac{\dot{z}\delta\bar{z} - \dot{\bar{z}}\delta z}{(1 + \bar{z}z)^2} - \frac{1}{2} \int \dd^d x \left[ \frac{z(\tau)\delta\bar{z}(\tau) - \bar{z}(\tau)\delta z(\tau)}{1 + \bar{z}z} \right]_0^t,\\
	\delta S_{\mrm{bd}} = -\frac{1}{2}\int \dd^d x \left[ \frac{z_f^* \delta z(t)}{1 + z_f^* z(t)} + \frac{z_i \delta\bar{z}(0)}{1 + \bar{z}(0) z_i} \right].
\end{gather}
\subsection{Example: A single spin under $H = J \sigma^z$}
To illustrate the crucial role of the boundary conditions and the boundary action $S_\mrm{bd}$, we  consider a single spin governed by the Hamiltonian $H = J \sigma^z$.
In terms of coherent states, the Hamiltonian term and the variation of energy action read
\begin{equation}
	\bra z \!H \!\ket z = -J \frac{1 - \bar{z}z}{1 + \bar{z}z},\;\;\;\delta S_H = \int_0^t \dd \tau \, \frac{2J \left(z\delta\bar{z} + \bar{z}\delta z\right)}{(1 + \bar{z}z)^2}.
\end{equation}
Combining this with $\delta S_{\mrm{Berry}}$ of Eq.~(\ref{eq:dberry-app}) with $d = 0$, we get the  Euler-Lagrange equations decouple into simple linear ODEs:
\begin{equation}\label{eq:simple-z-zbar}
	\dot{z} = -2 J z \quad \text{and} \quad \dot{\bar{z}} = +2 J \bar{z}\;\;\;\implies\;\;\;z(\tau) = z_i e^{-2J\tau} \quad \text{and} \quad \bar{z}(\tau) = z_f^* e^{-2J(t - \tau)},
\end{equation}
where in the second step we have integrated the equations by imposing the boundary conditions $z(0) = z_i$ and $\bar{z}(t) = z_f^*$:
Noting that the product of the fields is uniformly constant throughout the entire bulk: $\bar{z}(\tau)z(\tau) = z_f^* z_i e^{-2J t} \equiv \gamma e^{-2J t}$, where $\gamma = z_f^* z_i$, we can evaluate each component of the total action on this classical saddle-point trajectory analytically:
\begin{gather}
	S_{\mrm{Berry}} = -\int_0^t \dd \tau \, \frac{2J \bar{z}z}{1 + \bar{z}z} = -2J t \frac{\gamma e^{-2J t}}{1 + \gamma e^{-2J t}}, \;\;\;S_H = -\int_0^t \dd \tau \, J \frac{1 - \bar{z}z}{1 + \bar{z}z} = -J t \frac{1 - \gamma e^{-2J t}}{1 + \gamma e^{-2J t}}, \\
	S_{\mrm{bd}} = -\log\left( \frac{1 + \gamma e^{-2J t}}{\sqrt{1+z_\mrm{i}^*z_\mrm{i}}\sqrt{1+z_\mrm{f}^*z_\mrm{f}}} \right).
\end{gather}
Summing these parts, the total semi-classical action evaluates to:
\begin{equation}
	S_{\mrm{total}} = -J t - \log\left( \frac{1 + \gamma e^{-2J t}}{\sqrt{1+z_\mrm{i}^*z_\mrm{i}}\sqrt{1+z_\mrm{f}^*z_\mrm{f}}} \right)
    \;\;\implies\;\;
    e^{-S_{\rm total}} = \frac{e^{Jt} + \gamma e^{-J t}}{\sqrt{1+z_\mrm{i}^*z_\mrm{i}}\sqrt{1+z_\mrm{f}^*z_\mrm{f}}} = \bra{z_\mrm{f}} e^{-Ht} \ket{z_\mrm{i}}.
\end{equation}
This demonstrates that under imaginary-time evolution, the fields $z(\tau)$ and $\bar{z}(\tau)$ independently grow and decay exponentially away from their respective boundaries to satisfy the path integral.
The boundary term $S_{\mrm{bd}}$ provides a crucial normalization factor, necessary for the path integral to correctly reproduce the transition amplitude $\bra{z_\mrm{f}} e^{-J \sigma^z t} \ket{z_\mrm{i}}$.
\subsection{Path Integral Formulation of the Ferromagnetic Heisenberg Model} % MARK: * p-i
For the Heisenberg model of Eq.~\eqref{eq:HeisModelHam} in the continuum limit, the energy functional $H[z, \bar z]$ is~\cite{kochetov95heis}
\begin{equation}\label{eq:heis-cont-ham-z}
	H[z,\bar z] = 2J\int\dd^d x\, \frac{\bm\nabla \bar z\cdot\bm\nabla z}{(1+\bar z z)^2}.
\end{equation}
This functional form is equivalent to the expression Eq.~(\ref{eq:cont-heis-mtxt}) [derived later in Eq.~\eqref{eq:heis-ham-cont-lim}], under the mapping of Eq.~\eqref{eq:coherentstatecomplex}.
Given that this Hamiltonian term has a second derivative in space, and that the Berry phase term in Eq.~(\ref{eq:action_terms}) has a first derivative in time, in an infinite system we can always rescale to dimensionless coordinates $(x,\tau)\mapsto (\tilde x = x/\sqrt{Jt},\,\tilde\tau = \tau/t)$, such that a large prefactor appears in front of the action $S\mapsto \sqrt{Jt}\cdot \tilde S$~\cite{swann2025}.
Note that this scaling does not depend on the $SU(2)$ symmetry of the current model, and it generalizes to homogeneous ferromagnets with larger symmetry groups.
This justifies performing a saddle-point approximation at large $t$ to obtain a semiclassical trajectory at long times, and this approach has been employed in previous works \cite{swann2025, swann2026continuummechanicsentanglementnoisy} in the context of free-fermion or weakly interacting noisy dynamics, where the effective Hamiltonian can be mapped onto the ferromagnetic Heisenberg model (within a certain sector).
The classical equations of motion are
\begin{gather}\label{eq:eoms-z}
    \dot z = +2J\left(\bm\nabla^2 z - \frac{2\bar z(\bm\nabla z)^2}{1+\bar z z}\right),\qquad    \dot {\bar z} = -2J\left(\bm\nabla^2 \bar z - \frac{2 z(\bm\nabla \bar z)^2}{1+\bar z z}\right).
\end{gather}
As before, these can be interpreted as evolving $z(x,\tau)$ and $\bar z(x,\tau)$ respectively forwards and backwards in time, and since $\dot z\neq (\dot{\bar z})^*$, these equations of motion always evolve the field into unphysical configurations.
Note that these equations of motion couple the $z$ and $\bar z$ field, unlike the case of a single spin whose equations of motion are Eq.~\eqref{eq:simple-z-zbar}.
The boundary value problem for semiclassical trajectories in the Heisenberg model is in general analytically untractable, since these $z$ and $\bar z$ fields originate on different boundaries and interact in the bulk. One generally has to resort to numerics to obtain a self-consistent solution for the configuration of the fields in the bulk.
\subsection{Generalization to Matrix Coherent States and Coset Manifolds}\label{subsec:matrixcoherent}
The construction detailed above for $SU(2)$ spin-$\frac{1}{2}$ coherent states can be extended to broader classes of quantum systems where the manifold of states is as a homogeneous coset space $G/H$ \cite{generalized-coher-st,klauder1985coherent} for a Lie group $G$ and a closed subgroup $H$.
For instance, in the case of $U(1)$-conserving free fermion systems, the ground state space for the replica system is smoothly described by the complex Grassmannian manifold $\mrm{Gr}(M, N) = U(N)/[U(M) \times U(N-M)]$ \cite{lastres2026geometryfreefermioncommutants}.
In such cases, one can define generalized coherent states similar to the ones above, where the complex fields $z$ are replaced by matrix-valued fields $Z$, and the denominators $(1+\bar{z}z)$ generalize to determinant expressions of the form $\det(\1+ Z^\dagger Z)$~\cite{perelomov1986generalized,Kochetov1995,Berezin:1978sn}.
These coset manifolds are smooth homogeneous spaces, and do not contain any singularities.
However, to our knowledge, analogous constructions are not available for manifolds of the kind we are interested in, which are complex projective varieties with singularities.

\section{Review of the Time Dependent Variational Principle} % MARK: APP: TDVP
\label{app:tdvp}
In this Appendix, we formulate and discuss the Time-Dependent Variational Principle (TDVP) for continuum Hamiltonians acting on the space of variational product space parametrized as $v(x)\in M\subseteq\mb P(\mc H)$.
Following the geometric framework formalized in Ref.~\cite{tdvp20}, we outline how quantum dynamics can be projected onto a variational subspace, construct the continuum expression for the Heisenberg model Hamiltonian, and contrast the resulting physical state trajectories with the virtual semiclassical trajectories of the path integral.
\subsection{Geometric Formulation on Kähler Manifolds}
Let $M\subseteq \mb P(\mc H)$ be a variational manifold of quantum states $\ket{\psi(\theta)}$ parameterized by a set of real coordinates $\theta^\alpha$.
These states are identified up to normalization and phase factors, so that $\ket{\psi(\theta)}\sim \lambda\ket{\psi(\theta)}$ for $\lambda \in \mathbb{C}$.
When $M$ is a Kähler manifold of projective states [e.g., when it is holomorphic up to a normalization factor, such as the coherent state manifold of Eq.~(\ref{eq:coherentstatecomplex})], it naturally inherits three compatible geometric structures from the full Hilbert space: a Riemannian metric $g_{\alpha\beta}(\theta)$, a symplectic two-form $\omega_{\alpha\beta}(\theta)$, and a complex structure ${J^\alpha}_\beta(\theta)$.
These tensors are tied together by the fundamental relations:
\begin{equation}
	{J^\alpha}_\gamma {J^\gamma}_\beta = -\delta^\alpha_\beta, \quad \text{and} \quad \omega_{\alpha\beta} = - g_{\alpha\gamma} {J^\gamma}_\beta.
\end{equation}
The metric and symplectic forms are explicitly given by the real and imaginary components of the overlap of quantum vectors $\{\ket{V_\alpha}\}$ tangent to $M$~\cite{tdvp20}:
\begin{equation}\label{eq:tdvp-tensors-def}
	g_{\alpha\beta}(\theta) = 2\frac{\Re \braket{V_\alpha}{V_\beta}}{\braket{\psi}}, \quad
	\omega_{\alpha\beta}(\theta) = 2 \frac{\Im  \braket{V_\alpha}{V_\beta}}{\braket{\psi}},\;\;\;\ket{V_\alpha} \defn \left(1 - \frac{\ketbra{\psi}{\psi}}{\braket{\psi}{\psi}}\right) \frac{\partial}{\partial \theta^\alpha}\ket{\psi(\theta)},
\end{equation}
where the projection from $\partial_{\theta^\alpha}\ket{\psi(\theta)}$ to $\ket{V_\alpha}$ is only needed to make sure that $\ket{V_\alpha}$ is tangent to $\mb P(\mc H)$ (i.e., orthogonal to $\ket{\psi(\theta)}$) independently of the chosen representative $\ket{\psi(\theta)}\in \mc H$.
Notice that an alternate natural parametrization of projective states comes from pure density matrices, defined as
\begin{equation}
\rho=\frac{\ketbra{\psi(\theta)}{\psi(\theta)}}{\braket{\psi(\theta)}},
\end{equation}
since they are invariant under rescalings $\ket{\psi(\theta)}\mapsto {\lambda}\ket{\psi(\theta)}$ for $\lambda \in \mathbb{C}$.
When parametrized by such normalized pure density matrices, the metric in Eq.~\eqref{eq:tdvp-tensors-def} is exactly the restriction of the matrix Frobenius metric to the variational manifold of states
\begin{equation}\label{eq:matrix-metric}
    g_{\alpha\beta}(\theta)=\tr(\frac{\partial\rho}{\partial\theta^\alpha}  \frac{\partial\rho}{\partial\theta^\beta}).
\end{equation}
We will mostly work with this density matrix representation in the examples illustrated in the following.
Note that in the following we will sometime abbreviate the notation $\partial_\alpha\defn\partial_{\theta^\alpha}$.
The projection of the full Schrödinger equation onto the tangent space $T_\psi M$ yields deterministic differential equations for the coordinates $\theta^\alpha(t)$.
Physically, for real-time evolution, the state evolves along a symplectic vector field that follows the energy contours on the variational manifold, whereas for (normalized) imaginary-time evolution the state performs a steepest gradient descent perpendicular to the energy contours on the manifold, thus minimizing its energy.
The respective equations of motion can be compactly expressed as
\begin{equation}
	\omega_{\alpha\beta} \dot{\theta}^\beta = - \frac{\partial E}{\partial \theta^\alpha}\;\;\text{(real-time)},\;\;\;
	g_{\alpha\beta} \dot{\theta}^\beta = - \frac{\partial E}{\partial \theta^\alpha}\;\;\text{(imaginary-time)},\;\;\;
	E(\theta) \defn \frac{\bra{\psi(\theta)}\!H\!\ket{\psi(\theta)}}{\braket{\psi(\theta)}{\psi(\theta)}}.
	\label{eq:tdvp}
\end{equation}
It is then also straightforward using the properties of $g_{\alpha\beta}$ and $\omega_{\alpha\beta}$ to show that the time-derivative of the energy satisfies~\cite{tdvp20}
\begin{equation}
    \dot E(\theta) = 0\;\;\text{(real-time)},\;\;\;\dot E(\theta) \leq 0\;\;\text{(imaginary-time)},
\end{equation}
consistent with expectations.
These equations capture the evolution of the projective states, which do not contain the information of the norm.
The decay of the norm under imaginary time evolution can be captured separately by using the relation
\begin{equation}\label{eq:en-dec-app}
	\dv{t}\braket{\psi(t)} = \bra{\psi}\dv{t}e^{-2Ht} \ket{\psi} = -2 \bra{\psi(t)} H \ket{\psi (t)} \;\;\implies\;\; \dv{t} \log\braket{\psi(\theta)} = -2E(\theta).
\end{equation}
However, this is not the only way to compute the decay of the norm, and we discuss alternative approaches in Sec.~\ref{subsec:tdvp-main}.

By denoting the inverse of the symplectic form $\Omega^{\alpha\beta}$ and the inverse metric $G^{\alpha\beta}$, the connection between the real- and imaginary-time trajectories can be seen to be a tangent-space Wick rotation mediated by the complex structure ${J^\alpha}_\beta$ on the manifold:
\begin{equation}
	\dot{\theta}^\alpha_{\mrm{imag}} = - G^{\alpha\beta} \frac{\partial E}{\partial \theta^\beta} = - \left( -{J^\alpha}_\gamma \Omega^{\gamma\beta} \right) \frac{\partial E}{\partial \theta^\beta} =  -{J^\alpha}_\gamma \dot{\theta}^\alpha_{\mrm{real}}.
\end{equation}
This shows that imaginary-time TDVP is simply a rotation of the real-time evolution within the tangent space using the intrinsic complex structure of the variational manifold, again consistent with the propagation of the state vector along and perpendicular to the energy contours in the different cases.
While the TDVP formalism is commonly presented~\cite{tdvp20} in terms of intrinsic coordinates $\alpha\beta\cdots$ of the variational manifold $M$ (such as spherical coordinates $\theta, \varphi$ or projective coordinates $z$ for $M = \mathbb{S}^2$), for explicit applications it is sometimes more convenient to parametrize points on $M$ in terms of extrinsic coordinates $AB\cdots$ of a larger embedding space (such as $x,y,z$ coordinates of $\mathbb{R}^3$ where $M = \mathbb{S}^2$ is embedded in).
If the embedding space is endowed with a metric $g_{AB}$ and/or a symplectic form $\omega_{AB}$ whose restrictions to the variational manifold exactly give the metrics $g_{\alpha\beta}$ and/or symplectic form $\omega_{\alpha\beta}$ of Eq.~\eqref{eq:tdvp-tensors-def}, then the equations of motion Eq.~\eqref{eq:tdvp} will take the similar form
\begin{equation}
	\omega_{AB} \dot{\theta}^B = - {\Pi^C\!}_A\frac{\partial E}{\partial \theta^C}\;\;\text{(real-time)},\;\;\;
	g_{AB} \dot{\theta}^B = - {\Pi^C\!}_A\frac{\partial E}{\partial \theta^C}\;\;\text{(imaginary-time)},
	\label{eq:tdvp-embed}
\end{equation}
where ${\Pi^C\!}_A$ is the projector onto the tangent space of $M$, which constrains the evolution to this variational set.
Everywhere below, we distinguish intrinsic and extrinsic coordinates by using Greek lower case $\alpha\beta \cdots$ for the former, and Roman upper case $A B \cdots$ for the latter.
\subsection{Application to spatially varying fields}
While Eq.~\eqref{eq:tdvp} is the general expression for TDVP on a variational manifold of finite dimension, we are interested in applying this framework to the case of a variational family of spatially varying fields $v(x)$ which take values on a manifold $M$.
It is convenient to discretize the spatial direction and consider product states on a many-body system where states on each site are parametrized by $M$.
Then the whole family will be parametrized by the product manifold $M^{\times L}$, and the variational parameters will be $\{v_i^\alpha\}$.
On this product manifold, the TDVP symplectic form and metric of Eq.~\eqref{eq:tdvp-tensors-def} simply factorize along the spatial direction:
\begin{equation}
	\omega_{(\alpha,i)(\beta,j)} = \omega_{\alpha\beta}\delta_{ij},\quad g_{(\alpha,i)(\beta,j)} = g_{\alpha\beta}\delta_{ij},
\end{equation}
where $\alpha$ and $\beta$ label the directions on the manifold and $i$ and $j$ the site indices.
In such a case, the equations of motion of Eq.~\eqref{eq:tdvp} factorize into separate equations for $v_i$ on each site, with only the energy $E$ that couples different sites.
In the continuum limit, we can also replace the gradient of energy with a functional derivative as
\begin{equation}\label{eq:functional-der-a}
    \frac{\partial E}{\partial v^\alpha_i}
    \rt
    a^d\frac{\delta E}{\delta v^\alpha(x)},
\end{equation}
to obtain the equations of motion
\begin{equation}
\omega_{\alpha\beta} \dot{v}^\beta = - \frac{\delta E}{\delta v^\alpha}\;\;\text{(real-time)},\;\;\;
	g_{\alpha\beta} \dot{v}^\beta = - \frac{\delta E}{\delta v^\alpha}\;\;\text{(imaginary-time)},
\end{equation}
where $E$ now is the global energy functional that depends on the field configuration $v(x)$, and the lattice-spacing dependence is absorbed in the space and time variables or the constants in the Hamiltonian, as we will show in the example of the Heisenberg model.
\subsection{Semiclassical Trajectories in TDVP vs Path Integral}\label{app:tdvp-vs-pi}
We remark that the real time TDVP equation in Eq.~\eqref{eq:tdvp} can be derived as the Euler-Lagrange equation of the Lagrangian:
\begin{equation}
	L(\theta,\partial_t{\theta};t) = \mrm{Re}\left[\frac{\bra{\psi(\theta)} i\partial_t-H\ket{\psi(\theta)}}{\braket{\psi(\theta)}}\right]
\end{equation}
If we assume that the coordinates $\theta^\alpha$ are holomorphic,\footnote{This means that they can be paired up to form complex variables $z^{\rho}=\theta^{2\rho}+i\theta^{2\rho+1}$ which are the actual holomorphic coordinates, which satisfy ${J^\rho}_\sigma= i$} then the imaginary time TDVP equations in Eq.~\eqref{eq:tdvp} can be derived from the Wick-rotated Lagrangian (where one replaces $t\mapsto -i t$).
Note that in both cases, these Lagrangians match exactly the ones associated to the spin-coherent path integral discussed in the previous section [Eq.~\eqref{eq:action_terms}], and so they will result in the same equations of motion for the holomorphic coordinates.
Indeed, as shown in the Heisenberg ferromagnet example in Appendix~\ref{app:path-integral}, the equations of motion of Eq.~\eqref{eq:eoms-z} for the holomorphic coordinate $z(x,\tau)\in\mb{CP}^1$ are the same as the TDVP ones, while those for the anti-holomorphic one are not.
However, the trajectories in TDVP should not be confused with the semiclassical trajectories used in the saddle-point approximation of the path integral:
\begin{itemize}
	\item TDVP solves an \textit{initial-value problem} describing the trajectory of a state within a variational manifold. Since at every time the trajectory describes a physical state, there is no need to treat anti-holomorphic parts such as $\bar z$ as independent variables.
	\item The saddle-point trajectory for the path integral solves a \textit{boundary-value problem} to the find transition amplitude between two states, and does not directly correspond to a physical trajectory. The solutions may need to leave the manifold of physical states in order to accommodate the boundary conditions. 
    This is particularly the case in imaginary time, where, as discussed in Appendix~\ref{app:path-integral} the equation of motion for $\dot {\bar z}$ is not equal to $(\dot z)^*$, immediately leading to a loss of physicality for the state. 
\end{itemize}
\subsection{Continuum Evolution for the Heisenberg Model}
We now specialize this framework to the continuum limit of the spin-$\frac{1}{2}$ ferromagnetic Heisenberg model [Eq.~\eqref{eq:HeisModelHam}], similar to Ref.~\cite{JohnSchliemann_1998}.
Our variational manifold consists of a continuous tensor product of independent single-site coherent states $\ket{\vec n_i}$ as defined in Eq.~\eqref{eq:spin-coh-states}. The local density matrix representation of this state is:
\begin{equation}
	\rho(\vec{n}_i) = \ketbra{\vec n_i}{\vec n_i}=\frac{1}{2} \left( \1 + \vec{n}_i \cdot \vec{\sigma} \right).
\end{equation}
This is an overparametrization of the sphere $\mb S^2$ defined by $|\vec n|=1$, and vectors $\delta \vec n$ tangent to this manifold must satisfy $\vec n\cdot\delta\vec n=0$.
In these coordinates the metric tensor $g_{\alpha\beta}$ [Eq.~\eqref{eq:tdvp-tensors-def}] on the tangent plane of the sphere $\mb{S}^2$ is the restriction of the Frobenius metric for the matrices $\rho(\vec{n})$ [cf.~Eq.~\eqref{eq:matrix-metric}], which is proportional to the Euclidean metric for $\vec n\in\mb R^3$:
\begin{equation}
	g_{\alpha\beta} = (g_{AB})\big|_\mrm{tangent}\qquad g_{AB} = \tr \left( \frac{\partial \rho}{\partial n^A} \frac{\partial \rho}{\partial n^B} \right) = \frac{1}{2} \delta_{AB}.
\label{eq:metricheisenberg}
\end{equation}

To derive the equations of motion, we evaluate the energy functional for the ferromagnetic Heisenberg model with small lattice spacing $a$. The microscopic interaction between adjacent sites reads $h_{ij} = 2J ( \1 - \Sigma_{ij} )$ where $\Sigma_{ij}$ is the swap operator. Its expectation value is:
\begin{equation}\label{eq:meanh}
	\langle h_{ij} \rangle = 2J \tr [ ( \1 - \Sigma_{ij} ) \rho ]
	= 2J (1 - \tr[\rho(\vec n_i)\rho(\vec n_j)])
	= J \left( 1 - \vec{n}_i \cdot \vec{n}_{j} \right).
\end{equation}
For smooth configurations, we expand the field in space by taking
\begin{equation}
    \vec{n}_j = \vec{n}_i +  {a^\mu\partial_\mu} \vec{n}_i + \frac{1}{2} {a^\mu a^\nu \partial_\mu\partial_\nu} \vec{n}_i+\mc O(a^3),    
\end{equation}
where repeated indices are summed over, $\langle ij\rangle$ are adjacent sites and $\bm a = \bm x_j-\bm x_i$ (with components $a^\mu$ with $|a^\mu| \sim a$) is the spatial vector which separates the two sites (in this derivation we will use bold symbols and indices $\mu\nu...$ to indicate spatial vectors).
Substituting this expansion into the expectation value we get:
\begin{equation}
	1 - \vec{n}_i \cdot \vec{n}_j = - a^{\mu}\vec{n}_i \cdot \partial_\mu \vec{n}_i - \frac{a^\mu a^\nu}{2} \vec{n}_i \cdot \partial_\mu \partial_\nu \vec{n}_i +\mc O(a^3),
\end{equation}
Differentiating the identity $\vec{n} \cdot \vec{n} = 1$, one finds that $\vec{n} \cdot \partial_\mu \vec{n} = 0$ and $\vec{n} \cdot \partial_\mu \partial_\nu \vec{n} = -\partial_\mu \vec{n} \cdot \partial_\nu \vec{n}$.
Assuming for simplicity a hypercubic lattice with lattice spacing $a$ in every direction, the leading-order continuum expression for energy simplifies to:
\begin{equation}
	E[\vec{n}] = \frac{J}{2}\sum_{i}{a^\mu a^\nu \partial_\mu \vec{n}_i \cdot \partial_\nu \vec{n}_i} = \frac{a^{2} J}{2} \int \dd^d x \, |\bm\nabla \vec{n}|^2,\label{eq:heis-ham-cont-lim}
\end{equation}
where in the second step we have defined fields $\vec{n}(\bm x)$ from the bare lattice fields $\vec{n}_i$ as $\vec{n}(\bm x_i) \defn a^{-\frac{d}{2}}\vec{n}_i$ which enables the conversion of the sum into an integral.
For regular lattices with different coordination number, only the numerical prefactor of the Hamiltonian changes.
To apply the TDVP equations of motion, we first take the unconstrained functional derivative of the energy with respect to the coordinates, obtaining $\frac{\delta E}{\delta \vec{n}} = -a^{2} J \bm \nabla^2 \vec{n}$.
To enforce the $|\vec{n}|^2 = 1$ constraint, this gradient must be projected onto the tangent space of the sphere using the projector $\Pi_{AB}(\vec{n}) = \delta_{AB} - n_A n_B$ [cf.~Eq.~\eqref{eq:tdvp-embed}]:
\begin{equation}\label{eq:heis-var-proj}
	\left. \frac{\delta E}{\delta n^A} \right|_{\mrm{tangent}} = -a^{2} J \left( \delta_{AB} - n_A n_B \right) \bm \nabla^2 n^B = -a^{2} J \left( \bm \nabla^2 n_A + |\bm \nabla \vec{n}|^2 n_A \right).
\end{equation}
Substituting this into Eq.~(\ref{eq:tdvp-embed}), with $g_{AB} = \frac{1}{2}\delta_{AB}$, we arrive at the diffusion equation with $\vec n$ constrained to the surface of the unit sphere:
\begin{equation}
	\partial_t \vec{n} = 2 a^2J \left( \bm \nabla^2 \vec{n} + |\bm \nabla \vec{n}|^2 \vec{n} \right).
	\label{eq:tdvp_heis_final}
\end{equation}
where $|\cdot|$ indicates the Euclidean norm. We can effectively set $a=1$ by replacing $t\mapsto t_\mrm{ph}=a^2t$, a rescaling that highlights the diffusive nature of the dynamics.
This can alternatively be done by setting $J\mapsto J_\mrm{ph} = Ja^2$, which is kept constant as $a \rightarrow 0$. 
\subsection{Quantifying Error via Manifold Leakage}
\label{app:leakage}
One way to assess the validity of the semi-classical continuum limit, is to measure the ``leakage'' rate $\epsilon^2$ out of the variational manifold.
Let $\Pi_{\psi}$ be the projector onto the tangent space of the manifold at the point $\ket\psi$; then the imaginary-time TDVP equations of Eq.~\eqref{eq:tdvp} are equivalent to \cite{tdvp20}:
\begin{equation}
	\partial_t \ket\psi \big|_{\mrm{TDVP}} = -\Pi_\psi H\ket \psi.
\end{equation}
This evolution approximates the exact normalized imaginary-time evolution:
\begin{equation}
	\partial_t \ket\psi = -(H-E_\psi)\ket \psi\quad\text{where}\quad E_\psi=\bra\psi H\ket\psi.
\end{equation}
The manifold leakage is then defined as the squared norm of the
\begin{equation}
	\epsilon^2_\psi = \Big\lVert \partial_t \ket\psi- \partial_t \ket\psi \big|_{\mrm{TDVP}} \Big\rVert^2 = \bra\psi\Big((H-E_\psi)^2-H\Pi_\psi H\Big)\ket\psi = -\frac{1}{2} \Big(\partial_t E_\psi-\partial_t E_\psi \big|_{\mrm{TDVP}}\Big)
\end{equation}
where we used the fact that $\Pi_\psi\ket\psi=0$. Also note that in the expression $\partial_t E_\psi = -2\bra\psi(H-E_\psi)^2\ket\psi=-2\mrm{Var}_\psi[E]$.
We can derive these two quantities explicitly for the Heisenberg model in the continuum limit.
For a product state $\ket\psi = \bigotimes_i \ket{\vec{n}_i}$, the energy variance decomposes as:
\begin{equation}
	\mrm{Var}_\psi[E] = \frac{1}{2}\sum_i \sum_{jj'\in\partial i}\Big(\langle h_{ij} h_{ij'}\rangle-\langle h_{ij}\rangle \langle h_{ij'}\rangle\Big),
\end{equation}
where $\partial i$ indicates the neighbors of $i$. When $j=j'$, the expression in the sum is the variance of $h_{ij}$, which can be computed using Eq.~\eqref{eq:meanh} and the fact that $h_{ij}^2=4Jh_{ij}$.
When $j\neq j'$ we can instead compute:
\begin{equation}
	\langle \{ h_{ij} , h_{ij'}\}\rangle-2\langle h_{ij}\rangle \langle h_{ij'}\rangle=2J^2(\vec n_j\cdot\vec n_{j'}-(\vec n_i\cdot\vec n_j)(\vec n_i\cdot\vec n_{j'})).
\end{equation}
In total we find:
\begin{equation}
	\mrm{Var}_\psi[E]\approx a^{4} J^2\int\dd^d x \,\Big(|\bm\nabla^2\vec n|^2 - \frac{3}{4}|\bm\nabla\vec n|^4 \Big).
\end{equation}
The time derivative of energy along the TDVP evolution can instead be computed directly from Eq.~\eqref{eq:tdvp_heis_final} to be:
\begin{equation}
	\frac{1}{2}\partial_t E_\psi \big|_{\mrm{TDVP}} = -a^{4} J^2 \int\dd^d x \,\Big(|\bm\nabla^2\vec n|^2 - |\bm\nabla\vec n|^4 \Big),
\end{equation}
so that finally:
\begin{equation}
	\epsilon^2_\psi = \frac{a^{4} J^2}{4}\int\dd^d x\, |\bm\nabla\vec n|^4.
\end{equation}
While it might appear that $\epsilon_\psi$ goes to zero in the continuum limit, this should be integrated over timescales of the order of $t\sim t_\mrm{ph}/a^2$, so that this error is in general not negligible.
Alternately, if $J_\mrm{ph} = J a^2$ is finite in the continuum limit, this leakage is generically non-zero.
However, since the error goes to zero as $|\bm\nabla\vec n|^4$, and since long-time imaginary dynamic suppresses gradients, we expect TDVP predicitons to become more accurate with time, once the higher-energy modes have decayed.
\subsection{Continuum Expansion of Effective Hamiltonians} % MARK: * continuum
In this section, we provide a general geometric derivation for the leading order of the continuum limit of effective Hamiltonians with continuous symmetries. We first treat the ferromagnetic case associated to standard on-site symmetries as discussed in Sec.~\ref{sec:fgsmanrep}.
\subsubsection{General Structure of Continuous Ferromagnetic Hamiltonians}\label{app:cont-exp}
Consider a ferromagnetic model [Eq.~\eqref{eq:def-ferro}] with a continuous ground state manifold $M_\mrm{GS}$.
In Sec.~\ref{sec:tdvp} we argued that under imaginary time evolution in the continuum limit, its dynamics can generally be captured by semiclassical evolution of a field $v(x)$ which takes value in the target manifold $M_\mrm{GS}$.
Note that since all physical quantities are invariant under the local phase transformation $\ket{v(x)}\mapsto e^{i\varphi(x)}\ket{v(x)}$, the terms in the expansion of the Hamiltonian can only depend on $\rho(x)=\ketbra{v(x)}{v(x)}$ and its derivatives.
The expansion might also include additional constant structure tensors, beyond the ones derived directly from the local density matrix (e.g., terms of the form $\tr(W_1\partial_x^n\rho W_2\partial_x^{m}\rho \cdots)$ where $W_1,W_2, \cdots$ are constant operators).
In terms of intrinsic coordinates on $M_{GS}$, a general derivative expansion for the Hamiltonian is
\begin{equation}
	H[v] = \int\dd^d x\, \Big( V( v(x)) + u^\mu_\alpha( v(x)) \partial_\mu v^\alpha + K_{\alpha\beta}^{\mu\nu}( v(x)) \partial_\mu  v^\alpha  \partial_\nu  v^\beta + \dots \Big),
\label{eq:Hexpansion}
\end{equation}
where $\mu\nu \cdots$ are spatial indices.
The assumption that the model is ferromagnetic means that $v(x)$ is a ground state if and only if $v(x)= v_0$ for some constant $v_0\in M_\mrm{GS}$.
This condition forces $V(v)=0$ for all $v\in M_\mrm{GS}$. It also forces $u^\mu_\alpha( v)=0$, since otherwise it would be possible to locally perturb a constant configuration $v(x)_\alpha\approx ( v_0)_\alpha-u^\mu_\alpha( v_0)x_\mu$ to reduce the energy density below zero.
Since for effective replica models we assume the form $P^{(k)}=\sum_{\langle ij\rangle}P_{ij}^{(k)}$ where all $P_{ij}^{(k)}$ are equal on a regular lattice, we will also assume the continuum limit to be spatially isotropic, so that $K_{\alpha\beta}^{\mu\nu}( v)=\delta^{\mu\nu} K_{\alpha\beta}(v)$.
Furthermore this structure, together with the fact that states of the form $\ket{v}_i\ket{v+\delta v}_j$ cannot locally be ground states, implies that the second order of the Hamiltonian expansion is positive and non-zero:
\begin{equation}\label{eq:sketch-expansion}
	0 < \bra{v}\bra{v+\delta v} P_{ij}^{(k)} \ket{v}\ket{v+\delta v} = \mc O(|\delta v|^2)
\end{equation}
so that $K_{\alpha\beta}(v)$ is a positive Riemannian metric on $M_\mrm{GS}$.
Therefore,
\begin{equation}
	H[ v] = \int\dd^d x\, \Big( K_{\alpha\beta}( v(x)) \bm\nabla  v^\alpha  \cdot\bm\nabla  v^\beta + \dots \Big),\label{eq:exp-generic}
\end{equation}
where we have represented spatial derivatives compactly using the gradient vector $\bm\nabla$ to avoid confusion with derivatives w.r.t. intrinsic coordinates of the ground state manifold.
From this we can apply Eq.~\eqref{eq:tdvp-continuum-main} to obtain the TDVP equations of motion. The variation of the energy is
\begin{multline}\label{eq:grosso-passaggio}
    \delta H = \int \dd^d x \left[ \frac{\partial K_{\alpha\beta}}{\partial v^\rho} \delta v^\rho \bm\nabla v^\alpha \cdot \bm\nabla v^\beta + 2 K_{\rho\beta} \bm\nabla (\delta v^\rho) \cdot \bm\nabla v^\beta \right]=\\
    =\int \dd^d x\,\delta v^\rho \left[ \frac{\partial K_{\alpha\beta}}{\partial v^\rho} \bm\nabla v^\alpha \cdot \bm\nabla v^\beta - 2 \left( K_{\rho\beta} \bm\nabla^2 v^\beta + \frac{\partial K_{\rho\beta}}{\partial v^\alpha} \bm\nabla v^\alpha \cdot \bm\nabla v^\beta \right) \right],
\end{multline}
and symmetrizing $2\frac{\partial K_{\rho\beta}}{\partial v^\alpha} \bm\nabla v^\alpha \cdot \bm\nabla v^\beta = \big(\frac{\partial K_{\rho\beta}}{\partial v^\alpha}+\frac{\partial K_{\rho\alpha}}{\partial v^\beta}\big)\bm\nabla v^\alpha \cdot \bm\nabla v^\beta$, we find:
\begin{equation}
    \partial_t v^\gamma = 2 G^{\gamma\rho} \Big(K_{\rho\delta} \bm\nabla^2 v^\delta + C_{\rho\alpha\beta} \bm\nabla v^\alpha \cdot \bm\nabla v^\beta\Big),\quad\text{where}\quad C_{\rho\alpha\beta} = \frac{1}{2} \left( \frac{\partial K_{\rho\beta}}{\partial v^\alpha} + \frac{\partial K_{\rho\alpha}}{\partial v^\beta} - \frac{\partial K_{\alpha\beta}}{\partial v^\rho} \right),
\label{eq:TDVPeqnsfinal}
\end{equation}
and $G^{\gamma\rho}$ is the inverse of the manifold metric $g_{\gamma\rho}$.
In some cases, the metric $K_{\alpha\beta}(v)$ can be uniquely identified.
If for example $M_\mrm{GS}\cong G/H$ is a homogeneous manifold, and the unbroken symmetries $H$ of the Hamiltonian (i.e., the symmetry operators that preserve a given ground state $v(x)= v_0$) act irreducibly on the tangent space of $M_\mrm{GS}$, then Schur's lemma implies that all $G$-invariant metrics are equal up to rescaling~\cite{kerin2003homogeneous}.
In our case, we can explicitly construct this unique metric using the derivative expansion in terms of $\rho(x)=\ketbra{v(x)}{v(x)}$; indeed to leading order we can write the explicitly $G$-invariant second order term:
\begin{equation}\label{eq:explicit-rho-expansion}
	H[ v] \propto \int\dd^d x\, \Big( \tr(\bm\nabla\rho\cdot \bm\nabla\rho) + \dots \Big)
	= \int\dd^d x\, \Big( \tr(\partial_\alpha\rho\,\partial_\beta\rho) \bm\nabla v^\alpha\cdot \bm\nabla v^\beta + \dots \Big),
\end{equation}
such that $K_{\alpha\beta}( v)\propto\tr(\partial_\alpha\rho\,\partial_\beta\rho)=g_{\alpha\beta}$, and $g_{\alpha\beta}$ is the TDVP metric of Eq.~\eqref{eq:matrix-metric} inherited by $M_\mrm{GS}$ from the Hilbert space~\cite{Bengtsson2017Aug}, and coincides with Eq.~\eqref{eq:tdvp-tensors-def}.
This is the correct form of the Hamiltonian in the case of the $U(1)$- and $SU(2)$-symmetric effective models studied in the main text: for $k=1$ both models can explicitly be seen to be isotropic ferromagnets on their ground state manifolds (the $SU(2)$ and $SU(4)$ Heisenberg model respectively~\cite{moudgalya2024,PhysRevB.108.054307}), while for $k\geq 2$ the structure of $K_{\alpha\beta}$ follows from the general replica decoupling arguments of the next section.
If $K_{\alpha\beta}( v)=Jg_{\alpha\beta}$ as in Eq.~\eqref{eq:heis-ham-cont-lim}, the imaginary-time TDVP equations for Eq.~\eqref{eq:TDVPeqnsfinal} are thus equivalent
\begin{equation}
	\partial_t  v^\gamma=2J\left(\bm\nabla^2 v^\gamma+\Gamma^\gamma_{\alpha\beta}\bm\nabla v^\alpha\cdot\bm\nabla v^\beta\right),
    \label{eq:geom-heat-flow}
\end{equation}
where $\Gamma^\gamma_{\alpha\beta}$ are the Christoffel symbols for the metric $g_{\alpha\beta}$.
This is exactly the generalized heat equation, for fields $v(x)$ constrained to the surface of a Riemannian manifold $M_\mrm{GS}$.
In $1$d, its stationary solutions are those where $v^\gamma(x)$ are geodesics on the ground state manifold as a function of $x$, since by setting $\partial_t v^\gamma = 0$, Eq.~(\ref{eq:geom-heat-flow}) then reduces to the geodesic equation.
In the main text we however avoid for the most part the intrinsic coordinate representation of Eq.~\eqref{eq:geom-heat-flow}, and rather use an overparametrization in terms of the $\rho$ matrix itself.
In particular, for the $SU(q)$ Heisenberg model on $\mb C^q$, we parametrize the ground state manifold $\mb{CP}^{q-1}$ using a $q\times q$ complex pure density matrix $\rho=\ketbra{v}{v}$, which must therefore satisfy
\begin{equation}\label{eq:conditions-pure-rho}
    \rho=\rho\+=\rho^2,\quad \tr(\rho)=1.
\end{equation}
Note that in the natural matrix coordinates $\rho^A$ where the index $A$ enumerates all real and imaginary components of matrix elements of $\rho$, the Frobenius metric Eq.~\eqref{eq:matrix-metric}---which is equal to the TDVP metric Eq.~\eqref{eq:tdvp-tensors-def} when restricted to the pure state manifold---is simply the Euclidean metric
\begin{equation}
    \tr[(\delta\rho)(\delta\rho)] = \delta_{AB}(\delta\rho)^A(\delta\rho)^B,
\end{equation}
assuming $\rho=\rho\+$.
From the explicit $\rho$ expansion of Eq.~\eqref{eq:explicit-rho-expansion} we can write
\begin{equation}
    H[\rho] = J \int\dd^d x\,\tr(\bm\nabla \rho\cdot \bm\nabla \rho)
\end{equation}
which is equivalent to Eq.~\eqref{eq:heis-ham-cont-lim} for the $q=2$ parametrization of pure density matrices $\rho=\frac{\1+\vec n\cdot\vec\sigma}{2}$ (setting lattice spacing $a=1$). 
To derive the equations of motion we have to compute the unconstrained variation of the energy $\frac{\delta H}{\delta \rho}=-2J\bm\nabla^2\rho$ and then project it onto the tangent space of the manifold $\mb{CP}^{q-1}$, similar to Eq.~\eqref{eq:heis-var-proj}.
Conditions on the tangent vectors $\{Y = \delta \rho\}$ to the set of $\{\rho\}$ that satisfy Eq.~\eqref{eq:conditions-pure-rho} are derived by simply taking variations of the equation $\rho = \rho^2$, which gives $Y=\rho Y+Y\rho$.\footnote{The condition $\tr(Y)=0$ follows from $Y=\rho Y+Y\rho$ and $\rho=\rho^2$.}
The projector onto that tangent space takes the form
\begin{equation}
    \Pi_\rho[X] = [\rho,[\rho,X]] = \rho X + X \rho - 2\rho X \rho,
\label{eq:tangentproj}
\end{equation}
assuming $X=X\+$, which is an expression where we can verify that $\Pi_\rho\+=\Pi_\rho$, $\Pi_\rho^2 = \Pi_\rho$, and that $Y=\Pi_\rho[Y]$ for any $Y$ tangent to $\rho$.
Using this we find the TDVP equations of motion [cf. Eq.~\eqref{eq:tdvp-embed}]
\begin{equation}
    \partial_t \rho = -\Pi_\rho\!\left[\frac{\delta H}{\delta \rho}\right] = 2J \,\Pi_\rho[\bm \nabla^2 \rho] = 2J \,[\rho,[\rho,\bm \nabla^2 \rho]].
\end{equation}
\subsubsection{Equations of motion at boundaries and interfaces}\label{app:bound-cond}
Having computed the equations of motion for the field in the bulk, we now discuss their behaviour at the boundary.
To do this, we apply the imaginary-time TDVP principle Eq.~\eqref{eq:tdvp} to the energy density Eq.~\eqref{eq:explicit-rho-expansion} in a 1d system with boundary.
Consider the Hamiltonian of Eq.~(\ref{eq:explicit-rho-expansion}) with left and right boundaries (writing in one dimension for clarity, but the generalization to higher dimension is straightforward)
\begin{equation}
    H[v; x_L, x_R] \defn J\int_{x_L}^{x_R}\dd x\, \Big( g_{\alpha\beta}\, \partial_x  v^\alpha\, \partial_x  v^\beta\Big).
\label{eq:Hboundaries}
\end{equation}
Its variation then reads [cf.~Eq.~\eqref{eq:grosso-passaggio} and the substitutions in Eqs.~(\ref{eq:TDVPeqnsfinal}) and (\ref{eq:geom-heat-flow})]
\begin{align}
    \delta H[v; x_L, x_R] &= -2J\int_{x_L}^{x_R}\dd x\, g_{\gamma\rho}\left(\partial_x^2 v^\gamma+\Gamma^\gamma_{\alpha\beta}\partial_x v^\alpha\partial_x v^\beta\right)\delta v^\rho +  \left.(2g_{\alpha\beta} \partial_x v^\alpha) \delta v^\beta\right|_{x_R} - \left.(2g_{\alpha\beta} \partial_x v^\alpha) \delta v^\beta\right|_{x_L} + \nonumber \\
    &+\left.(g_{\alpha\beta}\partial_x v^\alpha\partial_x v^\beta)\right|_{x_R}\delta x_R - \left.(g_{\alpha\beta}\partial_x v^\alpha\partial_x v^\beta)\right|_{x_L}\delta x_L.
\label{eq:app-en-var-bd}
\end{align}
Overall, in order to have a well-defined set of equations of motion, we will have to require the boundary term to vanish, which is the principle we use to derive conditions for boundaries and interfaces.
For a quantum system with a (stationary) open boundary $x_b$, we should consider the Hamiltonian $H[v; -\infty, x_b]$ and set $\delta x_b = 0$.
Using Eq.~(\ref{eq:app-en-var-bd}) and the standard boundary conditions at $-\infty$, we find that the field must satisfy Neumann boundary conditions
\begin{equation}
    (2g_{\alpha\beta} \partial_x v^\alpha)\big|_{x_b} = 0 \;\;\implies\;\; \partial_x v^\alpha\big|_{x_b} = 0.
\end{equation}
We can treat similarly the semiclassical evolution of ``unbalanced'' domain walls.
Here we consider $x_b(t)$ to act as the interface between two ``fluids'' (i.e., field configurations) $v_L$ and $v_R$, which occupy respectively the left and right sides of the system, which corresponds to a Hamiltonian of the form
\begin{equation}
    H_{\rm interface} = H[v_L; -\infty, x_b(t)] + H[v_R; x_b(t), \infty].
\end{equation}
We also impose that both fields will have a Dirichlet boundary condition at $x_b(t)$
\begin{equation}
    v_L(x_b(t))=v_R(x_b(t))=v_*.
\end{equation}
This is a general feature of systems such as in Sec.~\ref{subsubsec:finite-size-dw}, where $x_s(t)$ is the interface between the regions where the field $\sigma(x)$ is in the $e$ and $\eta$ configurations, and the fields $\vec{n}$ on the two branches of the manifold are the two fluids on the left and the right of the system.
By differentiating the Dirichlet constraint we find that
\begin{equation}
    \delta v^\alpha_L = - \partial_x v^\alpha_L \delta x_b\qquad
    \delta v^\alpha_R = - \partial_x v^\alpha_R \delta x_b.
\end{equation}
This condition follows from the fact that $x_b(t)$ is not an independent variable, but it is rather a function of $v_L(x,t)$ and $v_R(x,t)$, since it tracks the position where they equal $v_*$.
Using Eq.~(\ref{eq:app-en-var-bd}) and imposing the vanishing of the boundary terms that appear in the variation of the total Hamiltonian we get
\begin{equation}
    (g_{\alpha\beta} \partial_x v^\alpha_L \partial_x v^\beta_L)\big|_{x_b^-} = (g_{\alpha\beta} \partial_x v^\alpha_R \partial_x v^\beta_R)\big|_{x_b^+},
\label{eq:pressureequality1d}
\end{equation}
which implicitly provides the equation of motion for $x_b(t)$, together with the bulk equations Eq.~\eqref{eq:geom-heat-flow} on both sides of the interface.
These boundary conditions take the same form when using an extrinsic coordinate system.
In higher dimensions, the spatial boundary is a hypersurface of dimension $d-1$, and computing the variation Eq.~\eqref{eq:app-en-var-bd} reduces to a one-dimensional problem along the direction perpendicular to the surface at every point (due to the Dirichlet boundary condition, the gradient of the field is orthogonal to the surface).
In some cases, such as the $SU(2)$-symmetric model of Sec.~\ref{sec:su2}, the Dirichlet boundary condition is replaced with $v_L(x_b(t))=v_R(x_b(t))\in M_*$ where $M_*\subseteq M$. 
In this case the gradients of the field at the boundary need not be orthogonal to it; however, their parallel components will be equal $\bm\nabla_\Vert v_L\big|_\mrm{bd}=\bm\nabla_\Vert v_R\big|_\mrm{bd}$, which ensures that the boundary terms in $\delta H$ only depend on the component of the gradient normal to the surface.
If $\bm n$ is the unit vector normal to the boundary, these conditions the naturally generalize to
\begin{equation}
    \text{Open boundary:}\;\;\; \bm n\cdot\bm\nabla v^\alpha =0,\qquad \text{Interface:}\;\;\; |\bm n\cdot\bm\nabla v_L|^2 = |\bm n\cdot\bm\nabla v_R|^2.
\label{eq:higherdimspressure}
\end{equation}
Physically, both conditions require the boundary be at a stationary equilibrium of forces at all times: in the former the pressure of the field on the boundary is set to zero, in the latter the pressure of fields on both sides must be equal.
\subsubsection{Replica decoupling}\label{app:replica-decoupling}
In Sec.~\ref{sec:fgsmanrep} we have seen that for generic Brownian circuits with a continuous on-site symmetry $G$, their effective replica Hamiltonians $P^{(k)}$ possess a ground state manifold that locally has the product structure $(M_{G}^{(1)})^{\times k}$.
Assuming the leading continuum expansion for ferromagnetic models of Eq.~\eqref{eq:exp-generic}, we can easily argue that the metric $K_{\alpha\beta}^{(k)}$ that arises in the gradient expansion of $P^{(k)}$ is simply the direct sum $\bigoplus_{m=1}^{k} K_{\alpha\beta}^{(1)}$ of the metric derived from $P^{(1)}$ on each factor $M_{G}^{(1)}$.
For notational simplicity we show the argument for $k=2$, but generalizations to $k>2$ follow along the same lines.
From the definitions of Eq.~\eqref{eq:superham}, we can write:
\begin{equation}\label{eq:k2-peff-exp}
	P^{(2)}_{ij} = P^{(1)}_{ij} \ot \1^{\ot 2} + \1^{\ot 2} \ot P^{(1)}_{ij}
	 +2 \sum_{\alpha} \ad{h_{\alpha,ij}}^{(1)}\ot\ad{h_{\alpha,ij}}^{(1)},
\end{equation}
where the tensor product is over the replicas.
Let us write each point $v\in M^{(1)}_G\times M^{(1)}_G$ as a pair $v=(w^{[1]},w^{[2]})$. If we expand now the field as in Eq.~\eqref{eq:sketch-expansion}, we can decompose the tangent vector as the sum of the components tangent to each factor in the product $M_{G}^{(1)}\times M_{G}^{(1)}$:
\begin{equation}
    \ket{v+\delta v}=(\sket{w^{[1]}}+\sket{\delta w^{[1]}})\otimes(\sket{w^{[2]}}+\sket{\delta w^{[2]}}).
\end{equation}
Using the fact that $\ad{h_{\alpha,ij}}^{(1)}\ket{w}_i\ket{w}_j=\,_i\!\bra{w}_j\!\bra{w}\ad{h_{\alpha,ij}}^{(1)}=0$ for each factor in Eq.~\eqref{eq:k2-peff-exp}, we find that
\begin{equation}
\begin{aligned}
    \bra{v}\bra{v+\delta v} P_{ij}^{(2)} \ket{v}\ket{v+\delta v} =& \\
    =\sbra{w^{[1]}}\sbra{w^{[1]}+\delta w^{[1]}} &P^{(1)}_{ij} \sket{w^{[1]}}\sket{w^{[1]}+\delta w^{[1]}}+
    \sbra{w^{[2]}}\sbra{w^{[2]}+\delta w^{[2]}} P^{(1)}_{ij} \sket{w^{[2]}}\sket{w^{[2]}+\delta w^{[2]}}
    +O(|\delta v|^3),
\end{aligned}
\end{equation}
where in the continuum limit for smooth field configurations $|\delta v|=\mc O(a)$. 
This implies that the energy of the field $v(x,t)\in M_{G}^{(1)}\times M_{G}^{(1)}\subseteq M_G^{(2)}$ in the product manifold under the effective Hamiltonian $P^{(2)}$, completely decouples to second order to the sum of the energies of the fields $w^{[1]}\in M_G^{(1)}$ and $w^{[2]}\in M_G^{(1)}$ under the effective Hamiltonian $P^{(1)}$, and this decoupling is exact in the continuum limit where $a \rightarrow 0$.
Since the metric of $M^{(2)}_G$ also similarly decomposes into a direct sum of the metrics on each $M^{(1)}_G$ factor, this means that the TDVP equations Eq.~\eqref{eq:tdvp} decouple across all replicas, away from the contact points that may appear in these types of manifolds (as discussed in Sec.~\ref{sec:fgsmanrep}), where interactions between different permutation branches of the manifold can play a role.
In particular, when $K_{\alpha\beta}^{(1)}$ [cf.~Eq.~\eqref{eq:exp-generic}] on $M_{G}^{(1)}$ is the metric in Eq.~\eqref{eq:tdvp-tensors-def}, then the TDVP equation is again the heat equation Eq.~\eqref{eq:geom-heat-flow} on the product manifold.

\section{Mapping Effective Hamiltonians to Classical Stochastic Models} % MARK: * fluct
\label{sec:fluctuations}
We now briefly discuss a framework that could be used to extend the results on the dynamics of effective Hamiltonians beyond the semiclassical limit, and study the quantum fluctuations around the leading trajectories.
This proceeds by mapping the imaginary-time evolution of a quantum Hamiltonian to a classical stochastic process, which we review below for the ferromagnetic Heisenberg model and apply to the $es\eta$ toy model of Sec.~\ref{sec:esn} and the $U(1)$-symmetric model of Sec.~\ref{sec:u1}.
\subsection{Ferromagnetic Heisenberg model}
The mapping for the Heisenberg model is the well-known mapping between imaginary-time ferromagnetic Heisenberg evolution, and the Symmetric Simple Exclusion Process (SSEP) \cite{10.1063/1.524610, SCHUTZ20011}.
Consider a state on a spin-$\frac{1}{2}$ many-body quantum system, and its imaginary-time quantum evolution under the Heisenberg Hamiltonian $H$:
\begin{equation}
	\ket{\psi} = \sum_{\{\varsigma_i\}} \psi_{\{\varsigma_i\}}\ket{\varsigma_1...\varsigma_L},\quad\ket{\psi(t)} = e^{-Ht}\ket{\psi},
\end{equation}
where $\psi_{\{\varsigma_i\}}$ is real and non-negative for every spin pattern $\{\varsigma_i\}$.
Then we can interpret the state as a probability distribution over spin configurations by normalizing:
\begin{equation}
	p(\{\varsigma_i\}) = \frac{\psi_{\{\varsigma_i\}}}{\norm{\ket\psi}_1},\quad \norm{\ket\psi}_1\defeq \sum_{\{\varsigma_i\}} \psi_{\{\varsigma_i\}} = \braket{\boldsymbol{1}}{\psi},
\end{equation}
where we have defined the $\ket{\boldsymbol{1}}$ state to be a uniform superposition of all spin configurations, which is proportional to the fully-polarized $\ket{\dots\rt\rt\dots}$.
The Heisenberg Hamiltonian [Eq.~\eqref{eq:HeisModelHam}] then has the property that under imaginary-time evolution, the weights $\psi_{\{\varsigma_i\}}$ remain positive, and $\norm{\ket{\psi(t)}}_1$ is preserved, since $\ket{\boldsymbol{1}}$ is a ground state of $H$.
The imaginary-time quantum evolution of the state $\ket\psi$ then corresponds to a continuous-time Markovian stochastic process, which in this case is the SSEP.
In the SSEP, we interpret $\ket\dn$ states as empty sites, and $\ket\up$ states as occupied sites, and time evolution consists in an equal probability of particles to hop to any adjacent site, constrained to a maximum occupancy of one particle per site.
Note that $\norm{\ket{\psi(t)}}_2=\sqrt{\braket{\psi(t)}}$ will generally decrease under such a Markovian process, consistent with imaginary-time evolution under a positive-semidefinite Hamiltonian.

\subsection{$es\eta$ Toy Model}
In the case of the $es\eta$ model, we can construct a similar process for the Hamiltonian $P^{(es\eta)}$ of Eq.~(\ref{eq:esn-ham}).
The $e$-$s$ and $\eta$-$s$ interactions in Eq.~\eqref{eq:esn-ham} are identical Heisenberg couplings, and so they can be turned into SSEP generators for configurations of $e$- and $\eta$-particles by interpreting $\ket s$ as the empty state
\begin{equation}
	\varsigma_i \in\{\emptyset,e,\eta\},
\end{equation}
an intuition that we have used in Sec.~\ref{sec:esn}.
Since the $e$-$\eta$ interaction strength $\Delta$ flows to strong coupling at long times, we can replace it with a diagonal potential term $P^{(e\eta)}_{ij}\approx \Delta V_{ij}=\Delta (\ketbra{e\eta}{e\eta}+\ketbra{\eta e}{\eta e})$.
Unlike the hopping terms, this term in the Hamiltonian does not preserve the total probability under time evolution, but rather acts as a probability sink for any configuration where $e$- and $\eta$-particles are neighbors.
We can use the Feynmann-Kac formula \cite{Kipnis1999} to translate this interaction into a penalty for each many-body trajectory within the stochastic process
\begin{equation}
	\!\!\sbra{\{\varsigma_i^{(\mrm f)}\}} e^{-\kappa P^{(es\eta)}t}\sket{\{\varsigma_i^{(\mrm i)}\}} \!=\! \Big\langle e^{-\Delta \int_0^t V(\{\varsigma_i(t)\})\dd\tau}\Big\rangle_\mrm{SSEP}\!\!\!\!
\label{eq:feynmankacweight}
\end{equation}
where the mean is taken over all SSEP trajectories $\{\varsigma_i(\tau)\}$ which satisfy the temporal boundary conditions at times $\tau=0$ and $\tau=t$.
Here the SSEP dynamics accounts for the terms $P^{(es)}$ and $P^{(\eta s)}$, whereas the potential term being averaged accounts for the diagonal $P^{(e\eta)}$ interactions.
In the RG limit where $\Delta\rt\infty$, the state $\ket{\psi(t)}$ represents the outcome of
\begin{equation}\label{eq:words}
		\textit{independent SSEP evolution of $e$- and $\eta$-particles conditioned to non-crossing of $e$- and $\eta$-particles,}
\end{equation}
since in that limit, any configuration where the particles $e$ and $\eta$ are on neighboring sites is infinitely penalized (i.e., ignored) in the weighting of Eq.~(\ref{eq:feynmankacweight}).
The decay of the norm $\norm{\ket{\psi(t)}}_1/\norm{\ket{\psi(0)}}_1$ represents the \textit{survival probability} of the process, i.e. the probability that trajectories satisfy the non-crossing condition.
The formation of the void observed in the $es\eta$ model is here explained by this post-selection process, whereby trajectories where $e$- and $\eta$-particles interact are eliminated from the ensemble; the growth of the sink region is instead driven by density fluctuations within the void.
Broadly speaking, this model falls within the class of \textit{reaction-diffusion} processes~\cite{SCHUTZ20011}.
For example, the initial state $\ket{\textsc{sp}_\eta}$ of Eq.~\eqref{eq:esn-initial}, is the state of a single $\eta$-particle within a background of Poisson distributed $e$-particles.
The evolution of this state under the dynamics described here, to our knowledge has been studied for the first time in Ref.~\cite{Blythe_2003}.
There, the decay rate of the survival probability of the process is rigorously computed to leading order, resulting in scalings identical to what we propose for the decay of the norm [cf.~Eqs.~(\ref{eq:norm-dec-1d},\ref{eq:norm-dec-3d},\ref{eq:norm-dec-2d})].
In Ref.~\cite{Anton_2004}, an exponent of $\frac{1}{4}$ is proposed for subdiffusion of the $\eta$-paticle in 1d, although our numerical data (see Fig.~\ref{fig:esn}c) and recent theoretical results for a closely related model~\cite{mcculloch2026longlivedlocalquantumcoherences} (discussed in the next section), point to a power-law exponent of $\frac{1}{3}$.
\subsection{$U(1)$ effective Hamiltonian}
Using the same methods, it is sometimes possible to approximately map the spin-$\frac{1}{2}$ $U(1)$-symmetric model of Sec.~\ref{sec:u1} to classical stochastic dynamics that have a similar form to those that we derived for the $es\eta$ model.
An exact mapping is not however possible in general since the Hamiltonian term Eq.~\eqref{eq:u1-ham-idstates} possesses a positive off-diagonal matrix element between $\ket{e_pe_q}\leftrightarrow\ket{\eta_r\eta_s}$, which results in a sign problem.
This problem can be circumvented by simply ignoring these transition rules, thus effectively setting these matrix elements to zero.
If the initial state is dominated by a fully-polarized background in a single permutation, as in the case of the correlators in Sec.~\ref{sec:U1correlators}, where a replicated local operator $\ket{O^{\ot 2}}$ is placed in a $\sket{\1^{\otimes 2};e}$ background, a stronger approximation can be performed, which further simplifies the dynamics.
In this case, the creation of pairs $\ket{\eta_r\eta_s}$ in a background of $e$ states, which results in energetic configurations, suppressed by the gapped $e$-$\eta$ interaction.
In Eq.~\eqref{eq:u1-ham-idstates} we can therefore only consider the off-diagonal transition between states $\ket{s_ps_q}\leftrightarrow\ket{e_re_s}$, and ignore the rest.
In this way, if we re-label the states (using the notation of Ref.~\cite{mcculloch2026longlivedlocalquantumcoherences}):
\begin{equation}
    \ket{s_0}=\ket{\textbf{d}}\ot\ket{\textbf{d}},\quad\ket{s_1}=\ket{\textbf{u}}\ot\ket{\textbf{u}},\quad\ket{e_0}=\ket{\textbf{d}}\ot\ket{\textbf{u}},\quad\ket{e_1}=\ket{\textbf{u}}\ot\ket{\textbf{d}},
\end{equation}
the imaginary-time quantum dynamics can be mapped to independent SSEP evolution of the left and right ``$\textbf{d}/\textbf{u}$'' labels.
Furthermore, in this model, $\ket{\eta_0}$ and $\ket{\eta_1}$ particles are conserved, and perform SSEP evolution on a $\ket{s_0}$ or $\ket{s_1}$ background.
Due to the gapped $e$-$\eta$ interaction, in the continuum limit we obtain through the Feynmann-Kac formula a non-crossing rule between $\ket{e_p}$ and $\ket{\eta_q}$ states, analogous to the one of Eq.~\eqref{eq:words}.
This model was derived and studied in Ref.~\cite{mcculloch2026longlivedlocalquantumcoherences} by similar arguments for the study of non-hydrodynamic correlators.
We must emphasize that this procedure of ignoring terms in Eq.~(\ref{eq:u1-ham-idstates}) is justified only in the setting of single-particle initial states which is mostly composed of $e$ states, and not for domain-wall initial states where both the $e$ and $\eta$ states are important.
Indeed, the structure of the $\eta$ branch of the ground state manifold changes upon ignoring the $\ket{s_ps_q}\leftrightarrow\ket{\eta_r\eta_s}$ term, and we hence we should expect that term to be important to the physics of general initial states (even though its effect might drop out for certain special initial states).
If instead of removing all $\ket{\eta_p\eta_q}$ terms from the Hamiltonian term Eq.~\eqref{eq:u1-ham-idstates}, we instead only neglect the $\ket{s_ps_q}\leftrightarrow\ket{e_re_s}$ transitions, this does not change the ground state manifold of the effective Hamiltonian, since $e$-$\eta$ interactions are already gapped due to the term Eq.~\eqref{eq:esn-u1-term}.
Thus we expect the resulting model to correctly capture its leading behavior in a general setting. 
Having resolved the sign problem, this Hamiltonian can also be mapped to a classical stochastic model with similar methods as before~\cite{Bravyi2015}, although with more complex interactions.
We leave the exploration of such a model to future work.
\section{The $\Delta=1$ Limit of the $es\eta$ model} % MARK: * delta=1
\label{sec:delta1lim}
For $\Delta =1$ the $e$-$\eta$ interaction of the $es\eta$ model of Eq.~\eqref{eq:esn-ham} becomes gapless, and so the ground-state structure of the model changes.
Generally, this leads to a non-ferromagnetic $P^{(es\eta)}$ Hamiltonian, due to the `plus' sign of the exchange interaction $P^{(e\eta)}$ in Eq.~\eqref{eq:esn-ham}.
However in 1d, it is possible to perform a Jordan-Wigner transformation that flips the sign of this interaction while preserving the rest:
\begin{equation}
    \ket{e}_i\mapsto (-1)^{\prod_{j<i}\ketbra{\eta}{\eta}_j}\ket{e}_i.
\end{equation}
At this point, the $es\eta$ model then becomes a toy model for effective Hamiltonians of \textit{free-fermion Brownian models with continuous symmetries} at $k=2$ replicas.
After this transformation, the $P^{(es\eta)}$ Hamiltonian, which is in general symmetric under exchange $\ket e\leftrightarrow \ket \eta$, acquires an extended continuous $U(1)$ exchange symmetry
\begin{equation}
	R(\theta)=\prod_{i=1}^L\exp(i\theta \big(\ket{e}_i\!\bra{\eta}+\ket{\eta}_i\!\bra{e}\big)),
\end{equation}
similar to the continuous replica exchange symmetry of free fermion effective models \cite{enriched-phases,fava2024,swann2025,lastres2026geometryfreefermioncommutants}.
This symmetry mixes with the other internal symmetries of the $es\eta$ model, resulting in a global $SU(3)$ symmetry.
Indeed, at $\Delta =1$, $P^{(es\eta)}$ becomes exactly the three-flavor ferromagnetic Heisenberg model, and its global ground states are all obtained by applying global $SU(3)$ rotations on the fully polarized states $\ket{ss \cdots s}$.
Equivalently, the ground state manifold is composed of all states on a single $es\eta$ site:
\begin{equation}
    M_\mrm{GS} = \mb P(\mb C^3) = \mb{CP}^2 \cong SU(3)/U(2).
\end{equation}
Note that this $\Delta=1$ ground state manifold is a superset of the generic $\Delta>1$ $es\eta$ ground state manifold [cf.~Eq.~\eqref{eq:gsman-esn}].
In this way, the former is the projective plane $\mb{CP}^2$ and the latter is the union of two lines (i.e., $\mb{CP}^1$) within it.
Since a property of the projective plane is that any two lines intersect at exactly one point, this provides a different point of view for the shape of the $es\eta$ ground state manifold of Eq.~(\ref{eq:gsman-esn}).
Similar to free-fermion commutants discussed in Sec.~\ref{subsubsec:free-fermion-generic}, this ground state manifold is smooth and homogeneous, and therefore can be parametrized by generalized coherent states~\cite{generalized-coher-st} discussed in Appendix \ref{subsec:matrixcoherent}, which can be used to describe the low-energy physics of the model with greater accuracy compared to the variational ansatz of TDVP, using instead a path-integral approach that generalizes the one discussed in Appendix~\ref{app:path-integral}~\cite{Kochetov1995,Berezin:1978sn}.
From simple geometrical considerations however, we can see that the singularity in the manifold that resulted in void formation is no longer present, and we expect domain wall configurations to melt diffusively as in the Heisenberg model discussed in Sec.~\ref{sec:ex-dw}.
Note that under the Jordan-Wigner transformation,
the single-particle state $\ket{\textsc{sp}_\eta}$ of Eq.~\eqref{eq:esn-initial} maps to
\begin{equation}
    \ket{\textsc{sp}_\eta}\mapsto \dots
    \ot \frac{\ket{s}+\ket{e}}{\sqrt{2}}
    \ot \ket{\eta}
    \ot \frac{\ket{s}-\ket{e}}{\sqrt{2}}
    \ot \dots
\end{equation}
which can be seen as a domain wall configuration within the $e$-$s$ manifold, with a single-particle excitation on top which---unlike in the $\Delta > 1$ case---diffuses freely on any background-field configuration for the $\ket e$ and $\ket s$ states.
The domain-wall part of the configuration should be expected to melt diffusively as well, leading again to a decay of the norm $\norm{\textsc{sp}_\eta} \sim e^{-\sqrt{\kappa t}}$.

A similar homogeneous geometry is expected to describe more generally the ground state manifold of any Brownian free-fermion model (with or without internal symmetries), for any numer of replicas $k$, as seen with various examples in Ref.~\cite{lastres2026geometryfreefermioncommutants}. For example, the $U(1)$-symmetric model of Eq.~\eqref{eq:u1-generators} in the free-fermion limit where $\lambda=0$ in 1d, acquires a Grassmannian ground state manifold $\text{Gr}(2, 4)\cong U(4)/(U(2)\times U(2))$, resulting in dynamics which are expected to be similar to those of the ferromagnetic Heisenberg model.

\section{Entropy Scaling for Superpositions of TDVP Trajectories} % MARK: APP: ENTR
\label{app:entr}
In Sec.~\ref{sec:ex-dw} and~\ref{sec:su2}, we find for some initial conditions that a continuous family of equivalent semiclassical TDVP trajectories exists, and we describe the imaginary time evolution of the state as an equal superposition of all equivalent semiclassical solutions.
In this appendix, we show that a signature of such an ansatz in the models with diffusive dynamics we are interested in, is the logarithmic growth of entanglement entropy of the imaginary-time evolved state.
The initial states we start from are domain walls between two ground states, which lie on two different points of the ground state manifold, and the TDVP trajectories are composed of states derived from different equivalent geodesics that connect the two points on the ground state manifold, see Eqs.~(\ref{eq:heisinitsuperposition}) and (\ref{eq:heis-dw-cont-sup}) for a concrete example from the Heisenberg model.
Here we will assume that, for each trajectory, after time $t$, a region $A$ of width $\ell\sim\sqrt{\kappa t}$ in the middle of the system can be approximated as being fully polarized.
From this assumption we will compute the von Neumann entropy of the reduced density matrix $\rho_A$ of the equal superposition of states obtained from all the trajectories and study its scaling over time.

Starting from the Heisenberg domain wall of Sec.~\ref{sec:ex-dw}, in particular Eq.~\eqref{eq:heis-dw-cont-sup}, the fully-polarized states in each trajectory are of the form
\begin{equation}
    \ket{\varphi}^{\otimes \ell},\quad\text{where}\;\;\ket{\varphi} = \frac{\ket\uparrow + e^{i\varphi}\ket\downarrow}{\sqrt 2},\quad\varphi\in[0,2\pi).
\end{equation}
The reduced density matrix of their superposition can be computed explicitly by decomposing the product state into a sum of spin-patterns $\{\varsigma_i\}_{i\in\{1,...,\ell\}}$ where $\varsigma_i\in\{\up,\dn\}$:
\begin{equation}
    \rho_A = \int_0^{2\pi}\dd\varphi\, (\ketbra{\varphi}{\varphi})^{\otimes \ell} = \int_0^{2\pi}\dd\varphi\sum_{\{\varsigma_i\},\{\varsigma'_i\}}\frac{e^{i(N_{\{\varsigma_i\}}-N_{\{\varsigma'_i\}})\varphi}}{2^\ell}\ketbra{\varsigma_1...\varsigma_\ell}{\varsigma'_1...\varsigma'_L}=\sum_{\{\varsigma_i\},\{\varsigma'_i\}}\frac{\delta_{N_{\{\varsigma_i\}},N_{\{\varsigma'_i\}}}}{2^\ell}\ketbra{\varsigma_1...\varsigma_L}{\varsigma'_1...\varsigma'_\ell}
\end{equation}
where $N_{\{\varsigma_i\}}$ is the number of spin-down states in the configuration.
The eigenstates of this density matrix are the Dicke states $\ket{D_k}$, with eigenvalues $\lambda_k=\frac{1}{2^\ell}\binom{\ell}{k}$ proportional to the size of the associated charge sector:
\begin{equation}
    \ket{D_k}=\frac{1}{\sqrt{\binom{\ell}{k}}}\sum_{\{\varsigma_i\}}\delta_{N_{\{\varsigma_i\}},k}\ket{\{\varsigma_i\}} \;\;\implies\;\; \rho_A = \frac{1}{2^\ell}\sum_{k=0}^\ell \binom{\ell}{k}\ketbra{D_k}{D_k}
    \;\;\implies\;\; S(\rho_A)\approx\frac{1}{2}\log(\frac{\pi e}{2}\ell)\sim \log(\kappa t)
\end{equation}
where in the final step we used the asymptotic expression for the Shannon entropy of the binomial distribution for large $\ell$.
The case for Sec.~\ref{sec:su2}, in particular Eq.~\eqref{eq:superposition-sphere}, can be studied more simply by symmetry arguments.
In this case we are interested in a superposition of the form
\begin{equation}
    \rho_A = \int_{\mb{S}^2}\dd\mu(\vec n)\, \left(\sketbra{\vec n^{\times 4}}{\vec n^{\times 4}}\right)^{\otimes \ell},\quad\text{where}\quad \sket{\vec n^{\times 4}}=\ket{\vec n}\ot\ket{\vec n}^*\ot\ket{\vec n}\ot\ket{\vec n}^*.
\end{equation}
Note that $\rho_A$ is $SU(2)$-symmetric under the action of the diagonal generator on the four copies
\begin{equation}
    \vec S_\mrm{tot}^{(\mrm{diag.})} = \sum_{i}\sum_{a=1}^4 \vec S_i^{[a]}.
\end{equation}
Further,the action of $\rho_A$ is restricted completely within the vector space space spanned by fully-polarized configurations of spin-$2$ states on each replicated physical site (composed of four spin-$\frac{1}{2}$s on each of the copies).
These polarized states span the spin-$2\ell$ irrep of $SU(2)$, and therefore by Schur's lemma, $\rho_A$ must be proportional to the projector onto this irrep, which has dimension $4\ell+1$.
Thus
\begin{equation}
    \rho_A=\frac{1}{4\ell+1}\Pi_\textbf{2s}
    \;\;\implies\;\; S(\rho_A)=\log(4\ell+1)\sim \log(\kappa t).
\end{equation}

\section{Entanglement Growth Predictions} % MARK: APP: CONTI
\label{app:conti}
In this appendix, we provide details of TDVP predictions of overlaps of various replicated initial states on imaginary-time evolved domain wall states, which are relevant for the quantitative prediction of the evolution of purities of the initial states under Brownian dynamics.
\subsection{$es\eta$ Toy Model}
We discuss this for the $es\eta$ toy model, where this overlap does not correspond to any physical Brownian evolution, but is nevertheless very instructive as discussed in Sec.~\ref{sec:esn}.
Using the TDVP solution Eq.~\eqref{eq:esn-sol-dw} for the evolution of the $\ket{\textsc{dw}(t)}$ state in the $es\eta$ model, we can compute the predicted overlap between the evolved state and the following family of fully-polarized states:
\begin{equation}
    \ket{\zeta}^{\ot L},\text{ where }\ket{\zeta}=\cos^2\zeta\ket{s}+\sin^2\zeta\frac{\ket{e}+\ket{s}+\ket{\eta}}{2}.
\end{equation}
These states are normalized such that $\braket{\textsc{dw}(0)}{\zeta}^{\ot L}=1$.
From the parametrization of Eq.~\eqref{eq:rescaled_esn} we find:
\begin{equation}\label{eq:theta-recall}
    \ket{\textsc{dw}(t)}\approx \bigotimes_x \left( \ket{s}+\tan\frac{\theta(x,t)}{2}\ket{\sigma(x)}\right),\quad \theta(x,t)=\frac{\pi}{2}\left|\mrm{erf}\left(\frac{x}{\sqrt{4\kappa t}}\right)\right|\!, \quad \sigma(x)=\begin{cases}
        e\quad &x<0,\\
        \eta\quad &x>0.
    \end{cases}
\end{equation}
Since the overlap between product states is multiplicative along system size, in the continuum limit we can only make predictions about their logarithm, up to an overall non-universal prefactor that depends on details such as lattice spacing, which we ignore.
Then we obtain the following predictions for the log-overlaps $\Omega(\zeta, t)$:
\begin{equation}
\begin{gathered}
    \Omega(\zeta,t)\defeq -\log(\sbra{\textsc{dw}(t)}\zeta\rangle^{\ot L})
    \approx -\log(\prod_{x}\left[ \cos^2\zeta+\frac{1}{2} \sin^2\zeta\left(1+\tan\frac{\theta(x,t)}{2}\right)\right])\approx\\
    \approx\left[-\int\dd x\,\log(
    1-\frac{1}{2} \sin^2\zeta\Big(1-\tan f(x)\Big)
    )\right]\sqrt{\kappa t},
\end{gathered}
\end{equation}
where $f(x)=\frac{\pi}{4}|\mrm{erf}(x/2)|$. The integral in brackets only depends on the angle $\zeta$ and can be estimated numerically; if we call this integral $I(\zeta)$, then this expression allows us to predict the asymptotic ratios of overlaps:
\begin{equation}
    \lim_{t\rt\infty}\frac{\Omega(\zeta_1,t)}{\Omega(\zeta_2,t)} \approx \frac{I(\zeta_1)}{I(\zeta_2)}.
\end{equation}
While due to the variational nature of the TDVP ansatz, we cannot make precise statements about the convergence of this ratio, numerically we find good agreement, as shown in the main text Fig.~\ref{fig:esn-data}.
\subsection{$U(1)$ model}
We now use the explicit solution to the equations of motion for the domain wall cofiguration in the $U(1)$-conserving model Eq.~\eqref{eq:dw-diffeq-both-u1} in order to compute the mean value of the half-chain purity [Eq.~\eqref{eq:renyi-purity}] for an infinite 1d system.
Using the parametrization of Eq.~\eqref{eq:u1-dw-parametrization}, and recalling the solution Eq.~\eqref{eq:esn-sol-dw}, we can write
\begin{multline}
    \sket{A\!:\!\bar A(t)} \approx \bigotimes_{x} \left(\ket{s_0} + \tan^2\frac{\theta(x,t)}{2} \ket{s_1} + \tan\frac{\theta(x,t)}{2} \big(\ket{\sigma(x)_0}+\ket{\sigma(x)_1}\big) \right) +\quad\\
    + \bigotimes_{x} \left(\tan^2\frac{\theta(x,t)}{2} \ket{s_0} +  \ket{s_1} + \tan\frac{\theta(x,t)}{2} \big(\ket{\sigma(x)_0}+\ket{\sigma(x)_1}\big) \right)
\end{multline}
where again $\theta(x,t)$ is as in Eq.~\eqref{eq:theta-recall}.
We will compute the overlap $\sbra{A\!:\!\bar A(t)} e^{-\kappa P^{(2)}t}\ket{\rho}^{\ot 2}$ with generic product states on the spin-$\frac{1}{2}$ chain:
\begin{equation}
    \rho=\ketbra{\psi}{\psi},\qquad \ket{\psi} = \bigotimes_x
    \left(\cos\frac{\zeta(x)}{2} \ket{\up} + e^{i\varphi(x)} \sin\frac{\zeta(x)}{2} \ket{\dn}\right),
\end{equation}
such that, when projected to the $\{\ket{e_p},\ket{s_q},\ket{\eta_r}\}$ basis of interest, they read
\begin{equation}
    \ket{\rho}^{\ot 2} = \bigotimes_x
    \left( \sin^4\frac{\zeta(x)}{2}\ket{s_0} + \cos^4\frac{\zeta(x)}{2}\ket{s_1} + \sin^2\frac{\zeta(x)}{2}\cos^2\frac{\zeta(x)}{2}\big(\ket{e_0}+\ket{e_1}+\ket{\eta_0}+\ket{\eta_1}\big)\right).
\end{equation}
The overlap with $\sket{A\!:\!\bar A(t)}$ is then computed to be
\begin{equation}\label{eq:u1-purity-sol}
\begin{gathered}
    \sbra{A\!:\!\bar A(t)}\rho\rangle^{\ot 2} \approx \prod_x \left(\cos^2\frac{\zeta(x)}{2}+\sin^2\frac{\zeta(x)}{2}\tan\frac{\theta(x,t)}{2}\right)^2
    + \prod_x \left(\sin^2\frac{\zeta(x)}{2}+\cos^2\frac{\zeta(x)}{2}\tan\frac{\theta(x,t)}{2}\right)^2 = \\
    = \frac{ \prod_x \big( 1+ r(x,t) \langle Z_x\rangle\big)^2+
    \prod_x \big( 1- r(x,t) \langle Z_x\rangle\big)^2 }{\prod_x\big(1+r(x,t)\big)^2}
\end{gathered}
\end{equation}
where $\langle Z_x\rangle = \cos\zeta(x)$ is the local mean value of the charge, and
\begin{equation}
    r(x,t) = \tan(\frac{\pi}{4}-\frac{\theta(x,t)}{2})=\tan\left[\frac{\pi}{4}\left(1-\left|\mrm{erf}\left(\frac{x}{\sqrt{4\kappa t}}\right)\right|\right)\right].
\end{equation}
Using the fact that $r(0,t)=1$ for all times, we see that the prediction provides $\sbra{A\!:\!\bar A(t)}\rho\rangle^{\ot 2}=1$ for the frozen states $\zeta(x)\equiv 0$ and $\zeta(x)\equiv \pi$.
In the simple case where $\langle Z_x\rangle$ is constant and non-zero, then Eq.~\eqref{eq:u1-purity-sol} simplifies to Eq.~\eqref{eq:purity-prediction} in the main text, since the numerator is dominated by the term where $r(x,t)\langle Z_x\rangle$ is positive. 
Following a procedure analogous to the one used in the $es\eta$ model in this appendix, in Fig.~\ref{fig:u1-pred-data} we compare the prediction to numerical data, which only spans a short timeframe due to limitations of the numerical data.

\end{document}